\documentclass[10pt]{article}
\usepackage{graphicx}
\usepackage[margin=2.5cm]{geometry}
\usepackage{authblk}
\usepackage{amsmath}
\usepackage{amssymb}
\usepackage{amsthm}
\usepackage{dsfont}
\usepackage{braket}
\usepackage{xcolor}
\usepackage{multicol}
\usepackage{cite}
\usepackage{physics}
\usepackage{caption}
\usepackage{subcaption}
\usepackage{circuitikz}
\usepackage{float}
\usepackage{hyperref}
\usepackage{cleveref}
\usepackage{comment}
\usepackage{tocloft}
\usepackage{relsize}
\usepackage{microtype}
\usepackage[all]{tcolorbox}

\tcbset{
    before upper={}
}

\usepackage{stmaryrd}
\usepackage[export]{adjustbox}

\newtheorem{rem}{Theorem}

\makeatletter
\DeclareRobustCommand{\shortarrow}{%
  \mathrel{\mathpalette\short@arrow\relax}%
}
\newcommand{\short@arrow}[2]{%
  \mkern2mu%
  \clipbox{{0.3\width} 0 0 0}{$\m@th#1\vphantom{+}{\shortrightarrow}$}%
  }
\makeatother

\title{\textbf{Asymptotic Compression of Interactive Quantum Communication using Type-Constrained de Finetti Reduction}}

\author{Louis Desruisseaux}
\author{Simon Ducharme\thanks{Corresponding author: simon.ducharme2@usherbrooke.ca}\hspace{0.25em}}
\author{Gurleen Padda}
\author{Dave Touchette}

\affil{Department of Computer Science, Université de Sherbrooke, Sherbrooke, QC, Canada}
\date{\today}
\begin{document}

\newtheorem{theorem}{Theorem}
\newtheorem{proposition}{Proposition}
\newtheorem{definition}{Definition}
\newtheorem{lemma}{Lemma}
\newtheorem{corollary}{Corollary}
\newtheorem{protocol}{Protocol}

\pagestyle{plain}
\maketitle

\vspace{-0.5em}

\begin{abstract}
    For many information processing tasks, de Finetti-style theorems can often simplify the analysis in worst-case input scenarios for which the task exhibits some permutation-invariance symmetry, as they can allow for a reduction from an analysis on worst-case inputs to that of i.i.d. inputs. If further information is available on the inputs, it might be advantageous to reflect this information in the de Finetti reduction. In our work, we focus on a form of such constraint, based on the type of the input. This allows us to obtain a conceptually simple proof of a new de Finetti reduction for classical probability distributions, derived from elementary properties from the method of types. We apply our constrained de Finetti reduction to the compression of quantum interactive communication protocols with classical inputs, and prove that the prior-free quantum information cost equals the worst-case input amortized quantum communication cost. 
\end{abstract}

\setlength{\cftaftertoctitleskip}{0.5pt}

\renewcommand{\cfttoctitlefont}{\Large\bfseries}

\renewcommand{\cftsecfont}{\small\bfseries}
\renewcommand{\cftsecpagefont}{\small\bfseries}
\renewcommand{\cftsubsecfont}{\small\itshape}
\renewcommand{\cftsubsecpagefont}{\small\bfseries}
\renewcommand{\cftsubsubsecfont}{\small\itshape}
\renewcommand{\cftsubsubsecpagefont}{\small\bfseries}

\setlength{\cftbeforesecskip}{0pt}
\setlength{\cftbeforesubsecskip}{0pt}

\tableofcontents

\section{Introduction}
The analysis of information processing tasks often becomes more tractable in the setting where the inputs are distributed independently and identically (i.i.d.), than in the worst-case input setting where there are no distributional or independence assumptions on the inputs. Using the density operator formalism from quantum information theory to conveniently encode probability distributions, an i.i.d. state is of the form $\rho^{\otimes n}$, corresponding to $n$ i.i.d. copies of a single-system state $\rho$. In worst-case input scenarios for which the information processing task admits some permutation-invariance symmetry, de Finetti-style theorems can often simplify the analysis, as it can allow for a reduction from an analysis on worst-case inputs to that of i.i.d. inputs. For these reductions, it is often sufficient to upper bound permutation-invariant inputs by de Finetti distributions, i.e. convex combinations of i.i.d. inputs, rather than to approximate the inputs by them directly. This upper bound is done with operator inequalities of the form $\rho \leq \sigma$, which indicates that $\sigma-\rho$ is positive semidefinite. Interestingly, the same de Finetti reduction can be applied to any permutation-invariant input, even if no additional information is given about the input. In the case where additional information is available however, it can be advantageous to use a constrained de Finetti distribution that somehow reflects this information.

In our work, we develop a type-constrained de Finetti reduction where we restrict ourselves to classical inputs, and we have further information on the type of the input. This allows us to obtain a de Finetti reduction with a conceptually simple proof, that is derived from elementary properties from the method of types. More precisely, we give an explicit form for the de Finetti state used in our reduction as a finite convex combination of i.i.d. states that are each related to a possible type that the input can have. Given approximate knowledge of the type of the input, only the i.i.d. states that correspond to normalized types that are close in total variation distance to the actual one of the input appear in the convex combination. We give the necessary background on de Finetti reductions and the theory of types in Section~\ref{sec:intro_definetti}, and present our novel de Finetti reduction in Section~\ref{sec:results_definetti}.\\
\indent Our type-constrained de Finetti reduction is then used as a key ingredient to adapt and extend a recently developed compression scheme for classical interactive communication protocols \cite{padda2025} to an analogous one for quantum interactive communication protocols with classical inputs. We obtain a proof that the prior-free amortized quantum communication cost of an interactive quantum communication protocol on classical inputs is equal to the prior-free quantum information cost, allowing us to settle a ten-year-old open question on whether it was possible to do so \cite{braverman2015, touchette2015quantum}. This can then significantly streamline the self-reducibility argument given in Ref.~\cite{braverman2015}, and we are confident that our type-constrained de Finetti reduction will find further applications in information and complexity theory. We describe the model used for interactive quantum communication protocols and the definitions for the communication and information costs in Section~\ref{sec:intro_interactive}, and summarize our findings in Section~\ref{sec:results_interactive}.
\subsection{Constrained de Finetti Reductions}
\subsubsection{Background and Related Work}
\label{sec:intro_definetti}
\paragraph{de Finetti reductions}
The seminal de Finetti reduction of Christandl, Konig and Renner~\cite{christandl2009} implements the previously discussed concepts, providing immediate applications in quantum cryptography. It is formulated as follows.
\begin{rem}[Implicit in Lemma~2 of Ref.~\cite{christandl2009}]
    Let $\mathcal{H}$ be a $d$-dimensional Hilbert space. For any permutation-invariant quantum state $\rho$ on $\mathcal{H}^{\otimes n}$,
    \begin{equation}
    \label{eqn:definetti}
        \rho \leq (n+1)^{d^{2}}\int \sigma^{\otimes n}_{\mathcal{H}}\mu(\sigma_{\mathcal{H}}),
    \end{equation}
    in which $\mu(\cdot)$ is a measure on the space of density operators on $\mathcal{H}$.
\end{rem}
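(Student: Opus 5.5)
We prove the statement by purification onto the symmetric subspace, followed by a resolution of the identity on that subspace by i.i.d. pure states. Let $\mathcal{H}'\cong\mathcal{H}$ be a reference system. Since $\rho$ is permutation-invariant, it admits a purification $\ket{\psi}$ lying in $\mathrm{Sym}^n(\mathcal{H}\otimes\mathcal{H}')$, the symmetric subspace of $(\mathcal{H}\otimes\mathcal{H}')^{\otimes n}$.

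To construct it, diagonalise $\rho=\sum_k\lambda_k\ket{e_k}\bra{e_k}$. Set $\ket{\psi}=(\sqrt{\rho}\otimes \mathds{1})\ket{\Phi}$, where $\ket{\Phi}=\ket{\phi}^{\otimes n}$ is $n$ copies of an unnormalised maximally entangled vector $\ket{\phi}$ on $\mathcal{H}\otimes\mathcal{H}'$. Since $\ket{\Phi}$ is invariant under simultaneous permutation of the pairs and $\sqrt{\rho}$ commutes with permutations of $\mathcal{H}^{\otimes n}$, the vector $\ket{\psi}$ is symmetric. Tracing out $\mathcal{H}'^{\otimes n}$ gives back $\rho$. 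This step is routine.

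Next, write $D=d^2=\dim(\mathcal{H}\otimes\mathcal{H}')$. Schur's lemma gives the standard identity
\begin{equation*}
P_{\mathrm{Sym}}=\binom{n+D-1}{n}\int \ket{\theta}\bra{\theta}^{\otimes n}\,d\theta ,
\end{equation*}
with $d\theta$ the normalised unitarily invariant measure on pure states of $\mathcal{H}\otimes\mathcal{H}'$. The integral commutes with every $U^{\otimes n}$ and is supported on $\mathrm{Sym}^n$, on which $U\mapsto U^{\otimes n}$ acts irreducibly, so it is proportional to $P_{\mathrm{Sym}}$. Comparing traces fixes the constant as $\dim\mathrm{Sym}^n=\binom{n+D-1}{n}$.

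For the operator inequality, $\ket{\psi}\bra{\psi}$ is a rank-one projector onto a vector of $\mathrm{Sym}^n$, so $\ket{\psi}\bra{\psi}\le P_{\mathrm{Sym}}$. Taking the partial trace over $\mathcal{H}'^{\otimes n}$ preserves operator inequalities, which yields
\begin{equation*}
\rho\le \binom{n+D-1}{n}\int \big(\mathrm{tr}_{\mathcal{H}'}\ket{\theta}\bra{\theta}\big)^{\otimes n}d\theta .
\end{equation*}
Set $\sigma_{\mathcal{H}}=\mathrm{tr}_{\mathcal{H}'}\ket{\theta}\bra{\theta}$, and let $\mu$ be the pushforward of $d\theta$ under $\theta\mapsto\sigma_{\mathcal{H}}$, which is a measure on density operators on $\mathcal{H}$. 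The right-hand side then becomes $\binom{n+D-1}{n}\int\sigma_{\mathcal{H}}^{\otimes n}\mu(\sigma_{\mathcal{H}})$.

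It remains to bound the constant. We have
\begin{equation*}
\binom{n+D-1}{n}=\prod_{j=1}^{D-1}\frac{n+j}{j}\le (n+1)^{D-1}\le (n+1)^{d^2}.
\end{equation*}
Each factor satisfies $(n+j)/j\le n+1$ because $j\ge1$, which gives Eq.~(\ref{eqn:definetti}).

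The main obstacle is the symmetric-subspace resolution of identity, specifically the irreducibility justification. If we prefer to avoid representation theory, we can verify it directly. Apply the integral to the symmetric basis vectors indexed by types, using that the Gaussian moments of $\theta$ diagonalise in the type basis with identical weights. This realises the integral as a method-of-types computation, in the spirit of the paper's later approach.
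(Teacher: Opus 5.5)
Your argument is correct, and it is the Christandl--K\"onig--Renner route the paper cites; the paper gives no proof of its own here. That route uses a purification of $\rho$ in $\mathrm{Sym}^n(\mathcal{H}\otimes\mathcal{H}')$, the identity $\int \ket{\theta}\bra{\theta}^{\otimes n}d\theta = P_{\mathrm{Sym}}/\binom{n+d^2-1}{n}$, and a partial trace, and it gives the slightly sharper constant $\binom{n+d^2-1}{n}\le (n+1)^{d^2-1}$, the form stated in that reference.
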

In the case where $\rho$ is a classical probability distribution, it would state that there is a universal de Finetti distribution on the r.h.s. of (\ref{eqn:definetti}), which upper bounds any permutation-symmetric $n$-partite probability distribution over an alphabet of size $d$, up to a factor of $(n+1)^{d^2}$. Intuitively, the $(n+1)^{d^2}$ term can be seen as the cost of using the de Finetti distribution. For a constant alphabet size, this term is polynomial in $n$, which suffices for many applications including some demonstrated in quantum cryptography~\cite{christandl2009}, quantum tomography~\cite{christandl2012} and quantum compression~\cite{Berta_2011}.

Examples of constrained de Finetti reductions appear in Refs.~\cite{lancien2017,arnon2015,PRXQuantum.5.040315,lami2025doublycompositechernoffsteinlemma}. In Ref.~\cite{lancien2017}, for a finite alphabet $\mathbb{X}$ and any permutation-invariant classical state $\rho_{\mathbb{X}^{n}}$ on $\mathbb{X}^n$, they obtain the following:
\begin{rem}[Corollary 2.6 of \cite{lancien2017}]
    For every permutation-invariant classical state $\rho_{\mathbb{X}^n}$, there exists a measure~$d\mathcal{\sigma}_{\mathbb{X}}$ over classical states such that
    \begin{equation}
        \rho_{\mathbb{X}^{n}} \leq (n+1)^{3|\mathbb{X}|^{2}}\int_{\mathcal{\sigma}_{\mathbb{X}}}F(\rho_{\mathbb{X}^{n}}, \sigma_{\mathbb{X}}^{\otimes n})^{2}\sigma_{\mathbb{X}}^{\otimes n}d\sigma_{\mathbb{X}},
    \end{equation}
    in which $F(\cdot, \cdot)$ is the fidelity.
\end{rem}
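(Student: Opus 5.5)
A permutation-invariant classical state on $\mathbb{X}^n$ decomposes over type classes. For each type $t$ (a vector of counts summing to $n$) let $T_t\subseteq\mathbb{X}^n$ be its type class and $u_t$ the uniform distribution on $T_t$. Permutation invariance forces $\rho_{\mathbb{X}^n}$ to be constant on each type class, so
\begin{equation*}
\rho_{\mathbb{X}^n}=\sum_t p_t\, u_t,
\end{equation*}
where $p_t$ is the probability that $\rho$ assigns to $T_t$. The plan is to bound each $u_t$ separately by a suitably weighted i.i.d. state $\sigma_t^{\otimes n}$, with $\sigma_t := t/n$ the empirical distribution, and then combine the pieces. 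I would take the measure $d\sigma_{\mathbb{X}}$ to be the discrete measure placing weight $p_t$ on $\sigma_t$. This is simpler than the continuous measure used in Ref.~\cite{lancien2017}, but it should give a bound of the required form with an even better polynomial factor.

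The key step is a standard method-of-types estimate. For any $x\in T_t$ we have $\sigma_t^{\otimes n}(x)=2^{-nH(\sigma_t)}$, and $|T_t|\le 2^{nH(\sigma_t)}$. Together these give
\begin{equation*}
u_t(x)=\frac{1}{|T_t|}\le \frac{(n+1)^{|\mathbb{X}|}}{2^{nH(\sigma_t)}}=(n+1)^{|\mathbb{X}|}\,\sigma_t^{\otimes n}(x),
\end{equation*}
where the inequality uses the lower bound $|T_t|\ge (n+1)^{-|\mathbb{X}|}2^{nH(\sigma_t)}$. Hence $u_t\le (n+1)^{|\mathbb{X}|}\sigma_t^{\otimes n}$ pointwise, so also as diagonal operators.

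Next I need the fidelity factor. I compute $F(\rho_{\mathbb{X}^n},\sigma_t^{\otimes n})=\sum_x\sqrt{\rho(x)\sigma_t^{\otimes n}(x)}$. Restricting this sum to $x\in T_t$ gives
\begin{equation*}
F\ge |T_t|\sqrt{\frac{p_t}{|T_t|}\,2^{-nH(\sigma_t)}}=\sqrt{p_t\,|T_t|\,2^{-nH(\sigma_t)}}\ge \sqrt{p_t}\,(n+1)^{-|\mathbb{X}|/2}.
\end{equation*}
Therefore $p_t\le (n+1)^{|\mathbb{X}|}F(\rho,\sigma_t^{\otimes n})^2$. Combining with the previous step,
\begin{equation*}
\rho=\sum_t p_t u_t\le \sum_t (n+1)^{|\mathbb{X}|} F(\rho,\sigma_t^{\otimes n})^2\,(n+1)^{|\mathbb{X}|}\,\sigma_t^{\otimes n}.
\end{equation*}
The sum over types has at most $(n+1)^{|\mathbb{X}|}$ terms. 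We may take the measure to be counting measure on the types, and if a normalized measure is preferred, normalizing it by the number of types costs at most one more factor $(n+1)^{|\mathbb{X}|}$. The resulting overall constant $(n+1)^{3|\mathbb{X}|}$ is at most $(n+1)^{3|\mathbb{X}|^2}$, which gives the statement.

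The main obstacle is only bookkeeping: making sure the measure in the statement is allowed to be the discrete one (or is normalized with at most a polynomial cost), and checking the fidelity convention ($F$ without the square, so that $F^2$ appears). Both type-class bounds are standard elementary facts, so I do not expect any genuine difficulty.
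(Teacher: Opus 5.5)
Your argument is correct. Note that the paper gives no proof of this statement: it quotes it from Lancien and Winter. There it is derived for general quantum permutation-invariant states, and the $|\mathbb{X}|^2$ in the exponent is inherited from that quantum, purified ($d^2$-dimensional per copy) setting, with the classical case as a specialization.

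Your route is the elementary method-of-types argument behind the paper's own reduction:
\begin{itemize}
\item The bound $u_t\le (n+1)^{|\mathbb{X}|}\sigma_t^{\otimes n}$ is exactly Lemma~\ref{lemma:TypedStateLowerBound}, which even gives the factor $|\mathcal{P}_{\mathbb{X}^n}|$.
\item Your one new ingredient is the estimate $F(\rho,\sigma_t^{\otimes n})^2\ge p_t/|\mathcal{P}_{\mathbb{X}^n}|$, obtained by truncating the fidelity sum to the single type class $T_t$.
\item This absorbs the weights $p_t$ into the fidelity at the cost of one extra factor $|\mathcal{P}_{\mathbb{X}^n}|$. The proof of Theorem~\ref{thm:de_Finetti_Reduction_operator_inequality_form} instead simply uses $p_t\le 1$.
\end{itemize}

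What your approach buys is a universal, finitely supported measure, uniform on $\{\sigma_t\}_{t\in\mathcal{P}_{\mathbb{X}^n}}$ and independent of $\rho$. It also gives a much smaller constant: $|\mathcal{P}_{\mathbb{X}^n}|^3\le (n+1)^{3|\mathbb{X}|}$ for the normalized measure, or $|\mathcal{P}_{\mathbb{X}^n}|^2$ for counting measure. What it gives up is generality. It relies on $\rho$ and $\sigma^{\otimes n}$ being simultaneously diagonal, so that $F$ is a sum over sequences that can be restricted to one type class. This has no analogue for genuinely quantum permutation-invariant states, which the Lancien--Winter bound covers.

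Two presentational fixes:
\begin{itemize}
\item Drop the sentence proposing the measure with weight $p_t$ on $\sigma_t$; it does not work. For $\rho=(1-\epsilon)|0^n\rangle\langle 0^n|+\epsilon|1^n\rangle\langle 1^n|$, that measure puts coefficient $\epsilon^2$ on $|1^n\rangle\langle 1^n|$ on the right-hand side, so no $\epsilon$-independent constant can work. Your actual computation uses the counting (or uniform) measure, which is fine.
\item In the bound on $u_t$, only two facts are used: the lower bound on $|T_t|$, and the identity $\sigma_t^{\otimes n}(x)=2^{-nH(\sigma_t)}$ for $x\in T_t$. The upper bound $|T_t|\le 2^{nH(\sigma_t)}$ plays no role.
\end{itemize}
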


Another example is given in Ref.~\cite{arnon2015}, in which the additional information on the inputs arises from the symmetries found in the observed correlations that result, for example, from the measurement statistics of quantum states, giving de Finetti reductions constrained to the resulting conditional probability distributions. Applying their findings to unconditional probability distributions, which is more of concern to us, gives the following:  
\begin{rem}[Theorem 4 of \cite{arnon2015}]
    For every permutation-invariant classical state $\rho_{\mathbb{X}^{n}}$ with symmetry $\mathcal{S}$, there exists a measure $d\mathcal{\sigma}^{\mathcal{S}}_{\mathbb{X}}$ over classical states such that
    \begin{equation}
        \rho_{\mathbb{X}^{n}} \leq (n+1)^{d}\int {\mathcal{\sigma}^{\mathcal{S}}_{\mathbb{X}}}^{\otimes n}d\mathcal{\sigma}^{\mathcal{S}}_{\mathbb{X}}.
    \end{equation}
\end{rem}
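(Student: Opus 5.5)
The plan is to reduce everything to the method of types, since every permutation-invariant classical state is a mixture of uniform distributions on type classes. Write $\rho_{\mathbb{X}^n}=\sum_{T}p(T)\,U_T$, where $T$ ranges over types (empirical count vectors) of length-$n$ strings, $U_T$ is the uniform distribution on the type class $\mathcal{T}_T\subseteq\mathbb{X}^n$, and $p(T)=\rho_{\mathbb{X}^n}(\mathcal{T}_T)$. For each type, let $q_T=T/n$ be the normalized type, viewed as a classical state on $\mathbb{X}$. The i.i.d. state $q_T^{\otimes n}$ is constant on $\mathcal{T}_T$ and equals $2^{-nH(q_T)}$ there. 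Hence, as operators (i.e.\ pointwise on strings),
\begin{equation*}
U_T \;=\; \frac{q_T^{\otimes n}\big|_{\mathcal{T}_T}}{q_T^{\otimes n}(\mathcal{T}_T)} \;\leq\; \frac{1}{q_T^{\otimes n}(\mathcal{T}_T)}\,q_T^{\otimes n}.
\end{equation*}
The elementary type-class estimate $q_T^{\otimes n}(\mathcal{T}_T)\geq (n+1)^{-|\mathbb{X}|}$ then gives
\begin{equation*}
\rho_{\mathbb{X}^n}\leq (n+1)^{|\mathbb{X}|}\sum_T p(T)\,q_T^{\otimes n}.
\end{equation*}
The measure $d\sigma_{\mathbb{X}}$ is simply the discrete measure placing weight $p(T)$ on $q_T$. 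This already proves the statement in the unconstrained case with $d=|\mathbb{X}|$.

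Next I will incorporate the symmetry $\mathcal{S}$. I would model $\mathcal{S}$ as a set of linear constraints on single-copy distributions, such as invariance under a group acting on $\mathbb{X}$ or fixed linear marginals. The relevant states are those $q$ satisfying the constraints, which form a polytope of dimension $d\leq|\mathbb{X}|-1$. The first point to check is that if $\rho_{\mathbb{X}^n}$ has the symmetry, then the types carrying weight, or an appropriate symmetrization of $p$, can be taken to lie in this polytope. For group symmetries, I would average $\rho_{\mathbb{X}^n}$ over the group. This leaves $\rho$ unchanged and makes $p(T)$ invariant under the induced action on types. I would then group types into orbits and try to replace the orbit mixture of $q_{gT}^{\otimes n}$ by i.i.d. states of symmetric single-copy states, with a controlled loss.

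The main obstacle is getting the exponent $d$, the dimension of the constrained set, rather than $|\mathbb{X}|$. In the first step the factor comes from $1/q_T^{\otimes n}(\mathcal{T}_T)$, which is governed by the full alphabet size. To improve it, I would instead use a normalization coming from counting. Replace the weights $p(T)$ by the uniform measure over the $N\leq (n+1)^d$ lattice points (types) of the constrained polytope. Then bound $\rho_{\mathbb{X}^n}$ string by string: its value on $x\in\mathcal{T}_T$ is at most $p(T)/|\mathcal{T}_T|\leq 1/|\mathcal{T}_T|$, so it suffices to compare $1/|\mathcal{T}_T|$ with $N^{-1}\cdot N\,q_T^{\otimes n}(x)$. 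This route therefore requires a lower bound of the form $|\mathcal{T}_T|\,2^{-nH(q_T)}\geq (n+1)^{-d}$ restricted to constrained types, or a mixture over nearby constrained types that achieves it. I would first try to derive such a bound from Stirling estimates restricted to the $d$ free coordinates. If that fails, I would fall back on a continuous measure, integrating $\sigma^{\otimes n}$ against the normalized Lebesgue measure on the $d$-dimensional polytope. Then $\int\sigma^{\otimes n}(x)\,d\sigma$ is a Dirichlet-type integral that can be evaluated explicitly and bounded below by $(n+1)^{-d}/|\mathcal{T}_T|$, mirroring the standard proof of the unconstrained de Finetti reduction but in $d$ dimensions.
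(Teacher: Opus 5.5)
Your first step is correct. It is the same argument as the paper's Lemma~\ref{lemma:TypedStateLowerBound}: the own type class is the most likely one under $q_T^{\otimes n}$. With the exact count $|\mathcal{P}_{\mathbb{X}^n}|=\binom{n+|\mathbb{X}|-1}{|\mathbb{X}|-1}\le (n+1)^{|\mathbb{X}|-1}$ it gives exponent $|\mathbb{X}|-1$, which is the parameter count of the unconstrained case, not $|\mathbb{X}|$. The paper only quotes the constrained statement from \cite{arnon2015} and does not prove it.

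The gap is in the second half. Everything there uses type classes of the original alphabet $\mathbb{X}$, and that cannot produce the exponent $d$. Take $\mathbb{X}=\{0,1\}$ with the bit flip as the symmetry. Component-wise invariance forces $\rho_{\mathbb{X}^n}=u^{\otimes n}$ with $u$ uniform, and $u$ is the only symmetric single-copy state, so $d=0$. Then:
\begin{itemize}
\item[(i)] $\rho$ puts weight $\binom{n}{w}2^{-n}$ on every type $w/n$, so the types carrying weight are not in the constrained polytope.
\item[(ii)] Your target bound $|\mathcal{T}_T|2^{-nH(q_T)}\ge (n+1)^{-d}$ already fails at the symmetric type. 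There it reads $\binom{n}{n/2}2^{-n}\ge 1$, but the left side is of order $n^{-1/2}$. In general, by Stirling, $q_T^{\otimes n}(\mathcal{T}_T)$ decays like $n^{-(s-1)/2}$ when $q_T$ has $s$ entries bounded away from zero, regardless of any constraint.
\item[(iii)] Your fallback target $\int\sigma^{\otimes n}(x)\,d\sigma\ge (n+1)^{-d}/|\mathcal{T}_T|$ reads $2^{-n}\ge 1/\binom{n}{w}$, which is false for every $w$.
\end{itemize}
The culprit is the step $\rho(x)\le p(T)/|\mathcal{T}_T|\le 1/|\mathcal{T}_T|$, which discards exactly the information the symmetry provides.

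Your group-averaging step also only uses an action on types, i.e.\ the diagonal action $g^{\otimes n}$. Under that weaker, global hypothesis the statement is false: $\tfrac12(\ketbra{0^n}{0^n}+\ketbra{1^n}{1^n})$ is not below $(n+1)^0u^{\otimes n}$. So the component-wise hypothesis must enter in an essential way, and your plan never uses it.

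The missing idea is to run the type argument on the quotient alphabet.
\begin{itemize}
\item Let $G$ act on $\mathbb{X}$ with orbit set $\mathbb{O}$. The $G$-invariant single-copy states are parametrized by distributions on $\mathbb{O}$, so $d=|\mathbb{O}|-1$.
\item Component-wise invariance means $\rho$ is fixed by the channel that replaces each $x_i$ by a uniformly random element of its orbit. Hence $\rho=\mathcal{E}^{\otimes n}(\rho')$, where $\rho'$ is the permutation-invariant image of $\rho$ on $\mathbb{O}^n$ and $\mathcal{E}(\ketbra{o}{o})=u_o$ is the uniform state on the orbit $o$.
\item Your first step applied on $\mathbb{O}$ gives $\rho'\le\binom{n+d}{d}\sum_{T'}p'(T')\,q_{T'}^{\otimes n}$, with $\binom{n+d}{d}\le(n+1)^d$.
\item Applying the positive map $\mathcal{E}^{\otimes n}$ turns each $q_{T'}^{\otimes n}$ into $\mathcal{E}(q_{T'})^{\otimes n}$, and each $\mathcal{E}(q_{T'})$ is $G$-invariant. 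This is the claim.
\end{itemize}
Equivalently, your Dirichlet-integral fallback with the Lebesgue measure on the $d$-simplex works once the type classes are taken in $\mathbb{O}^n$. Both $\rho(x)$ and $\int\sigma^{\otimes n}(x)\,d\sigma$ then carry the same factor $\prod_i|o(x_i)|^{-1}$, which cancels.
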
 
Here, $d$ is the number of independent parameters (from the symmetry) in $\mathcal{\sigma}^{\mathcal{S}}_{\mathbb{X}}$, and $\rho_{\mathbb{X}^{n}}$ satisfies the symmetry component-wise.

\paragraph{Types and typicality}
We briefly introduce the notation related to types and typicality that we use throughout this work (see Section~\ref{sec:prelim} for formal definitions). If $\mathbb{X}$ is an alphabet, the type of a sequence~$x^n \in \mathbb{X}^n$ of length $n$ is the empirical distribution of the symbols in that sequence and is denoted $t_{x^n}$. For a type $t$, we define $T^t_{\mathbb{X}^n}$ as the set of all sequences for which their type is $t$. We define $\mathcal{P}_{\mathbb{X}^n}$ as the set of all possible types of sequences of length $n$ over the alphabet $\mathbb{X}$. A simple upper bound on $|\mathcal{P}_{\mathbb{X}^n}|$ is $(n + 1)^{|\mathbb{X}|}$.

For $\delta > 0$, we say that a type $t$ is $\delta$-typical to a distribution $p$ over $\mathbb{X}$ if it satisfies $\|t - p\|_1 \leq \delta$, in which~$\|\cdot\|_1$ is the total variation distance. The set of all types $\delta$-typical to $p$ is denoted $\mathcal{P}_{\mathbb{X}^n}^{p,\delta}$. Similarly, we say that a sequence $x^n \in \mathbb{X}^n$ is $\delta$-typical to a distribution $p$ over $\mathbb{X}$ if its type $t_{x^n}$ is $\delta$-typical to $p$. The set of all sequences $x^n$ that are $\delta$-typical to $p$ is denoted $T^{p,\delta}_{\mathbb{X}^n}$.

\subsubsection{Results}
\label{sec:results_definetti}

We now introduce our type-constrained de Finetti reduction. We define the $\delta$-typical de Finetti state with respect to a distribution $p$ over $\mathbb{X}$ as:
\begin{equation}
    \label{eq:de_finetti_state_definition}
    \tau_{\mathbb{X}^n}^{p,\delta} := \frac{1}{\left|\mathcal{P}_{\mathbb{X}^n}^{p,\delta}\right|} \sum_{t \in \mathcal{P}_{\mathbb{X}^n}^{p,\delta}} \sigma_t^{\otimes n},
\end{equation}
in which $\sigma_t$ is the density operator that encodes the distribution of a type $t$:
\begin{equation}
    \label{eq:sigma_t_definition}
    \sigma_t := \sum_{x \in \mathbb{X}} t(x) \ketbra{x}{x}.
\end{equation}
Given that the type of a sequence is a property that is invariant under permutations of the symbols in that sequence, our reduction naturally relate the states $\sigma_t^{\otimes n}$ to permutation-invariant density operators $\overline{\omega}_{t}$ defined as
\begin{equation}
    \label{eq:symmetric_basis_state_definition}
    \overline{\omega}_{t} := \frac{1}{\left|T^t_{\mathbb{X}^n}\right|} \sum_{x^n \in T^t_{\mathbb{X}^n}} \ketbra{x^n}{x^n}.
\end{equation}
Slightly abusing the terminology, we call these states the classical symmetric basis states, as any permutation-invariant density operators diagonal in the basis of the chosen alphabet $\mathbb{X}^n$ can be expressed as a convex combination of these $\overline{\omega}_{t}$ states. Similarly, we define for a distribution $p$ the $\delta$-typical classical symmetric subset as the set of states $\rho$ of the form
\begin{equation}
    \label{eq:convex_combination_of_symmetric_basis_states}
    \rho =  \sum_{t \in \mathcal{P}_{\mathbb{X}^n}^{p,\delta}} q(t) \overline{\omega}_{t},
\end{equation}
in which $q(t)$ is any distribution over the types $t \in \mathcal{P}_{\mathbb{X}^n}^{p,\delta}$.

With these definitions, we prove the following type-constrained de Finetti reduction.
\begin{rem}[Informal version of theorem \ref{thm:de_Finetti_Reduction_operator_inequality_form}]
\label{thm:restrictedClassicalQuantumDeFinettiReduction_intro}
For a distribution $p$, let $\rho$ be any state in its $\delta$-typical classical symmetric subset, then:
\begin{equation}
    \label{eq:restrictedClassicalQuantumDeFinettiReduction}
    \rho \leq \left|\mathcal{P}_{\mathbb{X}^n}\right|\left|\mathcal{P}_{\mathbb{X}^n}^{p,\delta}\right| \tau_{\mathbb{X}^n}^{p,\delta}.
\end{equation}
\end{rem}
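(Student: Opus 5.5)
We want $\rho \le |\mathcal{P}_{\mathbb{X}^n}|\,|\mathcal{P}^{p,\delta}_{\mathbb{X}^n}|\,\tau^{p,\delta}_{\mathbb{X}^n}$ for every $\rho$ in the $\delta$-typical classical symmetric subset. All operators involved are diagonal in the basis $\{\ket{x^n}\}$, so the operator inequality reduces to a pointwise inequality between probability distributions on $\mathbb{X}^n$. Both sides are linear in $\rho$ (the right-hand side does not depend on $\rho$ at all). Hence, by convexity of the positive semidefinite cone, it suffices to prove the bound for each classical symmetric basis state $\overline{\omega}_t$ with $t \in \mathcal{P}^{p,\delta}_{\mathbb{X}^n}$. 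Summing with weights $q(t)$ then gives the general statement.

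Fix such a $t$. The plan is to drop all but one term of the sum defining $\tau^{p,\delta}_{\mathbb{X}^n}$: since every $\sigma_{t'}^{\otimes n}\ge 0$,
\begin{equation*}
|\mathcal{P}^{p,\delta}_{\mathbb{X}^n}|\,\tau^{p,\delta}_{\mathbb{X}^n} \;\ge\; \sigma_t^{\otimes n}.
\end{equation*}
This is exactly where the typicality constraint enters: $t$ itself appears in the sum precisely because $t$ is $\delta$-typical to $p$. It therefore remains to show $\overline{\omega}_t \le |\mathcal{P}_{\mathbb{X}^n}|\,\sigma_t^{\otimes n}$.

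We compare diagonal entries. For $x^n \notin T^t_{\mathbb{X}^n}$ the left side is $0$, so there is nothing to check. For $x^n \in T^t_{\mathbb{X}^n}$, the left side is $1/|T^t_{\mathbb{X}^n}|$. The right side is $|\mathcal{P}_{\mathbb{X}^n}|\prod_x t(x)^{n t(x)}$, which is the same for all sequences of type $t$. Call this common value $\sigma_t^{\otimes n}(x^n)$. Since $\sigma_t^{\otimes n}$ is a probability distribution, we have $|T^t_{\mathbb{X}^n}|\,\sigma_t^{\otimes n}(x^n) = \sigma_t^{\otimes n}(T^t_{\mathbb{X}^n})$. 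The needed inequality is therefore equivalent to
\begin{equation*}
\sigma_t^{\otimes n}(T^t_{\mathbb{X}^n}) \ge \frac{1}{|\mathcal{P}_{\mathbb{X}^n}|}.
\end{equation*}
This is the standard method-of-types fact that, under $\sigma_t^{\otimes n}$, the type class $T^t$ has the largest probability among all type classes. The bound follows because the $|\mathcal{P}_{\mathbb{X}^n}|$ type-class probabilities sum to $1$, so the largest one is at least $1/|\mathcal{P}_{\mathbb{X}^n}|$.

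The main obstacle is establishing this maximality if it is not already available among the preliminaries. I would argue it directly: for any type $t'$, write
\begin{equation*}
\frac{\sigma_t^{\otimes n}(T^{t'})}{\sigma_t^{\otimes n}(T^{t})} = \prod_x \frac{(n t(x))!}{(n t'(x))!}\, t(x)^{n(t'(x)-t(x))}.
\end{equation*}
Then apply the elementary bound $m!/k! \le m^{m-k}$, which holds for $m \ge k$ and for $m<k$ alike. This bounds the ratio by $\prod_x n^{n(t(x)-t'(x))} = n^0 = 1$, since both $t$ and $t'$ sum to $1$. Chaining the three steps gives $\overline{\omega}_t \le |\mathcal{P}_{\mathbb{X}^n}|\,|\mathcal{P}^{p,\delta}_{\mathbb{X}^n}|\,\tau^{p,\delta}_{\mathbb{X}^n}$, and convexity finishes the proof.
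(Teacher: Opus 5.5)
Your proof is correct and follows essentially the same route as the paper: you write $\rho$ as a mixture of the $\overline{\omega}_t$ with typical $t$, bound $\overline{\omega}_t \le |\mathcal{P}_{\mathbb{X}^n}|\,\sigma_t^{\otimes n}$ entrywise through $\sigma_t^{\otimes n}(T^t_{\mathbb{X}^n}) \ge 1/|\mathcal{P}_{\mathbb{X}^n}|$ (this is Lemma~\ref{lemma:TypedStateLowerBound}), and absorb the weights into the uniform mixture at a cost of $|\mathcal{P}^{p,\delta}_{\mathbb{X}^n}|$. The one addition is that you prove the maximality of the type-class probability, which the paper simply quotes as \eqref{eq:type-probability-comparison}, via the standard Cover--Thomas factorial bound; there you should treat the case $t(x)=0<t'(x)$ separately, since $m^{m-k}$ is undefined when $m=0<k$, but then $\sigma_t^{\otimes n}(T^{t'}_{\mathbb{X}^n})=0$ anyway.
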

To prove this result, we first upper-bound each symmetric basis state $\overline{\omega}_t$ in the sum defining $\rho$ by the corresponding $\sigma_t^{\otimes n}$ state up to a factor $|\mathcal{P}_{\mathbb{X}^n}|$. This follows since the weight on sequences of the same type is uniform in both distributions $\overline{\omega}_t$ and $\sigma_t^{\otimes n}$, and we can then relate the two distributions by only considering the total weight on each type. The type $t$ has the highest total weight in the distribution $\sigma_t^{\otimes n}$. Given that there are $|\mathcal{P}_{\mathbb{X}^n}|$ different types, multiplying the distribution $\sigma_t^{\otimes n}$ by $|\mathcal{P}_{\mathbb{X}^n}|$ ensures that the total weight on the type $t$ is at least one, since $\sigma_t^{\otimes n}$ is normalized. The bound on $\overline{\omega}_{t}$ then follows, since $t$ is the only type that appears in $\overline{\omega}_t$, and therefore has a total weight of one. We also obtain an additional factor of $|\mathcal{P}_{\mathbb{X}^n}^{p,\delta}|$ in \eqref{eq:restrictedClassicalQuantumDeFinettiReduction} to upper bound the distribution $q$ by the uniform one in~$\tau_{\mathbb{X}^n}^{p,\delta}$.

In the application of our results to quantum compression, which is covered in the next section, we consider maximizing a trace norm over classical input sequences that are $\delta$-typical to a distribution $p$ over $\mathbb{X}$. We formalize this by defining the following $\delta$-typical classical-quantum channel norm $\left\|\cdot\right\|_{\diamond,cl}^{p,\delta}$ for a linear map $\Delta$ acting on classical inputs on system $X^n$ and mapping them to quantum outputs:
\begin{equation}
    \label{eq:classical_delta_typical_diamond_norm}
    \|\Delta\|_{\diamond,cl}^{p,\delta}\!:=\!\max_{x^n \in T^{p,\delta}_{\mathbb{X}^n}} \left\|\left(\Delta\!\otimes\!\mathbf{id}\right)\left(\ketbra{x^n}{x^n}_{X^n}\!\otimes\!\ketbra{x^n}{x^n}_{R}\right)\right\|_1,
\end{equation}
in which $\mathbf{id}$ is the identity channel on a reference system $R$ that contains a copy of the classical input. With this, we obtain from our restricted de Finetti reduction the following corollary, similar to Theorem~1 of Ref.~\cite{christandl2009}.
\begin{rem}[Informal summary of Theorem~\ref{thm:restrictedClassicalQuantumDeFinettiReduction}] If for a linear map $\Delta$ and for all permutations $\pi$ there exists a CPTP map $\mathcal{K}_\pi$ such that $\Delta \circ \pi = \mathcal{K}_{\pi} \circ \Delta$ (permutation-covariance), in which $\circ$ denotes composition of maps, then
\begin{equation}
\label{eq:deFinettiReductionResults_informal}
\|\Delta\|_{\diamond,cl}^{p,\delta} \leq \left|\mathcal{P}_{\mathbb{X}^n}\right|\left|\mathcal{P}_{\mathbb{X}^n}^{p,\delta}\right|\left\|\left(\Delta \otimes \mathbf{id}\right)\left(\tau_{X^nR}^{p,\delta}\right)\right\|_1,
\end{equation}
in which $\tau_{X^nR}^{p,\delta}$ contains, on a reference system $R$, a copy of the classical inputs in the sum defining the de Finetti state $\tau_{X^n}^{p,\delta}$, and $\textbf{id}$ is the identity channel on that reference system.
\end{rem}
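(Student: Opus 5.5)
Fix $x^n \in T^{p,\delta}_{\mathbb{X}^n}$ attaining the maximum in \eqref{eq:classical_delta_typical_diamond_norm}, with type $t_0 := t_{x^n} \in \mathcal{P}^{p,\delta}_{\mathbb{X}^n}$. The plan is to pass from the single input $\ketbra{x^n}{x^n}_{X^n}\otimes\ketbra{x^n}{x^n}_R$ to the symmetrized classical state $\overline{\omega}_{t_0}$ (with its copy on $R$), and then apply the operator inequality of Theorem~\ref{thm:restrictedClassicalQuantumDeFinettiReduction_intro}.

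\textbf{Step 1 (symmetrization via covariance).} For each permutation $\pi$, the state $(\Delta\otimes\mathbf{id})(\pi(\ketbra{x^n}{x^n})\otimes\pi(\ketbra{x^n}{x^n})_R)$ equals $(\mathcal{K}_\pi\otimes\pi_R)\circ(\Delta\otimes\mathbf{id})$ applied to the original input, using $\Delta\circ\pi=\mathcal{K}_\pi\circ\Delta$ on the input system and applying $\pi$ directly on the classical copy in $R$. Since $\mathcal{K}_\pi\otimes\pi_R$ is CPTP and $\pi_R$ is invertible, the trace norm is the same for every $\pi$. (For a general linear $\Delta$ I would use contractivity of CPTP maps to get $\le$, and apply it to $\pi^{-1}$ to get $\ge$.) The outputs for different $\pi x^n$ are orthogonal on $R$, because the classical copies are distinct basis vectors once we restrict to distinct strings. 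Hence the trace norm of the $R$-block-diagonal operator
\[
(\Delta\otimes\mathbf{id})(\overline{\omega}_{t_0,X^nR}), \qquad \overline{\omega}_{t_0,X^nR}:=\frac{1}{|T^{t_0}_{\mathbb{X}^n}|}\sum_{y^n\in T^{t_0}_{\mathbb{X}^n}}\ketbra{y^n}{y^n}\otimes\ketbra{y^n}{y^n}_R,
\]
is the average of the block norms. So $\|\Delta\|^{p,\delta}_{\diamond,cl}=\|(\Delta\otimes\mathbf{id})(\overline{\omega}_{t_0,X^nR})\|_1$.

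\textbf{Step 2 (operator inequality).} The per-type bound underlying Theorem~\ref{thm:restrictedClassicalQuantumDeFinettiReduction_intro} is $\overline{\omega}_{t_0}\le|\mathcal{P}_{\mathbb{X}^n}|\sigma_{t_0}^{\otimes n}$. Together with $\sigma_{t_0}^{\otimes n}\le|\mathcal{P}^{p,\delta}_{\mathbb{X}^n}|\tau^{p,\delta}_{\mathbb{X}^n}$, it gives $\overline{\omega}_{t_0}\le|\mathcal{P}_{\mathbb{X}^n}||\mathcal{P}^{p,\delta}_{\mathbb{X}^n}|\tau^{p,\delta}$. All of these are diagonal, so the same inequality holds for the extended states with classical copies on $R$, and the difference $D:=c\,\tau_{X^nR}^{p,\delta}-\overline{\omega}_{t_0,X^nR}\ge0$, where $c=|\mathcal{P}_{\mathbb{X}^n}||\mathcal{P}^{p,\delta}_{\mathbb{X}^n}|$.

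\textbf{Step 3 (monotonicity of the norm).} This is the main obstacle: $\Delta$ is only linear, so $D\ge0$ does not immediately give $\|(\Delta\otimes\mathbf{id})(\overline{\omega})\|_1\le c\|(\Delta\otimes\mathbf{id})(\tau)\|_1$. I would exploit the classical copy on $R$. Both $\tau_{X^nR}$ and $\overline{\omega}_{X^nR}$ are diagonal in $\ket{y^n}\ket{y^n}_R$, so their images are block diagonal in $R$. The trace norm is therefore a sum over $y^n$ of weights times $\|\Delta(\ketbra{y^n}{y^n})\|_1$. Writing $a_{y^n}\ge0$ and $b_{y^n}\ge0$ for the weights of $\overline{\omega}$ and $\tau$, we have $a_{y^n}\le c\,b_{y^n}$ pointwise, and
\[
\sum_{y^n} a_{y^n}\|\Delta(\ketbra{y^n}{y^n})\|_1\le c\sum_{y^n} b_{y^n}\|\Delta(\ketbra{y^n}{y^n})\|_1 .
\]
This completes the argument, and Step 1 likewise reduces to the block-diagonal averaging.
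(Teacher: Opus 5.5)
Your proof is correct, and in its decisive step it takes a more elementary route than the paper. Step~1 is the paper's Lemma~\ref{lemma:symmetricRestrictedMaximization} specialized to a single maximizing sequence. Step~2 is Lemma~\ref{lemma:TypedStateLowerBound} combined with the trivial bound $\sigma_{t_0}^{\otimes n}\le|\mathcal{P}^{p,\delta}_{\mathbb{X}^n}|\,\tau^{p,\delta}_{X^n}$.

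The difference is in Step~3. For the final transfer through $\Delta$, the paper follows the post-selection technique of Christandl, K\"onig and Renner. In Lemma~\ref{lemma:deFinettiReductionCPTNIMap} it constructs a CPTNI map $\mathcal{T}$ on the reference, with Kraus operators $\sqrt{r_t'}\,\Pi_{t,R^n}\otimes\ketbra{t}{t}_J$ and an extra type register $J$. This map satisfies $\tilde{\rho}_{X^nR^n}=|\mathcal{P}_{\mathbb{X}^n}||\mathcal{P}^{p,\delta}_{\mathbb{X}^n}|\,(\mathbf{id}\otimes\mathcal{T})(\tau^{p,\delta}_{X^nR^nJ})$, and the paper then removes $\mathcal{T}$ by contractivity of the trace norm.

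You instead observe that, with a classical copy on $R$, the quantity $\|(\Delta\otimes\mathbf{id})(\sum_{y^n}w_{y^n}\ketbra{y^n}{y^n}\otimes\ketbra{y^n}{y^n}_R)\|_1$ equals $\sum_{y^n}w_{y^n}\|\Delta(\ketbra{y^n}{y^n})\|_1$. It is therefore linear in the weights with nonnegative coefficients, so a diagonal operator inequality transfers entrywise. The paper's register $J$ does not affect this: blocks indexed by $(y^n,t)$ regroup into blocks indexed by $y^n$ with total weight $\tau^{p,\delta}(y^n):=\bra{y^n}\tau^{p,\delta}_{X^n}\ket{y^n}$. Your right-hand side is therefore the same as the paper's.

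Your route is shorter and needs no Kraus construction. It also shows that the whole statement reduces to two facts:
\[
\tau^{p,\delta}(T^{t_0}_{\mathbb{X}^n})\,\|\Delta(\ketbra{x^n}{x^n})\|_1\le\sum_{y^n}\tau^{p,\delta}(y^n)\,\|\Delta(\ketbra{y^n}{y^n})\|_1 ,\qquad \tau^{p,\delta}(T^{t_0}_{\mathbb{X}^n})\ge\frac{1}{|\mathcal{P}_{\mathbb{X}^n}||\mathcal{P}^{p,\delta}_{\mathbb{X}^n}|}.
\]
The post-selection map, by contrast, does not rely on this linearity. That independence is what would be needed if the reference held a genuine purification and $\Delta$ did not dephase its input, as in the quantum reduction the paper imitates. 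In the classical-copy setting of this statement, it is more machinery than necessary.
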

To prove this result, we first show that it is equivalent to do the classical-quantum channel norm's maximization over the $\delta$-typical classical symmetric subset instead. This follows from using the permutation covariance of the map $\Delta$ which induces an invariance of the trace norm under any permutation of the input state. We then use a similar post-selection technique to Lemma~2 of Ref.~\cite{christandl2009} to transform our de Finetti reduction as stated in \eqref{eq:restrictedClassicalQuantumDeFinettiReduction} into an equality by constructing a suitable trace-non-increasing map for the symmetric state of the form of \eqref{eq:convex_combination_of_symmetric_basis_states} that maximizes the trace norm, to map the de Finetti state $\tau_{X^nR}^{p,\delta}$ to that symmetric state up to a factor $\left|\mathcal{P}_{\mathbb{X}^n}\right||\mathcal{P}_{\mathbb{X}^n}^{p,\delta}|$. The trace-non-increasing property of that map induces the norm inequality in~\eqref{eq:deFinettiReductionResults_informal}. 
\subsection{Compression of Interactive Quantum Communication Protocols}
\subsubsection{Background and Related Work}
\label{sec:intro_interactive}
\paragraph{One-way communication} One-way communication is a communication model in which information flows only from a sender to a receiver, with no messages sent back from the receiver to the sender. In the quantum setting, results like Schumacher source compression \cite{schumacher1996quantum} and state-merging, also known as the fully quantum Slepian-Wolf theorem \cite{Horodecki_2005, Horodecki_2006,Abeyesinghe_2009}, fall under this model. Running the state merging protocol in reverse gives rise to what is known as state splitting, or the quantum reverse Shannon theorem \cite{bennett2002}, which serves as a building block for the interactive quantum communication model we later discuss.

Let $\ket{\psi_{RA}}$ denote a purification of the state $\rho_{A}$ on the Hilbert space $\mathcal{H}_{R} \otimes \mathcal{H}_{A}$, and let $\mathcal{N}_{A \rightarrow C}$ denote a quantum channel that is a CPTP map, with isometric extension $\mathcal{V}_{A \rightarrow A_{1}C}$ on the environment system $A_{1}$. In the state splitting protocol, Alice begins with the $A$ system, and her goal is to ``split her state" and transfer part of her system to Bob in such a way that at the end of the protocol, Bob holds the system $C$ corresponding to the output of the channel $\mathcal{V}_{A \rightarrow A_{1}C}$ acting on $A$. This can be interpreted as Alice communicating her system $A$ to Bob through a feedback channel that returns the system $A_{1}$ to her and sends the system $C$ to Bob. The quantum reverse Shannon theorem gives a bound on the communication complexity of simulating this using a noiseless qubit channel and pre-shared entanglement. We refer to this as the simulation, over a noiseless channel, of a noisy channel with feedback to the sender. In the asymptotic setting, Alice is given the i.i.d. input state $\rho^{\otimes n}_{A}$ and the resulting channel is the product channel $\mathcal{V}_{A \rightarrow A_1C}^{\otimes n}$. In this case, Alice can in some sense transfer the correlations between the $C$ and $A_{1}$ systems to $\frac{n}{2}I(A_{1};C)$ pairs of entangled qubits she shares with Bob, and with this she only has to communicate $\frac{n}{2}I(R;C)$ qubits \cite{Abeyesinghe_2009}, in which $I(\cdot ; \cdot)$ denotes the mutual information (See Ref.~\cite{wilde} for definitions of quantum information theoretic quantities). 

In the setting we just described, Alice and Bob have knowledge of what the input state $\rho_{A}$ is, and we refer to this as the distributional setting. This is in contrast with the prior-free setting, in which Alice and Bob no longer have knowledge of what the input state is. The quantum reverse Shannon theorem has been studied in the prior-free setting in Ref.~\cite{bennett_devetak_harrow_shor_winter_2014}, in which they develop a quantum theory of types to do so, and Ref.~\cite{Berta_2011} where they use one-shot information theory together with the de Finetti reduction of Ref.~\cite{christandl2009}. Although their results do not assume a restriction to classical input states, as we do in our case, their results cannot immediately be applied to the interactive setting that we are interested in studying, as it is not obvious how to extend their results to the case where Bob is also given an input system in the prior-free setting.

Before introducing the interactive communication model, it is useful to first look at quantum state redistribution. Moving back to the distributional setting, in addition to providing Alice with the system $A$, Bob is given some side information, namely a system $B$, in which the global purified input state is $\ket{\psi_{RAB}}$, and after applying the channel $\mathcal{V}_{A \rightarrow A_1C}$ on $A$, becomes $\ket{\psi_{RA_{1}CB}}$ \cite{devetak2008}. Alice's goal remains the same, however, she now may be able to use the correlations between her and Bob's input states to reduce the communication cost of the task. The communication complexity is given by the conditional mutual information, with $\frac{n}{2}I(C;R|B)$ qubits needing to be sent~\cite{devetak2008}.

\paragraph{Interactive communication}
In the interactive setting, Alice and Bob alternate taking on the roles of sender and receiver. The players begin with the same input state $\ket{\psi_{RA_{in}B_{in}}}$ as in quantum state redistribution, in which the $A_{in}$ system is owned by Alice and the $B_{in}$ system is owned by Bob. An interactive protocol is iterated by rounds of communication, and we assume that for odd rounds Alice is the sender and Bob is the receiver, and that for even rounds Bob is the sender and Alice is the receiver. Alice is thus the sender for the first round. In this first round, just as in quantum state redistribution, after having applied a quantum channel $\mathcal{V}_{A_{in} \rightarrow A_{1}C_{1}}$ on the system $A_{in}$ to obtain the state $\ket{\psi_{RA_{1}C_{1}B_{in}}}$, Alice's goal is to transfer the system $C_{1}$ to Bob. In the second round, Bob is the sender, and his goal is to do the same as Alice did in the first round, but with respect to a second quantum channel $\mathcal{V}_{B_{in}C_{1} \rightarrow B_{2}C_{2}}$ acting on the joint system $B_{in}C_{1}$ that he now holds after the first round. After this second round, the global purified state is $\ket{\psi_{RA_{1}C_{2}B_{2}}}$, in which the $A_{1}$ and $C_{2}$ systems are with Alice, and the $B_{2}$ system is with Bob. Subsequent rounds follow in the same way, in which in the $i$th round, the sender uses the $i$th quantum channel acting on the systems they hold, to then transfer part of the resulting state to the other player. The global state after $i$ rounds, assuming $i$ is odd, is $\ket{\psi_{RA_{i}B_{i-1}C_{i}}}$, in which the $A_{i}$ system is with Alice, and the $B_{i-1}$ and $C_{i}$ systems are with Bob. An illustration of an interactive quantum protocol can be found in Figure \ref{fig:interactive_protocol} of Section \ref{section:QuantumInteractiveCommunicationSimulation}.

We focus on the case where the systems $A_{in}, B_{in}$ are restricted to classical inputs from fixed alphabets $\mathbb{X}$ and $\mathbb{Y}$ respectively. We study the communication cost of simulating the protocol $\Pi$ in the asymptotic and prior-free setting. This means that the players have no knowledge of how their classical input sequences were generated, and that we cannot assume the existence of a prior distribution on their classical input sequences (the prior-free setting). In the asymptotic setting, we are interested in using noiseless quantum communication and pre-shared entanglement to simulate $n$ identical and independent executions of the protocol $\Pi$ on $n$ pairs of inputs $\Big((x_{1}, y_{1}), (x_{2}, y_{2}), ..., (x_{n}, y_{n})\Big)\in \left(\mathbb{X}\times\mathbb{Y}\right)^{n}$, which we denote as $(x^{n}, y^{n}) \in \mathbb{X}^{n}\times \mathbb{Y}^{n}$ for short. We use $\Pi^{\otimes n}(x^{n}, y^{n})$ to denote the output state of the protocol $\Pi$ applied identically and independently to each pair of inputs in $(x^{n}, y^{n})$.

One motivation of this setting is quantum communication complexity \cite{yao1993quantum, cleve1997}. In quantum communication complexity, one often compares whether or not quantum communication can help solve 2-party classical communication complexity problems more efficiently, so the problems naturally have classical inputs. Furthermore, a protocol that solves a communication problem should work for any input in the problem, which motivates the prior-free setting. The asymptotic (i.i.d) setting appears in certain communication problems that have an i.i.d structure, for example, when a problem can be decomposed into smaller subproblems identical to one another. The interactive setting allows Alice and Bob to communicate to each other back and forth repeatedly to solve more general communication complexity problems.

For $\varepsilon > 0$, we say that the protocol $\hat{\Pi}_{n, \varepsilon}$ simulates the protocol $\Pi^{\otimes n}$ with error $\varepsilon$ if for every$(x^{n}, y^{n})\in\mathbb{X}^{n}\times\mathbb{Y}^{n}$,
\begin{equation}
\label{eq:simulation_eq_intro}
    \|\Pi^{\otimes n}(x^{n}, y^{n}) - \hat{\Pi}_{n, \varepsilon}(x^{n}, y^{n})\|_{1} < \varepsilon,
\end{equation}
in which $\|\cdot\|_{1}$ denotes the trace distance.
Let $QCC(\Pi)$ denote the number of qubits that are exchanged in either direction during the execution of the protocol $\Pi$. The prior-free amortized quantum communication complexity of a protocol $\Pi$ is defined as
\begin{equation}
    AQCC(\Pi) = \lim_{\varepsilon \rightarrow 0}\lim_{n \rightarrow \infty} \min_{\hat{\Pi}_{n, \varepsilon}}\frac{QCC(\hat{\Pi}_{n, \varepsilon})}{n},
\end{equation}
in which the minimum is over simulation protocols $\hat{\Pi}_{n,\varepsilon}$ that satisfy \eqref{eq:simulation_eq_intro}.

In the distributional setting for interactive quantum communication protocols, in which the inputs $(x^{n}, y^{n})$ are $n$ i.i.d. inputs from a prior-distribution $\mu$, and the simulation only has to satisfy~\eqref{eq:simulation_eq_intro} on average over $\mu^{\otimes n}$, the equivalence between the quantum information cost and the expected amortized quantum communication cost for interactive quantum communication protocols was shown in Ref.~\cite{touchette2015quantum}. The quantum information cost of a $j$-round interactive quantum communication protocol $\Pi$ acting on classical inputs from a prior-distribution~$\mu$ is defined as 
\begin{equation}
    QIC(\Pi, \mu) = \sum_{i>0, odd}\frac{1}{2}I(C_{i};R|B_{i-1})_{\mu}+\sum_{i>0, even} \frac{1}{2}I(C_{i};R|A_{i-1})_{\mu},
\end{equation}
in which $B_{0} = B_{in}$.

The prior-free quantum information complexity for a classical information processing task was first defined in Ref.~\cite{braverman2015}. The prior-free quantum information cost of a protocol $\Pi$ acting on classical inputs is defined as
\begin{equation}
    QIC(\Pi) = \max_{\mu}QIC(\Pi, \mu).
\end{equation}
The connection between the prior-free quantum information and prior-free amortized quantum communication complexity was left open in Ref.~\cite{braverman2015}, and lacking such a result they had to resort to alternative methods to obtain their main result. As an application of our type-constrained de Finetti reduction, we demonstrate the equivalence between the prior-free quantum information and prior-free amortized quantum communication costs of interactive quantum communication protocols with classical inputs. In the process, we obtain optimal bounds on the prior-free communication complexity of quantum state redistribution with type-constrained classical inputs. We summarize our results in the following section.

\subsubsection{Results}
\label{sec:results_interactive}
As a warm-up, we begin with one-way communication and prove the following result for the prior-free quantum reverse Shannon theorem for classical-quantum channels with no side information at the receiver, which matches the results given in Ref.\cite{bennett_devetak_harrow_shor_winter_2014} when applied to classical-quantum channels.
\begin{rem}[Informal summary of Theorem \ref{thm:prior_free_cq_channel_simulation}]
\label{thm:qrst}
Let $\mathcal{N}_{X \to C}$ be a classical-quantum channel, and let $\mathcal{V}_{X \to A_1C}$ be an isometric extension of $\mathcal{N}$. Let $R$ be a system that purifies the input to the channel and let $n \in \mathbb{N}_{>0}$. Then, there exists a 2-message prior-free simulation protocol of $\mathcal{V}^{\otimes n}$ such that for every input sequence, the asymptotic quantum communication cost is $\frac{n}{2}I(C;R)_{\sigma_t}$ up to an additive term sublinear in $n$, in which $\sigma_t$ is the distribution of the type $t$ of the input sequence. The simulation uses entanglement and is accurate up to a trace distance error that vanishes exponentially fast in $n$.
\end{rem}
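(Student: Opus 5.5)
The plan has two stages: first simulate $\mathcal{V}^{\otimes n}$ optimally for inputs whose type is roughly known, then remove that assumption by having Alice announce her type.

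\textbf{Type announcement.} Since the input $x^n$ is classical and held by Alice, she can compute its type $t = t_{x^n}$ exactly. In the first message she sends a description of $t$ to Bob. This costs at most $\log|\mathcal{P}_{\mathbb{X}^n}| \leq |\mathbb{X}|\log(n+1)$ bits, hence qubits, which is sublinear in $n$. From then on both parties know $t$. They run a state-splitting / reverse-Shannon protocol tailored to the i.i.d. distribution $\sigma_t^{\otimes n}$. This gives the second message, so the protocol has 2 messages in total.

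\textbf{Distributional simulation for fixed $t$.} For the i.i.d. source $\sigma_t^{\otimes n}$ purified on $R^n$, I would invoke the known distributional result of Ref.~\cite{Abeyesinghe_2009}: the state splitting protocol for $\mathcal{V}^{\otimes n}$ uses pre-shared entanglement and $\frac{n}{2}I(C;R)_{\sigma_t} + o(n)$ qubits. Its error on the purified i.i.d. input is exponentially small in $n$, provided the rate margin is chosen as a vanishing but not too fast sequence. I would also fix a finite family of these protocols, one per type. Continuity of $I(C;R)_{\sigma_t}$ in $t$ then gives uniform control of the $o(n)$ term over types.

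\textbf{Transfer to worst-case inputs.} Let $\Delta_t = \mathcal{V}^{\otimes n} - \hat{\Pi}_t$, restricted to inputs of type exactly $t$ (take $\delta = 0$, or a small $\delta$ to absorb the announcement). The map $\Delta_t$ is permutation covariant, because both the channel and a protocol built from i.i.d. codes can be symmetrized by a shared random permutation. This symmetrization costs no communication when it uses shared randomness drawn from the entanglement. I would then apply Theorem~\ref{thm:restrictedClassicalQuantumDeFinettiReduction} with $p = t$:
\[
\max_{x^n \in T^{t}_{\mathbb{X}^n}} \left\|(\Delta_t\otimes \mathbf{id})(\ketbra{x^n}{x^n}\otimes\ketbra{x^n}{x^n})\right\|_1 \leq |\mathcal{P}_{\mathbb{X}^n}|\,|\mathcal{P}^{t,\delta}_{\mathbb{X}^n}|\, \left\|(\Delta_t\otimes\mathbf{id})(\tau^{t,\delta}_{X^nR})\right\|_1 .
\]
The right-hand side is a polynomial factor times a convex combination of errors on i.i.d. inputs $\sigma_{t'}^{\otimes n}$ with $\|t'-t\|_1\leq\delta$. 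Each of these errors is exponentially small, since the code for $\sigma_t$ still works for $\sigma_{t'}$ when its rate margin dominates $|I(C;R)_{\sigma_t}-I(C;R)_{\sigma_{t'}}|$. The polynomial factor therefore does not spoil the exponential decay.

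\textbf{Main obstacle.} The hardest step is to guarantee that a single code designed for $\sigma_t$ has exponentially small error uniformly over all nearby $\sigma_{t'}$, and that the reference purification $R$ matches the classical-copy structure used in $\tau_{X^nR}$. For the first point I would use typical-subspace projectors of $\sigma_t$ with slack $\delta$, together with continuity bounds (Fannes-type) on the entropies. For the second point I would note that, for a classical-quantum channel, purifying the classical input is equivalent to keeping a classical copy in $R$ up to an isometry on $R$, which leaves the trace norm unchanged.
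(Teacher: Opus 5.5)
Your architecture matches the paper's: announce the exact type with $O(\log n)$ bits, run a state-splitting protocol indexed by it, and pass from worst-case inputs to a de Finetti state via Theorem~\ref{thm:restrictedClassicalQuantumDeFinettiReduction}. However, the transfer step you describe for $\delta>0$ rests on a claim that is false in general. A state-splitting code built for $\sigma_t^{\otimes n}$ need not work on $\sigma_{t'}^{\otimes n}$ for $\|t'-t\|_1\le\delta$, whatever its rate margin. Its typical projectors and Uhlmann isometry are tailored to $\sigma_t^{\otimes n}$. Its guarantee is an average over type classes weighted by $t^{\otimes n}$, which gives weight at most $2^{-nD(t'\|t)}$ to the class where $\sigma_{t'}^{\otimes n}$ concentrates. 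If $t(x)=0<t'(x)$ (possible once $\delta\ge 2/n$ whenever $t$ lacks full support), $\sigma_t^{\otimes n}$ never produces $x$, and the code may act arbitrarily on every sequence containing it. Using projectors ``of $\sigma_t$ with slack $\delta$'' does not repair this, because the encoding isometry is still certified only on $\sigma_t^{\otimes n}$.

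The paper never transfers a code. Proposition~\ref{prop:de_finetti_state_state_splitting} applies decoupling and Uhlmann's theorem directly to the purified mixture $\tau^{t,\delta}$, with a type register $R_T$ and a separate projector $\Pi^{t',\delta}$ for each $t'$ in the mixture. Dimensions and purity are bounded by worst-case entropies over the $\delta$-ball and then by Fannes continuity, so the error is certified on exactly the state the reduction asks for. In your setting the simpler repair is the option you mention and then drop. Since Alice announces the \emph{exact} type, take $\delta=0$: then $\mathcal{P}^{t,0}_{\mathbb{X}^n}=\{t\}$, $\tau^{t,0}=\sigma_t^{\otimes n}$, and Lemma~\ref{lemma:TypedStateLowerBound} gives the factor $|\mathcal{P}_{\mathbb{X}^n}|\le(n+1)^{|X|}$ with no neighbouring types at all. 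The footnote to Protocol~\ref{protocol:cq_channel_simulation_one_way} says the same. The paper keeps $\delta>0$ only so that the construction carries over to the side-information case, where just an estimate $\tilde t$ is shared.

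Even with $\delta=0$, citing Ref.~\cite{Abeyesinghe_2009} as a black box is not enough. That result gives, for each \emph{fixed} source, an error vanishing as $n\to\infty$ at a source-dependent speed. Your source $\sigma_t$ instead ranges over $\mathcal{P}_{\mathbb{X}^n}$, which changes with $n$. You therefore need an error bound depending only on $n$, the margin and $|X|,|C|,|E|$. Continuity of $I(C;R)_{\sigma_t}$ controls the rate, not the error. Such a bound comes from opening the box: the type estimates \eqref{eq:delta-typical-subspace-probability}--\eqref{eq:delta-typical-subspace-dimension} hold for every distribution, and fed into Theorem~\ref{thm:decoupling} they yield the type-independent bounds of Proposition~\ref{prop:de_finetti_state_state_splitting}. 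Also, passing from the purification to the classical copy on $R$ is a dephasing of $R$, not an isometry; monotonicity of the trace norm still gives the inequality you need.

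On covariance, your shared random permutation is more explicit than the paper's argument. The paper infers covariance from the permutation invariance of the typical subspaces, but the decoupling unitary and Uhlmann isometry need not commute with permutations. Averaging $\kappa_{\mathrm{out}}^{-1}\circ\hat{\Pi}\circ\kappa$ over uniformly random $\kappa\in S_n$ gives $\Delta\circ\pi=\pi_{\mathrm{out}}\circ\Delta$ exactly, and does not increase the error on the permutation-invariant state $\sigma_{t,XR}^{\otimes n}$. Its price is $\log_2 n!=\Theta(n\log n)$ shared random bits. That is harmless for this informal statement, but if the bits are drawn from ebits it exceeds the entanglement rate claimed in Theorem~\ref{thm:prior_free_cq_channel_simulation}.
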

To prove this, we construct a protocol in which the sender first communicates the type of the input sequence to the receiver using only $O(\log(n))$ bits of classical communication. We can then use the corollary of our de Finetti reduction to reduce the analysis of the rest of the protocol on our $\delta$-typical de Finetti state $\tau_{X^nR}^{t,\delta}$ for some conveniently chosen $\delta > 0$, in which $t$ is the exact type of the input sequence. Then, for every type $t$ we define a state splitting protocol for the output of the isometry $\mathcal{V}^{\otimes n}$ applied to the corresponding de Finetti state. With this, we obtain precision performances that depend on worst-case information quantities over all types $\delta$-typical to the input distribution $t$. We can then apply the continuity of entropy to relate those worst-case information quantities to the information quantities of the actual type $t$ of the input sequence. This gives us, for every type $t$, a protocol for state splitting that uses entanglement and the aforementioned communication to simulate the isometry $\mathcal{V}^{\otimes n}$ on the de Finetti state $\tau_{X^nR}^{t,\delta}$ up to an error that vanishes exponentially fast in $n$ for a fixed $\delta$. A prior-free simulation protocol can then be constructed with the promised performances by communicating the type as a first step, and conditionally applying the state splitting protocol corresponding to that type as a second step.

We then consider the case where the receiver has side information. As an intermediate step, we follow the construction of \cite{Ye_Bai_Wang_2008}, and develop an efficient state redistribution protocol for our type-constrained de Finetti state, which we then apply to obtain a novel result for the prior-free quantum reverse Shannon theorem for classical inputs with side information at the receiver, which is summarized in the following theorem.
\begin{rem}[Informal summary of Theorem \ref{thm:prior_free_cq_channel_simulation_with_side_info}]
\label{thm:qrst_w_side_info}
Let $\mathcal{N}_{X \to C}$ be a classical-quantum channel, and let $\mathcal{V}_{X \to A_1C}$ be an isometric extension of $\mathcal{N}$. Let $R$ be a system that purifies the input of the channel and $Y$ be the system corresponding to the side information held by the receiver and let $n \in \mathbb{N}_{>0}$. Then, there exists a prior-free 2-message simulation protocol of $\mathcal{V}^{\otimes n}$ which, up to an additive term sublinear in $n$, has an asymptotic communication cost of $\frac{n}{2}I(C;R|Y)_{\sigma_t}$ qubits, in which $\sigma_t$ is the distribution of the joint type $t$ of the input sequences. The simulation consumes and generates entanglement, and is accurate up to an error in trace distance that, with high probability, vanishes exponentially fast in $n$.
\end{rem}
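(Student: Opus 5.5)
The plan mirrors the warm-up (the preceding theorem on the prior-free quantum reverse Shannon theorem without side information), with state splitting replaced by a state redistribution protocol built on the type-constrained de Finetti state.

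\textbf{Step 1: announce the joint type.} The joint type $t$ of $(x^n,y^n)$ is not known to either party alone. Alice first sends Bob the type $t_{x^n}$ of her own sequence using $O(\log n)$ classical bits. Bob then knows $t_{y^n}$ and $t_{x^n}$, but still not the joint type. I would therefore let Bob send back his marginal type, also $O(\log n)$ bits; this is the second message of the ``2-message'' protocol. Alice still cannot compute the exact joint type $t$ from the two marginals. The protocol should instead be built for a de Finetti state over all joint types compatible with the announced information, i.e.\ $\delta$-close to some reference. This is where the ``with high probability'' qualifier enters: the rate has to be tied to the true joint type via typicality.

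\textbf{Step 2: reduce to the de Finetti state.} Let $\Delta$ be the difference between $\mathcal{V}^{\otimes n}$, with Bob's $Y^n$ carried along, and the candidate simulation. Both are permutation covariant under simultaneous permutation of the pairs $(x_i,y_i)$, since the simulation is built from permutation-invariant encoders conditioned on types. Theorem~\ref{thm:restrictedClassicalQuantumDeFinettiReduction} then gives
\[
\|\Delta\|_{\diamond,cl}^{t,\delta}\le |\mathcal{P}_{(\mathbb{X}\times\mathbb{Y})^n}|\,|\mathcal{P}^{t,\delta}_{(\mathbb{X}\times\mathbb{Y})^n}|\,\|(\Delta\otimes\mathbf{id})(\tau^{t,\delta}_{X^nY^nR})\|_1 .
\]
The prefactor is polynomial in $n$. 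It therefore suffices to achieve exponentially small error on the state obtained by applying $\mathcal{V}^{\otimes n}$ to the purified de Finetti state $\tau^{t,\delta}$. Bob's side information is the $Y^n$ part, and the purification includes a register identifying which $\sigma_{t'}$ was drawn.

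\textbf{Step 3: one-shot state redistribution on $\tau$.} Following \cite{Ye_Bai_Wang_2008}, I would construct a redistribution protocol for the mixture $\frac{1}{|\mathcal{P}^{t,\delta}|}\sum_{t'}(\mathcal{V}\sigma_{t'})^{\otimes n}$. The construction has two layers. The first projects onto typical subspaces of each i.i.d. component; the union over polynomially many $t'$ costs only $O(\log n)$ extra qubits. The second uses decoupling/state merging for the reverse direction together with splitting. This yields communication
\[
\tfrac12\max_{t'\in\mathcal{P}^{t,\delta}}\big[H(C|Y)_{t'}+\dots\big]n+o(n)=\tfrac n2\max_{t'}I(C;R|Y)_{\sigma_{t'}}+o(n),
\]
with error $e^{-\Omega(n)}$ for fixed $\delta$.

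\textbf{Step 4: continuity and choice of $\delta$.} Continuity of conditional mutual information (Alicki--Fannes--Winter) replaces $\max_{t'}$ by $I(C;R|Y)_{\sigma_t}$ plus $O(\delta\log d)+h(\delta)$. Taking $\delta=\delta_n\to0$ slowly (e.g.\ $n^{-1/4}$) keeps the error exponentially small in a power of $n$ and makes the penalty sublinear.

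\textbf{Main obstacle.} I expect Step 3 to be hardest, in two respects. First, conditional entropies are not additive over a mixture of i.i.d. states. Second, the decoder and encoder must work uniformly for all components, and the entanglement-generation part of redistribution must not leak which component occurred. I would first try a smooth min/max-entropy one-shot redistribution bound applied to the mixture. I would bound the smooth entropies of the mixture by the worst component, using quasi-concavity up to $\log|\mathcal{P}|$ corrections. The i.i.d. components would then be handled by the asymptotic equipartition property.
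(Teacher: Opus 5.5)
\textbf{Gap in Step 1.} Exchanging the marginal types $t_{x^n}$ and $t_{y^n}$ reveals nothing about how $x^n$ and $y^n$ are correlated. The joint types consistent with two given marginals form the whole set of couplings, which in general lies in no small $\delta$-ball, so there is no reference type around which to build the de Finetti state. A protocol that can depend only on the marginals must be designed for all couplings at once. Your Step 3 then gives rate $\frac{n}{2}\max I(C;R|Y)$ over all couplings, which destroys exactly the side-information gain the theorem is about.

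Concretely, take $\mathbb{X}=\mathbb{Y}=\{0,1\}$ with balanced marginals and $\mathcal{N}$ the classical identity channel $x\mapsto\ketbra{x}{x}_C$. If $y^n=x^n$, then $I(C;X|Y)_{\sigma_t}=0$. The uniform coupling has the same marginals and $I(C;X|Y)=1$. Your construction would therefore pay about $n/2$ qubits where a sublinear cost is claimed.

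Typicality cannot rescue this. In the prior-free setting the input is arbitrary, so the only possible source of ``with high probability'' is the protocol's own randomness. For the same reason Step 2 is ill-posed as written: you centre the reduction at the true joint type $t$, which neither party knows, so no protocol can be indexed by it.

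\textbf{The missing ingredient} is a joint-type estimation subroutine; the paper uses Lemma 1 of \cite{padda2025}. With shared randomness the parties select $k=o(n)$ random coordinates and one of them reveals its symbols there. The empirical joint type of the sample is an estimate $\tilde t$ with $\|\tilde t-t\|_1\le\delta$, except with probability exponentially small in $k\delta^2$. The estimate is then shared using $O(\log n)$ bits; with $\delta=n^{-1/4}$ one can take, e.g., $k=n^{3/4}$.

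For each fixed value of $\tilde t$, you apply Theorem~\ref{thm:restrictedClassicalQuantumDeFinettiReduction} to the permutation-covariant map $\Pi^{\tilde t,\delta}_{sr}\circ V_{\mathcal{N}}^{\otimes n}$, centred at $\tilde t$. This is legitimate because on the success event the input lies in $T^{\tilde t,\delta}$. The total error is then bounded by the estimation failure probability plus the worst case over estimates with $\|\tilde t-t\|_1\le\delta$. Continuity is used twice: from the worst case over $\mathcal{P}^{\tilde t,\delta}$ to $\tilde t$, then from $\tilde t$ to $t$. With this fix your Steps 2--4 go through.

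\textbf{Remark on Step 3.} The paper does not use smooth entropies there. It keeps the type label in a purifying register $R_T$ that belongs to the reference, so the generated entanglement is decoupled from which component occurred. It projects each i.i.d.\ component onto its own typical subspaces. It bounds the typical dimensions by a sum over polynomially many types, and the purity of the projected mixture by a constant times the worst component purity. It then applies the Ye--Bai--Wang bidecoupling theorem together with Uhlmann isometries. Because decoupling needs only dimensions and purities, the non-additivity of conditional entropy over the mixture never arises. Your smooth-entropy and quasi-concavity route is a plausible alternative, but heavier.
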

To prove this, we follow a similar strategy as in the previous theorem. We first use the joint type estimation protocol from Lemma 1 of Ref.~\cite{padda2025} so that both the sender and the receiver have access, with high probability, to a $\delta$-typical approximation $\tilde{t}$ of the joint type of the input sequences for some conveniently chosen $\delta > 0$. This uses only an amount of communication that is sublinear in $n$, which vanishes in the amortized asymptotic setting. We can then use the corollary to our de Finetti reduction on the joint system $X^nY^n$ to reduce the analysis of the rest of the protocol on the $\delta$-typical de Finetti state $\tau_{X^nY^nR}^{\tilde{t},\delta}$. Instead of state splitting, we construct for every possibly estimated type $\tilde{t}$ the state redistribution protocol of the corresponding de Finetti state to get communication bounds in terms of the conditional mutual information on worse-case information theoretic quantities over all joint types that are $\delta$-typical to the input distribution $\tilde{t}$. Just like the case with no side information, continuity of entropy gives us bounds in terms of the information quantities evaluated on the actual joint type $t$ of the input sequences. We again combine the joint type estimation protocol and the state redistribution protocol corresponding to the estimated joint type $\tilde{t}$ to obtain a prior-free simulation protocol with the promised performances.

Finally, we extend this to the interactive setting to obtain our main result, in which we demonstrate the equivalence between the prior-free quantum information and communication costs of interactive quantum communication protocols with classical inputs. This is summarized in the following theorem:
\begin{rem}[Informal summary of Theorem~\ref{thm:int}]
\label{thm:int_intro}
Let $\Pi$ be a quantum interactive communication protocol applied to classical inputs, consisting of $j$ rounds of communication. For any $n \in \mathbb{N}_{>0}$, there exists a prior-free~$(j+1)$-round simulation protocol $\Pi_{\text{sim}}$ that simulates $\Pi^{\otimes n}$ such that for all input sequences the error of the simulation, with high probability, vanishes exponentially fast in $n$. Up to an additive term sublinear in $n$, the quantum communication cost of the simulation protocol is given by $n\sum_{i>0, \text{odd}}I(C_i;R|B_{i-1})_{\sigma_{t}} + n\sum_{i>0,\text{even}}I(C_{i};R|A_{i-1})_{\sigma_{t}}$, in which $\sigma_{t}$ is the distribution of the joint type $t$ of the input sequences. Here,~$C_j$ is the system communicated in round $j$, $A_j$ and $B_j$ are the local systems of Alice and Bob respectively at round~$j$, and $R$ is the purification system of the classical inputs. This proves that~$QIC(\Pi) = AQCC(\Pi)$.
\end{rem}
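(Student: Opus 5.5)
The approach is a round-by-round composition of the prior-free state redistribution result (Theorem~\ref{thm:qrst_w_side_info}), with the joint type estimated only once at the start. This shared estimate serves all $j$ rounds.

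\textbf{Step 1: type estimation.} In a preliminary round, Alice and Bob run the joint type estimation protocol of Lemma~1 of Ref.~\cite{padda2025} on $(x^n,y^n)$. With high probability, both then hold the same estimate $\tilde t$ satisfying $\|\tilde t - t\|_1\le\delta$, where $t=t_{x^ny^n}$. This uses $o(n)$ communication and is the source of the extra round, giving $j+1$ rounds in total.

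\textbf{Step 2: reduction to the de Finetti state.} Given $\tilde t$, the whole simulation is a map acting on classical inputs $X^nY^n$ with a reference copy $R$. Both $\Pi^{\otimes n}$ and the per-round codes will be chosen permutation-covariant: the state redistribution codes are built from $\tilde t$ only, and any permutation of input positions can be undone on the outputs by a CPTP map. So the difference map $\Delta=\Pi^{\otimes n}-\Pi_{\text{sim}}$ meets the hypothesis of Theorem~\ref{thm:restrictedClassicalQuantumDeFinettiReduction}. Its $\delta$-typical norm, over inputs whose type is $2\delta$-typical to $\tilde t$, is therefore bounded by
\[
|\mathcal P_{\mathbb X^n\mathbb Y^n}|\,|\mathcal P^{\tilde t,2\delta}|\,\big\|(\Delta\otimes\mathbf{id})(\tau^{\tilde t,2\delta}_{X^nY^nR})\big\|_1 .
\]
The prefactor is polynomial in $n$. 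It thus suffices to make the error on the de Finetti state exponentially small.

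\textbf{Step 3: round-by-round redistribution.} Purify the de Finetti state, including the mixture over types in a register held by $R$. Then run, for rounds $i=1,\dots,j$ in order, the de Finetti state redistribution protocol built in the proof of Theorem~\ref{thm:qrst_w_side_info}. In round $i$ the sender transfers $C_i^n$, and the receiver's side information is $B_{i-1}^n$ or $A_{i-1}^n$. The state at round $i$ is the image of the purified de Finetti state under the first $i$ steps of $\Pi^{\otimes n}$. Because each step is an i.i.d. isometry applied to a mixture of i.i.d. states, this state is itself a mixture of i.i.d. states $\psi_{t'}^{\otimes n}$ over $\delta$-typical types $t'$. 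The one-shot redistribution bounds for such a mixture then give a cost of
\[
\tfrac n2\max_{t'} I(C_i;R|\text{side})_{\sigma_{t'}}+o(n)
\]
and error $2^{-\Omega(n)}$. Errors accumulate additively over rounds by the triangle inequality and monotonicity of trace distance under the later rounds' operations, so the total error is at most $j\,2^{-\Omega(n)}$ plus the estimation failure probability. Continuity of conditional mutual information (Alicki--Fannes--Winter), applied with $\|t'-t\|_1\le 3\delta$, replaces the maximum by the value at $t$ up to $n\,f(\delta)$. Taking $\delta=\delta_n\to0$ slowly, for example $n^{-1/4}$, keeps both the error exponential in a power of $n$ and this term sublinear.

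\textbf{Step 4: the equality.} The achievability $AQCC(\Pi)\le QIC(\Pi)$ follows, since the per-copy cost at type $t$ is $QIC(\Pi,\sigma_t)\le\max_\mu QIC(\Pi,\mu)$. Note that $\frac12 I$ counts qubits, matching the theorem's expression up to the factor convention. For the converse, a worst-case simulation is in particular a distributional one for any $\mu$. The distributional lower bound of Ref.~\cite{touchette2015quantum} then gives $AQCC(\Pi)\ge QIC(\Pi,\mu)$ for every $\mu$, and hence $AQCC(\Pi)\ge QIC(\Pi)$.

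\textbf{Main obstacle.} I expect the hardest step to be Step 3. One must check that after earlier simulated rounds the residual entanglement and the leftover junk registers do not spoil the permutation covariance or the product structure needed for the next round's redistribution code. One must also check that the side-information systems, which are now partly quantum, still fit the de Finetti analysis. I would handle this by applying the de Finetti reduction once to the entire composed protocol rather than per round, so that only the ideal hybrid states on $\tau$ ever need to be analysed.
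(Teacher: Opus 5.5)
Your Steps 1--3 are essentially the paper's argument. Both use type estimation via Lemma~1 of \cite{padda2025}, then one application of Theorem~\ref{thm:restrictedClassicalQuantumDeFinettiReduction} to the whole composed simulation. Both then run per-round redistribution of the evolved de Finetti state, apply a triangle inequality over rounds and continuity, and take $\delta=n^{-1/4}$. Two of your choices are cleaner than the paper's: the hybrid argument against the ideal state after $i-1$ rounds, and sizing the codes from quantities computable from $\tilde t$ alone.

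The genuine gap is the converse in Step~4. The lower bound of \cite{touchette2015quantum} concerns the \emph{channel} a protocol implements. It says a simulation of $\Pi^{\otimes n}$ on $\mu^{\otimes n}$ costs at least about $n\inf_{\Pi'}QIC(\Pi',\mu)$, with the infimum over protocols $\Pi'$ that (approximately) share $\Pi$'s input--output behaviour. It does not give $AQCC(\Pi)\ge QIC(\Pi,\mu)$, and with the output-only notion of simulation used here that inequality is false. Let $\mathbb{Y}$ be trivial, let Alice copy $x\in\{0,1\}$ into $C_1$ and send it, and let Bob return it unchanged as $C_2$. Alice can prepare the output $\ket{x}\ket{x}$ locally with no communication, so $AQCC(\Pi)=0$. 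Yet for uniform $\mu$ the first term alone is $\frac12 I(C_1;R)=\frac12$, so $QIC(\Pi)\ge\frac12$. The paper's proof supplies no converse either: after computing the achieved rate it simply writes $=QIC(\Pi)$. What your argument and the paper's actually establish is $AQCC(\Pi)\le QIC(\Pi)$. Equality would need either a task-level formulation, i.e.\ an infimum over protocols plus a way to exchange it with $\max_\mu$, or a stronger simulation requirement.

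A smaller issue is in Step~2. The phrase ``the codes are built from $\tilde t$ only'' does not make them permutation-covariant. The decoupling unitaries and Uhlmann isometries are arbitrary operators on the span of the typical subspaces, and need not commute with permutations of the $n$ positions. The paper's justification, extending by the identity off the typical subspace, has the same hole. The standard repair is to symmetrize. Using shared randomness obtained from the free entanglement, both parties apply the same uniformly random permutation $\pi$ to their $n$ input positions before the codes, and $\pi^{-1}$ to all output positions afterwards. This makes $\Delta$ covariant, with $\mathcal{K}_\pi$ the output permutation. By convexity it does not increase the error on $\tau$, because $\Pi^{\otimes n}$ commutes with permutations and the purified de Finetti state is invariant under jointly permuting input and reference copies.
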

To prove this theorem, we again apply the joint type estimation protocol from Lemma 1 of Ref.~\cite{padda2025} and the corollary to our de Finetti reduction on the whole protocol $\Pi^{\otimes n}$ to reduce the prior-free setting to finding simulation protocols for each estimated joint type $\tilde{t}$ on the corresponding de Finetti state $\tau_{X^nY^nR}^{\tilde{t},\delta}$ for some conveniently chosen $\delta > 0$. For each of these estimated joint type $\tilde{t}$, we construct a simulation protocol by replacing each communication round of the original protocol $\Pi^{\otimes n}$ by a state redistribution protocol of our $\delta$-typical de Finetti state at that round. This gives us a communication cost of a sum of worst-case conditional mutual informations over all joint types that are $\delta$-typical to the estimated joint type $\tilde{t}$, in which the sum is over each round. We then use continuity of entropy to relate these worst-case conditional mutual informations to the conditional mutual informations evaluated on the distribution of the actual joint type $t$ of the input sequences at round $i$. Combining the joint type estimation protocol with the simulation protocol of the estimated joint type $\tilde{t}$ gives us a prior-free simulation protocol with the promised performances.

\section{Preliminaries}
\label{sec:prelim}
The Hilbert spaces are assumed to be finite-dimensional. The set of linear operators on a Hilbert space $\mathcal{H}$ is denoted by $\mathcal{L}(\mathcal{H})$, while the set of density operators is expressed as $\mathcal{D}(\mathcal{H})$. We write $\rho_A$ for a density operator on $\mathcal{H}_A$ and $\rho_{AB}$ for a joint density operator on $\mathcal{H}_A \otimes \mathcal{H}_B$. For $\rho_{AB} \in \mathcal{D}(\mathcal{H}_A \otimes \mathcal{H}_B)$, the reduced density operator on $\mathcal{H}_A$ is denoted by $\rho_A = \text{Tr}_B(\rho_{AB})$, in which $\text{Tr}_{B}(\cdot)$ is the partial trace function on system $B$. For a distribution $p$, we denote $p^{\otimes n}$ to be the product distribution over $n$ independently and identically distributed (i.i.d) copies. We use $\exp{\cdot}$ to denote the exponential function in base $2$.

Quantum channels are defined as completely positive trace-preserving (CPTP) maps from $\mathcal{L}(\mathcal{H}_A)$ to $\mathcal{L}(\mathcal{H}_B)$, and denoted by $\mathcal{N}_{A \to B}$. The identity isometry from $\mathcal{L}(\mathcal{H}_A)$ to $\mathcal{L}(\mathcal{H}_B)$ is denoted by $\mathcal{I}_{A \to B}$. Isometries are often applied by conjugation, such that if $V$ is an isometry from $\mathcal{L}(\mathcal{H}_A)$ to $\mathcal{L}(\mathcal{H}_B)$, then for $\rho_A \in \mathcal{L}(\mathcal{H}_A)$, we have $V(\rho_A) = V \rho_A V^\dagger$. For a set of symbols $\mathbb{X}$, we say $\mathbb{X}$ imposes the canonical basis on a Hilbert space $\mathcal{H}_{X}$ if we fix the canonical basis of $\mathcal{H}_{X}$ to be the set of orthonormal vectors $\{\ket{x} : x \in \mathbb{X}\}$. For a system $X$ with Hilbert space $\mathcal{H}_X$, we often use the notation $X$ to refer to the set of symbols imposing the canonical basis of $\mathcal{H}_X$. For a Hilbert space $\mathcal{H}_X$ on system $X$, we denote by $|X|$ the dimension of the Hilbert space (equivalently, the size of the alphabet of its canonical basis).

In the rest of this section, we discuss the definitions and notations used in this paper. For more details, we refer the reader to Ref.~\cite{wilde} for quantum information theory and the theory of types, and Ref.~\cite{yao1993quantum}\cite{cleve1997} for notions on interactive protocols.

\subsection{Quantum Information Theory}
We introduce the trace norm of an operator and the diamond norm of a linear map, which is useful to quantify the distance between quantum states and quantum channels, respectively. The trace norm of an operator~$X~\in~\mathcal{L}(\mathcal{H})$ is defined as
\begin{equation}
\label{eq:trace-norm}
\left\|X\right\|_1 := \text{Tr}\left(\sqrt{X^\dagger X}\right),
\end{equation}

in which $\text{Tr}(\cdot)$ is the trace function. We define the diamond norm of a linear map $\mathcal{M}_{A \to B} : \mathcal{L}(\mathcal{H}_A) \to \mathcal{L}(\mathcal{H}_B)$ as
\begin{equation}
\label{eq:diamond-norm}
\left\|\mathcal{M}_{A \to B}\right\|_{\diamond} := \max_{\ket{\psi}_{RA}}\left\|\left(\mathcal{M}_{A \to B} \otimes \textbf{id}_R\right)(\ketbra{\psi}{\psi}_{AR})\right\|_1,
\end{equation}
in which the maximization is over all pure states $\ket{\psi}_{RA} \in \mathcal{H}_R \otimes \mathcal{H}_A$, with $\mathcal{H}_R$ being isomorphic to $\mathcal{H}_A$. The Von Neumann entropy of a density operator $\rho_A \in \mathcal{D}(\mathcal{H}_A)$ is defined as
\begin{equation}
\label{eq:von-neumann-entropy}
H(A)_\rho := -\text{Tr}(\rho_A \log \rho_A).
\end{equation}
The Von Neumann entropy is the generalization of the Shannon entropy to the quantum setting. The Shannon entropy of a distribution $p$ is defined as
\begin{equation}
\label{eq:shannon-entropy}
H_S(p) := -\sum_{x} p(x) \log p(x).
\end{equation}
The Von Neumann entropy is equivalent to the Shannon entropy when the latter is applied to the distribution induced by the eigenvalues of $\rho_A$. For that reason, we always use the same notation $H(\cdot)$ for both the Von Neumann and Shannon entropies, and the meaning will be clear from the context.

For a bipartite density operator $\rho_{AB} \in \mathcal{D}(\mathcal{H}_A \otimes \mathcal{H}_B)$, the conditional entropy of the system $A$ given system $B$ is defined as
\begin{equation}
\label{eq:conditional-entropy}
H(A|B)_\rho := H(AB)_\rho - H(B)_\rho,
\end{equation}
in which $H(AB)_\rho$ is the Von Neumann entropy of $\rho_{AB}$. The mutual information between systems $A$ and $B$ is defined as
\begin{equation}
\label{eq:mutual-information}
I(A;B)_\rho := H(A)_\rho - H(A|B)_\rho.
\end{equation}
The conditional mutual information between systems $A$ and $B$ given system $C$ is defined as
\begin{equation}
\label{eq:conditional-mutual-information}
I(A;B|C)_\rho := H(A|C)_\rho - H(A|BC)_\rho.
\end{equation}

In this work, we often use the continuity of entropy, also known as the Fannes-Audenaert inequality, which states that for $\rho, \sigma \in \mathcal{D}(\mathcal{H}_A)$ such that $\left\|\rho - \sigma \right\|_1 \leq \delta$, we have
\begin{equation}
\label{eq:fannes-audenaert-inequality}
\left|H(\rho) - H(\sigma)\right| \leq \frac{\delta}{2} \log (|A|- 1) + h_2\left(\frac{\delta}{2}\right),
\end{equation}
in which $h_2(\cdot)$ is the binary entropy function defined as:
\begin{equation}
\label{eq:binary-entropy-function}
h_2(x) := -x \log x - (1-x) \log (1-x), \quad x \in [0,1].
\end{equation}
We denote by $\omega(\delta,d)$ the RHS of \eqref{eq:fannes-audenaert-inequality} for convenience:
\begin{equation}
\label{eq:omega-delta-d}
\omega(\delta,d) := \frac{\delta}{2} \log (d- 1) + h_2\left(\frac{\delta}{2}\right).
\end{equation}

In this work, the quantum channels we work with are mainly classical-quantum channels \cite{wilde}. These channels are such that the input is measured in some chosen basis, for which a quantum state is then generated based on the measured result. More formally, if $\mathcal{N}_{X \to A}$ is a classical-quantum channel from a system $X$ to a system $A$, then for $\rho_{X} \in \mathcal{D}(\mathcal{H}_{X})$, we have
\begin{equation}
\label{eq:classical-quantum-channel}
\mathcal{N}_{X \to \mathcal{A}}(\rho_{X}) = \sum_{x} \bra{x} \rho_{X} \ket{x} \sigma_{x,A},
\end{equation}
in which $\sigma_{x,A} \in \mathcal{D}(\mathcal{H}_A)$ is a quantum state associated with the measurement outcome $\ket{x}$.

Equivalently, for any quantum input to a classical-quantum channel, there is a diagonal state in the chosen canonical basis that outputs the same quantum state. We often refer to classical states as being states that are diagonal in some predetermined canonical basis. This gives the motivation for emphasizing on the idea of a set of symbols imposing the canonical basis of a Hilbert space, as it defines what states can be viewed as a classical probability distribution (classical states). Therefore, a classical-quantum channel can be viewed as mapping a classical probability distribution to a distribution of quantum states. Despite $\mathcal{H}_{X}$ being a general (finite) Hilbert space, we often, if not made explicit, reserve the use of $X$ or $Y$ for systems on which we only define classical states, while other letters are reserved for general quantum systems.

The isometric extension of a quantum channel $\mathcal{N}_{A \to B}$ is defined as an isometry $V_{A \to BE}$ such that for all~$\rho_A~\in~\mathcal{D}(\mathcal{H}_A)$,
\begin{equation}
\label{eq:isometric-extension}
\mathcal{N}_{A \to B}(\rho_A) = \text{Tr}_E \left(V_{A \to BE}(\rho_A)\right).
\end{equation}
The system $E$ can be viewed as the environment that purifies the output of the quantum channel $\mathcal{N}_{A \to B}$. In particular, for a classical-quantum channel $\mathcal{N}_{X \to A}$ with isometric extension $V_{X \to AE}$, we can deduce from \eqref{eq:classical-quantum-channel} that the environment system $E$ can be decomposed into two systems $E_X$ and $E_Q$, in which $E_X$ is a classical system that stores the measurement result of the input in the chosen basis, and $E_Q$ is a quantum system that contains the purification of the output quantum state. Therefore, $V_{X \to AE}$ has Kraus operators $\left\{V_x\right\}_{x}$ such that for all $x$,
\begin{equation}
\label{eq:classical-quantum-channel-isometric-extension-kraus-operators}
V_x := \bra{x}_{A} \otimes \ket{x}_{E_X} \otimes \ket{\sigma_{x}}_{E_Q},
\end{equation}
in which $\ket{\sigma_{x}}_{E_Q}$ is a purification of $\sigma_{x,A}$, the output quantum state associated with the measurement result~$\ket{x}$.

Using the notation $X$ to denote the set of symbols imposing the canonical basis of $\mathcal{H}_X$, we can define the notion of a copy of a classical state. Let $\rho_{X} \in \mathcal{D}(\mathcal{H}_{X})$ be a classical state, that is,
\begin{equation}
\label{eq:classical-state}
\rho_{X} = \sum_{x \in X} p(x) \ketbra{x}{x}_X,
\end{equation}
in which $p$ is a probability distribution over the set of symbols $X$. For a system $R$ with Hilbert space $\mathcal{H}_R$ isomorphic to $\mathcal{H}_X$, a state $\rho_{XR} \in \mathcal{D}(\mathcal{H}_{X} \otimes \mathcal{H}_{R})$ is then said to be a \textit{copy} of $\rho_{X}$ if
\begin{equation}
\label{eq:copy-of-classical-state}
\rho_{XR} = \sum_{x \in X} p(x) \ketbra{x}{x}_{X} \otimes \ketbra{x}{x}_{R}.
\end{equation}
The intuition behind this definition is that the canonical basis states $\ketbra{x}{x}_X$ from which are defined the classical states can be copied and a procedure to copy them from system $X$ to system $R$, despite the classical uncertainty, transforms by linearity a classical state in the form of \eqref{eq:classical-state} to a classical state in the form of equation \eqref{eq:copy-of-classical-state}.

Finally, we use Uhlmann's theorem frequently in our application of our de Finetti reduction. Uhlmann's theorem states that for two density operators $\rho_{AR}, \sigma_{AR'} \in \mathcal{D}(\mathcal{H}_A \otimes \mathcal{H}_R)$, in which $\dim(\mathcal{H}_R) \leq \dim(\mathcal{H}_{R'})$, and
\begin{equation}
\left\|\rho_{A} - \sigma_{A} \right\|_1 \leq \varepsilon,
\end{equation}
there exists an isometry $V_{R \to R'}$, such that 
\begin{equation}
\label{eq:uhlmann-theorem}
\left\|(\mathcal{I}_{A \to A} \otimes V_{R \to R'})(\rho_{AR}) - \sigma_{AR'}\right\|_1 \leq 2 \sqrt{\varepsilon},
\end{equation}
in which the isometry $\mathcal{I}_{A \to A} \otimes V_{R \to R'}$ is applied by conjugation. Because of this result, for any state $\psi_A$, any of its purification have their trace norm in the form of \eqref{eq:diamond-norm} equal. Therefore, the diamond norm in \eqref{eq:diamond-norm} can be maximized over every states with only one of their purification each. A similar idea occurs when the map $\mathcal{M}_{A \to B}$ in \eqref{eq:diamond-norm} is a classical-quantum channel. When this is the case, it turns out it is sufficient to take a copy of the classical input state to maximize the diamond norm in \eqref{eq:diamond-norm}. Furthermore, for every copy of the classical input states the value of the trace norm in \eqref{eq:diamond-norm} is the same, so we can choose any of them to compute the diamond norm.

\subsection{Theory of Types}
The notion of types (using the method of types) is central in achieving optimal performances for our various results, as is the case for many communication protocols \cite{wilde,bennett_devetak_harrow_shor_winter_2014}. Let $\mathbb{X}$ label a set of symbols and $x^n \in \mathbb{X}^n$ a sequence of $n$ symbols from $\mathbb{X}$, we denote $N(x|x^n)$ as the number of occurrences of the symbol $x$ in $x^n$. The \textit{type} $t_{x^n}$ of a sequence $x^n \in \mathbb{X}^n$ is defined as the empirical distribution of the symbols in $x^n$:
\begin{equation}
\label{eq:type}
t_{x^n}(x) := \frac{N(x|x^n)}{n}, \quad x \in \mathbb{X}.
\end{equation}
We may also refer to a sequence $x^n$ as being of type $t$ to imply that $t_{x^n} = t$. The set of all possible types for sequences of length $n$ composed of symbols from $\mathbb{X}$ is denoted by $\mathcal{P}_{\mathbb{X}^n}$. A simple bound on $|\mathcal{P}_{\mathbb{X}^n}|$ is given by
\begin{equation}
\label{eq:number-of-types}
|\mathcal{P}_{\mathbb{X}^n}| \leq (n+1)^{|\mathbb{X}|},
\end{equation}
which is polynomial in $n$ for constant $|\mathbb{X}|$. For a type $t \in \mathcal{P}_{\mathbb{X}^n}$, we define its \textit{type class} $T_{\mathbb{X}^n}^{t}$ as
\begin{equation}
\label{eq:type-class}
T_{\mathbb{X}^n}^{t} := \left\{x^n \in \mathbb{X}^n : t_{x^n} = t \right\},
\end{equation}
which is the set of all sequences of length~$n$ that are of type~$t$. A central result from the method of types is that the size of a type class is bounded by an exponential in~$nH(t)$, in which~$H(t)$ is the Shannon entropy of the type~$t$:
\begin{equation}
\label{eq:type-class-size}
\frac{2^{n H(t)}}{|\mathcal{P}_{\mathbb{X}^n}|} \leq |T_{\mathbb{X}^n}^{t}| \leq 2^{n H(t)}.
\end{equation}
For a distribution $p$ on alphabet $\mathbb{X}$ and a set $A \subseteq \mathbb{X}$, we denote $p(A) := \sum_{x \in A} p(x)$ for convenience. The probability that a sequence $x^n \in T_{\mathbb{X}^n}^{t}$ of type $t$ is generated in an i.i.d. manner according to the distribution $p$ is upper bounded by
\begin{equation}
\label{eq:type-probability-upper-bound}
p^{\otimes n}(x^n) \leq 2^{-nH(t)}.
\end{equation}
For a sequence generated in an i.i.d manner according to the distribution of a type $t$, the probability that it is a sequence of type $t$ is always greater than the probability that it is of another fixed type $t'$, that is:
\begin{equation}
\label{eq:type-probability-comparison}
t^{\otimes n}(T_{\mathbb{X}^n}^{t}) \geq t^{\otimes n}(T_{\mathbb{X}^n}^{t'}), \quad t, t' \in \mathcal{P}_{\mathbb{X}^n}.
\end{equation}

Let~$\delta > 0$, we say that a sequence~$x^n \in \mathbb{X}^n$ is~$\delta$-typical to a distribution~$p$ on~$\mathbb{X}$ if
\begin{equation}
\label{eq:delta-typical-sequence}
\left\|t_{x^n} - p \right\|_1 \leq \delta,
\end{equation}
in which $\left\| \cdot \right\|_1$ is the total variation distance, defined as
\begin{equation}
\label{eq:total-variation-distance}
\left\|t - p \right\|_1 := \sum_{x \in \mathbb{X}} \left|t(x) - p(x)\right|.
\end{equation}
The set of all types that are $\delta$-typical to $p$ is denoted by
\begin{equation}
\label{eq:typical-types}
\mathcal{P}_{\mathbb{X}^n}^{p,\delta} := \left\{t \in \mathcal{P}_{\mathbb{X}^n} : \left\|t - p \right\|_1 \leq \delta \right\}.
\end{equation}
The set of all $\delta$-typical sequences is denoted~$T_{\mathbb{X}^n}^{p,\delta}$ and is thus the union of the (disjoint) type classes of all $\delta$-typical types:
\begin{equation}
\label{eq:typical-sequences-set}
T_{\mathbb{X}^n}^{p,\delta} := \bigcup_{t \in \mathcal{P}_{\mathbb{X}^n}^{p,\delta}} T_{\mathbb{X}^n}^{t}.
\end{equation}

Let $x^n \in \mathbb{X}^n$ be a sequence generated in an i.i.d. manner from a distribution $p$, the probability that it turns out to be a sequence $\delta$-typical to $p$ is exponentially close to $1$:
\begin{equation}
\label{eq:typical-sequences-probability}
p^{\otimes n}(T_{\mathbb{X}^n}^{p,\delta}) \geq 1 - \left|\mathcal{P}_{\mathbb{X}^n}\right|2^{-n \frac{\delta^2}{2\ln(2)}}.
\end{equation}
This last result guaranties that considering only $\delta$-typical sequences turns out to be a good approximation of all the sequences generated by $p^{\otimes n}$ when $n$ is large enough. There are other interesting properties from the method of types which are at the heart of most of our results. Not only does the approximation through $\delta$-typical sequences of a distribution $p$ works well for large $n$, but the resulting number of sequences that must be considered is also significantly reduced and depends on information-theoretic quantities of $p$. To compute the size of the set of $\delta$-typical sequences, we can use its definition from \eqref{eq:typical-sequences-set} to express its size as the sum of the sizes of the (disjoint) type classes of all $\delta$-typical types. Using the bounds on the size of the type class~\eqref{eq:type-class-size}, the continuity of entropy \eqref{eq:fannes-audenaert-inequality} and the definition of the set of $\delta$-typical sequences \eqref{eq:delta-typical-sequence}, we can show that the size of this set of $\delta$-typical sequences is upper-bounded as 
\begin{equation}
\label{eq:typical-sequences-size}
\left|T_{\mathbb{X}^n}^{p,\delta}\right| \leq \exp{n \left(H(p) + \tilde{\omega}_{1,1}(\delta,|\mathbb{X}|)\right)},
\end{equation}
in which~$H(p)$ is the Shannon entropy of the distribution $p$ and where we have defined
\begin{equation}
    \label{eq:tilde-omega}
    \tilde{\omega}_{k,m}(\delta,d) := k\cdot\omega(\delta,d) + m\cdot\frac{d\log(n+1)}{n},
\end{equation}
with~$\omega(\delta,d)$ defined in \eqref{eq:omega-delta-d}. The additional term to $\omega(\delta,d)$ in $\tilde{\omega}_{k,m}(\delta,d)$ comes from the polynomial bound on the number of types from \eqref{eq:number-of-types}. This additional term, together with $\omega(\delta,d)$ from the continuity of entropy, appears often so we generalize $\tilde{\omega}_{k,m}(\delta,d)$ for any $k,m \in \mathbb{N}$ for convenience. Furthermore, we keep for the dependence on $n$ implicit in the definition of $\tilde{\omega}_{k,m}(\delta,d)$ as we never specify the value of $n$. Note that for fixed $k$,$m$ and $d$, $\tilde{\omega}_{k,m}(\delta,d)$ vanishes as $\delta$ goes to $0$ and $n$ goes to infinity. We use the following inequality to simplify the results
\begin{equation}
    \label{eq:crude_tilde_omega_additivity}
    \tilde{\omega}_{k_1,m_1}(\delta,d_1d_3) + \tilde{\omega}_{k_2,m_2}(\delta,d_2d_3) \leq \tilde{\omega}_{k_1 + k_2, m_1 + m_2}(\delta, d_1 d_2 d_3).
\end{equation}

The theory of types has a natural extension to the quantum setting through the notion of typical subspaces. For a system $A$ with Hilbert space $\mathcal{H}_A$, we replace the alphabet notation $\mathbb{X}$ by $A$ to denote the set of symbols imposing the canonical basis of $\mathcal{H}_A$. In particular, this imposed basis could be the eigenbasis of a density operator $\rho_A \in \mathcal{D}(\mathcal{H}_A)$. If that is the case, we can choose a spectral decomposition of $\rho_A$ as
\begin{equation}
\label{eq:spectral-decomposition}
\rho_A = \sum_{a \in A} p(a) \ketbra{a}{a},
\end{equation}
in which $p$ is the distribution induced by the eigenvalues of $\rho_A$. For similar reasons, we often replace the notation of a distribution $p$ by $\rho$ to denote the use of the distribution induced by the eigenvalues of a density operator $\rho$. For a fixed spectral decomposition as in \eqref{eq:spectral-decomposition}, we can define for $\delta > 0$ the induced $\delta$-typical subspace $\mathcal{T}_{A^n}^{\rho,\delta}$ of $\mathcal{H}_A^{\otimes n}$ to be
\begin{equation}
\label{eq:delta-typical-subspace}
\mathcal{T}_{A^n}^{\rho,\delta} := \text{span} \left(\left\{ \ket{a^n} \in \mathcal{H}^{\otimes n} : a^n \in T^{\rho,\delta}_{A^n} \right\}\right).
\end{equation}
The projector onto $\mathcal{T}_{A^n}^{\rho,\delta}$ is denoted $\Pi_{A^n}^{\rho,\delta}$:
\begin{equation}
\label{eq:delta-typical-projector}
\Pi_{A^n}^{\rho,\delta} := \sum_{a^n \in T^{\rho,\delta}_{A^n}} \ketbra{a^n}{a^n}.
\end{equation}

The properties of the method of types translate naturally to the quantum setting. In particular, we can derive from equation \eqref{eq:typical-sequences-probability} that
\begin{equation}
\label{eq:delta-typical-subspace-probability}
\text{Tr}\left(\Pi_{A^n}^{\rho,\delta} \rho_A^{\otimes n}\right) = p^{\otimes n}(T^{\rho,\delta}_{A^n}) \geq 1 - \left(n + 1\right)^{|A|}2^{-n \frac{\delta^2}{2\ln(2)}},
\end{equation}
in which $|A| = \dim(\mathcal{H}_A)$. Similarly, we can derive from \eqref{eq:typical-sequences-size} that the dimension of the $\delta$-typical subspace is
\begin{equation}
\label{eq:delta-typical-subspace-dimension}
\text{Tr}\left(\Pi_{A^n}^{\rho,\delta}\right) = \left|T^{\rho,\delta}_{A^n}\right| \leq \exp{n \left(H(\rho) + \tilde{\omega}_{1,1}(\delta,d)\right)},
\end{equation}
in which $H(\rho)$ is the Von Neumann entropy of $\rho_A$.
\subsection{Interactive Protocols}
\label{preliminaries:InteractiveProtocols}

We define a quantum interactive communication protocol with classical inputs. Interactive protocols in the quantum settings involve multiple rounds of communication between two parties, Alice and Bob, in which at each round, one party sends a quantum system to the other party after applying a quantum operation on their local system combined with the quantum system they received in the previous round. Formally, a quantum interactive communication protocol $\Pi$ with $r$ rounds and classical inputs in the prior-free setting starts off with Alice and Bob receiving classical inputs $x$ and $y$, respectively, drawn from an unknown distribution over the product input set $\mathbb{X} \times \mathbb{Y}$. The protocol proceeds where at each round $i$, one party applies a quantum operation in the form of an isometry $U_i$ on their local quantum system and the quantum system received from the other party in the previous round (if any), and then sends a quantum system to the other party. After $r$ rounds of communication, both parties perform measurements on their local quantum systems to produce their respective outputs. Generally, each individual isometry $U_i$ can use some shared entanglement between Alice and Bob, which is assumed to be independent of the inputs $x$ and $y$. The total communication cost of the protocol is defined as the sum of the sizes (in qubits) of all quantum systems exchanged during the $r$ rounds of communication. 

We can assume without loss of generality that Alice sends the first message in the protocol. As a result, Alice applies the isometries $U_1, U_3, \ldots$ at odd rounds, while Bob applies the isometries $U_2, U_4, \ldots$ at even rounds. Similarly, the quantum systems $C_1, C_3, \ldots$ are sent by Alice at odd rounds, while the quantum systems $C_2, C_4, \ldots$ are sent by Bob at even rounds. At each round of communication, we label $A_i$ the system held by Alice after the $i$-th isometry $U_i$ has been applied, and $B_i$ the system held by Bob after the $i$-th isometry $U_i$ has been applied. In particular, we label $A_0 = X$ and $B_0 = Y$ the classical input systems held by Alice and Bob at the start of the protocol, respectively. At the end of the protocol, Alice and Bob each generate system $A'$ and $B'$, respectively, by applying their last isometry each. If the number of rounds $r$ is even, then Alice applies the last isometry $U_r$ and generates system $A'$, while Bob applies isometry $U_{r-1}$ and generates system $B'$. Conversely, if the number of rounds $r$ is odd, then Bob applies the last isometry $U_r$ and generates system $B'$, while Alice applies isometry $U_{r-1}$ and generates system $A'$. The reference system that purifies the joint input state of Alice and Bob is labeled $R_{XY}$.

In the asymptotic setting, we are interested in using noiseless quantum communication and pre-shared entanglement to simulate $n$ identical and independent executions of the protocol $\Pi$ on $n$ pairs of inputs $\Big((x_{1}, y_{1}), (x_{2}, y_{2}), ..., (x_{n}, y_{n})\Big)\in \left(\mathbb{X}\times\mathbb{Y}\right)^{n}$, which we denote as $(x^{n}, y^{n}) \in \mathbb{X}^{n}\times \mathbb{Y}^{n}$ for short. The prior-free constraint in the asymptotic setting means that the players have no knowledge of how their classical input sequences were generated, and that the simulation thus needs to work for the worst-case input. For a protocol~$\Pi$, we denote $\Pi(x,y)$ to be the output state of the protocol when Alice and Bob have classical inputs~$x$ and $y$ respectively. Similarly,~$\Pi^{\otimes n}(x^{n},y^{n})$ denotes the output state of the protocol $\Pi$ applied identically and independently to each pair of inputs in $(x^{n}, y^{n}) \in \mathbb{X}^{n}\times \mathbb{Y}^{n}$. We say a protocol $\hat{\Pi}_{n, \varepsilon}$ simulates the protocol $\Pi^{\otimes n}$ with error $\varepsilon$ if for every $(x^{n}, y^{n}) \in \mathbb{X}^{n} \times \mathbb{Y}^{n}$,
\begin{equation}
\label{eq:simulation_eq}
    \|\Pi^{\otimes n}(x^{n}, y^{n}) - \hat{\Pi}_{n, \varepsilon}(x^{n}, y^{n})\|_{1} < \varepsilon,
\end{equation}
in which $\|\cdot\|_{1}$ denotes the trace distance. 

For an interactive protocol $\Pi$, we define $QIC(\Pi)$ to be its prior-free quantum information cost defined as (see Refs.~\cite{braverman2015, touchette2015quantum}):
\begin{align}
    QIC(\Pi) &= \sum_{i>0, odd}\frac{1}{2}I(C_{i};R|B_{i-1})_{\mu} + \sum_{i>0, even} \frac{1}{2}I(C_{i};R|A_{i-1})_{\mu}.
\end{align}
The prior-free amortized quantum communication cost of a protocol $\Pi$ is
\begin{equation}
    AQCC(\Pi) = \lim_{\varepsilon \rightarrow 0}\lim_{n \rightarrow \infty} \min_{\hat{\Pi}_{n, \varepsilon}}\frac{QCC(\hat{\Pi}_{n, \varepsilon})}{n},
\end{equation}
in which the minimum is over simulation protocols $\hat{\Pi}_{n,\varepsilon}$ that satisfy \eqref{eq:simulation_eq}, and $QCC(\hat{\Pi}_{n, \varepsilon})$ denotes the number of qubits that are exchanged in either direction during the execution of the protocol $\hat{\Pi}_{n, \varepsilon}$.
\section{Type-Constrained Classical de Finetti Reduction}
In this section, we derive the first main result of our work. We prove a constrained de Finetti reduction in which we upper bound any classical permutation-invariant state that can be expressed as a convex combination of sequences $\delta$-typical to a distribution $p$, by a de Finetti state expressed as a uniform distribution, of i.i.d distributions of each type $\delta$-typical to $p$ (the type-constrained de Finetti state).

In the context of our application, the motivation behind these additional constraints is that we use as a subroutine a protocol that allows two parties to estimate the joint type between two sequences they are given. Therefore, we are guaranteed, with an exponentially vanishing probability of error, that the inputs to some protocol can be expressed as a convex combination of sequences $\delta$-typical to the distribution of some fixed type $t$. By our constrained de Finetti reduction, it is thus sufficient for us to design a protocol for a certain task to work well on our constrained de Finetti state, to then conclude it works well on all inputs with the aformentionned $\delta$-typicality structure.

\subsection{Type-Constrained de Finetti Reduction}
Before citing our result, we define the further notation and concepts. For a type $t \in \mathcal{P}_{\mathbb{X}^n}$, we represent it by a classical state with the following definition:
\begin{equation}
\label{eq:typedState}
\sigma_t := \sum_{x \in \mathbb{X}} t(x) \ketbra{x}{x},
\end{equation}
in which the $\ket{x}$ are the canonical basis states induced by the set of symbols $\mathbb{X}$. We note $\sigma_{t,C}$ to make it explicit that $\sigma_{t}$ lives in some Hilbert space $\mathcal{H}_{C}$ for some system $C$. Furthermore, since we are always working with a predefined set of symbols $\mathbb{X}$ that imposes the canonical basis of the space $\mathcal{H}_{X}$, we omit specifying $\mathbb{X}$ in the various notations if the label for the space $X$ is already present in a context where $\mathbb{X}$ should be present. We define the $\delta$-typical de Finetti state $\tau_{X^n}^{p,\delta}$ as follows:
\begin{equation}
\label{eq:deltaTypicalDeFinettiState}
\tau_{X^n}^{p,\delta} := \frac{1}{\left|\mathcal{P}_{\mathbb{X}^n}^{p,\delta}\right|} \sum_{t \in \mathcal{P}_{\mathbb{X}^n}^{p,\delta}} \sigma_{t,X}^{\otimes n}.
\end{equation}
We thus constrain our de Finetti state to a uniform mixture of a finite number of i.i.d. classical states that represent the distributions of all types $\delta$-typical to $p$, instead of a uniform mixture over all possible i.i.d. states. 

We define the classical symmetric subset as a classical analog of the symmetric subspace of a quantum system. The idea of the symmetric subset is to contain all the classical states that are invariant under permutation. We construct the symmetric subset by its analog of a basis vector, for which we abuse the terminology and call the following a classical symmetric basis state:
\begin{definition}
\label{def:classical-symmetric-basis-state}
Let $\mathbb{X}$ be a finite set of symbols that imposes the canonical basis of a Hilbert space $\mathcal{H}_{X}$. The \textit{classical symmetric basis state} $\overline{\omega}_{t,X^n}$ of type $t \in \mathcal{P}_{\mathbb{X}^n}$ is defined as
\begin{equation}
\label{eq:classical-symmetric-basis-state}
\overline{\omega}_{t,X^n} := \frac{1}{|T^t_{\mathbb{X}^n}|} \sum_{x^n \in T^t_{\mathbb{X}^n}} \ketbra{x^n}{x^n}_{X^n},
\end{equation}
\end{definition}
The idea behind this definition is that the type $t$ of some sequence $x^n$ is invariant under permutation because it only depends on the number of occurrences of each symbol in $x^n$. Furthermore, if a distribution~$p$ is permutation-invariant and contains non-zero weight on a sequence of type $t$, then it must contain the same weight on all sequences of type $t$. Therefore, any distribution $p$ that is permutation-invariant can be represented in density operator formalism as a convex linear combination of the classical symmetric basis states $\overline{\omega}_{t,X^n}$ for all $t \in \mathcal{P}_{\mathbb{X}^n}$. Therefore, we can define the subset of all permutation-invariant classical states as the set $\text{Sym}_{cl}^n(X)$ of convex combinations of all classical symmetric basis states.
\begin{definition}
\label{def:classical-symmetric-subset}
Let $\mathbb{X}$ be a finite set of symbols that imposes the canonical basis of a Hilbert space $\mathcal{H}_{X}$. The \textit{classical symmetric subset} $\text{Sym}_{cl}^n(X)$ is defined as the set of all convex combinations of the classical symmetric basis states $\overline{\omega}_{t,X^n}$ for all $t \in \mathcal{P}_{\mathbb{X}^n}$:
\begin{equation}
\label{eq:classical-symmetric-subset}
\text{Sym}_{cl}^n(X) := \left\{\sum_{t \in \mathcal{P}_{\mathbb{X}^n}} p_t \overline{\omega}_{t,X^n} : p_t \geq 0, \sum_{t \in \mathcal{P}_{\mathbb{X}^n}} p_t = 1\right\}.
\end{equation}
\end{definition}
The next lemma gives the connection between the classical $\delta$-typical symmetric states and our type-constrained de Finetti state. It informs us that the components $\sigma_t^{\otimes n}$ of the de Finetti state $\tau^{p,\delta}$ are lower-bounded by the classical symmetric basis states of the same type $t$ up to a factor of $\left|\mathcal{P}_{\mathbb{X}^n}\right|$, the number of types.
\begin{lemma}
\label{lemma:TypedStateLowerBound}
Let $t$ be a type on an alphabet $\mathbb{X}$, the latter imposing the canonical basis of a Hilbert space $\mathcal{H}_X$ and $\sigma_{t}$ be the state defined in \eqref{eq:typedState}. Let $n \in \mathbb{N}^*$ and let $\overline{\omega}_{t}$ be the classical symmetric basis state associated to the type $t$ on the Hilbert space $\mathcal{H}$, then
\begin{equation}
\label{eq:TypedStateLowerBound}
\sigma_t^{\otimes n} \geq \frac{1}{\left|\mathcal{P}_{\mathbb{X}^n}\right|}\overline{\omega}_t.
\end{equation}
\end{lemma}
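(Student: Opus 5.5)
Both sides of \eqref{eq:TypedStateLowerBound} are diagonal in the canonical product basis $\{\ket{x^n}\}_{x^n \in \mathbb{X}^n}$. The operator inequality is therefore equivalent to an entrywise inequality between the diagonal coefficients, and the plan is to compare these coefficients sequence by sequence. For $x^n \notin T^t_{\mathbb{X}^n}$, the coefficient of $\overline{\omega}_t$ is zero. The coefficient of $\sigma_t^{\otimes n}$ is $\prod_i t(x_i) \geq 0$, so the inequality holds trivially there. It remains to treat $x^n \in T^t_{\mathbb{X}^n}$.

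For $x^n \in T^t_{\mathbb{X}^n}$, the coefficient of $\sigma_t^{\otimes n}$ is
\begin{equation*}
t^{\otimes n}(x^n) = \prod_{x \in \mathbb{X}} t(x)^{N(x|x^n)} = \prod_{x \in \mathbb{X}} t(x)^{n t(x)},
\end{equation*}
which depends only on the type and not on the particular sequence. Hence $t^{\otimes n}(x^n) = t^{\otimes n}(T^t_{\mathbb{X}^n}) / |T^t_{\mathbb{X}^n}|$. The coefficient of $\overline{\omega}_t$ is $1/|T^t_{\mathbb{X}^n}|$. The claim thus reduces to showing
\begin{equation*}
t^{\otimes n}(T^t_{\mathbb{X}^n}) \geq \frac{1}{|\mathcal{P}_{\mathbb{X}^n}|}.
\end{equation*}

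This last step is the only substantive one, and I would handle it with \eqref{eq:type-probability-comparison}. Since the type classes $\{T^{t'}_{\mathbb{X}^n}\}_{t' \in \mathcal{P}_{\mathbb{X}^n}}$ partition $\mathbb{X}^n$ and $t^{\otimes n}$ is normalized, we have
\begin{equation*}
1 = \sum_{t' \in \mathcal{P}_{\mathbb{X}^n}} t^{\otimes n}(T^{t'}_{\mathbb{X}^n}) \leq |\mathcal{P}_{\mathbb{X}^n}| \, t^{\otimes n}(T^t_{\mathbb{X}^n}),
\end{equation*}
because by \eqref{eq:type-probability-comparison} each summand is at most the $t'=t$ term. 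Dividing by $|\mathcal{P}_{\mathbb{X}^n}|\,|T^t_{\mathbb{X}^n}|$ gives the required coefficient bound on every $x^n \in T^t_{\mathbb{X}^n}$. Combined with the trivial case above, this yields \eqref{eq:TypedStateLowerBound}.

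If one did not want to rely on \eqref{eq:type-probability-comparison}, the main obstacle would be proving that the type $t$ is the most likely type under $t^{\otimes n}$. The standard route is the multinomial ratio argument: use $m!/k! \leq m^{m-k}$ (in the appropriate form) to show that $t^{\otimes n}(T^t)/t^{\otimes n}(T^{t'}) \geq 1$. Since the paper states this fact among its preliminaries, I would simply invoke it.
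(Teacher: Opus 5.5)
Your proposal is correct and follows essentially the same route as the paper's proof. You both diagonalize in the canonical basis, dispose of sequences outside $T^t_{\mathbb{X}^n}$ trivially, and reduce the remaining case to $|\mathcal{P}_{\mathbb{X}^n}|\, t^{\otimes n}(T^t_{\mathbb{X}^n}) \geq 1$ via \eqref{eq:type-probability-comparison}; your explicit normalization sum over the partition of $\mathbb{X}^n$ into type classes spells out the step the paper states only in words.
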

The formal proof of this result is given in Appendix \ref{appendix:proofOfTypeStateLowerBound}. The idea is that the state $\overline{\omega}_t$ has all the weights on the type $t$, while the state $\sigma_t^{\otimes n}$ has weights on all the types, but with the highest weight on $t$. Therefore, the weight of $t$ on $\sigma_t^{\otimes n}$ is at least $\frac{1}{\left|\mathcal{P}_{\mathbb{X}^n}\right|}$, else there would be a type with higher weight than $t$ on $\sigma_t^{\otimes n}$, which would contradict the property that the probability distribution $t$ has the highest probability on the type $t$ (See Eq.~\eqref{eq:type-probability-comparison}). We can thus conclude that attenuating $\overline{\omega}_t$ by the number of types $\left|\mathcal{P}_{\mathbb{X}^n}\right|$ gives a lower bound on the state $\sigma_t^{\otimes n}$. Our de Finetti reduction is then a direct consequence of the previous lemma.
\begin{theorem}
    \label{thm:de_Finetti_Reduction_operator_inequality_form}
Let $\delta \in \mathbb{R}^+$ and $p$ be a distribution on an alphabet $\mathbb{X}$ that imposes the canonical basis of a Hilbert space $\mathcal{H}_{X}$. Let $\tau_{X^n}^{p,\delta}$ be the type-constrained de Finetti state defined in \eqref{eq:deltaTypicalDeFinettiState}, then for any $\delta$-typical, classical permutation invariant state $\rho_{X^n}\in~\mathcal{D}(\mathcal{T}_{X^n}^{p,\delta})~\cap~\text{Sym}_{cl}^n(X)$,
\begin{equation}
\label{eq:de_Finetti_Reduction_operator_inequality_form}
\rho_{X^n} \leq \left|\mathcal{P}_{\mathbb{X}^n}\right|\left|\mathcal{P}_{\mathbb{X}^n}^{p,\delta}\right|\tau_{X^n}^{p,\delta}.
\end{equation}
\end{theorem}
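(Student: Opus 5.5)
The plan is to first put $\rho_{X^n}$ into a canonical form and then apply Lemma~\ref{lemma:TypedStateLowerBound} term by term. Since $\rho_{X^n}\in\text{Sym}_{cl}^n(X)$, we can write $\rho_{X^n}=\sum_{t\in\mathcal{P}_{\mathbb{X}^n}} q(t)\,\overline{\omega}_{t,X^n}$ with $q$ a probability distribution. The states $\overline{\omega}_{t}$ for distinct types have disjoint supports, namely the type classes $T^t_{\mathbb{X}^n}$. We also have $\rho_{X^n}\in\mathcal{D}(\mathcal{T}_{X^n}^{p,\delta})$, so $\rho$ is supported on $\text{span}\{\ket{x^n}: x^n\in T^{p,\delta}_{\mathbb{X}^n}\}$. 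By \eqref{eq:typical-sequences-set}, this span is exactly the direct sum of the spans of the type classes with $t\in\mathcal{P}^{p,\delta}_{\mathbb{X}^n}$. It follows that $q(t)=0$ whenever $t\notin\mathcal{P}_{\mathbb{X}^n}^{p,\delta}$, because $\bra{x^n}\rho\ket{x^n}=q(t)/|T^t_{\mathbb{X}^n}|$ for any $x^n$ of type $t$. Hence $\rho=\sum_{t\in\mathcal{P}^{p,\delta}_{\mathbb{X}^n}}q(t)\overline{\omega}_t$, which is the form \eqref{eq:convex_combination_of_symmetric_basis_states}.

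Next, bound each term. Since $q(t)\le 1$ and $\overline{\omega}_t\ge 0$, we have $q(t)\overline{\omega}_t\le\overline{\omega}_t$. Lemma~\ref{lemma:TypedStateLowerBound} then gives $\overline{\omega}_t\le|\mathcal{P}_{\mathbb{X}^n}|\,\sigma_t^{\otimes n}$. Summing over the typical types, and using that sums of operator inequalities are preserved, yields
\begin{equation*}
\rho\le|\mathcal{P}_{\mathbb{X}^n}|\sum_{t\in\mathcal{P}^{p,\delta}_{\mathbb{X}^n}}\sigma_t^{\otimes n}=|\mathcal{P}_{\mathbb{X}^n}|\,|\mathcal{P}^{p,\delta}_{\mathbb{X}^n}|\,\tau^{p,\delta}_{X^n},
\end{equation*}
by the definition \eqref{eq:deltaTypicalDeFinettiState}. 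This is exactly \eqref{eq:de_Finetti_Reduction_operator_inequality_form}. In words, replacing $q$ by the all-ones weight is what costs the factor $|\mathcal{P}^{p,\delta}_{\mathbb{X}^n}|$, and the lemma supplies the factor $|\mathcal{P}_{\mathbb{X}^n}|$.

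The only step needing care is the first one: justifying that the support condition forces $q$ to vanish off the typical types. This uses the orthogonality of the type-class subspaces together with \eqref{eq:typical-sequences-set}, so I expect it to be routine. The rest relies on the lemma, which is assumed. All operators involved are diagonal in the same basis, so the operator inequalities reduce to entrywise inequalities of nonnegative numbers, and no subtlety arises.
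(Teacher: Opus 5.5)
Your proof is correct and follows essentially the same route as the paper: you decompose $\rho_{X^n}$ over the $\delta$-typical classical symmetric basis states, apply Lemma~\ref{lemma:TypedStateLowerBound} term by term, and use $q(t)\le 1$ to pass to the uniform mixture $\tau^{p,\delta}_{X^n}$. Your explicit argument that the support condition forces $q(t)=0$ for $t\notin\mathcal{P}^{p,\delta}_{\mathbb{X}^n}$ fills in a step the paper only asserts.
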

\begin{proof}
The only states inside the classical symmetric subset $\text{Sym}_{cl}^n(X)$ that have their support on the $\delta$-typical subspace $\mathcal{T}_{X^n}^{p,\delta}$ are the states of the form
\begin{equation}
    \rho_{X^n} = \sum_{t \in \mathcal{P}_{\mathbb{X}^n}^{p,\delta}} p_t \overline{\omega}_{t,X^n}.
\end{equation}
Upper bounding the states $\overline{\omega}_{t,X^n}$ by $\sigma_{t,X}^{\otimes n}$ using Lemma \ref{lemma:TypedStateLowerBound} gives
\begin{equation}
\label{eq:de_Finetti_Reduction_operator_inequality_form_proof}
\rho_{X^n} \leq \left|\mathcal{P}_{\mathbb{X}^n}\right|\sum_{t \in \mathcal{P}_{\mathbb{X}^n}^{p,\delta}} p_t \sigma_{t,X}^{\otimes n} \leq \left|\mathcal{P}_{\mathbb{X}^n}\right|\left|\mathcal{P}_{\mathbb{X}^n}^{p,\delta}\right|\tau_{X^n}^{p,\delta},
\end{equation}
\end{proof}

\subsection{Typical Classical-Quantum Channel Norm Reduction}
In the application of our de Finetti reduction, we consider evaluating a diamond norm on classical-quantum channels, but where the maximization is not over all possible input states, but rather over a restricted set of states that are classical and $\delta$-typical to some distribution $p$. For that reason, we define the following \textit{$\delta$-typical classical-quantum channel norm}.
\begin{definition}
Let $\mathcal{N}_{X^n \to A^n}$ be a classical-quantum channel. Let $\delta \in \mathbb{R}^+$ and $p$ be a distribution, defined on an alphabet $\mathbb{X}$ that imposes the canonical basis of the space $\mathcal{H}_{X}$. The \textit{$\delta$-typical classical-quantum channel norm} is defined as follows:
\begin{equation}
\label{eq:classicalQuantumDeltaDiamondNorm}
\|\mathcal{N}\|_{\diamond,cl}^{p,\delta} := \max_{\rho_{X^n}} \left\|\left(\mathcal{N} \otimes \textbf{id}_{R}\right)\left(\rho_{X^nR}\right)\right\|_1,
\end{equation}
in which the maximization is over all $\rho_{X^n} \in \mathcal{D}(\mathcal{T}_{X^n}^{p, \delta})$ that are also classical states. Here, $\rho_{X^nR}$ is a purification of the input state $\rho_{X^n}$ on the purification system $R$ and $\textbf{id}_{R}$ is the identity channel on that purification system.
\end{definition}
We prove the following theorem, which allows us to reduce the evaluation of the $\delta$-typical classical-quantum channel norm to evaluating a single trace norm on the type-constrained de Finetti state $\tau^{p,\delta}$. The result that follows and a part of its proof is inspired by the work of Ref.~\cite{christandl2009}.
\begin{theorem}
\label{thm:restrictedClassicalQuantumDeFinettiReduction} Let $\delta \in \mathbb{R}^+$ and $p$ be a distribution on an alphabet $\mathbb{X}$ that imposes the canonical basis of a Hilbert space $\mathcal{H}_{X}$. Let $\mathcal{M}_1$ and $\mathcal{M}_2$ be two classical-quantum channels from Hilbert spaces $\mathcal{H}_{X}^{\otimes n}$ to $\mathcal{H}_{A}^{\otimes n}$ and define $\Delta = \mathcal{M}_1 - \mathcal{M}_2$. If for all permutations $\pi$, there exists a CPTP map $\mathcal{K}_\pi: \mathcal{H}_{A}^{\otimes n} \to \mathcal{H}_{A}^{\otimes n}$ such that
\begin{equation}
\label{eq:deFinettiReductionCondition}
\Delta \circ \pi = \mathcal{K}_{\pi} \circ \Delta,
\end{equation}
then
\begin{equation}
\label{eq:deFinettiReductionResults}
\|\Delta\|_{\diamond,cl}^{p,\delta} \leq \left|\mathcal{P}_{\mathbb{X}^n}\right|\left|\mathcal{P}_{\mathbb{X}^n}^{p,\delta}\right|\left\|\left(\Delta \otimes \textbf{id}_{R'}\right)\left(\tau_{X^nR'}^{p,\delta}\right)\right\|_1,
\end{equation}
$\tau_{X^nR'}^{p,\delta}$ is a copy of the states inside the sum of the de Finetti state $\tau_{X^n}^{p,\delta}$ on a purification system $R'$, and $\textbf{id}_{R'}$ is the identity channel on that purification system.
\end{theorem}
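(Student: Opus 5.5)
The plan has three steps: reduce the maximization to permutation-symmetric inputs, invoke Theorem~\ref{thm:de_Finetti_Reduction_operator_inequality_form}, and build a post-selection map that carries the de Finetti state to the maximizer, in the spirit of Lemma~2 of Ref.~\cite{christandl2009}.

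\textbf{Step 1: reduction to symmetric inputs.} For classical inputs to a classical-quantum map, the trace norm is the same for every purification, so we may take the purification to be a copy. Thus $\rho_{X^nR}=\sum_{x^n} q(x^n)\ketbra{x^n}{x^n}\otimes\ketbra{x^n}{x^n}$, and, since the outputs are block diagonal in $R$,
\begin{equation}
\left\|(\Delta\otimes\textbf{id}_R)(\rho_{X^nR})\right\|_1=\sum_{x^n}q(x^n)\left\|\Delta(\ketbra{x^n}{x^n})\right\|_1 .
\end{equation}
The right-hand side is linear in $q$. By the covariance condition \eqref{eq:deFinettiReductionCondition} and contractivity of the CPTP map $\mathcal{K}_\pi$, we have $\|\Delta(\pi(x^n))\|_1\le\|\Delta(x^n)\|_1$. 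Applying this also to $\pi^{-1}$ gives equality, so $\|\Delta(\ketbra{x^n}{x^n})\|_1$ depends only on the type $t_{x^n}$; call it $f(t)$. Hence the maximum over $\delta$-typical classical inputs equals $\max_{t\in\mathcal{P}^{p,\delta}_{\mathbb{X}^n}} f(t)$. This maximum is attained at a symmetric basis state $\overline{\omega}_{t^*}$ together with its copy, which is an element of $\mathcal{D}(\mathcal{T}^{p,\delta}_{X^n})\cap\text{Sym}^n_{cl}(X)$.

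\textbf{Step 2: operator inequality and post-selection map.} By Theorem~\ref{thm:de_Finetti_Reduction_operator_inequality_form} (in fact Lemma~\ref{lemma:TypedStateLowerBound} directly), and because all the operators involved are diagonal, the inequality lifts to the copied versions:
\begin{equation}
\overline{\omega}_{t^*,X^nR}\le c\,\tau^{p,\delta}_{X^nR'},\qquad c=|\mathcal{P}_{\mathbb{X}^n}||\mathcal{P}^{p,\delta}_{\mathbb{X}^n}| .
\end{equation}
Here $R'$ carries a copy of each $x^n$ appearing in the mixture. Define on $R'$ the diagonal trace-non-increasing map
\begin{equation}
\mathcal{E}(\ketbra{x^n}{x^n})=\frac{\overline{\omega}_{t^*}(x^n)}{c\,\tau^{p,\delta}(x^n)}\ketbra{x^n}{x^n}_R .
\end{equation}
It is trace non-increasing because the ratio is at most $1$ by the inequality above, and on terms with $\tau^{p,\delta}(x^n)=0$ it acts as zero, which is consistent because $\overline{\omega}_{t^*}$ vanishes there too. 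Acting on the copy register, $\mathcal{E}$ sends $c\,\tau^{p,\delta}_{X^nR'}$ exactly to $\overline{\omega}_{t^*,X^nR}$.

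\textbf{Step 3: conclusion.} Since $\mathcal{E}$ acts only on the reference system, it commutes with $\Delta\otimes\textbf{id}$. Trace norms do not increase under trace-non-increasing positive maps (Hermitian input, so decompose into positive and negative parts). Therefore
\begin{equation}
\|\Delta\|^{p,\delta}_{\diamond,cl}=\left\|(\Delta\otimes\mathcal{E})(c\,\tau^{p,\delta}_{X^nR'})\right\|_1\le c\left\|(\Delta\otimes\textbf{id}_{R'})(\tau^{p,\delta}_{X^nR'})\right\|_1 .
\end{equation}

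The main obstacle is Step 1. It requires justifying that a copy is an optimal purification, and using the CPTP $\mathcal{K}_\pi$ to get exact permutation invariance of $\|\Delta(x^n)\|_1$. If copy-optimality is doubtful, I would note that for cq maps any purification of a classical input gives, after the isometry on $R$ prescribed by Uhlmann's theorem (with zero error), the same trace norm as the copy.
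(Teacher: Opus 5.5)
Your proof is correct and follows essentially the paper's route: symmetrize via permutation covariance (the paper's Lemma~\ref{lemma:symmetricRestrictedMaximization}), then turn the operator inequality of Lemma~\ref{lemma:TypedStateLowerBound} into a trace-non-increasing post-selection map on the reference that sends $c\,\tau^{p,\delta}_{X^nR'}$ to the symmetric maximizer (the paper's Lemma~\ref{lemma:deFinettiReductionCPTNIMap}), and drop that map to obtain the norm inequality. Your version is slightly leaner than the paper's. Linearity in $q$ lets you pass to a single type class $\overline{\omega}_{t^*}$ and use a diagonal filter on the copy register, so you avoid the paper's type register $J$ and the $r_t, r_t'$ bookkeeping. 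Dropping $J$ does not change the right-hand side, because the output is already block diagonal in the copy register.
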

The first following lemma proves that it is sufficient to consider the maximization on states that also belong to the symmetric subset $\text{Sym}^n_{\text{cl}}(X)$.
\begin{lemma}{(Symmetric Maximization Equivalence)}
\label{lemma:symmetricRestrictedMaximization} 
Let $\Delta$ be a linear map as in Theorem \ref{thm:restrictedClassicalQuantumDeFinettiReduction}, $n \in \mathbb{N}^*$, $\delta \in \mathbb{R}^+$ and $p$ be a distribution on the alphabet $\mathbb{X}$, the latter imposing the canonical basis of a Hilbert space $\mathcal{H}_{X}$. Let $\rho_{X^n} \in \mathcal{T}_{X^n}^{p,\delta}$ also be a classical state, then, there exists a state $\Tilde{\rho}_{X^n} \in \mathcal{T}_{X^n}^{p,\delta}\cap\text{Sym}^n_{\text{cl}}(X)$ such that
\begin{equation}
\label{eq:symmetricRestrictedMaximization}
\left\|\left(\Delta \otimes \textbf{id}_{R^n}\right)\left(\rho_{X^nR^n}\right)\right\|_1 = \left\|\left(\Delta \otimes \textbf{id}_{R^n}\right)\left(\Tilde{\rho}_{X^nR^n}\right)\right\|_1,
\end{equation}
in which $\rho_{X^nR^n}$ contains a copy of the state $\rho_{X^n}$ on the purification system $R^n$ (same goes for $\Tilde{\rho}_{X^nR^n}$) and $\textbf{id}_{R^n}$ is the identity channel on that purification (copy) system.
\end{lemma}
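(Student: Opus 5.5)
The natural choice is to symmetrize $\rho_{X^n}$: take
\[
\Tilde{\rho}_{X^n} := \frac{1}{n!}\sum_{\pi \in S_n} \pi(\rho_{X^n}),
\]
and on the copy system $R^n$ apply the same permutation. Since $\rho_{X^n}$ is classical and supported on $\mathcal{T}_{X^n}^{p,\delta}$, write $\rho_{X^n} = \sum_{x^n \in T^{p,\delta}_{\mathbb{X}^n}} r(x^n)\ketbra{x^n}{x^n}$. A permutation preserves the type of a sequence, so each $\pi(\rho_{X^n})$ is again classical and supported on typical sequences. The average is permutation-invariant and diagonal, hence it lies in $\mathcal{T}_{X^n}^{p,\delta}\cap \text{Sym}^n_{\text{cl}}(X)$; concretely, $\Tilde{\rho}_{X^n} = \sum_t q(t)\,\overline{\omega}_{t,X^n}$ with $q(t) = r(T^t_{\mathbb{X}^n})$. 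I will not, however, claim the required equality by comparing $\rho$ and $\Tilde\rho$ directly. Instead I will show that the trace norm is the same for every element of a suitable family, and then exhibit $\Tilde\rho$ as a member of that family via a convexity and flagging argument.

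The first step is permutation invariance of the norm for a single input. For a classical input with copy $\rho_{X^nR^n}=\sum r(x^n)\ketbra{x^n}{x^n}\otimes\ketbra{x^n}{x^n}$, the state $(\pi\otimes\pi)(\rho_{X^nR^n})$ is the copy of $\pi(\rho_{X^n})$. By \eqref{eq:deFinettiReductionCondition},
\[
(\Delta\otimes\textbf{id})\bigl((\pi\otimes\pi)\rho_{X^nR^n}\bigr) = (\mathcal{K}_\pi\otimes\pi)\,(\Delta\otimes\textbf{id})(\rho_{X^nR^n}).
\]
Because $\mathcal{K}_\pi\otimes\pi$ is CPTP, hence trace-norm contractive, this gives $\|(\Delta\otimes\textbf{id})(\pi\rho)\|_1\le\|(\Delta\otimes\textbf{id})(\rho)\|_1$. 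Applying the same bound with $\pi^{-1}$ gives the reverse inequality, so the norm is invariant under permutations.

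The main obstacle is that the trace norm is only convex, so averaging over $\pi$ yields merely $\le$. To obtain equality I will use the freedom noted in the preliminaries that any purification or copy gives the same value; in fact the purification can carry extra flag information. Take as the "copy" of $\Tilde\rho$ the extended reference
\[
\frac{1}{n!}\sum_\pi (\pi\otimes\pi)(\rho_{X^nR^n})\otimes\ketbra{\pi}{\pi}_F .
\]
With the flag $F$ the terms become orthogonal, so the trace norm equals the average of the individual norms, which by the invariance step all equal $\|(\Delta\otimes\textbf{id})(\rho_{X^nR^n})\|_1$. It remains to remove the flag. By Uhlmann's theorem, or more directly because the flagged state and the plain copy $\sum_{x^n}\Tilde r(x^n)\ketbra{x^n}{x^n}\otimes\ketbra{x^n}{x^n}$ are related by an isometry on the reference and a CPTP map in the other direction, the norm is unchanged: the map $R^nF\to R^n$ that keeps only the sequence $x^n$ and discards $\pi$ is CPTP, and its inverse re-prepares a superposition of the consistent $\pi$'s conditioned on $x^n$. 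Since both directions are trace-norm contractive, equality follows.

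This last point, showing that the flagged extension and the plain copy of $\Tilde\rho$ give the same norm, is where care is needed. I would first try to argue it exactly as in the preliminaries' remark that the diamond-type norm does not depend on the choice of purification or copy of a classical input, because $\Delta\otimes\textbf{id}$ acts trivially on the reference and any two such extensions are interconvertible by local channels on the reference.
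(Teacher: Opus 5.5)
Your proof is correct, but it is more roundabout than the paper's. The paper never meets the convexity obstacle you work around. Because $R^n$ holds a classical copy, the operator $(\Delta\otimes\mathbf{id}_{R^n})(\rho_{X^nR^n})=\sum_{x^n} r(x^n)\,\Delta(|x^n\rangle\langle x^n|)\otimes|x^n\rangle\langle x^n|_{R^n}$ is block diagonal. Its trace norm is therefore $\sum_{x^n} r(x^n)\,\|\Delta(|x^n\rangle\langle x^n|)\|_1$, which is linear in the weights, not merely convex.

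The paper then shows $\|\Delta(|x^n\rangle\langle x^n|)\|_1=\|\Delta(|\pi(x^n)\rangle\langle \pi(x^n)|)\|_1$ by essentially your first step (covariance plus contractivity), applied to single sequences. It spreads each type class's weight uniformly and reassembles the block sum into the norm on the copy of $\tilde\rho_{X^n}$, which is exactly your group average. In effect the copy register already plays the role of your flag, so $F$ and the two reference channels are redundant here.

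What your version buys is structural. The direction actually needed in Theorem~\ref{thm:restrictedClassicalQuantumDeFinettiReduction} (norm on $\rho$ at most norm on $\tilde\rho$) comes from data processing under a channel acting only on the reference. That argument survives when the reference is a general purification and no block decomposition exists, as in the post-selection setting of Ref.~\cite{christandl2009}. The paper's one-line computation, by contrast, depends on the classical copy.

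One imprecision to fix. The re-preparation map must be the measure-and-prepare channel $|y^n\rangle\langle y^n|_{R^n}\mapsto|y^n\rangle\langle y^n|_{R^n}\otimes\omega_{y^n}$, with $\omega_{y^n}=\frac{1}{n!\,\tilde r(y^n)}\sum_{\pi} r(\pi^{-1}(y^n))\,|\pi\rangle\langle\pi|_F$ (arbitrary if $\tilde r(y^n)=0$). It is a mixture with these generally non-uniform weights, not an unspecified superposition of the consistent permutations. With this choice the channel maps the copy of $\tilde\rho$ exactly onto your flagged state. Since it acts only on the reference, it commutes with $\Delta\otimes\mathbf{id}$, and contractivity gives the inequality you need.
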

The idea is that the map $\Delta$ perceives the states of a same type as equivalent, since the map is covariant by permutation and a type is invariant by permutation. Therefore, for the map $\Delta$, the only difference between two states is the total contribution of each type (each symmetric basis states) in these two states. This implies the existence of an equivalent symmetric state for each state $\rho_{X^n}$. The proof can be found in Appendix \ref{appendix:symmetricRestrictedMaximization}.

Now that we have established an equivalence between any classical $\delta$-typical state and classical $\delta$-typical symmetric state under the action of the map $\Delta$, we are left to link the de Finetti state $\tau^{p,\delta}$ to the symmetric states. However, since we cannot directly use the operator inequality of Theorem \ref{thm:de_Finetti_Reduction_operator_inequality_form} to prove Theorem \ref{thm:restrictedClassicalQuantumDeFinettiReduction} (because the map $\Delta$ is a general linear map), we must prove one last lemma that builds, out of the inequality, a completely positive trace-non-increasing (CPTNI) map that bypasses the limitation from the presence of the linear map $\Delta$ to induce a norm inequality.
\begin{lemma}
\label{lemma:deFinettiReductionCPTNIMap}
Let $\delta \in \mathbb{R}^+$, $n \in \mathbb{N}^*$ and $p$ be a distribution on the alphabet $\mathbb{X}$, the latter imposing the canonical basis of a Hilbert space $\mathcal{H}_{X}$. Let $\Tilde{\rho}_{X^n} \in \mathcal{D}(\mathcal{T}_{\mathbb{X}^n}^{p,\delta})\cap\text{Sym}^n_{\text{cl}}(X)$, then there exists a CPTNI map $\mathcal{T}: End(\mathcal{H}_{R^n} \otimes \mathcal{H}_J) \to \mathbb{C}$ such that
\begin{equation}
\label{eq:deFinettiReductionCPTNIMap}
\Tilde{\rho}_{X^nR^n} = \left|\mathcal{P}_{\mathbb{X}^n}^{p,\delta}\right|\left|\mathcal{P}_{\mathbb{X}^n}\right|\left(\textbf{id}_{X^n} \otimes \mathcal{T}\right)\left(\tau^{p,\delta}_{X^nR^nJ}\right),
\end{equation}
with
$$\tau_{X^nR^nJ}^{p,\delta} := \frac{1}{\left|\mathcal{P}_{\mathbb{X}^n}^{p,\delta}\right|} \sum_{t \in \mathcal{P}_{\mathbb{X}^n}^{p,\delta}} \sigma_{t,XR}^{\otimes n} \otimes \ketbra{t}{t}_{J},$$
in which $J$ is a system that stores each type $t \in \mathcal{P}_{\mathbb{X}^n}^{p,\delta}$, and $\sigma_{t,XR}$ is the state defined in \eqref{eq:typedState} but with a copy on $R$, that is
$$\sigma_{t,XR} = \sum_{x \in X} t(x) \ketbra{x}{x}_{X} \otimes \ketbra{x}{x}_{R}.$$
\end{lemma}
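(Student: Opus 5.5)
The plan is to build $\mathcal{T}$ explicitly as a classical post-selection map. It reads the type register $J$ and the copy register $R^n$ in the computational basis. It keeps the branch $(t,x^n)$ with a carefully chosen probability and re-emits $\ketbra{x^n}{x^n}_{R^n}$, so that the copy on $R^n$ survives. This makes the left-hand side of \eqref{eq:deFinettiReductionCPTNIMap} a state on $X^nR^n$. I therefore read the codomain of $\mathcal{T}$ as $End(\mathcal{H}_{R^n})$, with the trace-non-increasing property playing the role stated.

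First I would write both sides in the common classical basis. Since $\Tilde{\rho}_{X^n}\in\mathcal{D}(\mathcal{T}_{X^n}^{p,\delta})\cap\text{Sym}^n_{\text{cl}}(X)$, the proof of Theorem~\ref{thm:de_Finetti_Reduction_operator_inequality_form} gives
\begin{equation*}
\Tilde{\rho}_{X^nR^n}=\sum_{t\in\mathcal{P}_{\mathbb{X}^n}^{p,\delta}} \frac{p_t}{|T^t_{\mathbb{X}^n}|}\sum_{x^n\in T^t_{\mathbb{X}^n}}\ketbra{x^n}{x^n}_{X^n}\otimes\ketbra{x^n}{x^n}_{R^n}.
\end{equation*}
Similarly, the de Finetti state expands as
\begin{equation*}
\tau^{p,\delta}_{X^nR^nJ}=\frac{1}{|\mathcal{P}_{\mathbb{X}^n}^{p,\delta}|}\sum_{t}\sum_{x^n\in\mathbb{X}^n} t^{\otimes n}(x^n)\ketbra{x^n}{x^n}_{X^n}\otimes\ketbra{x^n}{x^n}_{R^n}\otimes\ketbra{t}{t}_J.
\end{equation*}

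Next I define Kraus operators. For each $t\in\mathcal{P}_{\mathbb{X}^n}^{p,\delta}$ and each $x^n\in T^t_{\mathbb{X}^n}$, set $K_{t,x^n}:=\sqrt{w(t,x^n)}\,\ketbra{x^n}{x^n}_{R^n}\otimes\bra{t}_J$, with
\begin{equation*}
w(t,x^n):=\frac{p_t}{|\mathcal{P}_{\mathbb{X}^n}|\,|T^t_{\mathbb{X}^n}|\,t^{\otimes n}(x^n)}=\frac{p_t}{|\mathcal{P}_{\mathbb{X}^n}|\,t^{\otimes n}(T^t_{\mathbb{X}^n})}.
\end{equation*}
The second equality holds because $t^{\otimes n}$ is constant on the type class. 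Note that $t^{\otimes n}(x^n)>0$ whenever $x^n$ has type $t$, since every symbol occurring in $x^n$ has positive weight under $t$. Branches $(t,x^n)$ with $x^n\notin T^t_{\mathbb{X}^n}$ are simply discarded, that is, no Kraus operator is attached to them.

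The key check, which I expect to be the only real obstacle, is $w(t,x^n)\le 1$. This is exactly the content of Lemma~\ref{lemma:TypedStateLowerBound}, or directly of \eqref{eq:type-probability-comparison}. Type $t$ carries the largest mass under $t^{\otimes n}$ among the $|\mathcal{P}_{\mathbb{X}^n}|$ type classes, so $t^{\otimes n}(T^t_{\mathbb{X}^n})\ge 1/|\mathcal{P}_{\mathbb{X}^n}|$, and hence $w(t,x^n)\le p_t\le 1$.

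With $w\le 1$ in hand, the remaining steps are short. The operators $K_{t,x^n}^\dagger K_{t,x^n}=w(t,x^n)\ketbra{x^n}{x^n}\otimes\ketbra{t}{t}$ are supported on mutually orthogonal projectors. Therefore $\sum K^\dagger K\le \mathbb{1}$, and $\mathcal{T}$ is completely positive and trace non-increasing. Finally, applying $\textbf{id}_{X^n}\otimes\mathcal{T}$ to $\tau^{p,\delta}_{X^nR^nJ}$ kills every cross term, because the state is diagonal. For each term with $x^n\in T^t_{\mathbb{X}^n}$ it produces
\begin{equation*}
\frac{1}{|\mathcal{P}^{p,\delta}_{\mathbb{X}^n}|}\,t^{\otimes n}(x^n)\,w(t,x^n)=\frac{p_t}{|\mathcal{P}^{p,\delta}_{\mathbb{X}^n}||\mathcal{P}_{\mathbb{X}^n}||T^t_{\mathbb{X}^n}|}.
\end{equation*}
Multiplying by $|\mathcal{P}^{p,\delta}_{\mathbb{X}^n}||\mathcal{P}_{\mathbb{X}^n}|$ recovers exactly the expansion of $\Tilde{\rho}_{X^nR^n}$ above, which proves \eqref{eq:deFinettiReductionCPTNIMap}.
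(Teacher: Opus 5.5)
Your proof is correct and is essentially the paper's argument. The paper's acceptance weight $r_t' = 1/(1+|\mathcal{P}_{\mathbb{X}^n}|\, r_t)$ works out to exactly your $w(t,x^n) = p_t/(|\mathcal{P}_{\mathbb{X}^n}|\, t^{\otimes n}(T^t_{\mathbb{X}^n}))$, it is justified by the same Lemma~\ref{lemma:TypedStateLowerBound}, and your reading of the output as living on $R^n$ matches the paper's composite map $\mathcal{T}=\mathcal{T}_2\circ\mathcal{T}_1$. The differences are cosmetic: the paper uses type-class Kraus operators $\sqrt{r_t'}\,\Pi_{t,R^n}\otimes\ketbra{t}{t}_J$ and then traces out $J$ separately, whereas your rank-one Kraus operators also absorb the partial trace, which is equivalent on this diagonal input (and your explicit formula also covers $p_t=0$, where the paper's $r_t$ is ill-defined).
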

The proof of this lemma is given in Appendix \ref{appendix:ProofOfCPTNIMap}. The idea is to use the inequality of Lemma \ref{lemma:TypedStateLowerBound} to build a CP map that uses the systems that are not used by the map $\Delta$ to change the de Finetti state $\tau^{p,\delta}$, which is composed of the $\sigma_t^{\otimes n}$ states, to a state of $\text{Sym}^n_{\text{cl}}(X)$ that is composed of the classical symmetric basis states $\overline{\omega}_t$. Since we have an inequality in Lemma \ref{lemma:TypedStateLowerBound}, this map is trace-non-increasing, and thus~CPTNI.

Now that we have a map that transforms the type-constrained de Finetti state to a symmetric state, the proof for the type-constrained de Finetti reduction (Theorem \ref{thm:restrictedClassicalQuantumDeFinettiReduction}) follows directly (presented in Appendix \ref{appendix:ProofClassicalQuantumdeFinettiReduction}). That is, by Lemma \ref{lemma:symmetricRestrictedMaximization}, for every state in the maximization, there exists another symmetric state that has the same norm, and thus we can use the CPTNI map of Lemma \ref{lemma:deFinettiReductionCPTNIMap} to transform the de Finetti state $\tau^{p,\delta}$ into that symmetric state. This gives a norm equality between the symmetric state and the de Finetti state on which is applied the map $\mathcal{T}$, but since the map is trace-non-increasing, we can remove it to induce a norm inequality, proving Theorem \ref{thm:restrictedClassicalQuantumDeFinettiReduction}.

\section{Prior-free Quantum Channel Simulation Protocols on Classical Inputs}
We now present the application of our type-constrained de Finetti reduction to the compression of communication across classical-quantum noisy channels. In this section, we consider the simulation setting of the Quantum Reverse Shannon Theorem (QRST) \cite{bennett_devetak_harrow_shor_winter_2014}. In the fully quantum QRST setting, the goal is for two parties Alice and Bob to simulate the use of an i.i.d noisy quantum channel $\mathcal{N}^{\otimes n}$ from Alice to Bob using a noiseless quantum channel and shared entanglement while minimizing the communication cost (a form of compression). We consider the prior-free setting, in which the channel simulation protocol should work for any input state. In our classical-quantum case, we further restrict the input states to be classical. We begin with the simpler setting where only Alice receives a classical input, which has already been studied for the fully quantum setting \cite{bennett_devetak_harrow_shor_winter_2014}, and subsequently we consider the setting where Bob also receives an input, which we refer to as his side information, and present our novel result for the prior-free QRST with side information at the receiver. 

In particular, we take the channel that we simulate to be an isometric extension $V_{\mathcal{N}, X \to EC}^{\otimes n}$ of the classical-quantum channel $\mathcal{N}^{\otimes n}$, in which $X$ is the input system, $E$ is the environment system of the isometric extension that Alice keeps and $C$ is the output system that she sends to Bob. The reason for simulating the isometric extension of the channel instead of the channel itself is that the isometric extension contains more information about the channel (the system $E$), and therefore simulating it is strictly stronger. Furthermore, it allows us to work in the pure state setting, which is necessary to use results such as Uhlmann's theorem \eqref{eq:uhlmann-theorem} that requires a state to be pure.

Both with and without side information at the receiver, we leverage our de Finetti reduction to only have to prove that our compression protocols work for de Finetti states in order to guarantee they work well for any input. The idea is to use a first subprotocol that allows Alice and and Bob to have additional information of, either the type of Alice's input for the case with no side information, or the joint type of Alice and Bob's input in the case with side information at the receiver, while ensuring that the communication cost of this subprotocol is satisfying enough. In the case with no side information, we have Alice directly communicate the type of her input to Bob using an amount of communication that is sublinear in $n$. In the case with side information at the receiver, we take this subprotocol to be the subsampling protocol from Lemma 1 of Ref.~\cite{padda2025}, which uses an amount of communication sublinear in $n$, to ensure that they both know an estimate $\tilde{t}$ of their inputs' joint type $t$, such that $\tilde{t}$ is in statistical distance~$\delta > 0$ of~$t$. In both cases, we now have the promise that their joint input state is in the~$\delta$-typical subspace of some estimated type~$\tilde{t}$ of their inputs' joint type~$t$. We then have that a protocol that reliably simulates~$V_{\mathcal{N}, X \to EC}^{\otimes n}$ and the transmission of its output~$C^n$ from Alice to Bob with our type-constrained de Finetti state associated to $\tilde{t}$ as input will, with high probability, reliably do the same on Alice and Bob's actual input. Therefore, our protocol that simulates $V_{\mathcal{N}, X \to EC}^{\otimes n}$ in the prior-free setting first gets an estimate~$\tilde{t}$ of the type of the inputs and then applies the simulation protocol corresponding to the type of the inputs. The statistical distance relation between~$t$ and~$\tilde{t}$ guarantees that this simulation protocol has satisfying communication cost for the prior-free setting.

To simulate $V_{\mathcal{N}}^{\otimes n}$ when the input is the type-constrained de Finetti state, we have Alice directly apply~$V_{\mathcal{N}}^{\otimes n}$ on her input, and we are thus only left with the task of transferring the system $C^n$ from Alice to Bob. To do so, we use a second subprotocol that does this efficiently while using pre-shared entanglement. In the case with no side information, one such efficient subprotocol is the state splitting protocol. In the case with side information at the receiver, we use the state redistribution protocol as a generalization of the state splitting protocol, which leverages the receiver's additional information to further reduce the communication cost. The state splitting and state redistribution protocols' communication costs depend on the dimensions of the states on which these protocols are applied. In both cases, we will use various $\delta$-typical projections on our type-constrained de Finetti state to get good bounds on these dimensions that relate to information-theoretic quantities and allow us to achieve the desired results.

We first consider the case with no side information at the receiver, and then move to the case with side information at the receiver in Section \ref{subsection:StateRedistributionWithSideInformation}.

\subsection{State Splitting without Side Information at the Receiver}
\label{subsection:stateRedistributionWithoutSideInformation}
In this section, we prove the following theorem regarding the simulation of classical-quantum channels with no side information at the receiver in the prior-free setting, which matches the more general result of Ref.~\cite{bennett_devetak_harrow_shor_winter_2014}.

\begin{theorem}
\label{thm:prior_free_cq_channel_simulation}
    Let $\mathcal{N}_{X \to C}$ be a classical-quantum channel sending classical states on system $X$ to quantum states on system $C$. Let $V_{\mathcal{N},X \to EC}$ be an isometric extension of $\mathcal{N}$, and let $n \in \mathbb{N}_{>0}$ and $\delta > 0$. There exists a variable length prior-free simulation protocol $\Pi_{\text{sim}}$ of~$V_{\mathcal{N}}^{\otimes n}$ such that for every input sequence $x^n \in X^n$, the trace distance error $\varepsilon_{x^n}$ of the simulation
    \begin{equation}
    \varepsilon_{x^n} := \left\|\left( \Pi_{\text{sim}} - V_{\mathcal{N}}^{\otimes n} \right) \left(\ketbra{x^n}{x^n}_{X^n}\right)\right\|_1
    \end{equation}
    is bounded as
    \begin{equation}
    \varepsilon_{x^n} \leq (n + 1)^{2|X|} \left(2\varepsilon_{\delta} + 2\sqrt{\varepsilon_{\text{dec},x^n}} \right),
    \end{equation}
    in which the decoupling error $\varepsilon_{\text{dec},x^n}$ and $\delta$-typical approximation error $\varepsilon_\delta$ are defined as
    \begin{align}
        \varepsilon_{\text{dec},x^n} &\leq \frac{\exp{n\left(I(X;C)_{t_{x^n}} + \tilde{\omega}_{9,6}(\delta,|X||C|) + \frac{f(\delta)}{n}\right)}}{d_{\tilde{C}}^2},\\
        \varepsilon_{\delta} &\leq 2^{-n\frac{\delta^2}{4\ln(2)} + 1}\sqrt{(n+1)^{|C|} + (n+1)^{|E|} + (n+1)^{|X|}}.
    \end{align}
    Here, the mutual information $I(X;C)_{t_{x^n}}$ is evaluated between the distribution $t_{x^n}$ of the type of the input $x^n$ on system $X$ and the quantum output of $t_{x^n}$ through $\mathcal{N}$ on system $C$, in the form of a classical-quantum state. The variable $d_{\tilde{C}}$ is the dimension of the communicated quantum system in the protocol, $\tilde{\omega}_{k,m}(\delta,d)$ is defined in \eqref{eq:tilde-omega} and $f(\delta)$ is defined as
    \begin{equation}
        f(\delta) := -2\log(1 - \frac{\left(\varepsilon_{\delta}\right)^2}{2}) + 1.
    \end{equation}
    Furthermore, the protocol uses pairs of maximally entangled qubits at a rate asymptotically equal to~$\frac{n}{2}I(C;X)_{t_{x^n}}$ up to an additive term sublinear in $n$. In order for the error $\varepsilon$ to vanish asymptotically as $n$ goes to infinity in the amortized setting, it is sufficient to choose the dimension of the communicated system $\tilde{C}$ such that
    \begin{equation}
    \frac{\log d_{\tilde{C}}}{n} = \frac{1}{2}I(X;C)_{t} + \tilde{\omega}_{9,6}(\delta,|X||C||E|) + \frac{f(\delta)}{n}.
    \end{equation}
    In particular, taking $\delta = n^{-1/2 + \alpha}$ for some $\alpha \in (0, 1/2)$ ensures that the error $\varepsilon$ vanishes as $n$ approaches infinity.
\end{theorem}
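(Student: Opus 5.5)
The protocol has two stages. First, Alice sends the type $t = t_{x^n}$ of her input to Bob over a classical channel, which costs $\lceil \log|\mathcal{P}_{X^n}|\rceil \le |X|\log(n+1)$ bits and is sublinear in $n$. Second, conditioned on $t$, both parties run a state splitting protocol $\Pi_t$ tailored to the de Finetti state $\tau^{t,\delta}_{X^nR}$. In $\Pi_t$, Alice applies $V_{\mathcal{N}}^{\otimes n}$ locally and transfers $C^n$ to Bob using pre-shared entanglement and a quantum message $\tilde C$.

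For a fixed announced type $t$, define $\mathcal{M}_1 = \Pi_t$ and $\mathcal{M}_2 = V_{\mathcal{N}}^{\otimes n}$, both viewed as classical-quantum maps on $X^n$, and set $\Delta = \mathcal{M}_1 - \mathcal{M}_2$.
\begin{itemize}
\item I would make $\Pi_t$ permutation covariant by having Alice and Bob apply a shared uniformly random permutation to their systems (using shared randomness) and undo it at the end. Equivalently, I can choose the encoder and decoder to commute with permutations of the $n$ blocks; since $\tau^{t,\delta}$ and the typical projectors are permutation invariant, this is possible. This gives condition \eqref{eq:deFinettiReductionCondition}.
\item The input $x^n$ lies in the type class of $t$, so it is $\delta$-typical to $t$ for every $\delta>0$.
\item Theorem~\ref{thm:restrictedClassicalQuantumDeFinettiReduction} then gives
\[
\varepsilon_{x^n} \le |\mathcal{P}_{X^n}|\,|\mathcal{P}^{t,\delta}_{X^n}|\,\bigl\|(\Delta\otimes\mathbf{id}_{R})(\tau^{t,\delta}_{X^nR})\bigr\|_1 .
\]
\item Both prefactors are at most $(n+1)^{|X|}$, which produces the $(n+1)^{2|X|}$ in the statement.
\end{itemize}

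It remains to bound the error of state splitting on the purification of $V^{\otimes n}(\tau^{t,\delta}_{X^nR})$, namely the state $\psi_{RE^nC^n}$.
\begin{itemize}
\item \emph{Typical projection.} I would project $C^n$, $E^n$ and $R$ onto $\delta$-typical subspaces with respect to the single-copy marginals of $\sigma_t$ pushed through $V$. Each component $\sigma_{t'}^{\otimes n}$ of $\tau$ has $t'$ within $\delta$ of $t$, so \eqref{eq:delta-typical-subspace-probability} (used at a $\delta$ that absorbs the shift) bounds the weight lost under each projector. Via the gentle measurement lemma, and taking square roots of a union bound over the three systems, this gives $\varepsilon_\delta \le 2^{-n\delta^2/(4\ln 2)+1}\sqrt{\sum (n+1)^{|\cdot|}}$. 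Renormalizing the projected state introduces the $1-\varepsilon_\delta^2/2$ factor, which is where $f(\delta)$ comes from.
\item \emph{Decoupling.} On the projected state I would use the standard one-shot state splitting or decoupling argument. Alice applies a Haar-random unitary on typical $C^n$ and splits it into $\tilde C$ (sent) and a part to be decoupled from $R$, with the entanglement supplying the rest. The decoupling error is bounded by $d_{C}\,d_R\,\mathrm{Tr}(\psi_{RC}^2)/d_{\tilde C}^2$, with typical dimensions and maximal eigenvalues in place of $d_C$, $d_R$ and the purity. Bob's decoding isometry then comes from Uhlmann's theorem \eqref{eq:uhlmann-theorem}, which contributes the $2\sqrt{\varepsilon_{\text{dec}}}$ term.
\end{itemize}

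The main obstacle is bounding these dimensions and purities for the mixture $\tau$ rather than an i.i.d. state. I would proceed as follows.
\begin{itemize}
\item Bound the typical dimension of $R$ (the copy of $X^n$) by a sum over $t'\in\mathcal{P}^{t,\delta}$ of $|T^{t'}|$, giving at most $2^{n(H(t)+\tilde\omega)}$ by \eqref{eq:typical-sequences-size}.
\item Bound the typical dimension of $C^n$ by $2^{n(H(C)_t+\tilde\omega)}$ in the same way.
\item For the purity of $RC^n$, write $\psi_{RC}=\frac{1}{|\mathcal{P}^{t,\delta}|}\sum_{t'}\psi^{t'}$ with the $R$-supports of different types orthogonal. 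This gives $\mathrm{Tr}\psi_{RC}^2 \le \max_{t'}\mathrm{Tr}(\psi^{t'})^2$, and each term is controlled by the projected eigenvalue bound $2^{-n(H(RC)_{t'}-\tilde\omega)}$.
\end{itemize}
Every entropy evaluated at some $t'$ is converted to its value at $t$ using Fannes--Audenaert \eqref{eq:fannes-audenaert-inequality}. Collecting the roughly nine continuity terms and six type-counting terms yields the exponent $I(X;C)_t+\tilde\omega_{9,6}$. Here $H(R)=H(X)$ for the copy, and $H(RC)=H(XC)$ on a classical-quantum state, so $H(R)+H(C)-H(RC)=I(X;C)$.

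Finally, choosing $\log d_{\tilde C}$ as stated makes $\varepsilon_{\text{dec}}\le 2^{-n\tilde\omega}$, which is exponentially small and beats the polynomial prefactor. With $\delta=n^{-1/2+\alpha}$, the quantity $n\delta^2=n^{2\alpha}$ grows, so $\varepsilon_\delta$ decays faster than $(n+1)^{2|X|}$ grows, while $\tilde\omega\to 0$ as required. The entanglement consumed is $\log d_{C,\mathrm{typ}}-\log d_{\tilde C}\approx \frac n2 I(X;C)_t$.
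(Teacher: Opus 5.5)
Your architecture is the paper's. You announce the type with $O(\log n)$ classical bits, then apply Theorem~\ref{thm:restrictedClassicalQuantumDeFinettiReduction} using that $x^n$ is $\delta$-typical to its own type, which gives the prefactor $(n+1)^{2|X|}$. You then analyze state splitting on the purified de Finetti state by typical projection, decoupling and Uhlmann, and convert worst-case entropies with Fannes--Audenaert.

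Two of your deviations are fine or better than the paper:
\begin{itemize}
\item \emph{The shared-random-permutation twirl.} The paper asserts covariance by extending $\Pi_{ss}^{\tilde t,\delta}$ by the identity off the typical subspace. That does not make the decoupling unitary or the Uhlmann isometry commute with permutations. Your twirl gives exact covariance with $\mathcal{K}_\pi$ the output permutation. It also costs nothing on $\tau$: $\tau$ is invariant under joint permutations of $X^n$ and the reference, so permuting $X^n$ equals inverse-permuting the reference, which commutes with $\Delta$ and preserves the trace norm. Drop the unsupported ``equivalently, choose covariant encoder and decoder''.
\item \emph{Central-type projectors.} Projecting every component onto the typical subspaces of the central type $t$ at enlarged width, instead of the paper's $R_T$-indexed $\sum_{t'}\Pi^{t',\delta}\otimes|t'\rangle\langle t'|_{R_T}$, works for the weight loss. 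Measuring $\rho_{t',C}^{\otimes n}$ in the eigenbasis of $\rho_{t,C}$ gives an i.i.d.\ law within $\delta$ of the spectrum of $\rho_{t,C}$. It also gives the dimension bounds directly at $t$.
\end{itemize}

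\textbf{The purity bound fails as written.} For the purified state the type register is coherent. So $\psi_{RC}$ contains off-diagonal blocks $\frac{1}{N}\bigl(\sum_x\sqrt{t'(x)t''(x)}\,|x\rangle\langle x|\otimes\sigma_x\bigr)^{\otimes n}\otimes|t'\rangle\langle t''|_{R_T}$ with $t'\neq t''$. It is not a mixture of per-type pieces with orthogonal $R$-supports. If you instead take $R$ to be the classical copy $R^nJ$, then $RE^nC^n$ is not pure, and decoupling from it does not feed Uhlmann.

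Moreover, projecting only the marginals $R$ and $C$ does not bound the eigenvalues of an $RC$ block by $2^{-nH(RC)}$. Take $t$ uniform and $\sigma_0,\sigma_1$ diagonal with swapped spectra $(1-\epsilon,\epsilon)$. The vectors $|x^n\rangle_R|x^n\rangle_C$ with $x^n$ balanced are marginally typical and carry eigenvalue $2^{-n}(1-\epsilon)^n$. You would then only get a R\'enyi-2 rate for $C$ given $X$ instead of $H(C|X)$.

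The missing idea is to pass to the complement $E$, which you already project but never use:
\begin{itemize}
\item The projected global state is still pure, so $\mathrm{Tr}\,\tilde\psi_{RC}^2=\mathrm{Tr}\,\tilde\psi_E^2$.
\item Tracing out $R_T$ removes the coherences. Up to renormalization, $\tilde\psi_E\le\frac1N\sum_{t'}\Pi_E\rho_{t',E}^{\otimes n}\Pi_E$ in operator order.
\item Convexity of the purity plus a per-component eigenvalue bound gives $2^{-n(\min_{t'}H(E)_{t'}-\ldots)}$, with $H(E)_{t'}=H(RC)_{t'}$.
\end{itemize}
This is exactly how the paper uses Lemma~\ref{lemma:purityBoundDeltaTypicalDeFinetti} with $A_{i'}=E$. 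With your central projectors, the per-component eigenvalue bound needs one more fact, which holds here. We have $\rho_{t',E}=\sum_x t'(x)|x\rangle\langle x|_{E_X}\otimes\omega_x$ with $\omega_x$ independent of $t'$, so all these marginals share an eigenbasis. Otherwise, use type-indexed projectors.

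\textbf{Two smaller corrections.}
\begin{itemize}
\item \emph{Roles in state splitting.} The sender applies the Uhlmann isometry $E^nC^n\tilde A\to E^n\tilde C$. The receiver applies the inverse decoupling unitary on $\tilde C\tilde B$; the decoupled system is Bob's entanglement half in the target state. With the roles as you wrote them, Alice's kept part is decoupled from $R$ but not from $E^n$, and Bob has nothing to apply Uhlmann to.
\item \emph{Entanglement rate.} Your last line is off. Since $I(C;R)+I(C;E)=2H(C)$ for pure $RCE$, we get $\log d_{C,\mathrm{typ}}-\log d_{\tilde C}\approx nH(C)_t-\frac n2 I(X;C)_t=\frac n2 I(C;E)_t$. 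This equals $\frac n2 I(X;C)_t$ only if every $\sigma_x$ is pure.
\end{itemize}
The paper's proof has the same entanglement-rate slip. It also treats the decoupling theorem's squared-trace-norm bound as a trace-norm bound, so $2\sqrt{\varepsilon_{\mathrm{dec}}}$ should really be $2\varepsilon_{\mathrm{dec}}^{1/4}$. Neither affects the communication rate.
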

To achieve this result we apply the state splitting protocol on our type-constrained de Finetti state on which we have applied the isometric extension $V_{\mathcal{N}, X \to EC}^{\otimes n}$, which can be visualized in Figure \ref{fig:cq_channel_simulation_task}. We take the input de Finetti state to be the state
\begin{equation}
\label{eq:delta_typical_de_finetti_input_state_state_splitting}
\tau^{\tilde{t},\delta}_{X^n} := \frac{1}{\left|\mathcal{P}^{\tilde{t},\delta}_{\mathbb{X}^n}\right|} \sum_{t \in \mathcal{P}^{\tilde{t},\delta}_{\mathbb{X}^n}} \sigma_{t,X}^{\otimes n},
\end{equation}
in which the set $\mathcal{P}_{\mathbb{X}^n}^{\tilde{t},\delta}$ of types $\delta$-typical to $\tilde{t}$ is defined as in equation \eqref{eq:typical-types}. We can define a particular purification $\ket{\tau^{\tilde{t},\delta}}_{R_TR_X^nX^n}$ of this de Finetti state on the reference system $R := R_X^nR_T$, in which $R_T$ purifies the type indexation of the de Finetti state:
\begin{equation}
\label{eq:de_finetti_input_state_purification}
\ket{\tau^{\tilde{t},\delta}}_{R_TR_X^nX^n} := \frac{1}{\sqrt{\left|\mathcal{P}^{\tilde{t},\delta}_{\mathbb{X}^n}\right|}} \sum_{t \in \mathcal{P}^{\tilde{t},\delta}_{\mathbb{X}^n}} \ket{\sigma_{t}}_{R_XX}^{\otimes n} \otimes \ket{t}_{R_T},
\end{equation}
with $\ket{\sigma_{t}}_{R_XX}$ being the purification of the state $\sigma_{t,X}$ on the reference system $R_X$ which is of same dimension as system $X$. We denote the same state but on which we have applied the channel $V_{\mathcal{N}, X \to EC}^{\otimes n}$ as
\begin{equation}
\label{eq:de_finetti_output_state_state_splitting}
\ket{\tau^{\tilde{t},\delta}}_{R_TR_X^nE^nC^n} := \frac{1}{\sqrt{\left|\mathcal{P}^{\tilde{t},\delta}_{\mathbb{X}^n}\right|}} \sum_{t \in \mathcal{P}^{\tilde{t},\delta}_{\mathbb{X}^n}} \ket{\sigma_{t}'}_{R_XEC}^{\otimes n} \otimes \ket{t}_{R_T},
\end{equation}
in which $\ket{\sigma_{t}'}_{R_XEC} := V_{\mathcal{N}, X \to EC}\ket{\sigma_{t}}_{R_XX}$ is the state obtained by applying the isometric extension of the channel on the purification of $\sigma_{t,X}$. We prove the following proposition regarding the state splitting of the state in \eqref{eq:de_finetti_output_state_state_splitting} as an intermediate result.
\begin{proposition}
    \label{prop:de_finetti_state_state_splitting}
    Let $\ket{\tau^{\tilde{t},\delta}}_{R_TR_{X}^nE^nC^n}$ be a purified de Finetti state in the form of \eqref{eq:de_finetti_output_state_state_splitting}, then using state state splitting, the system $C^n$ can be transferred from a sender to a receiver up to trace distance precision $\varepsilon_{\text{ss}}$ bounded by
    \begin{equation}
        \varepsilon_{\text{ss}} \leq 2\varepsilon_{\delta} + 2\sqrt{\varepsilon_{\text{dec}}},
    \end{equation}
    in which
    \begin{align}
        \varepsilon_{\text{dec}} &\leq \frac{\exp{-n\left(\tilde{H} - \tilde{\omega}_{3,6}(\delta,|R_{X}C E|) - \frac{f(\delta)}{n}\right)}}{d_{\tilde{C}}^2},\\
        \varepsilon_{\delta} &\leq 2^{-n\frac{\delta^2}{4\ln(2)} + 1}\sqrt{(n+1)^{|C|} + (n+1)^{|E|} + (n+1)^{|R_X|}},
    \end{align}
    with
    \begin{equation}
        \tilde{H} := \max_{t \in T} H(C)_{t} + \max_{t \in T} H(R_{X})_{t} - \min_{t \in T} H(R_{X}C)_{t}.
    \end{equation}
    Here, $\tilde{\omega}_{k,m}(\delta,d)$ is defined in \eqref{eq:tilde-omega}, and $f(\delta)$ is defined as
    \begin{equation}
        f(\delta) := -2\log(1 - \frac{\left(\varepsilon_{\delta}\right)^2}{2}) + 1.
    \end{equation}
    This protocol requires the communication of a system of dimension $d_{\tilde{C}}$ and uses pre-shared entanglement between the sender and the receiver.
\end{proposition}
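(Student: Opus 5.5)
The plan is the standard one-shot state-splitting argument based on decoupling, applied to a typically projected version of the de Finetti state. Three steps: project, decouple via a random unitary, and apply Uhlmann.

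\textbf{Step 1: typical projection.} Let $T := \mathcal{P}^{\tilde{t},\delta}_{\mathbb{X}^n}$. For each $t \in T$, let $\Pi_{C^n}^{t,\delta}$, $\Pi_{E^n}^{t,\delta}$ and $\Pi_{R_X^n}^{t,\delta}$ be the $\delta$-typical projectors for the marginals of $\sigma'_t$. To obtain one projector per system that works simultaneously for all $t \in T$, I would take the projector onto the span of the typical subspaces, or use the typical projector of the averaged marginal. Applying these projectors to $\ket{\tau^{\tilde{t},\delta}}$ and renormalizing gives a state $\ket{\tau'}$.

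By \eqref{eq:delta-typical-subspace-probability} and the gentle measurement / union-bound argument, each projection loses weight at most $(n+1)^{|\cdot|}2^{-n\delta^2/(2\ln 2)}$. Gentle measurement then gives
\[
\|\tau-\tau'\|_1 \le \varepsilon_\delta ,
\]
with the square root of the sum of the three failure probabilities, which matches the stated form of $\varepsilon_\delta$. The renormalization factor is $\ge 1-\varepsilon_\delta^2/2$; this is what produces $f(\delta)$ when we later bound eigenvalues of the normalized state.

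\textbf{Step 2: decoupling.} The sender applies a Haar-random unitary on $C^n$ restricted to the typical support, of dimension $d_{C}$. It splits the result into $\tilde{C}$ (sent) and $C'$ (kept, to be decoupled from $R=R_TR_X^n$). The one-shot decoupling theorem gives
\[
\mathbb{E}\,\|\tau'_{RC'}-\tau'_R\otimes \pi_{C'}\|_1 \le \sqrt{\frac{d_C\, d_R\,\mathrm{Tr}(\tau'^2_{RC})}{d_{\tilde C}^2}},
\]
and I take $\varepsilon_{\text{dec}}$ to be the quantity under the root. I would bound each factor using the projected structure:
\begin{itemize}
\item $d_C \le \exp\{n(\max_t H(C)_t+\tilde{\omega}_{1,2})\}$, from the union over $|T|$ typical subspaces and \eqref{eq:delta-typical-subspace-dimension} with continuity.
\item $d_R \le |T|\exp\{n(\max_t H(R_X)_t+\dots)\}$, where the $R_T$ factor is absorbed into the $\log(n+1)$ terms.
\item The purity of $\tau'_{RC}$ equals the purity of $\tau'_E$, since the global state is pure. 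Its eigenvalues are at most $\exp\{-n(\min_t H(E)_t-\dots)\}/(1-\varepsilon_\delta^2/2)$, by \eqref{eq:type-probability-upper-bound} on the typical $E$ subspace. Because $H(E)_t=H(R_XC)_t$, this gives the $\min_t H(R_XC)_t$ term.
\end{itemize}
Collecting the continuity and type-counting errors yields $\tilde{\omega}_{3,6}$ and $f(\delta)$.

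\textbf{Step 3: Uhlmann and triangle inequality.} Since $C'$ is nearly maximally mixed and decoupled from $R$, it can be purified by pre-shared maximal entanglement with the receiver. Uhlmann's theorem \eqref{eq:uhlmann-theorem} then gives receiver and sender isometries reconstructing $C^n$ at the receiver with error $2\sqrt{\varepsilon_{\text{dec}}}$. The triangle inequality contributes $\varepsilon_\delta$ once for replacing $\tau$ by $\tau'$ and once for the reverse replacement, which gives $2\varepsilon_\delta+2\sqrt{\varepsilon_{\text{dec}}}$.

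\textbf{Main obstacle.} The hardest step is Step 2's bound on $\mathrm{Tr}(\tau'^2_{RC})$ and $d_C$ for a mixture over types. The state is a superposition over $t$, not i.i.d., and its marginals are mixtures. I would handle the cross terms by exploiting orthogonality of the $R_T$ register, which makes $\tau'_E$ a convex mixture. Each component's eigenvalues are bounded on its own typical subspace, with the worst case taken over $t$. This is where the $\max$/$\min$ over $T$ and the extra $\log(n+1)$ terms arise.
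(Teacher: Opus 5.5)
Your outline matches the paper's: typical projection with gentle measurement, then decoupling where the purity of the $R_TR_X^nC^n$ marginal is traded for that of the $E^n$ marginal, then orthogonality of $R_T$ to turn that marginal into a mixture over $t$, then Uhlmann and the triangle inequality twice. Bounding the purity by the largest eigenvalue of the mixture, instead of using Lemma~\ref{lemma:purityInequality} and Proposition~\ref{prop:purityBoundProjectedState}, is a fine and slightly cleaner variant; it even removes the factor $2$ behind the $+1$ in $f(\delta)$.

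The gap is in the step you call the main obstacle. Your Step~1 uses a $t$-independent projector on each system. Your resolution (``each component's eigenvalues are bounded on its own typical subspace'') is what the paper's type-conditioned projector $\sum_{t}\Pi^{t,\delta}_{R_X^n}\otimes\Pi^{t,\delta}_{E^n}\otimes\Pi^{t,\delta}_{C^n}\otimes\ketbra{t}{t}_{R_T}$ gives, not what yours gives. With a span-type $P_E$, branch $t$ is restricted to the typical subspaces of all $t'\in T$. If $P_E$ does not commute with ${\sigma'}^{\otimes n}_{t,E}$, nothing bounds the eigenvalues of $P_E{\sigma'}^{\otimes n}_{t,E}P_E$ by roughly $2^{-nH(E)_t}$. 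In general you only get $\mathrm{Tr}[(P_E{\sigma'}^{\otimes n}_{t,E}P_E)^2]\leq\mathrm{Tr}[({\sigma'}^{\otimes n}_{t,E})^2]$, a R\'enyi-2 quantity, which gives the wrong rate.

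Here this is repairable in either of two ways.
\begin{itemize}
\item Keep your projector and use the structure of $E$. The state ${\sigma'}_{t,E}=\sum_x t(x)\ketbra{x}{x}_{E_X}\otimes\mathrm{Tr}_C\ketbra{\sigma_x}{\sigma_x}_{E_QC}$ has a $t$-independent eigenbasis. So the span projector is diagonal, and every sequence in its support is $3\delta$-typical for the spectrum of ${\sigma'}_{t,E}$, since $\|t-t'\|_1\leq 2\delta$. The cost is $\omega(3\delta,|E|)$ instead of $\omega(\delta,|E|)$. The averaged-marginal variant additionally needs a widened typicality window to keep the gentle-measurement weight.
\item Adopt the conditioned projector, which is harmless as an analysis device. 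The $C^n$-marginal of the projected state still lies in $\mathrm{span}\bigcup_t\mathcal{T}^{t,\delta}_{C^n}$, so your $d_C$ bound survives. That span is in fact the right reading of the paper's $C_2^\delta$, which as written sits inside $C^nR_T$.
\end{itemize}
Either way you also need that projecting $R_X^n$ and $C^n$ can only decrease the $E^n$-marginal in operator order. This is the partial trace of $Q\omega Q$ with $0\leq Q\leq I$ on the traced systems, i.e.\ the paper's $\tilde{\tau}_{\leq}\leq\overline{\tau}_{\leq}$ step.

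Two further corrections. First, your Step~2 correctly gives trace-norm decoupling error $\sqrt{\varepsilon_{\text{dec}}}$, but \eqref{eq:uhlmann-theorem} then yields $2\varepsilon_{\text{dec}}^{1/4}$, not $2\sqrt{\varepsilon_{\text{dec}}}$. The paper obtains $2\sqrt{\varepsilon_{\text{dec}}}$ only because \eqref{eq:decoupling_inequality_cq_channel_simulation} reads Theorem~\ref{thm:decoupling}, a bound on the squared norm, as a bound on the norm. So neither argument supports the constant as stated. What one can claim is $\varepsilon_{\text{ss}}\leq 2\varepsilon_\delta+2\varepsilon_{\text{dec}}^{1/4}$, with the same rate condition on $d_{\tilde C}$. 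Your exponent $+n(\tilde H+\cdots)$ is the correct one; the minus sign in the displayed bound on $\varepsilon_{\text{dec}}$ is a typo (compare \eqref{eq:worst_case_entropic_quantities_state_splitting}).

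Second, your roles are reversed. In splitting, the Haar unitary is the receiver's decoder. It is analysed on the target state, in which the receiver holds $C^\delta\cong\tilde C\otimes\tilde B$, and the factor decoupled from $R$ is the receiver's entanglement half $\tilde B$. The sender's encoder is the inverse Uhlmann isometry acting on $E^nC^n\tilde A$. If instead the sender kept a decoupled $C'$ and sent $\tilde C$, the purification of $RC'$ would be split between the sender's $E^n$ and the receiver's $\tilde C$, so no local Uhlmann isometry would exist. This is why the paper constructs merging and then time-reverses it.
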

In the state splitting protocol, a state $\tau_{ECR}$ is shared between Alice and a reference system $R$ that purifies the state. Alice holds the systems $E$ and $C$ and her goal is to send the system $C$ to Bob while ensuring that the global state $\tau_{ECR}$ is unchanged. Therefore, using the state splitting protocol on our type-constrained de Finetti state on which we applied $V_{\mathcal{N}, X \to EC}^{\otimes n}$ allows to compress the transmission of the system $C^n$. The time-reversed version of this protocol is known as state merging, in which Bob is the one that owns $C$, and he must send it to Alice so that the two systems $E$ and $C$ ``merge". Therefore, in order to construct a protocol for state splitting, we construct a protocol for the state merging and then reverse it for convenience. 

To construct the state merging protocol, we use the following decoupling theorem. The idea of the decoupling theorem is that sampling a unitary at random (from the Haar measure) decouples, with high probability, our state from the reference system. In particular, it insures that the decoupled state is close to a maximally mixed state, the latter being the reduced state of a maximally entangled state, which can be recovered by Bob through the pre-shared entanglement he shares with Alice, and can reduce the communication cost of the protocol. We use the bounds of Ref.~\cite{WalterSymmetryQI} for the decoupling theorem.
\begin{theorem}[Decoupling theorem \cite{Abeyesinghe_2009}]
\label{thm:decoupling}
Let $\Psi_{AR}$ be a density operator on $\mathbb{C}^{d_A} \otimes \mathbb{C}^{d_R}$, in which~$d_A~=~d_{A_1}d_{A_2}$. Then,
\begin{align*}
    &\int dU_A \left\| \text{Tr}_{A_1}\left[\left(U_A \otimes I_R\right) \Psi_{AR} \left(U_A^\dagger \otimes I_R\right)\right] - \frac{I_{A_2}}{d_{A_2}} \otimes \Psi_R \right\|_1^2 \leq \frac{d_Ad_R}{d_{A_1}^2}\text{Tr}\left[\Psi_{AR}^2\right].
\end{align*}
\end{theorem}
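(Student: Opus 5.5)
The plan is the standard route: bound the trace norm by the Hilbert–Schmidt norm, then evaluate the Haar average of the Hilbert–Schmidt norm exactly using the swap trick and the second-moment (Schur–Weyl) formula for the unitary group.

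\textbf{Step 1 (reduction to the 2-norm).} Write $\rho^U_{A_2R} := \text{Tr}_{A_1}[(U_A\otimes I_R)\Psi_{AR}(U_A^\dagger\otimes I_R)]$ and $M_U := \rho^U_{A_2R} - \frac{I_{A_2}}{d_{A_2}}\otimes\Psi_R$. The operator $M_U$ lives on a space of dimension $d_{A_2}d_R$. By Cauchy–Schwarz applied to its singular values,
\[
\|M_U\|_1^2 \leq d_{A_2}d_R\,\|M_U\|_2^2 .
\]
This bound holds for each $U$, so it survives integration over $U$. The marginal of $\rho^U$ on $R$ is $\Psi_R$ for every $U$, because $U_A$ acts only on $A$. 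Expanding the square therefore gives
\[
\|M_U\|_2^2 = \text{Tr}[(\rho^U_{A_2R})^2] - \tfrac{1}{d_{A_2}}\text{Tr}[\Psi_R^2].
\]

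\textbf{Step 2 (swap trick and twirl).} I use $\text{Tr}[\rho^2] = \text{Tr}[(\rho\otimes\rho)F]$, where $F$ is the swap operator on two copies. This gives
\[
\text{Tr}[(\rho^U_{A_2R})^2] = \text{Tr}\big[(\Psi\otimes\Psi)\,\big((U^\dagger\otimes U^\dagger)(I_{A_1A_1'}\otimes F_{A_2})(U\otimes U)\big)\otimes F_R\big].
\]
Averaging over the Haar measure, the twirl of $X := I_{A_1A_1'}\otimes F_{A_2}$ commutes with every $U\otimes U$. It is therefore of the form $\alpha I_{AA'} + \beta F_A$. The coefficients are fixed by the two trace conditions $\text{Tr}X = d_{A_1}^2 d_{A_2}$ and $\text{Tr}(XF_A) = d_{A_1}d_{A_2}^2$. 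With $d_A := d_{A_1}d_{A_2}$ these read
\[
\alpha d_A^2 + \beta d_A = d_{A_1}^2 d_{A_2}, \qquad \alpha d_A + \beta d_A^2 = d_{A_1}d_{A_2}^2 .
\]
Solving the linear system gives closed forms. From those I will check the elementary inequalities $\alpha \leq 1/d_{A_2}$ and $0\le \beta \leq 1/d_{A_1}$; this is the only computation that requires care. The averaged purity is then
\[
\mathbb{E}_U\text{Tr}[(\rho^U)^2] = \alpha\,\text{Tr}[\Psi_R^2] + \beta\,\text{Tr}[\Psi_{AR}^2].
\]

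\textbf{Step 3 (assemble).} Subtracting $\text{Tr}[\Psi_R^2]/d_{A_2}$ and using $\alpha\le 1/d_{A_2}$ gives
\[
\mathbb{E}\|M_U\|_2^2 \le \beta\,\text{Tr}[\Psi_{AR}^2] \le \text{Tr}[\Psi_{AR}^2]/d_{A_1}.
\]
Multiplying by the factor $d_{A_2}d_R$ from Step 1 yields
\[
\frac{d_{A_2}d_R}{d_{A_1}}\text{Tr}[\Psi_{AR}^2] = \frac{d_Ad_R}{d_{A_1}^2}\text{Tr}[\Psi_{AR}^2],
\]
which is exactly the claimed bound.

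The main obstacle is Step 2, namely computing the twirl coefficients correctly and verifying the two coefficient inequalities. Everything else is routine.
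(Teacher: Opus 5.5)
Your argument is correct and is the standard proof of this lemma: the paper does not prove it but imports the statement from Abeyesinghe et al.\ (with the constants of Walter's notes), where it is established by Cauchy--Schwarz to the Hilbert--Schmidt norm followed by the swap trick and the two-copy Haar twirl $\alpha I_{AA'}+\beta F_A$, exactly as you propose. Your linear system solves to $\alpha = d_{A_2}(d_{A_1}^2-1)/(d_A^2-1)$ and $\beta = d_{A_1}(d_{A_2}^2-1)/(d_A^2-1)$, so $\alpha\le 1/d_{A_2}$ and $0\le\beta\le 1/d_{A_1}$ reduce to $d_{A_2}^2\ge 1$ and $d_{A_1}^2\ge 1$; the system is singular only when $d_A=1$, where $I_{AA'}=F_A$ and the claim is trivial because $M_U=0$.
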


The other tool that is used in the state splitting protocol is Uhlmann's theorem, which links two purifications of a state by an isometry. The decoupling theorem combined with Uhlmann's theorem and shared entanglement yields the state merging protocol, which is the protocol we use to compress the transmission of the system $C^n$ from Alice to Bob (in reverse). Figure \ref{fig:cq_channel_simulation_protocol} shows a visual of the protocol that performs state splitting of our type-constrained de Finetti state.

\begin{figure}[t]
    \centering
    \resizebox{0.75\textwidth}{!}{
    \begin{circuitikz}
    \tikzstyle{every node}=[font=\fontsize{18.2pt}{23.7pt}\selectfont]
    \draw [ color={rgb,255:red,169; green,186; blue,255}, draw opacity=1, line width=0.5pt, short] (21.125,15.625) .. controls (17.375,15.625) and (17.5,15.625) .. (13.75,15.625);
    \draw [ fill={rgb,255:red,218; green,217; blue,255}, fill opacity=1] (15,18.5) rectangle (16.25,16.625);
    \node [font=\fontsize{18.2pt}{23.7pt}\selectfont, inner xsep=0.080cm, inner ysep=0.085cm, rounded corners=0.020cm] at (15.65,17.5) {$V_{\mathcal{N}}^{\otimes n}$};
    \draw [ color={rgb,255:red,52; green,82; blue,199}, draw opacity=1, line width=1pt, short] (15,17.5) -- (11.25,17.5);
    \draw [ color={rgb,255:red,52; green,82; blue,199}, draw opacity=1, line width=1pt, short] (13.75,21.25) .. controls (12.5,19.375) and (12.5,19.375) .. (11.25,17.5);
    \draw [ color={rgb,255:red,52; green,82; blue,199}, draw opacity=1, line width=1pt, short] (21.25,21.25) .. controls (17.5,21.25) and (17.5,21.25) .. (13.75,21.25);
    \draw [ color={rgb,255:red,52; green,82; blue,199}, draw opacity=1, line width=1pt, short] (23.75,18) .. controls (22.5,19.625) and (22.5,19.625) .. (21.25,21.25);
    \draw [ color={rgb,255:red,52; green,82; blue,199}, draw opacity=1, line width=1pt, short] (23.75,18) .. controls (20,18) and (20,18) .. (16.25,18);
    \draw [ color={rgb,255:red,52; green,82; blue,199}, draw opacity=1, line width=1pt, short] (17.5,17) .. controls (16.875,17) and (16.875,17) .. (16.25,17);
    \draw [ color={rgb,255:red,52; green,82; blue,199}, draw opacity=1, line width=1pt, short] (17.5,13.75) .. controls (17.5,15.375) and (17.5,15.375) .. (17.5,17);
    \draw [ color={rgb,255:red,52; green,82; blue,199}, draw opacity=1, line width=1pt, short] (21.25,13.75) .. controls (19.375,13.75) and (19.375,13.75) .. (17.5,13.75);
    \draw [ color={rgb,255:red,52; green,82; blue,199}, draw opacity=1, line width=1pt, short] (23.75,18) .. controls (22.5,15.875) and (22.5,15.875) .. (21.25,13.75);
    \draw [ color={rgb,255:red,169; green,186; blue,255}, draw opacity=1, line width=0.5pt, short] (21.25,19.375) .. controls (17.5,19.375) and (17.5,19.375) .. (13.75,19.375);
    \node [font=\fontsize{18.2pt}{23.7pt}\selectfont, fill={rgb,255:red,255; green,255; blue,255}, fill opacity=1, text opacity=1, inner xsep=0.080cm, inner ysep=0.085cm, rounded corners=0.020cm] at (14.25,21.75) {$R$};
    \node [font=\fontsize{18.2pt}{23.7pt}\selectfont, fill={rgb,255:red,255; green,255; blue,255}, fill opacity=1, text opacity=1, inner xsep=0.080cm, inner ysep=0.085cm, rounded corners=0.020cm] at (17,17.375) {$C^n$};
    \node [font=\fontsize{18.2pt}{23.7pt}\selectfont, fill={rgb,255:red,255; green,255; blue,255}, fill opacity=1, text opacity=1, inner xsep=0.080cm, inner ysep=0.085cm, rounded corners=0.020cm] at (17,18.375) {$E^n$};
    \node [font=\fontsize{18.2pt}{23.7pt}\selectfont, fill={rgb,255:red,255; green,255; blue,255}, fill opacity=1, text opacity=1, inner xsep=0.080cm, inner ysep=0.085cm, rounded corners=0.020cm] at (14.25,17.875) {$X^n$};
    \node [font=\fontsize{18.2pt}{23.7pt}\selectfont, fill={rgb,255:red,255; green,255; blue,255}, fill opacity=1, text opacity=1, inner xsep=0.080cm, inner ysep=0.085cm, rounded corners=0.020cm] at (10.375,17.625) {$\tau_{\text{in}}$};
    \node [font=\fontsize{18.2pt}{23.7pt}\selectfont, fill={rgb,255:red,255; green,255; blue,255}, fill opacity=1, text opacity=1, inner xsep=0.080cm, inner ysep=0.085cm, rounded corners=0.020cm] at (24.625,18) {$\tau_{\text{out}}$};
    \node [font=\fontsize{18.2pt}{23.7pt}\selectfont, fill={rgb,255:red,255; green,255; blue,255}, fill opacity=1, text opacity=1, inner xsep=0.080cm, inner ysep=0.085cm, rounded corners=0.020cm] at (7.5,20.25) {$Reference$};
    \node [font=\fontsize{18.2pt}{23.7pt}\selectfont, fill={rgb,255:red,255; green,255; blue,255}, fill opacity=1, text opacity=1, inner xsep=0.080cm, inner ysep=0.085cm, rounded corners=0.020cm] at (8.125,17.5) {$Alice$};
    \node [font=\fontsize{18.2pt}{23.7pt}\selectfont, fill={rgb,255:red,255; green,255; blue,255}, fill opacity=1, text opacity=1, inner xsep=0.080cm, inner ysep=0.085cm, rounded corners=0.020cm] at (8.125,14.625) {$Bob$};
    \end{circuitikz}
    }%
    \caption{Visual representation of the QRST task that has to be simulated on a type-constrained de Finetti state. Alice owns a classical system $X^n$ of some input de Finetti state $\tau_{\text{in}}$, which is purified by some reference system $R$. She applies the isometric extension $V_{\mathcal{N}}^{\otimes n}$ of the classical-quantum channel $\mathcal{N}^{\otimes n}$ to obtain the de Finetti state $\tau_{\text{out}}$, from which she keeps the environment system $E^n$ and sends the output system $C^n$ to Bob.}
    \label{fig:cq_channel_simulation_task}
\end{figure}

\textit{Proof of Proposition \ref{prop:de_finetti_state_state_splitting}.} We relabel the purified output type-constrained de Finetti state as follows:
\begin{equation}
\label{eq:de_finetti_output_state_state_splitting_relabeled}
\ket{\tau_0^{\tilde{t},\delta}}_{R_TR_X^nE_0^nC_0^n} := \frac{1}{\sqrt{\left|\mathcal{P}^{\tilde{t},\delta}_{\mathbb{X}^n}\right|}} \sum_{t \in \mathcal{P}^{\tilde{t},\delta}_{\mathbb{X}^n}} \ket{\sigma_{0,t}'}_{R_XE_0C_0}^{\otimes n} \otimes \ket{t}_{R_T}.
\end{equation}
We use the subscripts $s = 0, 1, 2$ to specify the exact time of the protocol in Figure \ref{fig:cq_channel_simulation_protocol} that each state corresponds to. At time $s=0$, the full state is $\ket{\tau_0^{\tilde{t},\delta}}_{R_TR_X^nE_0^nC_0^n} \otimes \ket{\Phi}_{\tilde{A}\tilde{B}}$ with $\ket{\Phi}_{\tilde{A}\tilde{B}}$ being the pre-shared entanglement between Alice (system $\tilde{A}$) and Bob (system $\tilde{B}$) that is independent of the input. At the end of the state splitting protocol, the desired state is labeled $\ket{\tau_2^{\tilde{t},\delta}}_{R_TR_X^nE_2^nC_2^n}$, and is the same as $\ket{\tau_0^{\tilde{t},\delta}}_{R_TR_X^nE_0^nC_0^n}$, but in which Bob owns the system $C_2$ which is isomorphic to $C_0$ (and $E_0$ is isomorphic to $E_2$).
\begin{figure*}[t]
    \centering
    \resizebox{0.75\textwidth}{!}{
    \begin{circuitikz}
    \tikzstyle{every node}=[font=\fontsize{14.2pt}{18.5pt}\selectfont]
    \node [font=\fontsize{18.2pt}{23.7pt}\selectfont, fill={rgb,255:red,255; green,255; blue,255}, fill opacity=1, text opacity=1, inner xsep=0.080cm, inner ysep=0.085cm, rounded corners=0.020cm] at (24.75,18) {$\approx \tau_{\text{out}}$};
    \node [font=\fontsize{18.2pt}{23.7pt}\selectfont, fill={rgb,255:red,255; green,255; blue,255}, fill opacity=1, text opacity=1, inner xsep=0.080cm, inner ysep=0.085cm, rounded corners=0.020cm] at (9.375,17.25) {$\tau_{\text{out}}$};
    \draw [ color={rgb,255:red,169; green,186; blue,255}, draw opacity=1, line width=0.5pt, short] (21.125,15.625) .. controls (17.375,15.625) and (17.5,15.625) .. (13.75,15.625);
    \draw [ fill={rgb,255:red,218; green,217; blue,255}, fill opacity=1] (14.625,19.125) rectangle (16.75,16);
    \node [font=\fontsize{18.2pt}{23.7pt}\selectfont, inner xsep=0.080cm, inner ysep=0.085cm, rounded corners=0.020cm] at (15.75,17.625) {$V^{\dagger}_{Uhl}$};
    \draw [ color={rgb,255:red,52; green,82; blue,199}, draw opacity=1, line width=1pt, short] (14.625,18.25) -- (11.25,18.25);
    \draw [ color={rgb,255:red,52; green,82; blue,199}, draw opacity=1, line width=1pt, short] (13.375,21.25) .. controls (11.625,19.25) and (11.75,19.25) .. (10,17.25);
    \draw [ color={rgb,255:red,52; green,82; blue,199}, draw opacity=1, line width=1pt, short] (21.25,21.25) .. controls (17.25,21.25) and (17.375,21.25) .. (13.375,21.25);
    \draw [ color={rgb,255:red,52; green,82; blue,199}, draw opacity=1, line width=1pt, short] (23.75,18) .. controls (22.5,19.625) and (22.5,19.625) .. (21.25,21.25);
    \draw [ color={rgb,255:red,52; green,82; blue,199}, draw opacity=1, line width=1pt, short] (23.75,18) .. controls (20.25,18) and (20.25,18) .. (16.75,18);
    \draw [ color={rgb,255:red,52; green,82; blue,199}, draw opacity=1, line width=1pt, short] (17.5,17) .. controls (17.125,17) and (17.125,17) .. (16.75,17);
    \draw [ color={rgb,255:red,52; green,82; blue,199}, draw opacity=1, line width=1pt, short] (17.5,14.375) .. controls (17.5,15.625) and (17.5,15.625) .. (17.5,17);
    \draw [ color={rgb,255:red,52; green,82; blue,199}, draw opacity=1, line width=1pt, short] (21.625,13.75) .. controls (21,13.75) and (20.5,13.75) .. (20.5,13.75);
    \draw [ color={rgb,255:red,52; green,82; blue,199}, draw opacity=1, line width=1pt, short] (23.75,18) .. controls (22.625,15.875) and (22.75,15.875) .. (21.625,13.75);
    \draw [ color={rgb,255:red,169; green,186; blue,255}, draw opacity=1, line width=0.5pt, short] (21.25,19.375) .. controls (17.5,19.375) and (17.5,19.375) .. (13.75,19.375);
    \node [font=\fontsize{18.2pt}{23.7pt}\selectfont, fill={rgb,255:red,255; green,255; blue,255}, fill opacity=1, text opacity=1, inner xsep=0.080cm, inner ysep=0.085cm, rounded corners=0.020cm] at (13.875,21.75) {$R$};
    \node [font=\fontsize{18.2pt}{23.7pt}\selectfont, fill={rgb,255:red,255; green,255; blue,255}, fill opacity=1, text opacity=1, inner xsep=0.080cm, inner ysep=0.085cm, rounded corners=0.020cm] at (17.375,17.5) {$\tilde{C}$};
    \node [font=\fontsize{18.2pt}{23.7pt}\selectfont, fill={rgb,255:red,255; green,255; blue,255}, fill opacity=1, text opacity=1, inner xsep=0.080cm, inner ysep=0.085cm, rounded corners=0.020cm] at (17.375,18.5) {$E_2^n$};
    \node [font=\fontsize{18.2pt}{23.7pt}\selectfont, fill={rgb,255:red,255; green,255; blue,255}, fill opacity=1, text opacity=1, inner xsep=0.080cm, inner ysep=0.085cm, rounded corners=0.020cm] at (13.875,18.75) {$E_0^n$};
    \node [font=\fontsize{18.2pt}{23.7pt}\selectfont, fill={rgb,255:red,255; green,255; blue,255}, fill opacity=1, text opacity=1, inner xsep=0.080cm, inner ysep=0.085cm, rounded corners=0.020cm] at (6.25,20.25) {$Reference$};
    \node [font=\fontsize{18.2pt}{23.7pt}\selectfont, fill={rgb,255:red,255; green,255; blue,255}, fill opacity=1, text opacity=1, inner xsep=0.080cm, inner ysep=0.085cm, rounded corners=0.020cm] at (6.875,17.5) {$Alice$};
    \node [font=\fontsize{18.2pt}{23.7pt}\selectfont, fill={rgb,255:red,255; green,255; blue,255}, fill opacity=1, text opacity=1, inner xsep=0.080cm, inner ysep=0.085cm, rounded corners=0.020cm] at (7,14.625) {$Bob$};
    \draw [ color={rgb,255:red,52; green,82; blue,199}, draw opacity=1, line width=1pt, short] (14.625,17.25) -- (10,17.25);
    \draw [ color={rgb,255:red,52; green,82; blue,199}, draw opacity=1, line width=1pt, short] (14.625,16.25) -- (13.375,16.25);
    \node [font=\fontsize{18.2pt}{23.7pt}\selectfont, fill={rgb,255:red,255; green,255; blue,255}, fill opacity=1, text opacity=1, inner xsep=0.080cm, inner ysep=0.085cm, rounded corners=0.020cm] at (13.875,17.75) {$C_0^n$};
    \node [font=\fontsize{18.2pt}{23.7pt}\selectfont, fill={rgb,255:red,255; green,255; blue,255}, fill opacity=1, text opacity=1, inner xsep=0.080cm, inner ysep=0.085cm, rounded corners=0.020cm] at (13.75,16.75) {$\tilde{A}$};
    \draw [ fill={rgb,255:red,218; green,217; blue,255}, fill opacity=1] (18.75,15) rectangle (20.5,12.5);
    \draw [ color={rgb,255:red,52; green,82; blue,199}, draw opacity=1, line width=1pt, short] (17.5,14.375) .. controls (18.125,14.375) and (18.125,14.375) .. (18.75,14.375);
    \draw [ color={rgb,255:red,52; green,82; blue,199}, draw opacity=1, line width=1pt, short] (13.375,13.125) .. controls (16,13.125) and (16.125,13.125) .. (18.75,13.125);
    \draw [ color={rgb,255:red,52; green,82; blue,199}, draw opacity=1, line width=1pt, short] (12.5,15.375) .. controls (12.875,14.25) and (13,14.25) .. (13.375,13.125);
    \node [font=\fontsize{18.2pt}{23.7pt}\selectfont, fill={rgb,255:red,255; green,255; blue,255}, fill opacity=1, text opacity=1, inner xsep=0.080cm, inner ysep=0.085cm, rounded corners=0.020cm] at (12,15.625) {$\Phi_{\text{in}}$};
    \draw [ color={rgb,255:red,52; green,82; blue,199}, draw opacity=1, line width=1pt, short] (11.25,18.25) .. controls (10.625,17.75) and (10.625,17.75) .. (10,17.25);
    \node [font=\fontsize{18.2pt}{23.7pt}\selectfont, inner xsep=0.080cm, inner ysep=0.085cm, rounded corners=0.020cm] at (19.625,13.75) {$U_{\text{dec}}$};
    \node [font=\fontsize{18.2pt}{23.7pt}\selectfont, fill={rgb,255:red,255; green,255; blue,255}, fill opacity=1, text opacity=1, inner xsep=0.080cm, inner ysep=0.085cm, rounded corners=0.020cm] at (13.75,13.625) {$\tilde{B}$};
    \draw [ color={rgb,255:red,52; green,82; blue,199}, draw opacity=1, line width=1pt, short] (13.375,16.25) -- (12.5,15.875);
    \node [font=\fontsize{14.2pt}{18.5pt}\selectfont, fill={rgb,255:red,255; green,255; blue,255}, fill opacity=1, text opacity=1, inner xsep=0.080cm, inner ysep=0.085cm, rounded corners=0.020cm] at (13.75,24) {s = 0};
    \draw [-{Stealth[scale=1.5]}, ] (13.75,23.625) -- (13.75,23.125);
    \node [font=\fontsize{14.2pt}{18.5pt}\selectfont, fill={rgb,255:red,255; green,255; blue,255}, fill opacity=1, text opacity=1, inner xsep=0.080cm, inner ysep=0.085cm, rounded corners=0.020cm] at (17.25,24) {s = 1};
    \draw [-{Stealth[scale=1.5]}, ] (17.25,23.625) -- (17.25,23.125);
    \node [font=\fontsize{14.2pt}{18.5pt}\selectfont, fill={rgb,255:red,255; green,255; blue,255}, fill opacity=1, text opacity=1, inner xsep=0.080cm, inner ysep=0.085cm, rounded corners=0.020cm] at (21,24) {s = 2};
    \draw [-{Stealth[scale=1.5]}, ] (21,23.625) -- (21,23.125);
    \node [font=\fontsize{18.2pt}{23.7pt}\selectfont, fill={rgb,255:red,255; green,255; blue,255}, fill opacity=1, text opacity=1, inner xsep=0.080cm, inner ysep=0.085cm, rounded corners=0.020cm] at (21.125,14.25) {$C_2^n$};
    \end{circuitikz}
    }%
    \caption{Visual representation of the state redistribution protocol applied to the output de Finetti state $\tau_{\text{out}}$ obtained after passing the input type-constrained de Finetti state through the isometric extension $V_{\mathcal{N}}^{\otimes n}$ of the classical-quantum channel $\mathcal{N}^{\otimes n}$. At time $s=0$, Alice and Bob share the entanglement state $\Phi_{{\text{in}}}$ from which Alice owns the system $\tilde{A}$ and Bob owns the system $\tilde{B}$. Alice owns the systems $E_0^n$ and $C_0^n$ of the output de Finetti state $\tau_{\text{out}}$ which is purified onto some system $R$. At time $s=1$, Alice applied the Uhlmann isometry $V_\text{Uhl}^{\dagger}$ to produce the environment system $E_2^n$, which is isomorphic to the system $E_0^n$, and to the produce the system $\tilde{C}$ that she sends to Bob. Finally, at time $s=2$, Bob applied the unitary $U_{\text{dec}}$ to recover the output system $C_2^n$, which is isomorphic to $C_0^n$ and the the state on all the systems is a state close to $\tau_{\text{out}}$, with Bob owning the equivalent to the initial system $C_0^n$ that Alice used to own.}
    \label{fig:cq_channel_simulation_protocol}
\end{figure*}

Before using the decoupling theorem, we project our states on various $\delta$-typical subspaces to reduce the dimensions on which it lives while keeping a state relatively close to the original state, which allows the decoupling theorem to give good communication cost on the state splitting protocol. Let the projectors $\Pi^{t,\delta}_{R_X^n}$, $\Pi^{t,\delta}_{E_0^n}$ and $\Pi^{t,\delta}_{C_0^n}$ be the projectors on the $\delta$-typical subspaces of the reduced states ${\sigma'}_{0,t,R_X}^{\otimes n}$, ${\sigma'}_{0,t,E_0}^{\otimes n}$ and ${\sigma'}_{0,t,C_0}^{\otimes n}$ respectively (in the sense of \eqref{eq:delta-typical-projector}). We define a projector on the $\delta$-typical subspaces of every type as
\begin{equation}
\label{eq:projector_on_typical_subspaces_cq_channel_simulation}
\Tilde{\Pi}^{\tilde{t},\delta}_{R_TR_X^nE_0^nC_0^n} := \sum_{t \in \mathcal{P}^{\tilde{t},\delta}_{\mathbb{X}^n}} \Pi^{t,\delta}_{R_X^n} \otimes \Pi^{t,\delta}_{E_0^n} \otimes \Pi^{t,\delta}_{C_0^n} \otimes \ketbra{t}{t}_{R_T}.
\end{equation}
We can then define the projected de Finetti state at time $s=0$ as
\begin{equation}
\label{eq:projected_de_finetti_state_at_time_2_cq_channel_simulation}
\ket{\tilde{\tau}_0^{\tilde{t},\delta}}_{R_TR_X^nE_0^nC_0^n} := \frac{\tilde{\Pi}^{\tilde{t},\delta}_{R_TR_X^nE_0^nC_0^n} \ket{\tau_0^{\tilde{t},\delta}}_{R_TR_X^nE_0^nC_0^n}}{\left\| \tilde{\Pi}^{\tilde{t},\delta}_{R_TR_X^nE_0^nC_0^n} \ket{\tau_0^{\tilde{t},\delta}}_{R_TR_X^nE_0^nC_0^n} \right\|_2}.
\end{equation}
Similarly, since $\ket{\tau_2^{\tilde{t},\delta}}_{R_TR_X^nE_2^nC_2^n}=\ket{\tau_0^{\tilde{t},\delta}}_{R_TR_X^nE_0^nC_0^n}$, we can define $\ket{\tilde{\tau}_2^{\tilde{t},\delta}}_{R_TR_X^nE_2^nC_2^n}$ to be the same renormalized projected state but on systems $E_2$ and $C_2$ instead of $E_0$ and $C_0$. The following lemma gives an explicit bound on a general $\delta$-typical approximation of our type-constrained de Finetti state.
\begin{lemma}
    \label{lemma:deltaTypicalTraceDistanceBound}
    Let $\delta > 0$, $n,k \in \mathbb{N}_{>0}$ and $\tilde{t}$ be a fixed type in $\mathcal{P}_{\mathbb{X}^n}$ where $\mathbb{X}$ is a finite alphabet. Let $\tau^{\tilde{t},\delta}_{R_TA_1^n \ldots A_k^n}$ be a pure $\delta$-typical de Finetti state of $\tilde{t}$ defined on systems $A_1^n, A_2^n, \ldots, A_k^n$ with reference system $R_T$ purifying the classical type system:
    \begin{equation} 
    \ket{\tau^{\tilde{t},\delta}}_{R_TA_1 \ldots A_n} = \frac{1}{\sqrt{\left|\mathcal{P}_{\mathbb{X}^n}^{\tilde{t},\delta}\right|}} \sum_{t \in \mathcal{P}_{\mathbb{X}^n}^{\tilde{t},\delta}} \ket{\sigma_t}_{A_1\ldots A_k}^{\otimes n} \otimes \ket{t}_{R_T}.
    \end{equation}
    Define $\ket{\tilde{\tau}^{\tilde{t},\delta}}_{R_TA_1 \ldots A_n}$, the non-normalized projected de Finetti state on the $\delta$-typical subspace of each system~$A_i$:
    \begin{equation}
        \label{eq:projected_de_finetti_state_general}
    \ket{\tilde{\tau}^{\tilde{t},\delta}} := \left(\sum_{t \in \mathcal{P}_{\mathbb{X}^n}^{\tilde{t},\delta}} \left(\bigotimes_{i=1}^k \Pi_{A_i^n}^{t,\delta} \otimes \ketbra{t}{t}_{R_T}\right)\right) \ket{\tau^{\tilde{t},\delta}},
    \end{equation}
    in which $\Pi_{A_i^n}^{t,\delta}$ is the $\delta$-typical projector of the state $\sigma_t$ on system $A_i$. Then,~$\varepsilon_{\delta}~:=~\left\|\tilde{\tau}^{\tilde{t},\delta} - \tau^{\tilde{t},\delta}\right\|_1$ is upper-bounded as follows:
    \begin{equation}
        \varepsilon_{\delta} \leq 2^{-n\frac{\delta^2}{4\ln(2)} + 1}\sqrt{\sum_{i=1}^k |\mathcal{P}_{A_i^n}|}.
    \end{equation}
    Furthermore, this bound is the same if we instead take the renormalized state after projection in \eqref{eq:projected_de_finetti_state_general}.
\end{lemma}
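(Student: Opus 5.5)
The plan is to reduce the bound to a gentle-measurement estimate for a single pure state and a single projector. Then I would use the orthogonality of the type register $R_T$ to reduce the projection probability to an average over types of i.i.d.\ typicality probabilities, which are controlled by \eqref{eq:delta-typical-subspace-probability}.

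\textbf{Step 1 (projection probability).} Write $\tilde{\Pi} := \sum_{t} \bigotimes_{i=1}^k \Pi_{A_i^n}^{t,\delta} \otimes \ketbra{t}{t}_{R_T}$. This is a projector, since the summands have orthogonal supports on $R_T$. Because the states $\ket{t}_{R_T}$ are orthonormal, all cross terms vanish in $\bra{\tau^{\tilde{t},\delta}}\tilde{\Pi}\ket{\tau^{\tilde{t},\delta}}$, and it equals
\begin{equation*}
\frac{1}{\left|\mathcal{P}_{\mathbb{X}^n}^{\tilde{t},\delta}\right|}\sum_{t} \mathrm{Tr}\Big[\Big(\bigotimes_{i=1}^k \Pi^{t,\delta}_{A_i^n}\Big)\, \sigma_{t,A_1\ldots A_k}^{\otimes n}\Big].
\end{equation*}
Hence it suffices to lower-bound each summand uniformly in $t$.

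\textbf{Step 2 (union bound over systems).} The projectors act on distinct tensor factors and therefore commute. This gives the operator inequality $I - \bigotimes_i \Pi_i \leq \sum_i (I - \Pi_i)\otimes I_{\text{rest}}$, which can be checked in a joint eigenbasis. Taking the expectation in $\sigma_t^{\otimes n}$ and using that the marginal on $A_i^n$ is $\sigma_{t,A_i}^{\otimes n}$, each term is bounded by \eqref{eq:delta-typical-subspace-probability}. Here $\Pi^{t,\delta}_{A_i^n}$ is defined in the eigenbasis of $\sigma_{t,A_i}$, and the number of types $|\mathcal{P}_{A_i^n}|$ replaces $(n+1)^{|A_i|}$. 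This yields
\begin{equation*}
1 - \bra{\tau}\tilde{\Pi}\ket{\tau} \leq \sum_{i=1}^k |\mathcal{P}_{A_i^n}|\, 2^{-n\frac{\delta^2}{2\ln 2}}.
\end{equation*}

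\textbf{Step 3 (gentle measurement).} For a pure state $\psi$ and a projector $P$, the gentle operator lemma gives $\|\psi - P\psi P\|_1 \leq 2\sqrt{1-\mathrm{Tr}(P\psi)}$; this covers the unnormalized projected state. For the renormalized state $\psi'$, both states are pure with $|\langle\psi|\psi'\rangle|^2 = \mathrm{Tr}(P\psi)$. The pure-state trace distance formula then gives $\|\psi-\psi'\|_1 = 2\sqrt{1-\mathrm{Tr}(P\psi)}$, the same bound. Substituting Step 2, the square root halves the exponent to $-n\delta^2/(4\ln 2)$, and the factor $2$ becomes the $+1$ in the exponent of the claimed bound.

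The main point of care is Step 2. One must make sure that the typical projectors are taken with respect to the eigenbasis of the correct marginal $\sigma_{t,A_i}$, so that \eqref{eq:delta-typical-subspace-probability} applies verbatim. One must also make sure the union bound is valid even though the joint state is entangled across the $A_i$; it is, because it is an operator inequality rather than a statement about product states. The rest is routine.
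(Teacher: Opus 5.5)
Your proposal is correct and follows the paper's proof essentially step for step. The paper likewise uses orthogonality of the $R_T$ register to reduce $\mathrm{Tr}(\tilde{\Pi}\tau)$ to an average over types, applies a union bound over the $k$ systems, invokes the typicality estimate \eqref{eq:delta-typical-subspace-probability}, and finishes with the gentle measurement lemma. The only differences are that the paper proves the union bound by induction for general operators $0 \leq \Pi_i \leq I$ rather than through your joint-eigenbasis operator inequality, and your explicit pure-state fidelity check for the renormalized state fills in a step the paper only asserts.
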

\noindent We prove this lemma in Appendix~\ref{appendix:typicalProjectionsBounds}. Note that this bound vanishes exponentially fast in $n$ for fixed $\delta > 0$ and fixed systems~$\left\{A_i^n\right\}_{i=1}^k$. Applied to the projection in \eqref{eq:projected_de_finetti_state_at_time_2_cq_channel_simulation}, this lemma gives us
\begin{equation}
    \label{eq:precision_on_state_splitting_projections_at_time_0_and_2}
    \varepsilon_{\delta} \leq 2^{-n\frac{\delta^2}{4\ln(2)} + 1}\sqrt{(n+1)^{|C|} + (n+1)^{|E|} + (n+1)^{|R_X|}},
\end{equation}
in which we implicitly used the bound in \eqref{eq:number-of-types}. The same bound applies for the precision of the $\delta$-typical approximation $\ket{\tilde{\tau}_2^{\tilde{t},\delta}}_{R_TR_X^nE_2^nC_2^n}$ of $\ket{\tau_2^{\tilde{t},\delta}}_{R_TR_X^nE_2^nC_2^n}$.

Let us label $R_X^\delta$, $E_0^\delta$ and $C_0^\delta$ the systems obtained by projecting the systems $R_X^n$, $E_0^n$ and $C_0^n$ using the projectors~$\sum_t \Tilde{\Pi}^t_{R_X^n} \otimes \ketbra{t}{t}$,~$\sum_t \Tilde{\Pi}^t_{E_0^n} \otimes \ketbra{t}{t}$ and~$\sum_t \Tilde{\Pi}^t_{C_0^n} \otimes \ketbra{t}{t}$ respectively, and do the same for $E_2$ and $C_2$ by defining $E_2^{\delta}$ and $C_2^\delta$ (Note that the product of these three projectors gives $\tilde{\Pi}^{\tilde{t},\delta}_{R_TR_X^nE_0^nC_0^n}$). We can now rewrite $\ket{\tilde{\tau}_2^{\tilde{t},\delta}}_{R_TR_X^nE_0^nC_0^n}$ as a state on systems $R_TR_X^\delta E_0^\delta C_0^\delta$. Note that we can always decompose the system $C_2^\delta$ into two subsystems $\tilde{C}$ and $\tilde{B}$ such that $C_2^\delta = \tilde{C} \otimes \tilde{B}$. Applying the decoupling theorem on the state $\tilde{\tau}_{2, R_TR_X^\delta E_2^\delta C_2^\delta}^{\tilde{t},\delta}$ reduced to the systems $R_TR_X^\delta C_2^\delta$, we obtain the existence of a unitary $U_{C_2^\delta}$ acting on the system $C_2^\delta$ such that
\begin{equation}
    \label{eq:decoupling_inequality_cq_channel_simulation}
    \left\| \text{Tr}_{\tilde{C}}\left[\left(I_{R_TR_X^\delta} \otimes U_{C_2^\delta}\right) \tilde{\tau}_{2, R_TR_X^\delta C_2^\delta}^{\tilde{t},\delta} \left(I_{R_TR_X^\delta} \otimes U_{C_2^\delta}\right)\right] - \frac{I_{\tilde{B}}}{d_{\tilde{B}}} \otimes \tilde{\tau}_{1,R_TR_X^\delta}^{\tilde{t},\delta} \right\|_1 \leq \varepsilon_{\text{dec}},
\end{equation}
\normalsize
in which 
\begin{equation}
    \tilde{\tau}_{1,R_TR_X^\delta}^{\tilde{t},\delta} = \text{Tr}_{E_2^\delta C_2^\delta}\left[\left(U_{C_2^\delta} \otimes I_{R_TR_X^\delta E_2^\delta}\right) \tilde{\tau}_{2, R_TR_X^\delta E_2^\delta C_2^\delta}^{\tilde{t},\delta} \left(U_{C_2^\delta}^{\dagger} \otimes I_{R_TR_X^\delta E_2^\delta}\right)\right]
\end{equation} 
and
\begin{equation}
\label{eq:decoupling_error}
    \varepsilon_{\text{dec}} := \frac{d_{C_2^\delta}d_{R_X^\delta R_T}}{d_{\tilde{C}}^2} \text{Tr}\left[\left(\tilde{\tau}_{3,R_TR_X^\delta C_2^\delta}^{\tilde{t},\delta}\right)^2\right].
\end{equation}
We eventually prove that this $\varepsilon_{\text{dec}}$ bound is satisfying enough to ensure that we have an optimal communication cost. 

For now, let us complete the rest of the protocol by defining the Uhlmann isometry $V_{\text{Uhl}}$. We define the state at time $s=1$ as
\begin{equation}
\label{eq:state_at_time_2_cq_channel_simulation}
\ket{\Tilde{\tau}_1^{\tilde{t},\delta}}_{R_TR_X^\delta E_2^\delta C_2^\delta} := \left(I_{R_TR_X^\delta E_2^\delta} \otimes U_{C_2^\delta}\right) \ket{\tilde{\tau}_2^{\tilde{t},\delta}}_{R_TR_X^\delta E_2^\delta C_2^\delta}.
\end{equation}
By the decoupling theorem, assuming we picked a unitary that satisfies the decoupling inequality in \eqref{eq:decoupling_inequality_cq_channel_simulation}, if we trace out systems $\Tilde{C}$ and $E_2^\delta$ from the state~$\ket{\Tilde{\tau}_1^{\tilde{t},\delta}}_{R_TR_X^\delta E_2^\delta \tilde{C}\tilde{B}}$, we obtain up to an error~$\varepsilon_{\text{dec}}$, the state~$\frac{I_{\tilde{B}}}{d_{\tilde{B}}} \otimes \tilde{\tau}_{1,R_TR_X^\delta}^{\tilde{t},\delta}$. This turns out to be the same reduced state as if we traced out systems $E_0^\delta$, $C_0^\delta$ and $\tilde{A}$ from the input state~$\ket{\Tilde{\tau}_0^{\tilde{t},\delta}}_{R_TR_X^\delta C_0^\delta E_0^\delta} \otimes \ket{\Phi}_{\tilde{A}\tilde{B}}$. Therefore, by Uhlmann's theorem~\eqref{eq:uhlmann-theorem}, there exists an isometry $V_{\text{Uhl}, E_2^{\delta} \tilde{C} \rightarrow C_0^{\delta} E_0^{\delta}\tilde{A}}$ such that
\begin{equation}
\label{eq:uhlmann_isometry_cq_channel_simulation}
\left\| V_{\text{Uhl}}\ket{\Tilde{\tau}_1^{\tilde{t},\delta}}_{R_TR_X^\delta E_2^\delta \tilde{C}\tilde{B}} -  \ket{\Tilde{\tau}_0^{\tilde{t},\delta}}_{R_TR_X^\delta E_0^\delta C_0^\delta} \otimes \ket{\Phi}_{\tilde{A}\tilde{B}}\right\|_1 \leq 2\sqrt{\varepsilon_{\text{dec}}}.
\end{equation}

Therefore, combining equations \eqref{eq:state_at_time_2_cq_channel_simulation} and \eqref{eq:uhlmann_isometry_cq_channel_simulation} with the triangle inequality and the invariance of the trace norm under isometric transformation, we obtain that the final state at time $s=2$ of our protocol, when the input state is the projected de Finetti state $\ket{\Tilde{\tau}_0^{\tilde{t},\delta}}_{R_TR_X^\delta E_0^\delta C_0^\delta}$, satisfies
\begin{equation}
\label{eq:final_state_de_finetti_input_cq_channel_simulation}
\left\| \ket{\tilde{\tau}_2^{\tilde{t},\delta}}_{R_TR_X^\delta E_2^\delta C_2^\delta} - U_{C_2^\delta}^\dagger V^{\dagger}_\text{Uhl} \ket{\Tilde{\tau}_0^{\tilde{t},\delta}}_{R_TR_X^\delta E_0^\delta C_0^\delta} \otimes \ket{\Phi}_{\tilde{A}\tilde{B}} \right\|_1 \leq 2\sqrt{\varepsilon_{\text{dec}}}.
\end{equation}
From the triangle inequality again, we have that the precision of the protocol on our type-constrained de Finetti state is
\begin{align}
    &\left\|\ket{\tau_2^{\tilde{t},\delta}}_{R_TR_X^nE_2^nC_2^n} - U_{C\delta}^\dagger V_{\text{Uhl}}^\dagger \ket{\tau_0^{\tilde{t},\delta}}_{R_TR_X^nE_0^nC_0^n} \otimes \ket{\Phi}_{\tilde{A}\tilde{B}} \right\|_1\\
    &\leq \left\| \ket{\tau_2^{\tilde{t},\delta}}_{R_TR_X^nE_2^nC_2^n} - \ket{\tilde{\tau}_2^{\tilde{t},\delta}}_{R_TR_X^\delta E_2^\delta C_2^\delta} \right\|_1\\
    &+ \left\| \ket{\tilde{\tau}_2^{\tilde{t},\delta}}_{R_TR_X^\delta E_2^\delta C_2^\delta} - U_{C^\delta}^\dagger V_\text{Uhl}^{\dagger} \ket{\Tilde{\tau}_0^{\tilde{t},\delta}}_{R_TR_X^\delta E_2^\delta C_2^\delta} \otimes \ket{\Phi}_{\tilde{A}\tilde{B}} \right\|_1\\
    &+ \left\| U_{C^\delta}^\dagger V_\text{Uhl}^{\dagger} \left( \ket{\Tilde{\tau}_0^{\tilde{t},\delta}}_{R_TR_X^nE_0^nC_0^n} - \ket{\tau_0^{\tilde{t},\delta}}_{R_TR_X^nE_0^nC_0^n} \right) \otimes \ket{\Phi}_{\tilde{A}\tilde{B}} \right\|_1\\
    &\leq \varepsilon_{\delta} + 2\sqrt{\varepsilon_{\text{dec}}} + \varepsilon_{\delta},\\
    &\leq 2\varepsilon_{\delta} + 2\sqrt{\varepsilon_{\text{dec}}},
\end{align}
in which we used the invariance of the trace distance under isometries in the last inequality to obtain $\varepsilon_{\delta}$ (which is the precision of the approximation in \eqref{eq:precision_on_state_splitting_projections_at_time_0_and_2}).

We know that $\varepsilon_{\delta}$ vanishes exponentially fast in $n$ for a fixed $\delta > 0$, we now show that we can bound the value of $\varepsilon_\text{dec}$ in \eqref{eq:decoupling_error} in such a way that allows to obtain an optimal communication cost for our protocol while insuring this term also vanishes exponentially fast in $n$. By $\delta$-typicality, we know the dimensions $d_{R_X^\delta}$ and $d_{C_2^\delta}$ are at most the sum over all dimensions of the respective $\delta$-typical subspaces for every $\delta$-typical type. This gives us the following upper bound on the dimensions $d_{R_X^\delta}$ and $d_{C_2^\delta}$:
\begin{align}
    d_{R_X^\delta} &\leq \left|\mathcal{P}^{\tilde{t},\delta}_{\mathbb{X}^n}\right| \exp{n\left(\max_{t \in \mathcal{P}_{\mathbb{X}^n}^{\tilde{t},\delta}} H(R_X)_t + \tilde{\omega}_{1,1}(\delta, |R_X|)\right)},\\ 
    d_{C_2^\delta} &\leq \left|\mathcal{P}^{\tilde{t},\delta}_{\mathbb{X}^n}\right| \exp{n\left(\max_{t \in \mathcal{P}_{\mathbb{X}^n}^{\tilde{t},\delta}} H(C_2)_t + \tilde{\omega}_{1,1}(\delta, |C_2|)\right)},
\end{align}
in which $H(R_X)_t$ and $H(C_2)_t$ are the von Neumann entropies of the states ${\sigma'}_{2,t,R_X}$ and ${\sigma'}_{2,t,C_2}$ respectively, and the maximums are taken over all types $t$ in the set $\mathcal{P}^{\tilde{t},\delta}_{\mathbb{X}^n}$. We can bound the size of the set of $\delta$-typical types by the size of the set of all types on $\mathbb{X}^n$, which is in turn bounded by $(n+1)^{|X|} = (n+1)^{|R_X|}$. This gives
\begin{align}
    d_{R_X^\delta} &\leq \exp{n\left(\max_{t \in \mathcal{P}_{\mathbb{X}^n}^{\tilde{t},\delta}} H(R_X)_t + \tilde{\omega}_{1,2}(\delta, |R_X|)\right)},\\
    d_{C_2^\delta} &\leq \exp{n\left(\max_{t \in \mathcal{P}_{\mathbb{X}^n}^{\tilde{t},\delta}} H(C_2)_t + \tilde{\omega}_{1,2}(\delta, |C_2||R_X|)\right)},
\end{align}
We also know that $d_{R_X^\delta R_T} = d_{R_X^\delta}d_{R_T}$ and that $d_{R_T}$ is the number of $\delta$-typical types $\left|\mathcal{P}^{\tilde{t},\delta}_{\mathbb{X}^n}\right|$, which is upper bounded by $|\mathcal{P}_{\mathbb{X}^n}| \leq (n + 1)^{|\mathbb{X}|} = (n+1)^{|R_X|}$.

We are now left to compute the purity $\text{Tr}\left[\left(\tilde{\tau}_{2,R_TR_X^\delta C_2^\delta}^{\tilde{t},\delta}\right)^2\right]$ in order to bound $\varepsilon_{\text{dec}}$. The following lemma gives an upper bound on such purity for general $\delta$-typical projections on various systems.
\begin{lemma}
    \label{lemma:purityBoundDeltaTypicalDeFinetti}
    Let $\delta > 0$, $n,k \in \mathbb{N}_{>0}$ and $\tilde{t}$ be a fixed type in $\mathcal{P}_{\mathbb{X}^n}$ where $\mathbb{X}$ is a finite alphabet. Let $\tau^{\tilde{t},\delta}_{R_TA_1^n \ldots A_k^n}$ be a pure $\delta$-typical de Finetti state of $\tilde{t}$ on systems $A_1^n, A_2^n, \ldots, A_k^n$ with reference system $R_T$ purifying the classical type system:
    \begin{equation}
        \ket{\tau^{\tilde{t},\delta}}_{R_TA_1^n \ldots A_k^n} = \frac{1}{\sqrt{\left|\mathcal{P}_{\mathbb{X}^n}^{\tilde{t},\delta}\right|}} \sum_{t \in \mathcal{P}_{\mathbb{X}^n}^{\tilde{t},\delta}} \ket{\sigma_t}_{A_1\ldots A_k}^{\otimes n} \otimes \ket{t}_{R_T}.
    \end{equation}
    Define $\ket{\tilde{\tau}^{\tilde{t},\delta}}_{R_TA_1^n \ldots A_k^n}$ to be the \textbf{renormalized} projected de Finetti state on the $\delta$-typical subspace of each system $A_i$:
    \begin{equation}
        \label{eq:projected_de_finetti_state_general_purity}
        \ket{\tilde{\tau}^{\tilde{t},\delta}} \propto \left(\sum_{t \in \mathcal{P}_{\mathbb{X}^n}^{\tilde{t},\delta}} \left(\bigotimes_{i=1}^k \Pi_{A_i^n}^{t,\delta} \otimes \ketbra{t}{t}_{R_T}\right)\right) \ket{\tau^{\tilde{t},\delta}},
    \end{equation}
    in which $\Pi_{A_i^n}^{t,\delta}$ is the $\delta$-typical projector of the state $\sigma_t$ on system $A_i$. Then, for any $i' \in [k]$, we have the following upper bound on the purity of the reduced state $\tilde{\tau}^{\tilde{t},\delta}_{A_{i'}^n}$:
    \begin{align}
    \text{Tr}\left[\left(\tilde{\tau}_{A_{i'}^n}^{\tilde{t},\delta}\right)^2\right] &\leq \exp{-n\left(\min_{t \in \mathcal{P}_{\mathbb{X}^n}^{\tilde{t},\delta}} H(A_{i'})_t - \tilde{\omega}_{1,1}(\delta,|A_{i'}|) - \frac{f(\delta)}{n}\right)},
    \end{align}
    in which the minimum is over all types $t$ in the set $\mathcal{P}_{\mathbb{X}^n}^{\tilde{t},\delta}$, $H(A_{i'})_t$ is the von Neumann entropy of the reduced state $\sigma_{t,A_{i'}}$, with $\tilde{\omega}_{1,1}(\delta,|A_{i'}|)$ defined in \eqref{eq:tilde-omega} and $f(\delta)$ is defined as
    \begin{equation}
    f(\delta) := -2\log(1 - \frac{\left(\varepsilon_{\delta}\right)^2}{2}) + 1,
    \end{equation}
    with $\varepsilon_{\delta}$ being the precision of the $\delta$-typical approximation from the projection in \eqref{eq:projected_de_finetti_state_general_purity} as given by Lemma~\ref{lemma:deltaTypicalTraceDistanceBound}.
\end{lemma}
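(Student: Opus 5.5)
The plan is to first write the reduced state of the unnormalized projected vector on $A_{i'}^n$ explicitly, and then bound its purity by exploiting the block structure that the type register $R_T$ imposes.

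\textbf{Block decomposition.} Write $\ket{\hat\tau}$ for the unnormalized projected vector in \eqref{eq:projected_de_finetti_state_general_purity}. Since the projector is block diagonal in $\ket{t}_{R_T}$, we have
\[
\ket{\hat\tau} = \frac{1}{\sqrt{|\mathcal{P}^{\tilde t,\delta}_{\mathbb{X}^n}|}}\sum_t \ket{\phi_t}\otimes\ket{t}_{R_T},
\qquad
\ket{\phi_t} := \Big(\textstyle\bigotimes_i \Pi^{t,\delta}_{A_i^n}\Big)\ket{\sigma_t}^{\otimes n}.
\]
Tracing out $R_T$ and all systems other than $A_{i'}^n$ kills the cross terms, because the $\ket{t}$ are orthogonal. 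This gives
\[
\hat\tau_{A_{i'}^n} = \frac{1}{|\mathcal{P}^{\tilde t,\delta}|}\sum_t \omega_t,
\qquad
\omega_t := \mathrm{Tr}_{\neg A_{i'}}\ket{\phi_t}\!\bra{\phi_t}.
\]
Because $\Pi^{t,\delta}_{A_{i'}^n}$ acts on $A_{i'}^n$ itself, and the other projectors only shrink the vector, we get the operator inequality
\[
\omega_t \le \Pi^{t,\delta}_{A_{i'}^n}\,\sigma_{t,A_{i'}}^{\otimes n}\,\Pi^{t,\delta}_{A_{i'}^n}.
\]
To justify this, write $\ket{\phi_t} = (\Pi_{A_{i'}}\otimes Q)\ket{\sigma_t}^{\otimes n}$, where $Q$ is the projector on the remaining systems. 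Then $\mathrm{Tr}_{\neg A_{i'}}\big[(\Pi\otimes Q)\psi(\Pi\otimes Q)\big] = \Pi\,\mathrm{Tr}_{\neg}\big[(I\otimes Q)\psi\big]\,\Pi \le \Pi\,\mathrm{Tr}_{\neg}[\psi]\,\Pi$, using $Q\le I$ and positivity of $\psi$ under the partial trace against $I-Q$.

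\textbf{Eigenvalue bound.} The operator $\Pi^{t,\delta}\sigma_t^{\otimes n}\Pi^{t,\delta}$ is diagonal in the eigenbasis of $\sigma_t^{\otimes n}$. Its eigenvalues are $p^{\otimes n}(a^n)$ with $a^n$ being $\delta$-typical. By \eqref{eq:type-probability-upper-bound} combined with continuity of entropy \eqref{eq:fannes-audenaert-inequality}, each such eigenvalue is at most $\exp{-n(H(A_{i'})_t - \omega(\delta,|A_{i'}|))}$. This is the same ingredient used to derive \eqref{eq:typical-sequences-size}. Hence
\[
\|\omega_t\|_\infty \le \exp{-n\big(\min_t H(A_{i'})_t - \omega(\delta,|A_{i'}|)\big)}.
\]

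\textbf{Purity of the average.} Now
\[
\mathrm{Tr}\big[\hat\tau_{A_{i'}}^2\big] \le \|\hat\tau_{A_{i'}}\|_\infty\,\mathrm{Tr}\,\hat\tau_{A_{i'}} \le \max_t\|\omega_t\|_\infty,
\]
since the average of operators with bounded operator norm keeps that bound and $\mathrm{Tr}\,\hat\tau_{A_{i'}}\le 1$. This bound is actually stronger than needed. The extra $\tilde\omega_{1,1}$ term, carrying $|A_{i'}|\log(n+1)/n$, comfortably absorbs any slack if one instead bounds crudely through $\sum_t \mathrm{Tr}\,\omega_t^2/|\mathcal{P}|$ plus cross terms $\mathrm{Tr}\,\omega_t\omega_{t'}$, which are each bounded by the same norm estimate. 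So the statement's form follows.

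\textbf{Renormalization (the main obstacle).} The final step is renormalizing, and this is where $f(\delta)$ enters. Dividing by $\|\hat\tau\|_2^2$ multiplies the purity by $\|\hat\tau\|_2^{-4}$. We therefore need a lower bound on the squared norm $N := \langle\hat\tau|\hat\tau\rangle$ in terms of $\varepsilon_\delta$ from Lemma~\ref{lemma:deltaTypicalTraceDistanceBound}. Since the projection is an orthogonal projection, $\langle\tau|\hat\tau\rangle = N$. The trace distance between the pure states $\tau$ and the renormalized $\hat\tau$ relates to their fidelity: $\|\cdot\|_1 = 2\sqrt{1-F}$ with $F = N$. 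This gives $N \ge 1 - \varepsilon_\delta^2/4$, or more crudely, using the unnormalized distance, $N \ge 1-\varepsilon_\delta^2/2$. Then
\[
N^{-2} = \exp{-2\log(1-\varepsilon_\delta^2/2)} \le \exp{f(\delta)},
\]
where the extra $+1$ in $f(\delta)$ provides slack of a factor $2$, for instance to cover the relation between the normalized and unnormalized error conventions. The delicate point is precisely matching $\varepsilon_\delta$ to the fidelity so that the constant $1/2$ comes out as stated. I would verify this carefully by expanding $\|\hat\tau-\tau\|_1$ for the rank-two difference of the unnormalized projected state and $\tau$.
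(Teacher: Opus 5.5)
Your proof is correct. It differs from the paper's at the key step and gives a tighter bound.

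\textbf{The paper's route.} It passes to the state projected only on $A_{i'}^n$ and writes $\tilde\sigma_t:=\Pi^{t,\delta}_{A_{i'}^n}\sigma_{t,A_{i'}}^{\otimes n}\Pi^{t,\delta}_{A_{i'}^n}$. It expands $\text{Tr}\bigl[\bigl(\tfrac{1}{|\mathcal{P}^{\tilde t,\delta}_{\mathbb{X}^n}|}\sum_t\tilde\sigma_t\bigr)^2\bigr]$ and controls the cross terms with $\tilde\sigma_t\tilde\sigma_{t'}+\tilde\sigma_{t'}\tilde\sigma_t\le\tilde\sigma_t^2+\tilde\sigma_{t'}^2$, which gives $\tfrac{2}{|\mathcal{P}^{\tilde t,\delta}_{\mathbb{X}^n}|}\sum_t\text{Tr}\,\tilde\sigma_t^2$. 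It then bounds each $\text{Tr}\,\tilde\sigma_t^2$ (Proposition~\ref{prop:purityBoundProjectedState}) by summing squared probabilities class by class, using $|T^{s}|\,2^{-2nH(s)}\le 2^{-nH(s)}$. That summation costs a factor $|\mathcal{P}_{A_{i'}^n}|$, which is the $\log(n+1)$ part of $\tilde\omega_{1,1}$. The factor $2$ from the cross terms is exactly the $+1$ in $f(\delta)$.

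\textbf{What your route buys.} Using $\text{Tr}\,\rho^2\le\|\rho\|_\infty\text{Tr}\,\rho$ with a per-type largest-eigenvalue bound avoids both losses. You get $\omega(\delta,|A_{i'}|)$ instead of $\tilde\omega_{1,1}(\delta,|A_{i'}|)$ and no factor $2$. If you keep $\text{Tr}\,\hat\tau_{A_{i'}}=N$, renormalization costs only $N^{-1}$ rather than $N^{-2}$. The stated bound therefore follows with room to spare.

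\textbf{The first step.} Your reduced-state inequality $\omega_t\le\tilde\sigma_t$ is the right way to justify this step. The paper instead asserts an operator inequality between the two unnormalized projected global states. That inequality fails in general for rank-one operators, since $P\le Q$ does not imply $P\psi P\le Q\psi Q$. Only the reduced-state version is needed, together with monotonicity of $\text{Tr}\,X^2$ on $0\le X\le Y$.

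\textbf{The renormalization constant.} This is not delicate. For the normalized convention, $N=1-\varepsilon_\delta^2/4$ exactly. The explicit bound in Lemma~\ref{lemma:deltaTypicalTraceDistanceBound} comes from gentle measurement, with $1-N$ at most one quarter of the bound's square. So $N\ge 1-\varepsilon_\delta^2/2$ under either reading.
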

\noindent We prove this lemma in Appendix~\ref{appendix:purityBoundTypicalDeFinetti}. From this lemma, we can deduce an upper bound on the purity of the state $\tilde{\tau}_{2,R_TR_X^\delta C_2^\delta}^{\tilde{t},\delta}$. Note that since the full state is pure on all the systems $R_TR_X^\delta E_2^\delta C_2^\delta$, the purity of the reduced state on systems $R_TR_X^\delta C_2^\delta$ is the same as the purity of the reduced state on system $E_2^\delta$. Therefore, using Lemma~\ref{lemma:purityBoundDeltaTypicalDeFinetti} with $A_{i'} = E_2$ gives
\begin{align}
\text{Tr}\left[\left(\tilde{\tau}_{2,R_TR_X^\delta C_2^\delta}^{\tilde{t},\delta}\right)^2\right] \leq \exp{-n\left(\min_{t\in\mathcal{P}_{\mathbb{X}^n}^{\tilde{t},\delta}} H(E_2)_t - \tilde{\omega}_{1,1}(\delta,|E_2|) - \frac{f(\delta)}{n}\right)},
\end{align}
in which $H(E_2)_t$ is the von Neumann entropy of the state ${\sigma'}_{2,t,E_2}$. Since the states $\ket{{\sigma'}_t}_{R_XC_2E_2}$ are pure, we have that $H(E_2)_t = H(R_XC_2)_t$. Therefore, we get
\begin{align}
\text{Tr}\left[\left(\tilde{\tau}_{2,R_TR_X^\delta C_2^\delta}^{\tilde{t},\delta}\right)^2\right]] \leq \exp{-n\left(\min_{t\in \mathcal{P}_{\mathbb{X}^n}^{\tilde{t},\delta}} H(R_XC_2)_t - \tilde{\omega}_{1,1}(\delta,|E_2|) - \frac{f(\delta)}{n}\right)}.
\end{align}

Thus, since $d_{R} = d_{R_T} d_{R_X^\delta}$, we get the following upper bound on $\varepsilon_{\text{dec}}$:
\begin{align}
    \varepsilon_{\text{dec}} &\leq \frac{d_{C_2^\delta}d_{R_X^\delta}d_{R_T}}{d_{\tilde{C}}^2} \text{Tr}\left[\left(\tilde{\tau}_{3,R_TR_X^\delta C_2^\delta}^{\tilde{t},\delta}\right)^2\right]\\
    \label{eq:worst_case_entropic_quantities_state_splitting}
    &\leq \frac{\exp{n\left(\max_{t}H(C_2)_{t} + \max_{t}H(R_X)_{t} - \min_{t} H(E_2)_t + \tilde{\omega}_{3,6}(\delta,|R_XC_2E_2|) + \frac{f(\delta)}{n}\right)}}{d_{\tilde{C}}^2},
\end{align}
in which we used the crude upper bound in \eqref{eq:crude_tilde_omega_additivity}. We keep the dimension of the entanglement system implicit for now and develop it in Theorem \ref{thm:prior_free_cq_channel_simulation}. This concludes the proof of Proposition \ref{prop:de_finetti_state_state_splitting}. \begin{flushright}
    $\square$
\end{flushright}

To prove Theorem \ref{thm:prior_free_cq_channel_simulation}, we apply Proposition \ref{prop:de_finetti_state_state_splitting} by taking into account the monotonicity of the trace distance between the various ${\sigma}_{t}$ states in the definition of the input de Finetti state in \eqref{eq:delta_typical_de_finetti_input_state_state_splitting} so that we can guarantee that the trace distance is preserved between the ${\sigma'}_t$ states of the output de Finetti state in \eqref{eq:de_finetti_output_state_state_splitting}. This allows us to use the continuity of entropy to turn the worst case entropic quantities in \eqref{eq:worst_case_entropic_quantities_state_splitting} into entropic quantities that depend on the estimated type $\tilde{t}$. Applying continuity of entropy a second time allows us to transform these entropic quantities into those of the type $t$ of the input sequence $x^n \in X^n$.

\textit{Proof of Theorem \ref{thm:prior_free_cq_channel_simulation}}
Since~$\frac{1}{2}\left\| \sigma_{t,X} - \sigma_{\tilde{t},X} \right\|_1 \leq \delta$ holds for every type $t$ in the set $\mathcal{P}^{\tilde{t},\delta}_{\mathbb{X}^n}$, then by the monotonicity of the trace norm under the application of the channel $\mathcal{N}$ we also have~$\frac{1}{2}\left\| {\sigma'}_{2,t,C} - {\sigma'}_{2,\tilde{t},C} \right\|_1 \leq \delta$, which allows us to use the continuity of entropy to express the entropy in terms of the type $\tilde{t}$. Applying it a second time allows us to express these entropic quantities in terms of the type $t$ of the input $x^n$. Applying this idea with the inequality of \eqref{eq:crude_tilde_omega_additivity} gives us
\begin{align}
    d_{R_X^\delta} &\leq \exp{n\left(H(R_X)_{t} + \tilde{\omega}_{3,2}(\delta, |R_X|)\right)},\\
    d_{C_2^\delta} &\leq \exp{n\left(H(C)_{t} + \tilde{\omega}_{3,2}(\delta, |C||R_X|)\right)}.
\end{align}
Similarly, we have the following for the purity bound
\begin{align}  \text{Tr}\left[\left(\tilde{\tau}_{2,R_TR_X^\delta C^\delta}^{\tilde{t},\delta}\right)^2\right] \leq \exp{-n\left(H(R_XC)_{t} - \tilde{\omega}_{3,1}(\delta,|E|) - \frac{f(\delta)}{n}\right)}.
\end{align}
Combining all these bounds together gives
\begin{align}
    \varepsilon_{\text{dec}} &\leq \frac{d_{C^\delta}d_{R_X^\delta}d_{R_T}}{d_{\tilde{C}}^2} \text{Tr}\left[\left(\tilde{\tau}_{3,R_TR_X^\delta C^\delta}^{\tilde{t},\delta}\right)^2\right]\\
    &\leq \frac{\exp{n\left(H(C)_{t} + H(R_X)_{t} - H(R_XC)_{t} + \tilde{\omega}_{9,6}(\delta,|R_XCE|) + \frac{f(\delta)}{n}\right)}}{d_{\tilde{C}}^2}\\
    &\leq \frac{\exp{n\left(I(R_X;C)_{t} + \tilde{\omega}_{9,6}(\delta,|R_XCE|) + \frac{f(\delta)}{n}\right)}}{d_{\tilde{C}}^2},
\end{align}
in which we used the crude upper bound in \eqref{eq:crude_tilde_omega_additivity}. 

Therefore, to ensure that $\varepsilon_{\text{dec}}$ asymptotically goes to $0$ as $n$ goes to infinity, it is sufficient to choose the dimension of the communicated system $\tilde{C}$ such that
\begin{equation}
\log d_{\tilde{C}} \gg \frac{n}{2}\left(I(R_X;C)_{\tilde{t}} + \tilde{\omega}_{6,6}(\delta,|R_XCE|) + \frac{f(\delta)}{n}\right).
\end{equation}
which has an asymptotic communication cost of $\frac{n}{2}I(R_X;C)$ which is optimal for this task. To calculate the cost in pre-shared entanglement, since $U_{\text{dec}}$ is a unitary, the dimensions of $C^\delta$ and $\tilde{B}\tilde{C}$ are the same. Therefore, we have
\begin{equation}
\log d_{\tilde{B}} = \log d_{C^\delta} - \log d_{\tilde{C}},
\end{equation}
which gives an entanglement cost that scales as
\begin{equation}
\log d_{\tilde{B}}~\approx \frac{n}{2}\left(H(C) - I(R_X;C)\right) = \frac{n}{2}I(C;X).
\end{equation}
Therefore, we define the following protocol that simulates $V_{\mathcal{N}, X \rightarrow EC}^{\otimes n}$ and the transmission of the system $C^n$ from a sender to a receiver.
\begin{tcolorbox}[boxrule=0.6pt,colback=white,breakable,sharp corners=all]
\begin{protocol}
\label{protocol:cq_channel_simulation_one_way}
Let $\delta > 0$ and $n \in \mathbb{N}_{> 0}$. Let $\mathcal{N}_{X \rightarrow C}$ be a classical-quantum channel and $V_{\mathcal{N}, X \rightarrow EC}$ be an isometric extension of that channel on a purification system $E$. The protocol works as follows:
\begin{enumerate}
    \item The sender Alice sends the receiver Bob her input's type using at most $|X|\log(n + 1)$ bits of communication, in which $n$ is the length of her input sequence and $X$ is the set of symbols that can appear in her input.
    \item Alice applies the isometric extension $V_{\mathcal{N}}^{\otimes n}$ of the channel $\mathcal{N}^{\otimes n}$ on her input sequence $x^n$.
    \item Alice and Bob apply the state splitting protocol of the $\delta$-typical de Finetti state associated to the type $\tilde{t} = t$, in which $t$ is the type of the input $x^n$. \footnote{Note here that technically, we could have split the definition of $\delta$ into som~$\delta_{\text{Finetti}}$ for the de Finetti reduction and~$\delta_{\text{approx}}$ for the typical projections. In this case, this would have allowed us to choose $\delta_{\text{Finetti}} = 0$ since Alice and Bob know the exact type of the input sequence. However, this section is just a warm-up for the one with side-information at the receiver, in which we won't be able to assume that $\delta_{\text{Finetti}} = 0$ and it then becomes much simpler to presume that~$\delta_{\text{approx}} = \delta_{\text{Finetti}} = \delta$.}
\end{enumerate}
\end{protocol}
\end{tcolorbox}

Now, using our de Finetti reduction, we can show that Protocol \ref{protocol:cq_channel_simulation_one_way} successfully simulates the action of the channel $V_{\mathcal{N}}^{\otimes n}$ on any $\delta$-typical classical input. Indeed, Let $\Pi_{\text{sim}}$ be our simulation protocol defined as $\Pi_{\text{sim}} := \Pi_{ss}^{\tilde{t},\delta}V_{\mathcal{N}}^{\otimes n}$, in which $\Pi_{ss}^{\tilde{t},\delta}$ is the state splitting protocol of the $\delta$-typical de Finetti state associated with the estimated type $\tilde{t}$ from Proposition \ref{prop:de_finetti_state_state_splitting}. Since the protocol $\Pi_{ss}^{\tilde{t},\delta}$ is defined on the $\delta$-typical subspace, it can be extended to the whole space by appending the identity on the subspace orthogonal to it. This makes $\Pi_{ss}^{\otimes n}$ permutation-covariant since the $\delta$-typical subspace is invariant under permutation, and therefore a permutation only acts on the identity on the orthogonal space which is permutation-covariant. We know that $V_{\mathcal{N}}^{\otimes n}$ is also permutation-covariant with respect to the action of any permutation $\pi$ on the inputs, since the inverse permutation on the $n$ output quantum systems negates its effect. Therefore, $\Pi_{\text{sim}}$ is also permutation-covariant and so is $\Delta := \Pi_{\text{sim}} - V_{\mathcal{N}}^{\otimes n}$. Then, by our de Finetti reduction in the form of Theorem \ref{thm:restrictedClassicalQuantumDeFinettiReduction}, we have that for every $x^n \in X^n$,
\begin{align}
\label{eq:de_finetti_reduction_cq_channel_simulation}
\left\|\left( \Pi_{\text{sim}} - V_{\mathcal{N}}^{\otimes n} \right) \left(\ketbra{x^n}{x^n}_{X^n}\right)\right\|_1 &\leq \left\|\left( \Pi_{\text{sim}} - V_{\mathcal{N}}^{\otimes n} \right) \left(\ketbra{x^n}{x^n}_{X^n} \otimes \ketbra{x^n}{x^n}_{R_X^n}\right)\right\|_1\\
&\leq \left\|\left( \Pi_{\text{sim}} - V_{\mathcal{N}}^{\otimes n}\right)\right\|_{\diamond,cl}^{\tilde{t},\delta}\\
&\leq\left|\mathcal{P}_{\mathbb{X}^n}\right|\left|\mathcal{P}_{\mathbb{X}^n}^{\tilde{t},\delta}\right|\left\| \left(\Delta \otimes \textbf{id}_{R}\right) \left(\ketbra{\tau^{\tilde{t},\delta}}{\tau^{\tilde{t},\delta}}_{X^n R}\right) \right\|_1\\
&\leq \left|\mathcal{P}_{\mathbb{X}^n}\right|\left|\mathcal{P}_{\mathbb{X}^n}^{\tilde{t},\delta}\right| \left(2\varepsilon_{\delta} + 2\sqrt{\varepsilon_{\text{dec}}} \right)\\
&\leq (n+1)^{2|X|} \left(2\varepsilon_{\delta} + 2\sqrt{\varepsilon_{\text{dec}}} \right).
\end{align}
We note that $|X| = |R_X|$ and that for a probability distribution $t$ passed through a classical-quantum channel, we have $I(R_X;C)_t = I(X;C)_t$ (when viewed as a classical-quantum state). This concludes the proof of Theorem \ref{thm:prior_free_cq_channel_simulation}.
\begin{flushright}
    $\square$
\end{flushright}

\subsection{State Redistribution with Side information at the Receiver}
\label{subsection:StateRedistributionWithSideInformation}
We now turn to a similar setting of prior-free classical-quantum channel simulation, but in which the receiver Bob also owns a classical input. We prove the following theorem similar to Theorem \ref{thm:prior_free_cq_channel_simulation} but with side information at the receiver.
\begin{theorem}
\label{thm:prior_free_cq_channel_simulation_with_side_info}
    Let $\mathcal{N}_{X \to C}$ be a classical-quantum channel sending classical states on system $X$ to quantum state on system $C$. Let $V_{\mathcal{N},X \to EC}$ be an isometric extension of $\mathcal{N}$, and let $\varepsilon > 0$, $n \in \mathbb{N}_{>0}$ and $\delta > 0$. Let $Y$ be a system owned by a receiver. Then, there exists a variable length prior-free simulation protocol $\Pi_{\text{sim}}$ of $V_{\mathcal{N}}^{\otimes n}$ such that for every input sequence $x^ny^n \in \left(X \times Y\right)^n$, the trace distance error $\varepsilon_{x^ny^n}$ of the simulation
    \begin{equation}
        \varepsilon_{x^ny^n} := \left\|\left(\Pi_{\text{sim}} - V_{\mathcal{N}}^{\otimes n}\right)\left(\ketbra{x^ny^n}{x^ny^n}_{X^nY^n}\right)\right\|_1
    \end{equation}
    is bounded by $\varepsilon$, and the dimension of the communicated system during the protocol is
    \begin{equation}
        d_{\tilde{C}} = \frac{2^{10}\cdot\exp{\frac{n}{2}\left(I(C;X|Y)_{t} + \frac{4}{3}\tilde{\omega}_{9,6}(\delta,|Y|^2|X||C||E|) + \frac{2f_{\text{max}}(\delta)}{n}\right) - \Delta^t_{\tilde{t},\delta}}}{{\left(\varepsilon_{\text{red}} - 2\hat{\varepsilon}_{\delta} - 2\overline{\varepsilon}_{\delta}\right)^4}}.
    \end{equation}
    in which
    \begin{align}
        \hat{\varepsilon}_{\delta} \leq 2^{-n\frac{\delta^2}{4\ln(2)} + 1}\sqrt{(n+1)^{|Y|^2|X|} + (n+1)^{|C|} + (n+1)^{|E|}},\\
        \overline{\varepsilon}_{\delta} \leq 2^{-n\frac{\delta^2}{4\ln(2)} + 1}\sqrt{(n+1)^{|E||X||Y|} + (n+1)^{|C|} + (n+1)^{|Y|}}.
    \end{align}
    Here, the conditional mutual information $I(C;X|Y)_{t_{x^ny^n}}$ is evaluated between the distribution $t_{x^ny^n}$ of the type of the input $x^ny^n$ on system $XY$ and the quantum output of $t_{x^n}$ through $\mathcal{N}$ on system $C$ conditional to the distribution $t_{y^n}$ on system $Y$, in the form of a classical-state with classical system $XY$ and quantum system $C$. The variable $d_\text{comm}$ is the dimension of the system that needs to be sent from the sender to the receiver, $\tilde{\omega}_{k,m}(\delta, d)$ is defined in \eqref{eq:tilde-omega} and $f_{\text{max}}(\delta)$ is defined as
    \begin{equation}
        f_{\text{max}}(\delta) := -2\log(1 - \frac{\left(\varepsilon_{\delta,\text{max}}\right)^2}{2}),
    \end{equation}
    in which $\varepsilon_{\delta,\text{max}}$ is defined as
    \begin{equation}
    \varepsilon_{\delta,\text{max}} := 2^{-n\frac{\delta^2}{4\ln(2)} + 1}\sqrt{2(n+1)^{|Y|^2|X|} + 2(n+1)^{|E||X||Y|} + (n+1)^{|C|}}.
    \end{equation}
    The term $\Delta^t_{\tilde{t},\delta}$ is bounded as
    \begin{equation}
        \Delta^t_{\tilde{t},\delta} \leq \tilde{\omega}_{4,3}(\delta,|Y|^2|X||C||E|)
    \end{equation}
    
    Furthermore, the protocol uses pairs of maximally entangled qubits at a rate of $\frac{n}{2}I(C;Y)_{t}$ and generate pairs of maximally entangled qubits at a rate of $\frac{n}{2}I(C;E)_{t}$. In particular, taking $\delta = n^{-1/2 + \alpha}$ for some $\alpha \in (0, 1/2)$ ensures that the error $\varepsilon_{x^ny^n}$ vanishes for every $x^ny^n \in (X \times Y)^n$ as $n$ approaches infinity.
\end{theorem}
Similar to the case with no side information at the receiver, we achieve this result by applying the state redistribution protocol on the type-constrained de Finetti state on which we have applied the isometric extension $V_{\mathcal{N}, X \rightarrow EC}^{\otimes n}$. Since we are now working with an additional system $Y$ on Bob's side, the input de Finetti state is
\begin{equation}
    \label{eq:input_de_finetti_state_redistribution}
    \tau_{X^nY^n}^{\tilde{t},\delta} := \frac{1}{\left|\mathcal{P}_{\mathbb{X}^n \times \mathbb{Y}^n}^{\tilde{t},\delta}\right|}\sum_{t \in \mathcal{P}_{\mathbb{X}^n \times \mathbb{Y}^n}^{\tilde{t},\delta}}\sigma_{t,XY}^{\otimes n},
\end{equation}
in which $\sigma_{t,XY}$ is the state associated to the joint type $t$ of the alphabet $\mathbb{X} \times \mathbb{Y}$, and $\mathcal{P}_{\mathbb{X}^n \times \mathbb{Y}^n}^{\tilde{t},\delta}$ is the set of joint types $t$ that are $\delta$-close to the joint type $\tilde{t}$. We can define a purification $\ket{\tau^{\tilde{t},\delta}}_{R_TR_{XY}^nX^nY^n}$ of this de Finetti state on the reference system $R = R_{XY}^nR_T$, in which $R_T$ purifies the type indexation of the de Finetti state
\begin{equation}
    \label{eq:purified_input_de_finetti_state_redistribution}
    \ket{\tau^{\tilde{t},\delta}}_{R_TR_{XY}^n X^n Y^n} := \frac{1}{\sqrt{\left|\mathcal{P}_{\mathbb{X}^n \times \mathbb{Y}^n}^{\tilde{t},\delta}\right|}}\sum_{t \in \mathcal{P}_{\mathbb{X}^n \times \mathbb{Y}^n}^{\tilde{t},\delta}} \ket{\sigma_t}_{R_{XY}XY}^{\otimes n} \otimes \ket{t}_{R_T},
\end{equation}
in which $\ket{\sigma_t}_{R_{XY}XY}$ is a purification of the state $\sigma_{t,XY}$ on the reference system $R_{XY}$ of same dimension as~$XY$. After applying the channel $V_{\mathcal{N},X \rightarrow EC}^{\otimes n}$, the state is labeled
\begin{equation}
    \label{eq:output_state_redistribution}
    \ket{\tau^{\tilde{t},\delta}}_{R_TR_{XY}^n E^n C^n Y^n} := \frac{1}{\sqrt{\left|\mathcal{P}_{\mathbb{X}^n \times \mathbb{Y}^n}^{\tilde{t},\delta}\right|}} \sum_{t \in \mathcal{P}_{\mathbb{X}^n \times \mathbb{Y}^n}^{\tilde{t},\delta}} \ket{{\sigma'}_t}_{R_{XY}ECY}^{\otimes n} \otimes \ket{t}_{R_T},
\end{equation}
with $\ket{{\sigma'}_t}_{R_{XY}ECY} := V_{\mathcal{N},X \rightarrow EC}\ket{\sigma_t}_{R_{XY}XY}$. We are left to apply the state redistribution protocol to transfer the system $C^n$ from a sender to a receiver. We get the following proposition for the state redistribution of the de Finetti state in $\eqref{eq:output_state_redistribution}$, which is used as an intermediate result to prove Theorem \ref{thm:prior_free_cq_channel_simulation_with_side_info}.
\begin{proposition}
    \label{prop:de_finetti_state_redistribution}
    Let $\ket{\tau^{\tilde{t},\delta}}_{R_TR_{XY}^nE^nC^nY^n}$ be a purified de Finetti state with side information on system~$Y$ as in \eqref{eq:output_state_redistribution}. Using the state redistribution protocol, the system $C^n$ can be transferred from a sender to a receiver with total trace distance error $\varepsilon_{\text{red}}$ bounded as
    \begin{equation}
        \varepsilon_{\text{red}} \leq 2\tilde{\varepsilon}_{\delta} + 2\overline{\varepsilon}_{\delta} + 2\sqrt{\tilde{\varepsilon}_{\text{dec}}} + 2\sqrt{\overline{\varepsilon}_{\text{dec}}},
    \end{equation}
    in which $\tilde{\varepsilon}_{\delta}$, $\overline{\varepsilon}_{\delta}$, $\hat{\varepsilon}_{\text{dec}}$ and $\overline{\varepsilon}_{\text{dec}}$ are bounded as
    \begin{align}
        \hat{\varepsilon}_{\text{dec}} \leq \sqrt{4\frac{\exp{n\left(\tilde{H} + \tilde{\omega}_{3,6}(\delta,|Y R_{XY}C E|) + \frac{f(\delta)}{n}\right)}}{d_{\tilde{A}_{\text{out}}}^2d_{\tilde{C}}^2}},\\
        \overline{\varepsilon}_{\text{dec}} \leq \sqrt{4\frac{\exp{n\left(\overline{H} + \tilde{\omega}_{3,6}(\delta,|Y R_{XY}CE|) + \frac{g(\delta)}{n}\right)}}{d_{\tilde{B}_{\text{in}}}^2d_{\tilde{C}}^2}},\\
        \hat{\varepsilon}_{\delta} \leq 2^{-n\frac{\delta^2}{4\ln(2)} + 1}\sqrt{(n+1)^{|Y|^2|X|} + (n+1)^{|C|} + (n+1)^{|E|}},\\
        \overline{\varepsilon}_{\delta} \leq 2^{-n\frac{\delta^2}{4\ln(2)} + 1}\sqrt{(n+1)^{|E||X||Y|} + (n+1)^{|C|} + (n+1)^{|Y|}}.
    \end{align}
    with
    \begin{align}
        \tilde{H} := \max_{t \in T} H(C)_{t} + \max_{t \in T} H(YR_{XY})_{t} - \min_{t \in T} H(YR_{XY}C)_{t},\\
        \overline{H} := \max_{t \in T} H(C)_{t} + \max_{t \in T} H(ER_{XY})_{t} - \min_{t \in T} H(ER_{XY}C)_{t}.
    \end{align}
    Here, $\tilde{\omega}_{k,m}(\delta,d)$ is defined in \eqref{eq:tilde-omega}, and $f(\delta)$ and $g(\delta)$ are defined as
    \begin{equation}
        f(\delta) := -2\log(1 - \frac{\left(\hat{\varepsilon}_{\delta}\right)^2}{2}) + 1,
    \end{equation}
    and
    \begin{equation}
        g(\delta) := -2\log(1 - \frac{\left(\overline{\varepsilon}_{\delta}\right)^2}{2}) + 1,
    \end{equation}
    This protocol requires the communication of a system of dimension $d_{\tilde{C}}$ from the sender to the receiver, and $d_{\tilde{B}_{\text{in}}}$ and $d_{\tilde{A}_{\text{out}}}$ are the local dimensions of the shared entanglement systems between the sender and the receiver before and after the protocol respectively.
\end{proposition}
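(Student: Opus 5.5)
The plan is to build state redistribution as two rounds of decoupling, following the construction of Ye, Bai and Wang. Each round is handled exactly like the proof of Proposition~\ref{prop:de_finetti_state_state_splitting}, by typical projection, then the decoupling Theorem~\ref{thm:decoupling}, then Uhlmann's theorem. The errors are then combined with the triangle inequality.

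\emph{Setup.} Write the global pure state at time $0$ as $\ket{\tau^{\tilde t,\delta}}_{R_TR_{XY}^nE^nC^nY^n}\otimes\ket{\Phi}_{\tilde A_{\text{in}}\tilde B_{\text{in}}}$. The sender holds $E^nC^n$, the receiver holds $Y^n$, and $R_TR_{XY}^n$ is the reference. I will specify the protocol in time-reversed form, as in the state splitting proof: the sender applies an isometry producing a message $\tilde C$ together with her share $\tilde A_{\text{out}}$ of the output entanglement. The receiver then applies an isometry on $Y^n\tilde C\tilde B_{\text{in}}$ that outputs $Y^nC^n$ and his share $\tilde B_{\text{out}}$.

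\emph{First decoupling (receiver side, $\hat\varepsilon_{\text{dec}}$).} Project with Lemma~\ref{lemma:deltaTypicalTraceDistanceBound} onto the typical subspaces of $C$, $E$ and the joint system $YR_{XY}$ of each type. Its reduced alphabet has size $|Y|^2|X|$, which gives $\hat\varepsilon_\delta$. Then apply Theorem~\ref{thm:decoupling} to the projected state on $C^\delta$ versus the reference $Y R_{XY}R_T$, splitting $C^\delta=\tilde C\tilde A_{\text{out}}$. I bound the dimensions by the sums of typical-subspace dimensions over the $\delta$-typical types, as in the proof of Proposition~\ref{prop:de_finetti_state_state_splitting}. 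The number of types supplies the $\tilde\omega$ and $(n+1)$ factors. I bound the purity with Lemma~\ref{lemma:purityBoundDeltaTypicalDeFinetti}, applied to the complementary system $E$ (pure-state duality $H(E)_t=H(YR_{XY}C)_t$). The resulting exponent is $\max H(C)+\max H(YR_{XY})-\min H(YR_{XY}C)=\tilde H$. I take the square root because the decoupling theorem bounds the average squared norm: Markov/Jensen gives a unitary achieving $\sqrt{\cdot}$, with the factor $4$ absorbing the constant.

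\emph{Second decoupling (sender side, $\overline\varepsilon_{\text{dec}}$).} Do the symmetric step with the roles of $Y$ and $E$ exchanged. Decouple $C^\delta=\tilde C\tilde B_{\text{in}}$ from $ER_{XY}R_T$, using typical projections on $EXY$-type systems, $C$ and $Y$; this gives $\overline\varepsilon_\delta$. The purity is bounded through the complement $Y$, giving exponent $\overline H$ and $g(\delta)$.

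\emph{Gluing the two steps.} Uhlmann's theorem~\eqref{eq:uhlmann-theorem} turns each decoupling statement into an isometry on the sender's side and one on the receiver's side. These show that the reduced states on the reference-plus-other-party systems match those of the target state tensored with maximally entangled states, each up to $2\sqrt{\cdot}$.

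\emph{Main obstacle.} The hardest point is making the two decouplings compatible. The same message register $\tilde C$ must be produced by the sender's Uhlmann isometry and consumed by the receiver's. I plan to handle this as Ye–Bai–Wang do. First fix the sender's unitary from the first decoupling. Then apply the second decoupling to the state already containing $\tilde A_{\text{out}}$, so that the intermediate state after transmission is a common purification for both Uhlmann steps. Finally, the triangle inequality over the chain (projected input, intermediate state, projected output, unprojected output) gives $2\hat\varepsilon_\delta+2\overline\varepsilon_\delta+2\sqrt{\hat\varepsilon_{\text{dec}}}+2\sqrt{\overline\varepsilon_{\text{dec}}}$, using isometric invariance of the trace norm.
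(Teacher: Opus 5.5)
\textbf{Gap: the two decouplings must use one and the same unitary.} Your projections, dimension counts, purity bounds through the complementary systems $E$ and $Y$, and the final triangle-inequality bookkeeping all match what the paper does. The gap is at the step you flag as the main obstacle.

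In the paper, both Uhlmann steps are taken relative to (typical projections of) the same intermediate state $(U\otimes I)\tau_2$.
\begin{itemize}
\item The sender's isometry $W$ comes from comparing $(U\otimes I)\hat\tau_2$ with $\hat\tau_0\otimes\Phi_{\tilde A_{\text{in}}\tilde B_{\text{in}}}$ on $R_T(R_{XY}Y)^\delta\tilde B_{\text{in}}$.
\item The receiver's isometry $V$ comes from comparing $(U\otimes I)\overline\tau_2$ with $\overline\tau_4\otimes\Phi_{\tilde A_{\text{out}}\tilde B_{\text{out}}}$ on $R_T(R_{XY}E)^\delta\tilde A_{\text{out}}$.
\end{itemize}
The unitary $U$ is never implemented; it only defines the point where the sender's output is matched to the receiver's input in the protocol $VW^\dagger$. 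This works only if one fixed tripartition $C^\delta=\tilde B_{\text{in}}\tilde A_{\text{out}}\tilde C$ and one unitary $U$ satisfy both decoupling inequalities simultaneously. Two separate applications of Theorem~\ref{thm:decoupling} give you two unrelated unitaries $U_1,U_2$, and your two steps even use two different factorizations of $C^\delta$. The mismatch $U_2U_1^\dagger$ acts on a register that, at the intermediate time, is split between the two parties, so it cannot be absorbed into either party's local operation.

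\textbf{Why the sequential fix fails.} You propose to fix $U_1$ from the first decoupling and then decouple again on the state already containing $\tilde A_{\text{out}}$. This does not close the gap.
\begin{itemize}
\item If the second unitary acts on all of $C^\delta$, it can destroy the first decoupling condition.
\item If it acts only on $\tilde A_{\text{out}}\tilde C$, the first condition is untouched. But Theorem~\ref{thm:decoupling} then needs the purity of the rotated state on $\tilde A_{\text{out}}\tilde C\,R_T(R_{XY}E)^\delta$, equivalently of $\tilde B_{\text{in}}Y^\delta$ after $U_1$. Lemma~\ref{lemma:purityBoundDeltaTypicalDeFinetti} says nothing about this for a fixed $U_1$. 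The trace-norm guarantee of the first decoupling also cannot control a purity at the exponentially small scale you need.
\end{itemize}
This sequential procedure is also not what Ye--Bai--Wang do.

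\textbf{Missing idea.} You need the bidecoupling Theorem~\ref{thm:bidecoupling}. For a single Haar-random $U$ on $C^\delta$, both squared decoupling errors have expectations bounded by $\alpha$ and $\beta$. By Markov's inequality and a union bound, some fixed $U$ achieves both within a factor $2$. This is the origin of the $2\alpha$ and $2\beta$, and it is where your "Markov" remark actually has to be used: on the same random unitary for both conditions. With that theorem in place of your two independent decouplings, the rest of your argument goes through as in the paper.
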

The state redistribution protocol achieves the same tasks as a the state splitting protocol, but takes into account the receiver Bob's side information on an additional system to reduce the communication cost. Mathematically, a state $\tau_{RECY}$ is shared between a sender Alice that contains systems $E$ and $C$, a reference system $R$ and a receiver Bob who holds the system $Y$. The state redistribution protocol redistributes the system $C$ from Alice to Bob. This protocol works by combining a state splitting with a state merging. This can be achieved by combining a state splitting protocol with a state merging protocol \cite{oppenheim2008stateredistributionmergingintroducing}. Informally, a first state splitting from Alice to Bob achieves the task of last section, then Bob sends back the system $C$ to Alice, and a state merging is done from Alice to Bob again, essentially taking into account Bob's side information. However, a simplification allows to remove the need for Bob to send back the system $C$ to Alice in the middle of the protocol, making it efficient.

The state redistribution protocol uses the bidecoupling theorem, which allows to use the same decoupling unitary for the state splitting and the state merging. The proof of the bidecoupling theorem relies directly on the decoupling theorem itself. We use the statement of the bidecoupling theorem from Ref.~\cite{Ye_Bai_Wang_2008}, in which it's proof can be found.
\begin{theorem}[Bidecoupling Theorem \cite{Ye_Bai_Wang_2008}]
\label{thm:bidecoupling}
    Let $\psi_{CA}$ and $\phi_{CB}$ be two density operators such that $C = C_1C_2\tilde{C}$, then there exists a unitary operator $U$ acting on system $C$ such that
    \begin{equation}
    \left\|\text{Tr}_{C_2\tilde{C}}\left(U\psi_{CA}U^\dagger\right) - \frac{I_{C_1}}{d_{C_1}} \otimes \psi_{A}\right\|_1^2 \leq 2\alpha,
    \end{equation}
    and
    \begin{equation}
    \left\|\text{Tr}_{C_1\tilde{C}}\left(U\phi_{CB}U^\dagger\right) - \frac{I_{C_2}}{d_{C_2}} \otimes \phi_{B}\right\|_1^2 \leq 2\beta,
    \end{equation}
    with
    \begin{equation}
    \alpha := \frac{d_Cd_{A}\text{Tr}\left(\psi_{CA}^2\right)}{d_{C_2\tilde{C}}^2},
    \end{equation}
    and
    \begin{equation}
    \beta := \frac{d_Cd_{B}\text{Tr}\left(\phi_{CB}^2\right)}{d_{C_1\tilde{C}}^2}.
    \end{equation}
\end{theorem}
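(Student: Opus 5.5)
The plan is a random-coding argument. Draw a single unitary $U$ on $C$ from the Haar measure and apply the decoupling theorem (Theorem~\ref{thm:decoupling}) twice, once for each state and with a different split of $C$ each time. Then show that one unitary satisfies both bounds at once, with no loss beyond a factor of $2$.

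First I would set up the two applications of Theorem~\ref{thm:decoupling}.
\begin{itemize}
\item For $\psi_{CA}$, take the decoupled system to be $A$, set $A_1 := C_2\tilde{C}$ (the traced-out part) and $A_2 := C_1$. This gives
\begin{equation*}
\mathbb{E}_U\, X(U) \leq \frac{d_C d_A}{d_{C_2\tilde{C}}^2}\text{Tr}\left(\psi_{CA}^2\right) = \alpha,
\end{equation*}
where $X(U) := \left\|\text{Tr}_{C_2\tilde{C}}\left(U\psi_{CA}U^\dagger\right) - \frac{I_{C_1}}{d_{C_1}} \otimes \psi_{A}\right\|_1^2$.
\item For $\phi_{CB}$, take $A_1 := C_1\tilde{C}$, $A_2 := C_2$ and reference $B$. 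This gives $\mathbb{E}_U\, Y(U) \leq \beta$ for the analogous quantity $Y(U)$.
\end{itemize}
The only thing to check is that $d_A = d_{A_1}d_{A_2}$ in both cases. This holds because $C = C_1C_2\tilde{C}$, so $d_C = d_{C_1}d_{C_2}d_{\tilde{C}}$ under either grouping.

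Next I would combine the two bounds through a single nonnegative function rather than a union bound. A plain Markov-plus-union-bound argument gives $\Pr[X>2\alpha] + \Pr[Y>2\beta] \leq 1$, and a bound of $1$ does not guarantee a good unitary exists. Instead, note that $\alpha,\beta>0$, since the purities and dimensions are strictly positive. Define
\begin{equation*}
Z(U) := \frac{X(U)}{\alpha} + \frac{Y(U)}{\beta},
\end{equation*}
so that $\mathbb{E}_U Z(U) \leq 2$ by linearity of expectation.

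$Z$ is a continuous function on the compact unitary group, so it attains its minimum, and that minimum is at most its Haar average. Hence some $U$ has $Z(U) \leq 2$. Since $X$ and $Y$ are both nonnegative, this gives $X(U) \leq 2\alpha$ and $Y(U) \leq 2\beta$ simultaneously, which are exactly the two claimed inequalities.

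The main obstacle is this simultaneity step, where one unitary must work for both states. The averaging trick above handles it cleanly. What remains is bookkeeping: making sure the subsystem labels in each application of Theorem~\ref{thm:decoupling} match the partial traces and the maximally mixed factors in the statement.
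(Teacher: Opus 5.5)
Your argument is correct and follows the route the paper indicates. The paper does not reprove the bidecoupling theorem; it defers to Ref.~\cite{Ye_Bai_Wang_2008} and notes that the result follows directly from the decoupling theorem (Theorem~\ref{thm:decoupling}) applied with a common Haar-random unitary, which is exactly your two applications with the splittings $C_2\tilde{C}\,|\,C_1$ and $C_1\tilde{C}\,|\,C_2$. Your normalized sum $X/\alpha + Y/\beta$, whose Haar average is at most $2$, settles the simultaneity step cleanly. A Markov-plus-union-bound argument would also work once you observe that $\Pr[X>2\alpha]<1/2$ holds strictly, but your version avoids needing that extra care.
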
 
\noindent Figure \ref{fig:state_redistribution_task} shows the protocol that does the state redistribution of the type-constrained de Finetti state.
\begin{figure*}[t]
    \centering
    \resizebox{1\textwidth}{!}{
    \begin{circuitikz}
    \tikzstyle{every node}=[font=\fontsize{18.2pt}{23.7pt}\selectfont]
    \node [font=\fontsize{18.2pt}{23.7pt}\selectfont, fill={rgb,255:red,255; green,255; blue,255}, fill opacity=1, text opacity=1, inner xsep=0.080cm, inner ysep=0.085cm, rounded corners=0.020cm] at (13.75,13.5) {$Y_0^n$};
    \node [font=\fontsize{18.2pt}{23.7pt}\selectfont, fill={rgb,255:red,255; green,255; blue,255}, fill opacity=1, text opacity=1, inner xsep=0.080cm, inner ysep=0.085cm, rounded corners=0.020cm] at (13.75,13.625) {$\tilde{B}_{\text{in}}$};
    \node [font=\fontsize{18.2pt}{23.7pt}\selectfont, fill={rgb,255:red,255; green,255; blue,255}, fill opacity=1, text opacity=1, inner xsep=0.080cm, inner ysep=0.085cm, rounded corners=0.020cm] at (24.75,17) {$\tilde{B}_{\text{in}}\tilde{C}$};
    \node [font=\fontsize{18.2pt}{23.7pt}\selectfont, fill={rgb,255:red,255; green,255; blue,255}, fill opacity=1, text opacity=1, inner xsep=0.080cm, inner ysep=0.085cm, rounded corners=0.020cm] at (28.75,14.625) {$\tilde{B}_{\text{out}}$};
    \node [font=\fontsize{18.2pt}{23.7pt}\selectfont, fill={rgb,255:red,255; green,255; blue,255}, fill opacity=1, text opacity=1, inner xsep=0.080cm, inner ysep=0.085cm, rounded corners=0.020cm] at (9.375,17.75) {$\tau_{\text{out}}$};
    \draw [ color={rgb,255:red,169; green,186; blue,255}, draw opacity=1, line width=0.5pt, short] (28.75,15.625) .. controls (21.25,15.625) and (21.25,15.625) .. (13.75,15.625);
    \draw [ fill={rgb,255:red,218; green,217; blue,255}, fill opacity=1] (14.625,19.375) rectangle (16.75,16.25);
    \node [font=\fontsize{18.2pt}{23.7pt}\selectfont, inner xsep=0.080cm, inner ysep=0.085cm, rounded corners=0.020cm] at (15.75,17.875) {$V^{\dagger}_{Uhl}$};
    \draw [ color={rgb,255:red,52; green,82; blue,199}, draw opacity=1, line width=1pt, short] (14.625,18.75) -- (11.25,18.75);
    \draw [ color={rgb,255:red,52; green,82; blue,199}, draw opacity=1, line width=1pt, short] (13.375,21.25) .. controls (11.625,19.5) and (11.75,19.5) .. (10,17.75);
    \draw [ color={rgb,255:red,52; green,82; blue,199}, draw opacity=1, line width=1pt, short] (30,21.25) .. controls (21.625,21.25) and (21.75,21.25) .. (13.375,21.25);
    \draw [ color={rgb,255:red,52; green,82; blue,199}, draw opacity=1, line width=1pt, short] (33.25,17.25) .. controls (31.625,19.25) and (31.625,19.25) .. (30,21.25);
    \draw [ color={rgb,255:red,52; green,82; blue,199}, draw opacity=1, line width=1pt, short] (31.25,18.875) .. controls (24,18.875) and (24,18.875) .. (16.75,18.875);
    \draw [ color={rgb,255:red,52; green,82; blue,199}, draw opacity=1, line width=1pt, short] (17.5,17.5) .. controls (17.125,17.5) and (17.125,17.5) .. (16.75,17.5);
    \draw [ color={rgb,255:red,52; green,82; blue,199}, draw opacity=1, line width=1pt, short] (17.5,14.375) .. controls (17.5,15.875) and (17.5,15.875) .. (17.5,17.5);
    \draw [ color={rgb,255:red,52; green,82; blue,199}, draw opacity=1, line width=1pt, short] (21.625,13.75) .. controls (21,13.75) and (20.5,13.75) .. (20.5,13.75);
    \draw [ color={rgb,255:red,52; green,82; blue,199}, draw opacity=1, line width=1pt, short] (21.625,17.125) .. controls (21.625,15.375) and (21.625,15.375) .. (21.625,13.75);
    \draw [ color={rgb,255:red,169; green,186; blue,255}, draw opacity=1, line width=0.5pt, short] (28.75,19.875) .. controls (21.25,19.875) and (21.25,19.875) .. (13.75,20);
    \node [font=\fontsize{18.2pt}{23.7pt}\selectfont, fill={rgb,255:red,255; green,255; blue,255}, fill opacity=1, text opacity=1, inner xsep=0.080cm, inner ysep=0.085cm, rounded corners=0.020cm] at (13.875,21.75) {$R_{X}^n$};
    \node [font=\fontsize{18.2pt}{23.7pt}\selectfont, fill={rgb,255:red,255; green,255; blue,255}, fill opacity=1, text opacity=1, inner xsep=0.080cm, inner ysep=0.085cm, rounded corners=0.020cm] at (17.625,18) {$\tilde{A}_{\text{out}}\tilde{C}$};
    \node [font=\fontsize{18.2pt}{23.7pt}\selectfont, fill={rgb,255:red,255; green,255; blue,255}, fill opacity=1, text opacity=1, inner xsep=0.080cm, inner ysep=0.085cm, rounded corners=0.020cm] at (17.375,19.375) {$E_2^n$};
    \node [font=\fontsize{18.2pt}{23.7pt}\selectfont, fill={rgb,255:red,255; green,255; blue,255}, fill opacity=1, text opacity=1, inner xsep=0.080cm, inner ysep=0.085cm, rounded corners=0.020cm] at (13.875,19.25) {$E_0^n$};
    \node [font=\fontsize{18.2pt}{23.7pt}\selectfont, fill={rgb,255:red,255; green,255; blue,255}, fill opacity=1, text opacity=1, inner xsep=0.080cm, inner ysep=0.085cm, rounded corners=0.020cm] at (6.25,20.25) {$Reference$};
    \node [font=\fontsize{18.2pt}{23.7pt}\selectfont, fill={rgb,255:red,255; green,255; blue,255}, fill opacity=1, text opacity=1, inner xsep=0.080cm, inner ysep=0.085cm, rounded corners=0.020cm] at (6.875,17.5) {$Alice$};
    \node [font=\fontsize{18.2pt}{23.7pt}\selectfont, fill={rgb,255:red,255; green,255; blue,255}, fill opacity=1, text opacity=1, inner xsep=0.080cm, inner ysep=0.085cm, rounded corners=0.020cm] at (7,14.625) {$Bob$};
    \draw [ color={rgb,255:red,52; green,82; blue,199}, draw opacity=1, line width=1pt, short] (14.625,17.75) -- (10,17.75);
    \draw [ color={rgb,255:red,52; green,82; blue,199}, draw opacity=1, line width=1pt, short] (14.625,16.75) -- (13.375,16.75);
    \node [font=\fontsize{18.2pt}{23.7pt}\selectfont, fill={rgb,255:red,255; green,255; blue,255}, fill opacity=1, text opacity=1, inner xsep=0.080cm, inner ysep=0.085cm, rounded corners=0.020cm] at (13.875,18.25) {$C_0^n$};
    \node [font=\fontsize{18.2pt}{23.7pt}\selectfont, fill={rgb,255:red,255; green,255; blue,255}, fill opacity=1, text opacity=1, inner xsep=0.080cm, inner ysep=0.085cm, rounded corners=0.020cm] at (13.75,17.25) {$\tilde{A}_{\text{in}}$};
    \draw [ fill={rgb,255:red,218; green,217; blue,255}, fill opacity=1] (18.75,15.25) rectangle (20.5,12.75);
    \draw [ color={rgb,255:red,52; green,82; blue,199}, draw opacity=1, line width=1pt, short] (17.5,14.375) .. controls (18.125,14.375) and (18.125,14.375) .. (18.75,14.375);
    \draw [ color={rgb,255:red,52; green,82; blue,199}, draw opacity=1, line width=1pt, short] (13.375,13.125) .. controls (16,13.125) and (16.125,13.125) .. (18.75,13.125);
    \draw [ color={rgb,255:red,52; green,82; blue,199}, draw opacity=1, line width=1pt, short] (12.5,15.375) .. controls (12.875,14.25) and (13,14.25) .. (13.375,13.125);
    \node [font=\fontsize{18.2pt}{23.7pt}\selectfont, fill={rgb,255:red,255; green,255; blue,255}, fill opacity=1, text opacity=1, inner xsep=0.080cm, inner ysep=0.085cm, rounded corners=0.020cm] at (12,15.625) {$\Phi_{\text{in}}$};
    \draw [ color={rgb,255:red,52; green,82; blue,199}, draw opacity=1, line width=1pt, short] (11.25,18.75) .. controls (10.625,18.25) and (10.625,18.25) .. (10,17.75);
    \node [font=\fontsize{18.2pt}{23.7pt}\selectfont, inner xsep=0.080cm, inner ysep=0.085cm, rounded corners=0.020cm] at (19.625,14) {$U_{\text{dec}}^{\dagger}$};
    \draw [ color={rgb,255:red,52; green,82; blue,199}, draw opacity=1, line width=1pt, short] (13.375,16.75) -- (12.5,15.875);
    \node [font=\fontsize{14.2pt}{18.5pt}\selectfont, fill={rgb,255:red,255; green,255; blue,255}, fill opacity=1, text opacity=1, inner xsep=0.080cm, inner ysep=0.085cm, rounded corners=0.020cm] at (13.75,24) {s = 0};
    \draw [-{Stealth[scale=1.5]}, ] (13.75,23.625) -- (13.75,23.125);
    \node [font=\fontsize{14.2pt}{18.5pt}\selectfont, fill={rgb,255:red,255; green,255; blue,255}, fill opacity=1, text opacity=1, inner xsep=0.080cm, inner ysep=0.085cm, rounded corners=0.020cm] at (17.25,24) {s = 1};
    \draw [-{Stealth[scale=1.5]}, ] (17.25,23.625) -- (17.25,23.125);
    \node [font=\fontsize{14.2pt}{18.5pt}\selectfont, fill={rgb,255:red,255; green,255; blue,255}, fill opacity=1, text opacity=1, inner xsep=0.080cm, inner ysep=0.085cm, rounded corners=0.020cm] at (21,24) {s = 2};
    \draw [-{Stealth[scale=1.5]}, ] (21,23.625) -- (21,23.125);
    \node [font=\fontsize{18.2pt}{23.7pt}\selectfont, fill={rgb,255:red,255; green,255; blue,255}, fill opacity=1, text opacity=1, inner xsep=0.080cm, inner ysep=0.085cm, rounded corners=0.020cm] at (21.125,14.25) {$C_2^n$};
    \draw [ fill={rgb,255:red,218; green,217; blue,255}, fill opacity=1] (22.25,18.375) rectangle (24,15.875);
    \node [font=\fontsize{18.2pt}{23.7pt}\selectfont, inner xsep=0.080cm, inner ysep=0.085cm, rounded corners=0.020cm] at (23.125,17.125) {$U_{\text{dec}}$};
    \draw [ color={rgb,255:red,52; green,82; blue,199}, draw opacity=1, line width=1pt, short] (22.25,17.125) .. controls (21.875,17.125) and (22,17.125) .. (21.625,17.125);
    \draw [ color={rgb,255:red,52; green,82; blue,199}, draw opacity=1, line width=1pt, short] (29.625,17.625) .. controls (26.75,17.625) and (26.875,17.625) .. (24,17.625);
    \draw [ color={rgb,255:red,52; green,82; blue,199}, draw opacity=1, line width=1pt, short] (24,16.5) .. controls (24.625,16.5) and (24.625,16.5) .. (25.25,16.5);
    \draw [ fill={rgb,255:red,218; green,217; blue,255}, fill opacity=1] (26,14.75) rectangle (28.125,11.625);
    \node [font=\fontsize{18.2pt}{23.7pt}\selectfont, inner xsep=0.080cm, inner ysep=0.085cm, rounded corners=0.020cm] at (27.125,13.25) {$W_{Uhl}$};
    \draw [ color={rgb,255:red,52; green,82; blue,199}, draw opacity=1, line width=1pt, short] (26,13.75) .. controls (25.625,13.75) and (25.625,13.75) .. (25.25,13.75);
    \draw [ color={rgb,255:red,52; green,82; blue,199}, draw opacity=1, line width=1pt, short] (12.5,12) .. controls (19.25,12) and (19.25,12) .. (26,12);
    \draw [ color={rgb,255:red,52; green,82; blue,199}, draw opacity=1, line width=1pt, short] (29.5,14.125) .. controls (28.75,14.125) and (28.875,14.125) .. (28.125,14.125);
    \draw [ color={rgb,255:red,52; green,82; blue,199}, draw opacity=1, line width=1pt, short] (29.5,13.125) .. controls (28.75,13.125) and (28.875,13.125) .. (28.125,13.125);
    \draw [ color={rgb,255:red,52; green,82; blue,199}, draw opacity=1, line width=1pt, short] (29.5,12.125) .. controls (28.75,12.125) and (28.875,12.125) .. (28.125,12.125);
    \draw [ color={rgb,255:red,52; green,82; blue,199}, draw opacity=1, line width=1pt, short] (10,17.75) .. controls (11.25,14.875) and (11.25,14.875) .. (12.5,12);
    \draw [ color={rgb,255:red,52; green,82; blue,199}, draw opacity=1, line width=1pt, short] (25.25,16.5) .. controls (25.25,15.125) and (25.25,15.125) .. (25.25,13.75);
    \draw [ color={rgb,255:red,52; green,82; blue,199}, draw opacity=1, line width=1pt, short] (30.875,16.125) .. controls (30.125,15.125) and (30.25,15.125) .. (29.5,14.125);
    \draw [ color={rgb,255:red,52; green,82; blue,199}, draw opacity=1, line width=1pt, short] (30.875,16.75) .. controls (30.25,17.125) and (30.25,17.125) .. (29.625,17.625);
    \draw [ color={rgb,255:red,52; green,82; blue,199}, draw opacity=1, line width=1pt, short] (33.25,17.25) .. controls (31.375,15.125) and (31.375,15.125) .. (29.5,13.125);
    \draw [ color={rgb,255:red,52; green,82; blue,199}, draw opacity=1, line width=1pt, short] (33.25,17.25) .. controls (31.375,14.625) and (31.375,14.625) .. (29.5,12.125);
    \draw [ color={rgb,255:red,52; green,82; blue,199}, draw opacity=1, line width=1pt, short] (33.25,17.25) .. controls (32.25,18) and (32.25,18) .. (31.25,18.875);
    \node [font=\fontsize{14.2pt}{18.5pt}\selectfont, fill={rgb,255:red,255; green,255; blue,255}, fill opacity=1, text opacity=1, inner xsep=0.080cm, inner ysep=0.085cm, rounded corners=0.020cm] at (24.75,24) {s = 3};
    \draw [-{Stealth[scale=1.5]}, ] (24.75,23.625) -- (24.75,23.125);
    \node [font=\fontsize{14.2pt}{18.5pt}\selectfont, fill={rgb,255:red,255; green,255; blue,255}, fill opacity=1, text opacity=1, inner xsep=0.080cm, inner ysep=0.085cm, rounded corners=0.020cm] at (28.75,24) {s = 4};
    \draw [-{Stealth[scale=1.5]}, ] (28.75,23.625) -- (28.75,23.125);
    \node [font=\fontsize{18.2pt}{23.7pt}\selectfont, fill={rgb,255:red,255; green,255; blue,255}, fill opacity=1, text opacity=1, inner xsep=0.080cm, inner ysep=0.085cm, rounded corners=0.020cm] at (24.625,18.125) {$\tilde{A}_{\text{out}}$};
    \node [font=\fontsize{18.2pt}{23.7pt}\selectfont, fill={rgb,255:red,255; green,255; blue,255}, fill opacity=1, text opacity=1, inner xsep=0.080cm, inner ysep=0.085cm, rounded corners=0.020cm] at (28.875,12.5) {$Y_4^n$};
    \node [font=\fontsize{18.2pt}{23.7pt}\selectfont, fill={rgb,255:red,255; green,255; blue,255}, fill opacity=1, text opacity=1, inner xsep=0.080cm, inner ysep=0.085cm, rounded corners=0.020cm] at (28.75,13.625) {$C_4^n$};
    \node [font=\fontsize{18.2pt}{23.7pt}\selectfont, fill={rgb,255:red,255; green,255; blue,255}, fill opacity=1, text opacity=1, inner xsep=0.080cm, inner ysep=0.085cm, rounded corners=0.020cm] at (31.5,16.375) {$\Phi_{\text{out}}$};
    \node [font=\fontsize{18.2pt}{23.7pt}\selectfont, fill={rgb,255:red,255; green,255; blue,255}, fill opacity=1, text opacity=1, inner xsep=0.080cm, inner ysep=0.085cm, rounded corners=0.020cm] at (33.875,17.25) {$\approx \tau_{\text{out}}$};
    \node [font=\fontsize{18.2pt}{23.7pt}\selectfont, fill={rgb,255:red,255; green,255; blue,255}, fill opacity=1, text opacity=1, inner xsep=0.080cm, inner ysep=0.085cm, rounded corners=0.020cm] at (13.75,12.5) {$Y_0^n$};
    \end{circuitikz}
    }%
    \caption{State redistribution of a type-constrained de Finetti state with side information at the receiver. At time $s = 0$, Alice owns the systems $E_0^n$ and $C_0^n$ while Bob owns the system $Y_0^n$. They both share some entanglement state $\Phi_{\text{in}}$, from which Alice owns the system $\tilde{A}_{\text{in}}$ and Bob owns the system $\tilde{B}_{\text{in}}$. At time~$s = 1$, Alice applied the Uhlmann Isometry $V_{\text{Uhl}}^\dagger$ to generate $E_2^n$ and $\tilde{A}_{\text{out}}\tilde{C}$, the latter being sent to Bob. At time~$s = 2$, Bob applied the decoupling unitary $U_{\text{dec}}^\dagger$ to recover the system $C_2^n$ that he sends to Alice. At time~$s = 3$, Alice applied the decoupling unitary $U_{\text{dec}}$ to generate $\tilde{A}_{\text{out}}$, her part of the shared entanglement at the end of the protocol, and $\tilde{B}_{\text{in}}\tilde{C}$ that she sends to Bob. Finally, at time~$s = 4$, Bob applies the Uhlmann isometry $W_{\text{Uhl}}$ to recover the system $C_4^n$, which is equivalent to the initial system $C_0^n$ that Alice owned, the system $\tilde{B}_{\text{out}}$ being his part of the shared entanglement state at the end of the protocol, and the system $Y_4^n$ isomorphic to $Y_0^n$. The full state at the end of the protocol, excluding the shared entanglement, is a state close to $\tau_{\text{out}}$.}
    \label{fig:state_redistribution_task}
\end{figure*}

\textit{Proof of Proposition \ref{prop:de_finetti_state_redistribution}} We use the subscripts $s = 0,1,2,3,4$ to keep track of the output de Finetti state at each time of the state redistribution protocol in Figure \ref{fig:state_redistribution_task}.  At time $s = 0$, we relabel the output type-constrained de Finetti state to be
\begin{equation}
    \label{eq:output_state_redistribution_relabeled}
    \ket{\tau_0^{\tilde{t},\delta}}_{R_TR_{XY}^n E_0^n C_0^n Y_0^n} := \frac{1}{\sqrt{\left|\mathcal{P}_{\mathbb{X}^n \times \mathbb{Y}^n}^{\tilde{t},\delta}\right|}} \sum_{t \in \mathcal{P}_{\mathbb{X}^n \times \mathbb{Y}^n}^{\tilde{t},\delta}} \ket{{\sigma'}_t}_{R_{XY}E_0C_0Y_0}^{\otimes n} \otimes \ket{t}_{R_T}.
\end{equation}
We can define
\begin{equation}
    \label{eq:state_at_time_2_redistribution}
    \ket{\tau_2^{\tilde{t},\delta}}_{R_TR_{XY}^n E_2^n C_2^n Y_0^n} := \mathcal{I}_{E_0^nC_0^n \rightarrow E_2^n C_2^n} \ket{\tau_0^{\tilde{t},\delta}}_{R_TR_{XY}^n E_0^n C_0^n Y_0^n}
\end{equation}
and
\begin{equation}
    \label{eq:state_at_time_4_redistribution}
    \ket{\tau_4^{\tilde{t},\delta}}_{R_TR_{XY}^n E_2^n C_4^n Y_4^n} := \mathcal{I}_{C_2^nY_0^n \rightarrow C_4^n Y_4^n} \ket{\tau_2^{\tilde{t},\delta}}_{R_TR_{XY}^n E_2^n C_2^n Y_0^n}
\end{equation}
to be the same states but on different systems at different times in the protocol, in which the isometry $\mathcal{I}$ simply relabels the systems. Equivalently, $\ket{\tau_2^{\tilde{t},\delta}}_{R_TR_{XY}^n E_2^n C_2^n Y_0^n}$ is obtained after applying the state splitting protocol to~$\ket{\tau_0^{\tilde{t},\delta}}_{R_TR_{XY}^n E_0^n C_0^n Y_0^n}$, and~$\ket{\tau_4^{\tilde{t},\delta}}_{R_TR_{XY}^n E_2^n C_4^n Y_4^n}$ is obtained after applying the state merging protocol to~$\ket{\tau_0^{\tilde{t},\delta}}_{R_TR_{XY}^n E_0^n C_0^n Y_0^n}$ (up to small errors from these protocols). At time $s = 0$ of the state redistribution protocol, the full state is $\ket{\tau_0^{\tilde{t},\delta}}_{R_TR_{XY}^n E_0^n C_0^n Y_0^n} \otimes \ket{\phi_{\text{in}}}_{\tilde{A}_{\text{in}}\tilde{B}_{\text{in}}}$ with $\ket{\phi_{\text{in}}}_{\tilde{A}_{\text{in}}\tilde{B}_{\text{in}}}$ being the pre-shared entanglement between Alice (on system $\tilde{A}_{\text{in}}$) and Bob (on system $\tilde{B}_{\text{in}}$) which is independent of the input. At the end of the state splitting protocol, the full state is, up to a small error from the whole protocol, $\ket{\tau_4^{\tilde{t},\delta}}_{R_TR_{XY}^n E_2^n C_4^n Y_4^n} \otimes \ket{\phi_{\text{out}}}_{\tilde{A}_{\text{out}}\tilde{B}_{\text{out}}}$ with $\ket{\phi_{\text{out}}}_{\tilde{A}_{\text{out}}\tilde{B}_{\text{out}}}$ being the resulting, from the state merging protocol, shared entanglement between Alice (on system $\tilde{A}_{\text{out}}$) and Bob (on system $\tilde{B}_{\text{out}}$), which is again independent of the output state.

Before applying the bidecoupling theorem on the state $\ket{\tau_2^{\tilde{t},\delta}}_{R_TR_{XY}^n E_2^n C_2^n Y_0^n}$ and constructing the state splitting protocol from $\ket{\tau_0^{\tilde{t},\delta}}_{R_TR_{XY}^n E_0^n C_0^n Y_0^n}$ to $\ket{\tau_2^{\tilde{t},\delta}}_{R_TR_{XY}^n E_2^n C_2^n Y_0^n}$ and the state merging protocol from $\ket{\tau_2^{\tilde{t},\delta}}_{R_TR_{XY}^n E_2^n C_2^n Y_0^n}$ to $\ket{\tau_4^{\tilde{t},\delta}}_{R_TR_{XY}^n E_2^n C_4^n Y_4^n}$, we make two different $\delta$-typical projections on the state $\ket{\tau_2^{\tilde{t},\delta}}_{R_TR_{XY}^n E_2^n C_2^n Y_0^n}$. Just like in the previous section, this allows to reduce the dimensions on which that state lives to get good communication cost when using the bidecoupling theorem, all while preserving a state close enough to the original, non-projected state. One $\delta$-typical projection is necessary to construct the state splitting protocol from time $s = 0$ to $s = 2$, while another $\delta$-typical projection is necessary to construct the state merging protocol from time $s = 2$ to $s = 4$.

For the protected de Finetti state for the state splitting protocol from time $s = 0$ to $s = 2$, we label it by $\tilde{\tau}$ instead of $\tau$, and for the projected de Finetti state for the state merging protocol from time $s = 2$ to $s = 4$, we label it $\overline{\tau}$. Similarly, the projections are denoted~$\hat{\Pi}^{\tilde{t},\delta}_0$ and~$\hat{\Pi}^{\tilde{t},\delta}_2$ at time $s=0$ and $s=2$ respectively for the state splitting protocol, and $\overline{\Pi}^{\tilde{t},\delta}_2$ and~$\overline{\Pi}^{\tilde{t},\delta}_4$ at time $s=2$ and $s=4$ respectively for the state merging protocol. These projectors are defined as
\begin{equation}
    \label{eq:projection_de_finetti_state_redistribution_0}
    \hat{\Pi}^{\tilde{t},\delta}_0 := \sum_{t \in \mathcal{P}_{\mathbb{X}^n \times \mathbb{Y}^n}^{\tilde{t},\delta}} \Pi_{(Y_0R_{XY})^n}^{t,\delta} \otimes \Pi_{C_0^n}^{t,\delta} \otimes \Pi_{E_0^n}^{t,\delta} \otimes \ketbra{t}{t}_{R_T},
\end{equation}
\begin{equation}
    \label{eq:projection_de_finetti_state_redistribution_2_E}
    \hat{\Pi}^{\tilde{t},\delta}_2 := \sum_{t \in \mathcal{P}_{\mathbb{X}^n \times \mathbb{Y}^n}^{\tilde{t},\delta}} \Pi_{(Y_0R_{XY})^n}^{t,\delta} \otimes \Pi_{C_2^n}^{t,\delta} \otimes \Pi_{E_2^n}^{t,\delta} \otimes \ketbra{t}{t}_{R_T},
\end{equation}
\begin{equation}
    \label{eq:projection_de_finetti_state_redistribution_2_B}
    \overline{\Pi}^{\tilde{t},\delta}_2 := \sum_{t \in \mathcal{P}_{\mathbb{X}^n \times \mathbb{Y}^n}^{\tilde{t},\delta}}\Pi_{(E_2R_{XY})^n}^{t,\delta} \otimes \Pi_{C_2^n}^{t,\delta} \otimes \Pi_{Y_0^n}^{t,\delta} \otimes \ketbra{t}{t}_{R_T},
\end{equation}
and
\begin{equation}
    \label{eq:projection_de_finetti_state_redistribution_4}
    \overline{\Pi}^{\tilde{t},\delta}_4 := \sum_{t \in \mathcal{P}_{\mathbb{X}^n \times \mathbb{Y}^n}^{\tilde{t},\delta}} \Pi_{(E_2R_{XY})^n}^{t,\delta} \otimes \Pi_{C_4^n}^{t,\delta} \otimes \Pi_{Y_4^n}^{t,\delta} \otimes \ketbra{t}{t}_{R_T}.
\end{equation}
Here, for some system $Z$, the projector $\Pi_{Z^n}^{t,\delta}$ is the projector onto the $\delta$-typical subspace of the eigenspace of the reduced state ${\sigma'}_{t,Z}^{\otimes n}$. We can define the resulting projected de Finetti states as
\begin{equation}
    \label{eq:projected_de_finetti_state_redistribution_0}
    \ket{\hat{\tau}_0^{\tilde{t},\delta}}_{R_TR_{XY}^n Y_0^n E_0^n C_0^n} := \frac{\hat{\Pi}^{\tilde{t},\delta}_0\ket{\tau_0^{\tilde{t},\delta}}_{R_TR_{XY}^n E_0^n C_0^n Y_0^n}}{\left\|\hat{\Pi}^{\tilde{t},\delta}_0 \ket{\tau_0^{\tilde{t},\delta}}_{R_TR_{XY}^n E_0^n C_0^n Y_0^n}\right\|_2},
\end{equation}
\begin{equation}
    \label{eq:projected_de_finetti_state_redistribution_2_E}
    \ket{\hat{\tau}_2^{\tilde{t},\delta}}_{R_TR_{XY}^n E_2^nC_2^nY_0^n} := \frac{\hat{\Pi}^{\tilde{t},\delta}_2\ket{\tau_2^{\tilde{t},\delta}}_{R_TR_{XY}^n E_2^n C_2^n Y_0^n}}{\left\|\hat{\Pi}^{\tilde{t},\delta}_2 \ket{\tau_2^{\tilde{t},\delta}}_{R_TR_{XY}^n E_2^n C_2^n Y_0^n}\right\|_2},
\end{equation}
\begin{equation}
    \label{eq:projected_de_finetti_state_redistribution_2_B}
    \ket{\overline{\tau}_2^{\tilde{t},\delta}}_{R_TR_{XY}^n E_2^n C_2^n Y_0^n} := \frac{\overline{\Pi}^{\tilde{t},\delta}_2\ket{\tau_2^{\tilde{t},\delta}}_{R_TR_{XY}^n E_2^n C_2^n Y_0^n}}{\left\|\overline{\Pi}^{\tilde{t},\delta}_2 \ket{\tau_2^{\tilde{t},\delta}}_{R_TR_{XY}^n E_2^n C_2^n Y_0^n}\right\|_2},
\end{equation}
and
\begin{equation}
    \label{eq:projected_de_finetti_state_redistribution_4}
    \ket{\overline{\tau}_4^{\tilde{t},\delta}}_{R_TR_{XY}^n E_2^n C_4^n Y_4^n} := \frac{\overline{\Pi}^{\tilde{t},\delta}_4\ket{\tau_4^{\tilde{t},\delta}}_{R_TR_{XY}^n E_2^n C_4^n Y_4^n}}{\left\|\overline{\Pi}^{\tilde{t},\delta}_4\ket{\tau_4^{\tilde{t},\delta}}_{R_TR_{XY}^n E_2^n C_4^n Y_4^n}\right\|_2}. 
\end{equation}

The projected de Finetti states can all be viewed to live on the $\delta$-typical subspaces on which they are projected. For a system $Z$, we label $Z^\delta \subseteq Z^n$ the subspace induced by projecting $Z^nR_T$ with the projector $\sum_t \Pi_{Z^n}^{t,\delta} \otimes \ketbra{t}{t}_{R_T}$ to obtain the subspace $Z^\delta R_T \subseteq Z^nR_T$. Therefore, we can relabel the states~$\ket{\hat{\tau}_0^{\tilde{t},\delta}}_{R_TR_{XY}^n Y_0^n E_0^n C_0^n}$, $\ket{\hat{\tau}_2^{\tilde{t},\delta}}_{R_TR_{XY}^n E_2^nC_2^nY_0^n}$, $\ket{\overline{\tau}_2^{\tilde{t},\delta}}_{R_TR_{XY}^n E_2^n C_2^n Y_0^n}$ and~$\ket{\overline{\tau}_4^{\tilde{t},\delta}}_{R_TR_{XY}^n E_2^n C_4^n Y_4^n}$\newline as~$\ket{\hat{\tau}_0^{\tilde{t},\delta}}_{R_T(R_{XY}Y_0)^\delta E_0^\delta C_0^\delta}$, $\ket{\hat{\tau}_2^{\tilde{t},\delta}}_{R_T(R_{XY}Y_0)^\delta E_2^\delta C_2^\delta}$, $\ket{\overline{\tau}_2^{\tilde{t},\delta}}_{R_T(R_{XY}E_2)^\delta C_2^\delta Y_0^\delta}$ and~$\ket{\overline{\tau}_4^{\tilde{t},\delta}}_{R_T(R_{XY}E_2)^\delta C_4^\delta Y_4^\delta}$ respectively.

Armed with the projected type-constrained de Finetti states, we can apply the bidecoupling theorem on the state $\tilde{\tau}_{2}^{\tilde{t},\delta}$. We start by constructing the state splitting protocol from time $s=0$ to $s=2$. Since $C^{\delta}$ can be expressed as $C_2^\delta = \tilde{B}_{\text{in}} \otimes \tilde{A}_{\text{out}} \otimes \tilde{C}$ for some systems $\tilde{B}_{\text{in}}$, $\tilde{A}_{\text{out}}$, $\tilde{C}$, then by the bidecoupling theorem \ref{thm:bidecoupling}, there exists a unitary operator $U_{C_2^\delta}$ such that
\begin{equation}
    \label{eq:bidecoupling_1_redistribution}
\left\|\text{Tr}_{\tilde{A}_{\text{out}}\tilde{C}}\left(U_{C_2^\delta}\hat{\tau}_{2,R_T(R_{XY}Y_0)^\delta C_2^\delta}^{\tilde{t},\delta}U_{C_2^\delta}^\dagger\right) - \frac{I_{\tilde{B}_{\text{in}}}}{d_{\tilde{B}_{\text{in}}}} \otimes \hat{\tau}^{\tilde{t},\delta}_{2,R_T(R_{XY}Y_0)^\delta}\right\|_1 \leq \hat{\varepsilon}_{\text{dec}}.
\end{equation}
The parameter $\hat{\varepsilon}_{\text{dec}}$ is given by
\begin{equation}
    \label{eq:epsilon_dec_1_redistribution}
    \hat{\varepsilon}_{\text{dec}} := \sqrt{2\frac{d_{C_2^\delta} d_{R_T (R_{XY}Y_0)^\delta} \text{Tr}\left(\left(\hat{\tau}_{2,R_T (Y_0 R_{XY})^\delta C_2^\delta}^{\tilde{t},\delta}\right)^2\right)}{d_{\tilde{A}_{\text{out}}\tilde{C}}^2}}.
\end{equation}
If we take the full state at time $s=0$, which is $\ket{\hat{\tau}_{0,R_T(R_{XY}Y_0)^\delta E_0^\delta C_0^\delta}^{\tilde{t},\delta}} \otimes \ket{\Phi}_{\tilde{A}_{\text{in}} \tilde{B}_{\text{in}}}$, and trace out systems~$E_0^\delta$,~$C_0^\delta$ and $\tilde{A}_{\text{in}}$, we obtain the same reduced state $\frac{I_{\tilde{B}_{\text{in}}}}{d_{\tilde{B}_{\text{in}}}} \otimes \hat{\tau}^{\tilde{t},\delta}_{2,R_T(R_{XY}Y_0)^\delta}$ as in equation~\eqref{eq:bidecoupling_1_redistribution}. Therefore, by Uhlmann's theorem, there exists an isometry $W_{E_2^\delta \tilde{A}_{\text{out}}\tilde{C} \rightarrow E_2^\delta \tilde{A}_{\text{out}}\tilde{C}}$ such that
\begin{equation}
    \label{eq:uhlmann_isometry_1_redistribution}
    \left\| W_{E_2^\delta \tilde{A}_{\text{out}}\tilde{C} \rightarrow E_2^\delta \tilde{A}_{\text{out}}\tilde{C}}U_{C_2^\delta}\ket{\hat{\tau}_2^{\tilde{t},\delta}}_{R_TR_{XY}^n E_2^nC_2^nY_0^n} - \ket{\hat{\tau}_0^{\tilde{t},\delta}}_{R_TR_{XY}^n Y_0^n E_0^n C_0^n} \otimes \ket{\Phi}_{\tilde{A}_{\text{in}} \tilde{B}_{\text{in}}} \right\|_2 \leq 2\sqrt{\hat{\varepsilon}_{\text{dec}}}.
\end{equation}

Proceeding similarly for the state merging protocol from time $s = 2$ to $s = 4$, we have that by the bidecoupling theorem, the same unitary $U_{C_2^\delta}$ applied on the second projected type-constrained de Finetti state $\ket{\overline{\tau}_2^{\tilde{t},\delta}}_{R_TR_{XY}^n E_2^n C_2^n Y_0^n}$ satisfies
\begin{equation}
    \label{eq:bidecoupling_2_redistribution}
    \left\|\text{Tr}_{\tilde{B}_{\text{in}}\tilde{C}}\left(U_{C_2^\delta}\overline{\tau}_{2,R_T(R_{XY}E_2)^\delta C_2^\delta}^{\tilde{t},\delta}U_{C_2^\delta}^\dagger\right) - \frac{I_{\tilde{A}_{\text{out}}}}{d_{\tilde{A}_{\text{out}}}} \otimes \overline{\tau}^{\tilde{t},\delta}_{2,R_T(R_{XY}E_2)^\delta}\right\|_1 \leq \overline{\varepsilon}_{\text{dec}},
\end{equation}
The parameter $\overline{\varepsilon}_{\text{dec}}$ from the bidecoupling theorem is given by
\begin{equation}
    \label{eq:epsilon_dec_2_redistribution}
    \overline{\varepsilon}_{\text{dec}} := \sqrt{2\frac{d_{C_2^\delta} d_{R_T (R_{XY}E_2)^\delta} \text{Tr}\left(\left(\overline{\tau}_{2,R_T (E_2 R_{XY})^\delta C_2^\delta}^{\tilde{t},\delta}\right)^2\right)}{d_{\tilde{B}_{\text{in}}\tilde{C}}^2}}.
\end{equation}
Again, if we take the full state $\ket{\overline{\tau}_{4,R_T(R_{XY}E_2)^\delta C_4^\delta Y_4^\delta}^{\tilde{t},\delta}} \otimes \ket{\Phi}_{\tilde{A}_{\text{out}} \tilde{B}_{\text{out}}}$ at time $s=4$ and trace out systems $C_4^\delta$, $Y_4^\delta$ and $\tilde{B}_{\text{out}}$, we obtain the same reduced state $\frac{I_{\tilde{A}_{\text{out}}}}{d_{\tilde{A}_{\text{out}}}} \otimes \overline{\tau}^{\tilde{t},\delta}_{2,R_T(R_{XY}E_2)^\delta}$ as in equation \eqref{eq:bidecoupling_2_redistribution}. By  Uhlmann's theorem, there exists an isometry $V_{Y_0^\delta \tilde{B}_{\text{in}}\tilde{C} \rightarrow C_4^\delta Y_4^\delta \tilde{B}_{\text{out}}}$ such that
\begin{equation}
    \label{eq:uhlmann_isometry_2_redistribution}
    \left\| V_{Y_0^\delta \tilde{B}_{\text{in}}\tilde{C} \rightarrow C_4^\delta Y_4^\delta \tilde{B}_{\text{out}}}U_{C_2^\delta}\ket{\overline{\tau}_2^{\tilde{t},\delta}}_{R_TR_{XY}^n E_2^n C_2^n Y_0^n} - \ket{\overline{\tau}_4^{\tilde{t},\delta}}_{R_TR_{XY}^n E_2^n C_4^n Y_4^n} \otimes \ket{\Phi}_{\tilde{A}_{\text{out}} \tilde{B}_{\text{out}}} \right\|_2 \leq 2\sqrt{\overline{\varepsilon}_{\text{dec}}}.
\end{equation}
In order to combine the state splitting protocol and the state merging protocol, we first derive a trace distance inequality that links the two different projected de Finetti states at time $s=2$. by the triangle inequality, we have that
\begin{align}
    \left\| \hat{\tau}_2^{\tilde{t},\delta} - \overline{\tau}_2^{\tilde{t},\delta} \right\|_1 &\leq \left\| \hat{\tau}_2^{\tilde{t},\delta} - \tau_2^{\tilde{t},\delta} \right\|_1 + \left\| \tau_2^{\tilde{t},\delta} - \overline{\tau}_2^{\tilde{t},\delta} \right\|_1 \nonumber \\
    &\leq \hat{\varepsilon}_{\delta,2} + \overline{\varepsilon}_{\delta,2},
\end{align}
in which we have defined $\hat{\varepsilon}_{\delta,2}$ and $\overline{\varepsilon}_{\delta,2}$ to be the error from the $\delta$-typical projections in Lemma \ref{lemma:deltaTypicalTraceDistanceBound}. Similarly, we also define $\hat{\varepsilon}_{\delta,0}$ and $\overline{\varepsilon}_{\delta,4}$ for the trace distance between the non-projected and projected de Finetti states at time $s=0$ and $s=4$ respectively. We can now analyze the performance of our protocol on the non-projected de Finetti state $\tau_0^{\tilde{t},\delta}$ at time $s=0$. First, note that we can now cancel the unitaries $U_{C_2^\delta}$ and $U_{C_2^\delta}^\dagger$, and the only system that now needs to be sent is the system $\tilde{C}$, which allows us to obtain optimal communication costs. This induces a new and final version of the protocol, defined simply by $VW^\dagger$. We can deduce the following inequalities (we shorten the notation by excluding $\tilde{t}$ and $\delta$):
\begin{align}
    &\left\| VW^\dagger \ket{\hat{\tau}_0^{\tilde{t},\delta}} \otimes \ket{\Phi}_{\tilde{A}_{\text{in}} \tilde{B}_{\text{in}}} - \ket{\overline{\tau}_4^{\tilde{t},\delta}} \otimes \ket{\Phi}_{\tilde{A}_{\text{out}}\tilde{B}_{\text{out}}} \right\|_1 \\
    &= \left\|VUU^\dagger W^\dagger \ket{\hat{\tau}_0^{\tilde{t},\delta}} \otimes \ket{\Phi}_{\tilde{A}_{\text{in}} \tilde{B}_{\text{in}}} - \ket{\overline{\tau}_4^{\tilde{t},\delta}} \otimes \ket{\Phi}_{\tilde{A}_{\text{out}}\tilde{B}_{\text{out}}} \right\|_1\\
    &= \left\|U^\dagger W^\dagger \ket{\hat{\tau}_0^{\tilde{t},\delta}} \otimes \ket{\Phi}_{\tilde{A}_{\text{in}} \tilde{B}_{\text{in}}} - U^\dagger V^\dagger \ket{\overline{\tau}_4^{\tilde{t},\delta}} \otimes \ket{\Phi}_{\tilde{A}_{\text{out}}\tilde{B}_{\text{out}}} \right\|_1 \\
    &\leq \left\|U^\dagger W^\dagger \ket{\hat{\tau}_0^{\tilde{t},\delta}} \otimes \ket{\Phi}_{\tilde{A}_{\text{in}} \tilde{B}_{\text{in}}} - \ket{\hat{\tau}_2^{\tilde{t},\delta}} \right\|_1\\
    &+ \left\|\ket{\hat{\tau}_2^{\tilde{t},\delta}} - \ket{\overline{\tau}_2^{\tilde{t},\delta}} \right\|_1\\
    &+ \left\|\ket{\overline{\tau}_2^{\tilde{t},\delta}} - U^\dagger V^\dagger \ket{\overline{\tau}_4^{\tilde{t},\delta}} \otimes \ket{\Phi}_{\tilde{A}_{\text{out}}\tilde{B}_{\text{out}}} \right\|_1 \\
    &\leq 2\sqrt{\hat{\varepsilon}_{\text{dec}}} + \hat{\varepsilon}_{\delta,2} + \overline{\varepsilon}_{\delta,2} + 2\sqrt{\overline{\varepsilon}_{\text{dec}}}.
\end{align}
Furthermore, we have that
\begin{align}
    &\left\| VW^\dagger \ket{\tau_0^{\tilde{t},\delta}} \otimes \ket{\Phi}_{\tilde{A}_{\text{in}} \tilde{B}_{\text{in}}} - \ket{\tau_4^{\tilde{t},\delta}} \otimes \ket{\Phi}_{\tilde{A}_{\text{out}}\tilde{B}_{\text{out}}} \right\|_1 \\
    &\leq \left\| VW^\dagger \ket{\tau_0^{\tilde{t},\delta}} \otimes \ket{\Phi}_{\tilde{A}_{\text{in}} \tilde{B}_{\text{in}}} - VW^\dagger \ket{\hat{\tau}_0^{\tilde{t},\delta}} \otimes \ket{\Phi}_{\tilde{A}_{\text{in}} \tilde{B}_{\text{in}}} \right\|_1 \\
    &+ \left\| VW^\dagger \ket{\hat{\tau}_0^{\tilde{t},\delta}} \otimes \ket{\Phi}_{\tilde{A}_{\text{in}} \tilde{B}_{\text{in}}} - \ket{\overline{\tau}_4^{\tilde{t},\delta}} \otimes \ket{\Phi}_{\tilde{A}_{\text{out}}\tilde{B}_{\text{out}}} \right\|_1 \\
    &+ \left\| \ket{\overline{\tau}_4^{\tilde{t},\delta}} \otimes \ket{\Phi}_{\tilde{A}_{\text{out}}\tilde{B}_{\text{out}}} - \ket{\tau_4^{\tilde{t},\delta}} \otimes \ket{\Phi}_{\tilde{A}_{\text{out}}\tilde{B}_{\text{out}}} \right\|_1 \\
    &\leq \hat{\varepsilon}_{\delta,0} + 2\sqrt{\hat{\varepsilon}_{\text{dec}}} + \hat{\varepsilon}_{\delta,2} + \overline{\varepsilon}_{\delta,2} + 2\sqrt{\overline{\varepsilon}_{\text{dec}}} + \overline{\varepsilon}_{\delta,4},
\end{align}
in which we used the isometric invariance of the trace norm to obtain $\hat{\varepsilon}_{\delta,0}$ and $\overline{\varepsilon}_{\delta,4}$. From Lemma \ref{lemma:deltaTypicalTraceDistanceBound}, we know the bounds on $\hat{\varepsilon}_{\delta,0}$, $\hat{\varepsilon}_{\delta,2}$, $\overline{\varepsilon}_{\delta,2}$ and $\overline{\varepsilon}_{\delta,4}$ vanish exponentially in $n$ for $\delta$ fixed. 

In order to ensure that the total error vanishes rapidly in $n$, we are left to analyze the values of $\hat{\varepsilon}_{\text{dec}}$ and $\overline{\varepsilon}_{\text{dec}}$, which depend on the communication cost. First, let us compute $\hat{\varepsilon}_{\text{dec}}$ from \eqref{eq:epsilon_dec_1_redistribution}. We know that $d_{R_T(R_{XY}Y_0)^\delta} = d_{R_T} d_{(R_{XY}Y_0)^\delta}$, in which $d_{R_T} = \left|\mathcal{P}_{\mathbb{X}^n \times \mathbb{Y}^n}^{\tilde{t},\delta}\right|$ is the dimension of the type register (the number of $\delta$-typical joint types). By typicality, let $d_{C}^{t,\delta}$ be the dimension of the $\delta$-typical subspace of the system $C$ for the state ${\sigma'}_{t,C^n}^{\otimes n}$, and similarly define $d_{Y_0R_{XY}}^{t,\delta}$ for the system $Y_0R_{XY}$. We then have that
\begin{align}
    d_{C_2^\delta} &\leq \sum_{t \in \mathcal{P}_{\mathbb{X}^n \times \mathbb{Y}^n}^{\tilde{t},\delta}} d_{C}^{t,\delta} \\
    \label{eq:upper_bound_on_dc_2_delta_dimension}
    &\leq \left|\mathcal{P}_{\mathbb{X}^n \times \mathbb{Y}^n}^{\tilde{t},\delta}\right| \exp{n\left(\max_{t} H(C)_{t} + \tilde{\omega}_{1,1}(\delta,|C|)\right)},
\end{align}
and
\begin{align}
    d_{R_T (R_{XY}Y_0)^\delta} &\leq \sum_{t \in \mathcal{P}_{\mathbb{X}^n \times \mathbb{Y}^n}^{\tilde{t},\delta}} d_{Y_0R_{XY}}^{t,\delta} \\
    &\leq \left|\mathcal{P}_{\mathbb{X}^n \times \mathbb{Y}^n}^{\tilde{t},\delta}\right| \exp{n\left(\max_{t} H(Y_0R_{XY})_{t} + \tilde{\omega}_{1,1}(\delta,|Y_0R_{XY}|)\right)},
\end{align}
in which $H(C)_t$ and $H(Y_0R_{XY})_t$ are the von Neumann entropies of the reduced states ${\sigma'}_{t,C}$ and ${\sigma'}_{t,Y_0R_{XY}}$ respectively, and $\tilde{\omega}_{k,m}(\delta,d)$ is defined in \eqref{eq:tilde-omega}. Here the maximum is taken over all types $t$ in the set $\mathcal{P}_{\mathbb{X}^n \times \mathbb{Y}^n}^{\tilde{t},\delta}$. 

We now compute the purity of the state $\tilde{\tau}_{2,R_T(R_{XY}Y_0)^\delta C_2^\delta}^{\tilde{t},\delta}$. Since the state $\tilde{\tau}_{2,R_T(R_{XY}Y_0)^\delta C_2^\delta E_2^\delta}^{\tilde{t},\delta}$ is pure, then the purity of~$\tilde{\tau}_{2,R_T(R_{XY}Y_0)^\delta C_2^\delta}^{\tilde{t},\delta}$ is the same as the purity of~$\tilde{\tau}_{2,E_2^\delta}^{\tilde{t},\delta}$. Then, by Lemma~\ref{lemma:purityBoundDeltaTypicalDeFinetti} applied to $A_{i'} = E_2$, we have that
\begin{align}
    \text{Tr}\left[\left(\hat{\tau}_{2,R_T(R_{XY}Y_0)^\delta C_2^\delta}^{\tilde{t},\delta}\right)^2\right] \leq \exp{-n\left(\min_{t} H(E_2)_{t} - \tilde{\omega}_{1,1}(\delta,|E_2|) - \frac{f(\delta)}{n}\right)},
\end{align}
in which the minimum is taken over all types $t$ in the set $\mathcal{P}_{\mathbb{X}^n \times \mathbb{Y}^n}^{\tilde{t},\delta}$, $\tilde{\omega}_{k,m}(\delta,d)$ is defined in \eqref{eq:tilde-omega} and $f(\delta)$ is defined as
\begin{equation}
    f(\delta) := -2\log(1 - \frac{\left(\hat{\varepsilon}_2^\delta\right)^2}{2}) + 1.
\end{equation}
Since the states ${\sigma'}_{t,R_{XY}ECY_0}$ are pure, we have that $H(E)_t = H(R_{XY}CB)_t$ for all $t$. Therefore, we can rewrite the bound on the purity as
\begin{align}
    \text{Tr}\left[\left(\hat{\tau}_{2,R_T(R_{XY}Y_0)^\delta C_2^\delta}^{\tilde{t},\delta}\right)^2\right] \leq \exp{-n\left(\min_{t} H(R_{XY}CB)_{t} - \tilde{\omega}_{1,1}(\delta,|E_2|) - \frac{f(\delta)}{n}\right)}.
\end{align}

Putting everything together, we obtain
\begin{equation}
    \hat{\varepsilon}_{\text{dec}} \leq \sqrt{4\frac{\exp{n\left(\hat{H} + \tilde{\omega}_{3,6}(\delta,|Y_0R_{XY}CE_2|) + \frac{f(\delta)}{n}\right)}}{d_{\tilde{A}_{\text{out}}}^2d_{\tilde{C}}^2}},
\end{equation}
in which we have defined
\begin{equation}
    \hat{H} := \max_{t} H(C)_{t} + \max_{t} H(Y_0R_{XY})_{t} - \min_{t} H(Y_0R_{XY}C)_{t},
\end{equation}
and used the crude upper bound in~\eqref{eq:crude_tilde_omega_additivity}. The same reasoning for $\overline{\varepsilon}_{\text{dec}}$ from \eqref{eq:epsilon_dec_2_redistribution} gives
\begin{equation}
    \overline{\varepsilon}_{\text{dec}} \leq \sqrt{4\frac{\exp{n\left(\overline{H} + \tilde{\omega}_{3,6}(\delta,|Y_0R_{XY}CE_2|) + \frac{g(\delta)}{n}\right)}}{d_{\tilde{B}_{\text{in}}}^2d_{\tilde{C}}^2}},
\end{equation}
in which we have defined
\begin{equation}
    \overline{H} := \max_{t} H(C)_{t} + \max_{t} H(E_2R_{XY})_{t} - \min_{t} H(E_2R_{XY}C)_{t}.
\end{equation}
and $g(\delta)$ is defined as
\begin{equation}
    g(\delta) := -2\log(1 - \frac{\left(\overline{\varepsilon}_2^\delta\right)^2}{2}) + 1.
\end{equation}

We omit the details of having to choose the entanglement systems to be such that $C_2^\delta = \tilde{B}_{\text{in}} \otimes \tilde{A}_{\text{out}} \otimes \tilde{C}$, and instead consider this aspect when proving Theorem \ref{thm:prior_free_cq_channel_simulation_with_side_info}. This concludes the proof of Proposition \ref{prop:de_finetti_state_redistribution}, by noting that $\hat{\varepsilon}_0^\delta = \hat{\varepsilon}_2^\delta$ and $\overline{\varepsilon}_2^\delta = \overline{\varepsilon}_4^\delta$ Since they are the same $\delta$-typical projections on systems that differ by a relabeling.\hfill$\square$

To prove Theorem \ref{thm:prior_free_cq_channel_simulation_with_side_info}, we proceed with the same approach as with the case with no side information at the receiver, by utilizing the continuity of entropy and the monotonicity of the trace norm to transform the worst-case entropic quantities to those of the type $t$ of the sender and receiver's input $x^ny^n \in (X \times Y)^n$. Furthermore, we carefully choose the dimensions of the pre-shared entanglement to ensure that $C_2^\delta = \tilde{B}_{\text{in}} \otimes \tilde{A}_{\text{out}} \otimes \tilde{C}$. We take the dimensions of these systems to be rounded appropriately; we omit the details.

\textit{Proof of Theorem \ref{thm:prior_free_cq_channel_simulation_with_side_info}} By monotonicity of the trace distance, for every joint type $t' \in \mathcal{P}_{\mathbb{X}^n \times \mathbb{Y}^n}$, we know that
\begin{equation}
    \left\|\left(V_{\mathcal{N}} \otimes I_{Y}\right)(\sigma_{\tilde{t}}) - \left(V_{\mathcal{N}} \otimes I_{Y}\right)(\sigma_{t'})\right\|_1 = \left\|\sigma_{\tilde{t},XY} - \sigma_{t',XY}\right\|_1 \leq \delta,
\end{equation}
Now, by the continuity of the von Neumann entropy we can transform the maximums and minimums entropic quantities over every type in $\mathcal{P}_{\mathbb{X}^n \times \mathbb{Y}^n}$ by entropic quantities over the type $\tilde{t}$ in the upper bounds of $\hat{\varepsilon}_{\text{dec}}$ and $\overline{\varepsilon}_{\text{dec}}$ of the state redistribution protocol of Proposition \ref{prop:de_finetti_state_redistribution}. Applying the continuity of entropy a second time give entropic quantities in terms of the actual type $t$ of the input $x^ny^n$ of Alice and Bob's input. This gives
\begin{equation}
    \hat{\varepsilon}_{\text{dec}} \leq \sqrt{4\frac{\exp{n\left(\hat{H}_{t} + \tilde{\omega}_{9,6}(\delta,|YR_{XY}CE|) + \frac{f(\delta)}{n}\right)}}{d_{\tilde{A}_{\text{out}}}^2d_{\tilde{C}}^2}},
\end{equation}
and
\begin{equation}
    \overline{\varepsilon}_{\text{dec}} \leq \sqrt{4\frac{\exp{n\left(\overline{H}_{t} + \tilde{\omega}_{9,6}(\delta,|YR_{XY}CE|) + \frac{g(\delta)}{n}\right)}}{d_{\tilde{B}_{\text{in}}}^2d_{\tilde{C}}^2}},
\end{equation}
in which
\begin{equation}
    \hat{H}_{t} := H(C)_{t} + H(YR_{XY})_{t} - H(YR_{XY}C)_{t},
\end{equation}
and
\begin{equation}
    \overline{H}_{t} := H(C)_{t} + H(ER_{XY})_{t} - H(ER_{XY}C)_{t}.
\end{equation}

We now define some variables that allows us to define the right dimensions for the shared entanglement such that $C_2^\delta = \tilde{B}_{\text{in}} \otimes \tilde{A}_{\text{out}} \otimes \tilde{C}$. Let 
\begin{equation}
\Delta^t_{\tilde{t},\delta} := \log(d_{C^\delta}) - n\left(H(C)_t + \tilde{\omega}_{3,2}(\delta,|YR_{XY}CE|)\right)
\end{equation}
be the difference between the actual value of the dimension (in log) of the $\delta$-typical subspace on system $C^\delta$ from the projections of \eqref{eq:projection_de_finetti_state_redistribution_2_E} and \eqref{eq:projection_de_finetti_state_redistribution_2_B} and its upper bound in \eqref{eq:upper_bound_on_dc_2_delta_dimension} (by upper bounding $\tilde{\omega}_{3,2}(\delta,|C|)$ by $\tilde{\omega}_{3,2}(\delta,|YR_{XY}CE|)$). We can express $d_{C_2^\delta}$ as
\begin{align}
    d_{C_2^\delta} &= \exp{n\left(H(C)_t + \tilde{\omega}_{3,2}(\delta,|YR_{XY}CE|)\right) + \Delta^t_{\tilde{t},\delta}}\\
    &=\exp{n\left(H(C)_t + \frac{1}{3}\tilde{\omega}_{9,6}(\delta,|YR_{XY}CE|)\right) + \Delta^t_{\tilde{t},\delta}}.
\end{align}
From the lower-bound on the size of the set of $\delta$-typical types \eqref{eq:type-class-size} and continuity of entropy, we know that
\begin{equation}
    |\Delta_{\tilde{t},\delta}^t| \leq \tilde{\omega}_{4,3}(\delta,|YR_{XY}CE|)
\end{equation}
We define a function $f_{\text{max}}(\delta)$ that upper bounds both $f(\delta)$ and $g(\delta)$. Since they both depend on some $\hat{\varepsilon}_{\delta}$ and $\overline{\varepsilon}_{\delta}$ respectively, in which the only difference between these two terms is the sum over the number of types of the alphabets of the spaces on which we project on the $\delta$-typical subspaces, we can just define some $\varepsilon_{\delta,\text{max}}$ to upper bound these two terms to define our function $f_{\text{max}}(\delta)$. We can choose
\begin{equation}
    \varepsilon_{\delta,\text{max}} := 2^{-n\frac{\delta^2}{4\ln(2)} + 1}\sqrt{2|\mathcal{P}_{Y^nR_{XY}^n}| + 2|\mathcal{P}_{E^nR_{XY}^n}| + |\mathcal{P}_{C^n}|}.
\end{equation}
With this, we define the function
\begin{equation}
    f_{\text{max}}(\delta) := -2\log(1 - \frac{\left(\varepsilon_{\delta,\text{max}}\right)^2}{2}) + 1,
\end{equation}
Finally, we define $\varepsilon_{\text{red}} := 2\hat{\varepsilon}_{\delta} + 2\overline{\varepsilon}_{\delta} + \varepsilon_{\text{bidec}}$ to be the total error of the state redistribution,
with\newline~$\varepsilon_{\text{bidec}}~:=~2\sqrt{\hat{\varepsilon}_{\text{dec}}}~+~2\sqrt{\overline{\varepsilon}_{\text{dec}}}$ being the total decoupling error.

With these variables defined we can choose the entanglement systems to be
\begin{equation}
    d_{\tilde{A}_{\text{out}}} = \frac{1}{2^{5}}\left(\varepsilon_{\text{red}} - 2\hat{\varepsilon}_{\delta} - 2\overline{\varepsilon}_{\delta}\right)^{2}\exp{\frac{n}{2}\left(I(C;Y)_{t} - \frac{1}{3}\tilde{\omega}_{9,6}(\delta,|YR_{XY}CE|) - \frac{f_{\text{max}}(\delta)}{n}\right) + \Delta^t_{\tilde{t},\delta}}
\end{equation}
and
\begin{equation}
    d_{\tilde{B}_{\text{in}}} = \frac{1}{2^{5}}\left(\varepsilon_{\text{red}} - 2\hat{\varepsilon}_{\delta} - 2\overline{\varepsilon}_{\delta}\right)^{2}\exp{\frac{n}{2}\left(I(C;E)_{t} - \frac{1}{3}\tilde{\omega}_{9,6}(\delta,|YR_{XY}CE|) - \frac{f_{\text{max}}(\delta)}{n}\right) + \Delta^t_{\tilde{t},\delta}}
\end{equation}
then, we get
\begin{equation}
    \hat{\varepsilon}_{\text{dec}} \leq \frac{2^{6}\exp{\frac{n}{2}\left(I(C;R_{XY}|Y)_{t} + \frac{4}{3}\tilde{\omega}_{9,6}(\delta,|YR_{XY}CE|) + \frac{2f_{\text{max}}(\delta)}{n}\right) - \Delta^t_{\tilde{t},\delta}}}{\left(\varepsilon_{\text{red}} - 2\hat{\varepsilon}_{\delta} - 2\overline{\varepsilon}_{\delta}\right)^{2}d_{\tilde{C}}},
\end{equation}
and
\begin{equation}
    \overline{\varepsilon}_{\text{dec}} \leq \frac{2^{6}\exp{\frac{n}{2}\left(I(C;R_{XY}|E)_{t} + \frac{4}{3}\tilde{\omega}_{9,6}(\delta,|YR_{XY}CE|) + \frac{2f_{\text{max}}(\delta)}{n}\right) - \Delta^t_{\tilde{t},\delta}}}{\left(\varepsilon_{\text{red}} - 2\hat{\varepsilon}_{\delta} - 2\overline{\varepsilon}_{\delta}\right)^{2}d_{\tilde{C}}}.
\end{equation}
Now, since $I(C;R_{XY}|E) = I(C;R_{XY}|Y)$ for pure states on systems $YR_{XY}CE$, these two decoupling bounds can be bounded by the same value. We get that the total decoupling error $\varepsilon_{\text{bidec}} := 2\sqrt{\hat{\varepsilon}_{\text{dec}}} + 2\sqrt{\overline{\varepsilon}_{\text{dec}}}$ is bounded by
\begin{equation}
    \varepsilon_{\text{bidec}} \leq 2^{5}\cdot\frac{\exp{\frac{n}{4}\left(I(C;R_{XY}|Y)_{t} + \frac{4}{3}\tilde{\omega}_{9,6}(\delta,|YR_{XY}CE|) + \frac{2f_{\text{max}}(\delta)}{n}\right) - \frac{1}{2}\Delta^t_{\tilde{t},\delta}}}{\left(\varepsilon_{\text{red}} - 2\hat{\varepsilon}_{\delta} - 2\overline{\varepsilon}_{\delta}\right)d_{\tilde{C}}^{1/2}}.
\end{equation}
Therefore, to get a fixed error $\varepsilon_{\text{red}}$ that does not depend on $\tilde{t}$, we  can choose the communication system from the sender to the receiver for every type $\tilde{t}$ to be
\begin{equation}
    d_{\tilde{C}} = \frac{2^{10}\cdot\exp{\frac{n}{2}\left(I(C;R_{XY}|Y)_{t} + \frac{4}{3}\tilde{\omega}_{9,6}(\delta,|YR_{XY}CE|) + \frac{2f_{\text{max}}(\delta)}{n}\right) - \Delta^t_{\tilde{t},\delta}}}{{\left(\varepsilon_{\text{red}} - 2\hat{\varepsilon}_{\delta} - 2\overline{\varepsilon}_{\delta}\right)^4}}.
\end{equation}
This choice of dimension for the entanglement system ensures that $C_2^\delta = \tilde{B}_{\text{in}} \otimes \tilde{A}_{\text{out}} \otimes \tilde{C}$. For a fixed error $\varepsilon_{\text{red}}$ and $\delta$, we have a protocol that achieves the prior-free simulation of the classical-quantum channel with side information at the receiver with a communication cost that scales at a rate of $I(C;R_{XY}|Y)$ up to negligible terms. The protocol consumes and generate entanglement at rates of $I(C;E)$ and $I(C;Y)$ respectively, and with an error that vanishes exponentially in $n$. This communication cost is in fact optimal for every input.

From Lemma 1 of Ref.~\cite{padda2025}, there exists a protocol that estimates the type $t$ of the input sequences $(x^n, y^n)$ to a type $\tilde{t}$ up to error $\delta$ in total variation distance, with high probability. This protocol uses an amount of communication between the two participants that is sublinear in $n$. We define the following protocol that simulates the isometric extension $V_{\mathcal{N}, X \rightarrow EC}^{\otimes n}$ and the transmission of the output system $C^n$ from a sender to a receiver.
\begin{tcolorbox}[boxrule=0.6pt,colback=white,breakable,sharp corners=all]
\begin{protocol}
\label{protocol:cq_channel_simulation_side_info}
Let $\delta > 0$ and $n \in \mathbb{N}_{> 0}$. Let $\mathcal{N}_{X \rightarrow C}$ be a classical-quantum channel and $V_{\mathcal{N}, X \rightarrow EC}$ be an isometric extension of that channel on a purification system $E$ with side information to the receiver on a system $Y$. The protocol works as follows:
\begin{enumerate}
    \item The sender Alice and the receiver Bob apply the protocol from Lemma 1 of Ref.~\cite{padda2025} so that they both obtain an estimate $\tilde{t}$ of their input's joint type, using an amount of communication sublinear in $n$, with $n$ being the length of their input sequences.
    \item Alice applies the isometric extension $V_{\mathcal{N}}^{\otimes n}$ of the channel $\mathcal{N}^{\otimes n}$ on her input sequence $x^n$.
    \item Alice and Bob apply the state redistribution protocol of the $\delta$-typical de Finetti state associated to the estimated joint type $\tilde{t}$.
\end{enumerate}
\end{protocol}
\end{tcolorbox}
\noindent We show the Protocol \ref{protocol:cq_channel_simulation_side_info} successfully simulates the action of the channel $V_{\mathcal{N}}^{\otimes n}$ on any $\delta$-typical classical joint inputs. Let $\Pi_{\text{sim}}$ be our simulation protocol defined as $\Pi_{\text{sim}} := \Pi_{sr}^{\tilde{t},\delta}V_{\mathcal{N}}^{\otimes n}$, in which $\Pi_{sr}^{\tilde{t},\delta}$ is the state redistribution protocol of the $\delta$-typical de Finetti state associated with the estimated joint type $\tilde{t}$ from Proposition \ref{prop:de_finetti_state_redistribution}. We extend the protocol on the whole space by appending the identity on the subspace orthogonal to it. This makes $\Pi_{sr}^{\otimes n}$ permutation-covariant since the $\delta$-typical subspace is invariant under permutation. We know that $V_{\mathcal{N}}^{\otimes n}$ is also permutation-covariant with respect to the action of any permutation~$\pi$ on the inputs. Therefore, $\Pi_{\text{sim}}$ is also permutation-covariant and so is $\Delta := \Pi_{\text{sim}} - V_{\mathcal{N}}^{\otimes n}$. By our de Finetti reduction in the form of Theorem \ref{thm:restrictedClassicalQuantumDeFinettiReduction}, for every $x^ny^n \in (X \times Y)^n$ we have the following:
\begin{align}
\left\|\left( \Pi_{\text{sim}} - V_{\mathcal{N}}^{\otimes n} \right) \left(\ketbra{x^ny^n}{x^ny^n}_{X^nY^n}\right)\right\|_1 &= \left\|\left( \Pi_{\text{sim}} - V_{\mathcal{N}}^{\otimes n} \right) \left(\ketbra{x^ny^n}{x^ny^n}_{X^nY^n} \otimes \ketbra{x^ny^n}{x^ny^n}_{R_{XY}^n}\right)\right\|_1\\
&\leq \left\|\left( \Pi_{\text{sim}} - V_{\mathcal{N}}^{\otimes n}\right)\right\|_{\diamond,cl}^{\tilde{t},\delta}\\
&\leq\left|\mathcal{P}_{\mathbb{X}^n \times \mathbb{Y}^n}\right|\left|\mathcal{P}_{\mathbb{X}^n \times \mathbb{Y}^n}^{\tilde{t},\delta}\right|\left\| \left(\Delta \otimes \textbf{id}_{R}\right) \left(\ketbra{\tau^{\tilde{t},\delta}}{\tau^{\tilde{t},\delta}}_{X^nY^nR}\right) \right\|_1\\
&\leq \left|\mathcal{P}_{\mathbb{X}^n \times \mathbb{Y}^n}\right|\left|\mathcal{P}_{\mathbb{X}^n \times \mathbb{Y}^n}^{\tilde{t},\delta}\right| \left(2\hat{\varepsilon}_{\delta} + 2\overline{\varepsilon}_{\delta} + \varepsilon_{\text{bidec}} \right)\\
&\leq (n+1)^{2|X||Y|} \left(2\hat{\varepsilon}_{\delta} + 2\overline{\varepsilon}_{\delta} + \varepsilon_{\text{bidec}} \right).
\end{align}
Here, in the first equality, we implicitly act as the identity when adding the reference system $R_{XY}^n$. This ends the proof of Theorem \ref{thm:prior_free_cq_channel_simulation_with_side_info}, by noting that $|R_{XY}| = |X||Y|$, and that $I(C;R_{XY}|Y)=I(C;XY|Y)=I(C;X|Y)$ for classical states on $XY$ passed through classical-quantum channels $\mathcal{N}_{X \rightarrow C}$ (viewed as a classical-quantum state).\begin{flushright}
    $\square$
\end{flushright}
\section{Prior-Free Compression of Quantum Interactive Communication Protocols on Classical Inputs}
\label{section:QuantumInteractiveCommunicationSimulation}
We now move on to our last main result, which is the simulation of quantum interactive protocols with classical inputs in the prior-free setting. Interactive protocols are more general than one-way protocols, as they allow for multiple rounds of communication between the two parties. This added complexity introduces new challenges in the compression of such protocols, especially when considering prior-free scenarios where no assumptions are made about the input distribution.

\begin{figure*}[t]
\centering
\resizebox{0.75\textwidth}{!}{%
\begin{circuitikz}
\tikzstyle{every node}=[font=\LARGE]
\node [font=\large] at (7.25,13.75) {$\tau_0$};
\draw [ fill={rgb,255:red,200; green,200; blue,230} ] (10.75,12) rectangle (11.75,10);
\draw [ fill={rgb,255:red,200; green,200; blue,230} ] (8.75,18.25) rectangle (9.75,16.25);
\draw [ fill={rgb,255:red,200; green,200; blue,230} ] (12.75,18.25) rectangle (13.75,16.25);
\draw [ fill={rgb,255:red,200; green,200; blue,230} ] (17.75,12) rectangle (18.75,10);
\draw [ fill={rgb,255:red,200; green,200; blue,230} ] (19.75,18) rectangle (20.75,16);
\node [font=\large] at (15.75,14.25) {$. . .$};
\draw [ color={rgb,255:red,222; green,222; blue,222}, line width=0.2pt, short] (8.25,13.75) -- (21.25,13.75);
\draw [ color={rgb,255:red,222; green,222; blue,222}, line width=0.2pt, short] (8.25,18.75) -- (21.25,18.75);
\draw [ color={rgb,255:red,60; green,100; blue,190}, short] (7.5,14.25) -- (7.75,16.5);
\draw [ color={rgb,255:red,60; green,100; blue,190}, short] (7.75,16.5) -- (8.75,16.5);
\draw [ color={rgb,255:red,60; green,100; blue,190}, short] (7,14.25) -- (7.5,19.75);
\draw [ color={rgb,255:red,60; green,100; blue,190}, short] (7.5,19.75) -- (22.25,19.75);
\draw [ color={rgb,255:red,60; green,100; blue,190}, short] (9.75,16.5) -- (10.25,16.5);
\draw [ color={rgb,255:red,60; green,100; blue,190}, short] (10.25,16.5) -- (10.25,11.75);
\draw [ color={rgb,255:red,60; green,100; blue,190}, short] (10.25,11.75) -- (10.75,11.75);
\draw [ color={rgb,255:red,60; green,100; blue,190}, short] (11.75,11.75) -- (12.25,11.75);
\draw [ color={rgb,255:red,60; green,100; blue,190}, short] (12.25,11.75) -- (12.25,16.5);
\draw [ color={rgb,255:red,60; green,100; blue,190}, short] (12.25,16.5) -- (12.75,16.5);
\draw [ color={rgb,255:red,60; green,100; blue,190}, short] (9.75,17.5) -- (12.75,17.5);
\draw [ color={rgb,255:red,60; green,100; blue,190}, short] (13.75,16.5) -- (14.25,16.5);
\draw [ color={rgb,255:red,60; green,100; blue,190}, short] (14.25,16.5) -- (14.25,11.75);
\draw [ color={rgb,255:red,60; green,100; blue,190}, short] (14.25,11.75) -- (14.5,11.75);
\draw [ color={rgb,255:red,60; green,100; blue,190}, short] (13.75,17.5) -- (14.5,17.5);
\draw [ color={rgb,255:red,60; green,100; blue,190}, short] (17.75,11.75) -- (17.25,11.75);
\draw [ color={rgb,255:red,60; green,100; blue,190}, short] (17.25,11.75) -- (17.25,16.5);
\draw [ color={rgb,255:red,60; green,100; blue,190}, short] (17.25,16.5) -- (17,16.5);
\draw [ color={rgb,255:red,60; green,100; blue,190}, short] (18.75,11.75) -- (19.25,11.75);
\draw [ color={rgb,255:red,60; green,100; blue,190}, short] (19.25,11.75) -- (19.25,16.5);
\draw [ color={rgb,255:red,60; green,100; blue,190}, short] (19.25,16.5) -- (19.75,16.5);
\draw [ color={rgb,255:red,60; green,100; blue,190}, short] (17,17.5) -- (19.75,17.5);
\draw [ color={rgb,255:red,60; green,100; blue,190}, short] (11.75,10.5) -- (14.5,10.5);
\draw [ color={rgb,255:red,60; green,100; blue,190}, short] (17,10.5) -- (17.75,10.5);
\draw [ color={rgb,255:red,60; green,100; blue,190}, short] (7.5,10.5) -- (10.75,10.5);
\draw [ color={rgb,255:red,60; green,100; blue,190}, short] (7.25,13.25) -- (7.5,10.5);
\draw [ color={rgb,255:red,60; green,100; blue,190}, short] (18.75,10.75) -- (21,10.75);
\draw [ color={rgb,255:red,60; green,100; blue,190}, short] (18.75,10.25) -- (21.75,10.25);
\draw [ color={rgb,255:red,60; green,100; blue,190}, short] (18.75,11.25) -- (20.75,11.25);
\draw [ color={rgb,255:red,60; green,100; blue,190}, short] (20.75,11.25) -- (22.25,13.25);
\draw [ color={rgb,255:red,60; green,100; blue,190}, short] (21,10.75) -- (22.75,13.25);
\draw [ color={rgb,255:red,60; green,100; blue,190}, short] (20.75,16.5) -- (22,16.5);
\draw [ color={rgb,255:red,60; green,100; blue,190}, short] (22.75,14.25) -- (22.25,19.75);
\draw [ color={rgb,255:red,60; green,100; blue,190}, short] (22,16.5) -- (22.25,14.25);
\draw [ color={rgb,255:red,60; green,100; blue,190}, short] (20.75,17.5) -- (21.75,17.5);
\node [font=\Large] at (9.25,17.25) {$U_1$};
\node [font=\Large] at (11.25,11) {$U_2$};
\node [font=\Large] at (13.25,17.25) {$U_3$};
\node [font=\Large] at (20.25,17) {$U_f$};
\node [font=\Large] at (18.25,11) {$U_M$};
\node [font=\large] at (8.25,20) {$R$};
\node [font=\large] at (19.75,11.5) {$C$};
\node [font=\large] at (19.75,12.25) {$C_M$};
\node [font=\large] at (19.25,17.75) {$A_M$};
\node [font=\large] at (17.25,10.75) {$B_{M-1}$};
\node [font=\large] at (17,16.75) {$C_{M-1}$};
\node [font=\large] at (17,17.75) {$A_{M-1}$};
\node [font=\large] at (14.25,10.75) {$B_3$};
\node [font=\large] at (14.25,17.75) {$A_3$};
\node [font=\large] at (14.25,16.75) {$C_3$};
\node [font=\large] at (12.25,10.75) {$B_2$};
\node [font=\large] at (12.25,11.5) {$C_2$};
\node [font=\large] at (12.25,17.75) {$A_2$};
\node [font=\large] at (10.25,16.75) {$C_1$};
\node [font=\large] at (10.25,17.75) {$A_1$};
\node [font=\large] at (8.25,11) {$B$};
\node [font=\large] at (8.25,16.75) {$A$};
\node [font=\large] at (21.25,16.75) {$A$};
\node [font=\large] at (21.25,17.75) {$A'$};
\node [font=\large] at (19.75,10.5) {$B'$};
\node [font=\large] at (19.75,11) {$B$};
\node [font=\large] at (22.5,13.75) {$\tau_f$};
\end{circuitikz}
}%
\caption{A general quantum interactive communication protocol $\Pi$ with classical inputs, consisting of $r$ rounds of communication between Alice and Bob. The protocol starts with Alice and Bob receiving classical inputs $x$ and $y$, respectively, drawn from an unknown distribution over the input space $\mathbb{X} \times \mathbb{Y}$. At each round~$i$, one party applies a quantum operation in the form of an isometry $U_i$ on their local quantum system and the quantum system received from the other party in the previous round (if any), and then sends a quantum system to the other party. After~$r$ rounds of communication, both parties perform measurements on their local quantum systems to produce their respective outputs. The total communication cost of the protocol is defined as the sum of the sizes (in qubits) of all quantum systems exchanged during the~$r$~rounds of communication. Note, the pre-shared entanglement at every round is not represented in the figure for clarity.}
\label{fig:interactive_protocol}
\end{figure*}

Our goal is to simulate an interactive protocol $\Pi$ in the asymptotic and prior-free setting, while minimizing the total communication cost. In the asymptotic setting, the protocol to simulate is $\Pi^{\otimes n}$, which consists of $n$ independent instances of the original protocol $\Pi$, and Alice and Bob are given $(x^n,y^n) \in \mathbb{X}^n \times \mathbb{Y}^n$ as classical inputs, respectively. In the prior-free setting, we assume that Alice and Bob do not have any prior knowledge about the joint distribution of their sequences of inputs $(x^n,y^n)$, and we thus ensure that the protocol works well on the worst-case input distribution. We denote $\Pi_{\text{sim}}$ the protocol that simulates $\Pi^{\otimes n}$.

In order to simulate the protocol $\Pi^{\otimes n}$, we first note that this protocol is classical-quantum since the inputs are classical. Therefore, we can apply our de Finetti reduction from Theorem \ref{thm:restrictedClassicalQuantumDeFinettiReduction}. If we use the subsampling protocol from Lemma 1 of Ref.~\cite{padda2025} on the input sequences $x^n$ and $y^n$ for a fixed $\delta > 0$, then with high probability we get an estimate $\tilde{t}$ of the true type $t$ up to a total variation distance $\delta$. Using Theorem \ref{thm:restrictedClassicalQuantumDeFinettiReduction}, we find that
\begin{align}
    &\max_{x^ny^n} \left\|\left(\left(\Pi_{\text{sim}} - \Pi^{\otimes n}\right)\otimes I_{R}\right)\left(x^ny^n \otimes {x^ny^n}_{R}\right) \right\|_1\\
    &= \max_{\tilde{t} \in \mathcal{P}_{\mathbb{X}^n \times \mathbb{Y}^n}} \max_{x^ny^n \in T_{\tilde{t}}} \left\|\left(\left(\Pi_{\text{sim}} - \Pi^{\otimes n}\right)\otimes I_{R}\right)\left(x^ny^n \otimes {x^ny^n}_{R}\right) \right\|_1\\
    &\leq \max_{\tilde{t} \in \mathcal{P}_{\mathbb{X}^n \times \mathbb{Y}^n}} \left\|\Pi_{\text{sim}} - \Pi^{\otimes n}\right\|_{\diamond,cl}^{\tilde{t},\delta} \\
    &\leq \max_{\tilde{t} \in \mathcal{P}_{\mathbb{X}^n \times \mathbb{Y}^n}} \left|\mathcal{P}_{\mathbb{X}^n \times \mathbb{Y}^n}\right|^2 \left\|\left(\left(\Pi_{\text{sim}} - \Pi^{\otimes n}\right) \otimes I_{R}\right)\left(\tau_{X^nY^nR}^{\tilde{t},\delta}\right)\right\|_1.
\end{align}
We can rewrite $\Pi^{\otimes n} = \prod_{i=1}^n U_i^{\otimes n}$. At every round $i$, applying $U_i^{\otimes n}$ on the input state $\tau_{X^nY^nR}^{\tilde{t},\delta}$ preserves the de Finetti structure of the state. Therefore, to simulate each unitary $U_i^{\otimes n}$, we can use our one-way prior-free compression protocol for de Finetti states from Proposition \ref{prop:de_finetti_state_redistribution}. Therefore, for each unitary $U_i^{\otimes n}$, we define $V_iW^\dagger_iU_i^{\otimes n}$ as the isometries that simulates $U_i^{\otimes n}$ on the de Finetti input state $\tau_{X^nY^nR}^{\tilde{t},\delta}$ with error at most $\varepsilon_{i,\text{red}}^{\tilde{t},\delta}$, in which $W_i^\dagger U_i^{\otimes n}$ is applied by the sender of round $i$ before sending the message, and $V_i$ is applied by the receiver of round $i$ after receiving the message.

By applying this protocol at each round of communication, we can simulate the entire interactive protocol $\Pi^{\otimes n}$ in the prior-free setting. Indeed, we can rewrite
\begin{align}
    &\left\|\left(\left(\left(\Pi_{\text{sim}} - \Pi^{\otimes n}\right) \otimes I_{R}\right)\right)\left(\tau_{X^nY^nR}^{\tilde{t},\delta}\right)\right\|_1 \\
    &= \left\| \left( \prod_{i=1}^r V_i W_i^\dagger U_i^{\otimes n} \otimes I_{R} - \prod_{i=1}^r U_i^{\otimes n} \otimes I_{R} \right) \left(\tau_{X^nY^nR}^{\tilde{t},\delta}\right) \right\|_1 \\
    &\leq \sum_{i=1}^r \left\| \left( V_i W_i^\dagger U_i^{\otimes n} \otimes I_{R} - U_i^{\otimes n} \otimes I_{R} \right) \left(\tau_{X^nY^nR}^{\tilde{t},\delta}\right) \right\|_1 \\
    &\leq \sum_{i=1}^r \varepsilon_{i,\text{red}}^{\tilde{t},\delta}.
\end{align}
Therefore, define $QCC(\Pi_{\text{sim}})_{A\to B}$ and $QCC(\Pi_{\text{sim}})_{B\to A}$ as the total quantum communication cost of the simulating protocol $\Pi_{\text{sim}}$ from Alice to Bob and from Bob to Alice, respectively. Let $QCC(\Pi_{\text{sim}}) = QCC(\Pi_{\text{sim}})_{A\to B} + QCC(\Pi_{\text{sim}})_{B\to A}$ be the total quantum communication cost of the simulating protocol $\Pi_{\text{sim}}$. We have the following theorem.
\begin{theorem}
\label{thm:int}
    Let $\Pi$ be a quantum interactive communication protocol with classical inputs, consisting of $r$ rounds of communication and $\delta > 0$. For any $n \in \mathbb{N}$, there exists a variable-length prior-free simulation protocol $\Pi_{\text{sim}}$ that simulates $\Pi^{\otimes n}$ on such that
    \begin{align}
        &\max_{x^ny^n} \left\| \left(\Pi_{\text{sim}} - \Pi^{\otimes n}\right)\otimes I_{R}\left((x^ny^n)^{\otimes 2}\right) \right\|_1 \leq \sum_{i=1}^r \varepsilon_{i,\text{red}}^{\tilde{t},\delta} \left|\mathcal{P}_{\mathbb{X}^n \times \mathbb{Y}^n}\right|^2,
    \end{align}
    in which $\varepsilon_{i,\text{red}}^{\tilde{t},\delta}$ is the error of simulating the isometry $U_i^{\otimes n}$ on the input de Finetti state $\tau_{X^nY^nR}^{\tilde{t},\delta}$ (the redistribution of the latter after applying the isometry $U_i^{\otimes n}$). Furthermore, the total quantum communication cost of the simulating protocol $\Pi_{\text{sim}}$ is given by
    \begin{equation}
        \label{eq:total_communication_cost_interactive_protocol}
        QCC(\Pi_{\text{sim}}) = QCC(\Pi_{\text{sim}})_{A\to B} + QCC(\Pi_{\text{sim}})_{B\to A},
    \end{equation}
    in which
    \begin{align}
        QCC(\Pi_{\text{sim}})_{A\to B} &= n\left(\sum_{i=0}I(C_{2i+1}:R_{XY}|B_{2i+1})_{\tilde{t}}\right),\\
        QCC(\Pi_{\text{sim}})_{B\to A} &= n\left(\sum_{i=1}I(C_{2i}:R_{XY}|A_{2i})_{\tilde{t}}\right),
    \end{align}
    up to negligible terms, with $I(C_{2i+1}:R_{XY}|B_{2i+1})_{\tilde{t}}$ and $I(C_{2i}:R_{XY}|A_{2i})_{\tilde{t}}$ being the conditional mutual informations evaluated on the state $\sigma_{i,\tilde{t}}$, the distribution of the type $\tilde{t}$ after the application of the isometries up to round $i$ in the original protocol $\Pi$. This proves that $QIC(\Pi) = AQCC(\Pi)$.
\end{theorem}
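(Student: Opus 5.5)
The plan is to build $\Pi_{\text{sim}}$ in two phases and then bound its error and cost separately.

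\textbf{Phase one (type estimation).} Alice and Bob run the subsampling protocol of Lemma 1 of Ref.~\cite{padda2025}. With sublinear communication, and with high probability, both obtain an estimate $\tilde{t}$ of the joint type $t$ of $(x^n,y^n)$ satisfying $\|t-\tilde{t}\|_1\le\delta$.

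\textbf{Phase two (round-by-round redistribution).} Conditioned on $\tilde{t}$, the parties simulate each round $i$ of $\Pi^{\otimes n}$ in turn. The sender applies $U_i^{\otimes n}$ locally. The message system $C_i^n$ is then transferred by the state redistribution protocol of Proposition~\ref{prop:de_finetti_state_redistribution}. Its roles are assigned as follows: the sender's residual system $A_i$ (resp.\ $B_i$) plays $E$, the receiver's current system $B_{i-1}$ (resp.\ $A_{i-1}$) plays the side information $Y$, and $R_{XY}R_T$ is the reference.

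The key structural remark is that $\prod_{j\le i}U_j^{\otimes n}$ applied to $\ket{\tau^{\tilde{t},\delta}}_{X^nY^nR}$ is still a uniform superposition, indexed by $\ket{t'}_{R_T}$, of i.i.d.\ pure states $\ket{\sigma'_{i,t'}}^{\otimes n}$. So the proof of Proposition~\ref{prop:de_finetti_state_redistribution} goes through verbatim with the new systems. Only two lemmas are used there. Lemma~\ref{lemma:deltaTypicalTraceDistanceBound} and Lemma~\ref{lemma:purityBoundDeltaTypicalDeFinetti} only require this tensor-power-per-type form, with the typical projectors taken in the eigenbases of the reduced states of $\sigma'_{i,t'}$.

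Each round's protocol is extended by the identity off the permutation-invariant typical subspaces, which makes it permutation covariant. Composing with $U_i^{\otimes n}$ then yields a permutation-covariant $\Delta=\Pi_{\text{sim}}-\Pi^{\otimes n}$.

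\textbf{Error bound.} I would first use Theorem~\ref{thm:restrictedClassicalQuantumDeFinettiReduction}, as in the displayed chain preceding the theorem, to bound the worst-case error for each $\tilde{t}$ by $|\mathcal{P}_{\mathbb{X}^n\times\mathbb{Y}^n}|^2$ times the error on $\tau^{\tilde{t},\delta}_{X^nY^nR}$. Next I telescope the product of rounds with the triangle inequality. Each difference $\prod_{j>i}(\cdot)\,(\tilde U_i-U_i^{\otimes n})\prod_{j<i}U_j^{\otimes n}$ is bounded, by monotonicity under the later CPTP maps, by the round-$i$ redistribution error evaluated on the \emph{ideal} state at round $i$. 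That error is exactly $\varepsilon^{\tilde{t},\delta}_{i,\text{red}}$, which gives the stated sum. The small failure probability of the type estimation adds only an exponentially small term, which I absorb.

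\textbf{Communication cost.} Round $i$ costs, by the proof of Theorem~\ref{thm:prior_free_cq_channel_simulation_with_side_info} applied to the round-$i$ state, $\frac{n}{2}I(C_i;R_{XY}R_T|\text{receiver's system})$ evaluated on worst-case types in $\mathcal{P}^{\tilde{t},\delta}$. Continuity (Fannes--Audenaert with $\omega(\delta,\cdot)$), together with monotonicity of trace distance under the isometries $U_j$, converts this to the quantity evaluated at $\sigma_{i,\tilde{t}}$, up to $n\tilde{\omega}_{k,m}(\delta,d)$ terms. Summing over rounds, with the phase-one cost being sublinear, gives the displayed $QCC$.

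\textbf{Conclusion $QIC=AQCC$.} I choose $\delta=n^{-1/2+\alpha}$, so that all errors vanish while the $\tilde{\omega}$ terms and the polynomial prefactors $(n+1)^{O(|X||Y|)}$ are still beaten by the exponential decay. Each type then contributes cost $\le n\,QIC(\Pi,\sigma_t)+o(n)\le n\max_\mu QIC(\Pi,\mu)+o(n)$, which yields $AQCC\le QIC$. The converse $AQCC\ge QIC$ follows from the distributional lower bound of Ref.~\cite{touchette2015quantum} at the maximizing $\mu$, since a prior-free simulation is in particular a distributional one.

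\textbf{Main obstacle.} The hard step is the telescoping: round $i$'s redistribution is only guaranteed on the ideal intermediate state, while the actual state carries earlier errors. I will handle this with the hybrid argument above and the contractivity of trace distance. A second delicate point is checking that the intermediate states, which contain quantum registers correlated with $R_T$, still satisfy the hypotheses of Lemma~\ref{lemma:purityBoundDeltaTypicalDeFinetti}.
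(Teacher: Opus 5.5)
Your achievability argument follows the paper's route. It uses type estimation via Lemma~1 of Ref.~\cite{padda2025}, then Theorem~\ref{thm:restrictedClassicalQuantumDeFinettiReduction} to pass to $\tau^{\tilde t,\delta}_{X^nY^nR}$, then per-round state redistribution on the intermediate de Finetti states, a telescoping bound, continuity, and $\delta=n^{-1/2+\alpha}$. Two of your steps are more careful than the paper's chain: the hybrid argument measuring each round's error on the ideal round-$i$ state, and bounding the rate by $\max_\mu QIC(\Pi,\mu)$.

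\textbf{The converse you add is the genuine gap.} The lower bound of Ref.~\cite{touchette2015quantum} concerns \emph{tasks}, whose information complexity is an infimum over all protocols solving them. It says nothing about the cost of simulating a \emph{fixed} $\Pi$, and no such bound can hold, because a simulation need only reproduce the output state, not the intermediate messages. For instance, let Alice send a copy $C_1$ of $x$ and let Bob return it unchanged as $C_2$. A direct computation on the purified state gives $I(C_1;R|B_0)=I(C_2;R|A_1)=H(X|Y)_\mu$, so $QIC(\Pi)=\log|\mathbb{X}|$. Yet Alice produces the exact output by copying $x$ locally, so $AQCC(\Pi)=0$. Hence this construction yields only $AQCC(\Pi)\le QIC(\Pi)$. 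That is also all the paper's proof establishes, since exhibiting one protocol only upper-bounds the minimum defining $AQCC$. The equality would have to be restated at the level of tasks, where an additivity-based lower bound is available.

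\textbf{Your permutation-covariance justification also fails; the paper uses the same one.} Extending each round's protocol by the identity off the typical subspaces does not make it permutation covariant. The decoupling unitary and the Uhlmann isometries acting \emph{on} the typical subspace need not commute with permutations of the $n$ positions. Lemma~\ref{lemma:symmetricRestrictedMaximization} genuinely needs covariance: without it, a single sequence in a type class can have far larger error than the type-class average that $\tau^{\tilde t,\delta}$ controls. The standard fix is to symmetrize.
\begin{itemize}
\item Using shared randomness (obtainable from the entanglement), both parties apply a common uniformly random permutation to their $n$ positions before the simulation.
\item Afterwards they apply its inverse to their output registers.
\end{itemize}
The resulting map is covariant. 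Since $\tau^{\tilde t,\delta}_{X^nY^nR}$ is invariant under joint permutations of $X^nY^n$ together with the copies in $R$, and $\Pi^{\otimes n}$ commutes with permutations, the symmetrized map's error on the de Finetti state is no larger than before.
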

\begin{proof}
    We prove the part where $QIC(\Pi) = AQCC(\Pi)$. From our result, we know that once we fix $n \in \mathbb{N}$, then the result works for every $\delta$ up to errors $\sum_{i=1}^r \varepsilon_{i,\text{red}}^{\tilde{t},\delta} \left|\mathcal{P}_{\mathbb{X}^n \times \mathbb{Y}^n}\right|^2$. Now, for a fixed $n \in \mathbb{N}$ and simulation error $\varepsilon$, we can choose $\delta$ to scale as $\frac{1}{n^{1/2 - \alpha}}$ for some $\alpha \in (0,1/2)$. For simplicity, we consider $\delta = \frac{1}{n^{1/4}}$. With this definition of $\delta$, we ensure that the error terms $\varepsilon_{\delta}$ that depends on $\delta$ (and potentially $n$) vanish as $n$ goes to infinity. In particular, the largest power of $\delta$ in the error term is $\delta^2$, such that if $\delta = \frac{1}{n^{1/4}}$, then $\delta^2 = \frac{1}{n^{1/2}}$ which ensures that decreasing exponential in $n$ that are multiplied by $\delta^2$ still vanish as $n$ goes to infinity. Since the function $f_{\text{max}}(\delta)$ in the definition of $\varepsilon_{\text{red}}$ also depends on $\varepsilon_{\delta}$, it vanishes as $n$ goes to infinity. Furthermore, if $\varepsilon$ is fixed and $n$ tends to infinity, then the additionnal communication cost needed to compensate for the inaccuracy in the estimation of the type $\tilde{t}$ also vanishes. Therefore, if for each $\varepsilon$ and $n \in \mathbb{N}$ we choose $\delta = \frac{1}{n^{1/4}}$, then we have that there exists a simulation protocols $\Pi_{\text{sim},\varepsilon,n}$, such that
    \begin{align}
        \lim_{\varepsilon \to 0} \lim_{n \to \infty} \frac{QCC(\Pi_{\text{sim},\varepsilon,n})}{n} &= \sum_{i=0}I(C_{2i+1}:R_{XY}|B_{2i+1})_{\tilde{t}} + \sum_{i=1}I(C_{2i}:R_{XY}|A_{2i})_{\tilde{t}} \\
        &= QIC(\Pi).
    \end{align}
    Which concludes the proof.
\end{proof}

\section{Conclusion}
In this work, we gave a constrained de Finetti reduction where the de Finetti distribution contains only i.i.d. distributions of types that are within a distance $\delta$ of a distribution $p$. Using similar techniques from Ref.~\cite{christandl2009}, we showed that this reduction allows us to reduce the worst-case analysis of permutation-invariant classical-quantum channels, for which the input distribution is in the classical symmetric subset of some distribution $p$, to the analysis of these channels on a convex combination of i.i.d. inputs of types close to $p$. We first applied this reduction to the prior-free compression of classical-quantum channels in the quantum reverse Shannon setting, in which the receiver does not have an input sequence, to show that there exists a prior-free classical-quantum channel simulation protocol which is asymptotically optimal. We then applied a more general technique to the case where the receiver does have an input sequence, to construct a prior-free classical-quantum channel simulation protocol which is asymptotically optimal. This results in a novel extension of the quantum reverse Shannon theorem to the prior-free and asymptotic setting with side information at the receiver. Finally, we showed that for any interactive quantum communication protocol with classical inputs, there exists a prior-free simulation protocol which is asymptotically optimal. This extends the quantum reverse Shannon theorem to the prior-free and asymptotic setting for interactive quantum communication protocols with classical inputs. As a result, we have demonstrated the equivalence between the prior-free quantum information cost and the prior-free amortized quantum communication cost of an interactive quantum communication protocol with classical inputs.

Interesting extensions to this work would be to look at proving similar results but for general quantum inputs. In the case with no side information at the receiver, this was already done in Ref.~\cite{bennett_devetak_harrow_shor_winter_2014}. For the case with side information at the receiver, the simulation of a quantum channel in the prior-free setting might allow to show that the prior-free quantum information cost of a protocol equals the worst-case input amortized quantum communication cost for general interactive protocols with quantum inputs. Furthermore, improving on the bounds we have obtained could allow our result to be more easily applicable outside the theoretical setting.

\section*{Acknowledgements}
We acknowledge the support of the Natural Sciences and Engineering Research Council of Canada (NSERC), the NSERC‐Collaborative Research and Training Experience program QSciTech, the NSERC-funded Québec Ontario Consortium on Quantum Protocols (QUORUM), and the support of the Ministère de l'Enseignement Supérieur du Québec.


\appendix
\section{de Finetti Reductions Proofs}
\subsection{Proof of Lemma \ref{lemma:symmetricRestrictedMaximization}}
\label{appendix:symmetricRestrictedMaximization}

We have that
\begin{align}
    \left\|\left(\Delta \otimes \textbf{id}_{R^n}\right)\left(\rho_{X^nR^n}\right)\right\|_1 & = \left\|\left(\Delta \otimes \textbf{id}_{R^n}\right)\left(\sum_{x^n \in T^{p, \delta}_{\mathbb{X}^n}} p(x^n) \ketbra{x^n}{x^n}_{X^n} \otimes \ketbra{x^n}{x^n}_{R^n}\right)\right\|_1 \\
    &= \left\|\sum_{x^n \in T^{p, \delta}_{\mathbb{X}^n}} p(x^n) \Delta\left(\ketbra{x^n}{x^n}_{X^n}\right) \otimes \ketbra{x^n}{x^n}_{R^n}\right\|_1 \\
    &= \sum_{x^n \in T^{p, \delta}_{\mathbb{X}^n}} p(x^n) \left\|\Delta\left(\ketbra{x^n}{x^n}_{X^n}\right)\right\|_1\\
    &= \sum_{t \in \mathcal{P}^{p,\delta}_{\mathbb{X}^n}} \sum_{x^n \in T^t_{\mathbb{X}^n}} p(x^n) \left\|\Delta\left(\ketbra{x^n}{x^n}_{X^n}\right)\right\|_1.
\end{align}
We note that, by the hypothesis of permutation covariance, in which for all permutation $\pi$, there exists a CPTP map $\mathcal{K}_\pi$ such that
\begin{alignat}{2}
    &&\Delta \pi &= \mathcal{K}_\pi \Delta\\
    &&\Delta &= \mathcal{K}_\pi \Delta \pi^{-1},
\end{alignat}
we can define for each $\pi$ the map $\mathcal{K}_\pi'$ such that
\begin{equation}
    \Delta = \mathcal{K}_\pi' \Delta \pi.
\end{equation}
We get
\begin{align}
    \left\|\Delta \left(\ketbra{x^n}{x^n}\right)\right\|_1 &= \left\|\mathcal{K}_\pi' \Delta \pi \left(\ketbra{x^n}{x^n}\right)\right\|_1\\
    &\leq \left\|\Delta \pi \left(\ketbra{x^n}{x^n}\right)\right\|_1 && \text{since $\mathcal{K}_\pi'$ is CPTP}\\
    &= \left\|\mathcal{K}_\pi \Delta \left(\ketbra{x^n}{x^n}\right)\right\|_1\\
    &\leq \left\|\Delta \left(\ketbra{x^n}{x^n}\right)\right\|_1 && \text{since $\mathcal{K}_\pi$ is CPTP.}
\end{align}
Thus, we have
\begin{equation}
\left\|\Delta \left(\ketbra{x^n}{x^n}\right)\right\|_1 \leq \left\|\Delta \pi \left(\ketbra{x^n}{x^n}\right)\right\|_1 \leq \left\|\Delta \left(\ketbra{x^n}{x^n}\right)\right\|_1
\end{equation}
and therefore, for all $\pi$, we have that
\begin{equation}\left\|\Delta \left(\ketbra{x^n}{x^n}\right)\right\|_1 = \left\|\Delta \pi \left(\ketbra{x^n}{x^n}\right)\right\|_1.
\end{equation}

We can now reconstruct a symmetric state with the property we just proved. For each type, we know the sequences of this type are all equal up to permutations. We take the weighting of each $x^n$, denoted $p_{x^n}$, and distribute it uniformly over all other states of the same type. Let $\pi_{x^n \shortarrow \overline{x}^n}$ be the permutation that maps $x^n$ to $\overline{x}^n$ (which share the same type), we have
\begin{align}
    &\left\|\left(\Delta \otimes \textbf{id}_{R^n}\right)\left(\rho_{X^nR^n}\right)\right\|_1\\
    &= \sum_{t \in \mathcal{P}^{p,\delta}_{\mathbb{X}^n}} \sum_{x^n \in T^t_{\mathbb{X}^n}} p_{x^n} \left\|\Delta \left(\ketbra{x^n}{x^n}_{X^n}\right)\right\|_1\\
    &= \sum_{t \in \mathcal{P}^{p,\delta}_{\mathbb{X}^n}} \sum_{x^n \in T^t_{\mathbb{X}^n}} p_{x^n} \sum_{\overline{x}^n \in T^t_{\mathbb{X}^n}} \frac{1}{|T^t_{\mathbb{X}^n}|} \left\|\Delta \left(\ketbra{x^n}{x^n}_{X^n}\right)\right\|_1\\
    &= \sum_{t \in \mathcal{P}^{p,\delta}_{\mathbb{X}^n}} \sum_{x^n \in T^t_{\mathbb{X}^n}} p_{x^n} \sum_{\overline{x}^n \in T^t_{\mathbb{X}^n}} \frac{1}{|T^t_{\mathbb{X}^n}|} \left\|\Delta \pi_{x^n \shortarrow \overline{x}^n} \left(\ketbra{x^n}{x^n}_{X^n}\right)\right\|_1\\
    &= \sum_{t \in \mathcal{P}^{p,\delta}_{\mathbb{X}^n}} \sum_{x^n \in T^t_{\mathbb{X}^n}} p_{x^n} \sum_{\overline{x}^n \in T^t_{\mathbb{X}^n}} \frac{1}{|T^t_{\mathbb{X}^n}|} \left\|\Delta \left(\ketbra{\overline{x}^n}{\overline{x}^n}_{X^n}\right)\right\|_1\\
    &= \sum_{t \in \mathcal{P}^{p,\delta}_{\mathbb{X}^n}} p_t \sum_{\overline{x}^n \in T^t_{\mathbb{X}^n}} \frac{1}{|T^t_{\mathbb{X}^n}|} \left\|\Delta \left(\ketbra{\overline{x}^n}{\overline{x}^n}_{X^n}\right)\right\|_1 && \text{with $p_t := \sum_{x^n \in T^t_{\mathbb{X}^n}} p_{x^n}$}\\
    &= \left\|\sum_{t \in \mathcal{P}^{p,\delta}_{\mathbb{X}^n}} p_t \sum_{\overline{x}^n \in T^t_{\mathbb{X}^n}} \frac{1}{|T^t_{\mathbb{X}^n}|} \Delta \left(\ketbra{\overline{x}^n}{\overline{x}^n}_{X^n}\right) \otimes \ketbra{\overline{x}^n}{\overline{x}^n}_{R^n}\right\|_1\\
    &= \left\|\left(\Delta \otimes \textbf{id}_{R^n}\right)\left(\sum_{t \in \mathcal{P}^{p,\delta}_{\mathbb{X}^n}} p_t \sum_{\overline{x}^n \in T^t_{\mathbb{X}^n}} \frac{1}{|T^t_{\mathbb{X}^n}|} \ketbra{\overline{x}^n}{\overline{x}^n}_{X^n} \otimes \ketbra{\overline{x}^n}{\overline{x}^n}_{R^n}\right)\right\|_1\\
    &= \left\|\left(\Delta \otimes \textbf{id}_{R^n}\right)\left(\sum_{t \in \mathcal{P}^{p,\delta}_{\mathbb{X}^n}} p_t \overline{\omega}'_{t, X^n R^n}\right)\right\|_1 && \hspace{-3.8cm}\text{where } \overline{\omega}'_{t, X^n R^n} \text{ contains a copy of the components of }\overline{\omega}_{t, X^n}\\
    &= \left\|\left(\Delta \otimes \textbf{id}_{R^n}\right)\left(\Tilde{\rho}_{X^nR^n}\right)\right\|_1, && \text{where } \Tilde{\rho}_{X^nR^n} := \sum_{t \in \mathcal{P}^{p,\delta}_{\mathbb{X}^n}} p_t \overline{\omega}'_{t, X^n R^n}
\end{align}
in which we used the fact that the sequences of the same type are orthogonal to each other and to those of other types, allowing us to bring the sums inside the trace norm. Moreover, since $\Tilde{\rho}_{X^nR^n}$ is a linear combination of symmetric basis states $\overline{\omega}'_{t, X^n R^n}$ (copied), we have that $\Tilde{\rho}_{X^n} \in \text{Sym}^n_{\text{cl}}(X)$, thus $\Tilde{\rho}_{X^nR^n}$ is a copy of a symmetric state.
\begin{flushright}
    $\blacksquare$
\end{flushright}
\subsection{Proof of Lemma \ref{lemma:TypedStateLowerBound}}
\label{appendix:proofOfTypeStateLowerBound}
We want to show that
\begin{equation}
\sigma_t^{\otimes n} \geq \frac{1}{\left|\mathcal{P}_{\mathbb{X}^n}\right|} \overline{w}_{t}.
\end{equation}

It is sufficient to show this inequality on an orthonormal set of vectors. We thus use our canonical basis $\ket{x^n}$. For basis vectors $\ket{x^n}$ with $x^n$ that is not of type $t$, the inequality is trivial since $\bra{x^n} \overline{w}_{t} \ket{x^n} = 0$. It is therefore sufficient to consider basis vectors for which $x^n$ is of type $t$. Thus, let $x^n$ be of type $t$, we have that
\begin{alignat}{2}
    &&\bra{x^n} \sigma_t^{\otimes n} \ket{x^n} &\geq \frac{1}{\left|\mathcal{P}_{\mathbb{X}^n}\right|} \bra{x^n} \overline{w}_{t} \ket{x^n}\\
    \iff && t^{\otimes n}(x^n) &\geq \frac{1}{\left|\mathcal{P}_{\mathbb{X}^n}\right|} \frac{1}{|T^t_{\mathbb{X}^n}|}\\
    \iff &&\prod_{k=1}^d \left(\frac{N(k|t)}{n}\right)^{N(k|t)} &\geq \frac{1}{\left|\mathcal{P}_{\mathbb{X}^n}\right|} \left(\frac{1}{n!\prod_{m=1}^d \frac{1}{N(m|t)!}}\right)\\
    \iff &&\left|\mathcal{P}_{\mathbb{X}^n}\right| \prod_{k=1}^d \left(\frac{N(k|t)}{n}\right)^{N(k|t)} n! \prod_{m=1}^d \frac{1}{N(m|t)!} &\geq 1\\
    \iff &&\left|\mathcal{P}_{\mathbb{X}^n}\right|t^{\otimes n}(T_t^{X^n}) &\geq 1.
\end{alignat}

The last inequality holds because the probability of having a sequence of type $t$ for a distribution $t$ is greater than the probability of having any other type $t'$ for the same distribution $t$. Therefore, multiplying the probability by the number of types gives a value greater than 1.
\begin{flushright}
    $\blacksquare$
\end{flushright}
\subsection{Proof of Lemma \ref{lemma:deFinettiReductionCPTNIMap}}
\label{appendix:ProofOfCPTNIMap}
By Lemma \ref{lemma:TypedStateLowerBound},
\begin{equation}
\sigma_t^{\otimes n} \geq \frac{1}{\left|\mathcal{P}_{\mathbb{X}^n}\right|} \overline{w}_{t} \geq \frac{1}{\left|\mathcal{P}_{\mathbb{X}^n}\right|} p(t) \overline{w}_{t}
\end{equation}
for some probability distribution $p(t)$. Since for both $\sigma_t^{\otimes n}$ and $\overline{w}_{t}$, every state of type $t$ has the same probability, we have $\Pi_t\sigma_t^{\otimes n} \Pi_t \propto \overline{w}_{t}$ and there exists a $r_t \geq 0$ such that
\begin{equation}
\Pi_t\sigma_t^{\otimes n} \Pi_t = \left(\frac{1}{\left|\mathcal{P}_{\mathbb{X}^n}\right|} + r_t\right) p(t) \overline{w}_{t}.
\end{equation}
Let us define
\begin{equation}
r_t' := \frac{1}{1 + \left|\mathcal{P}_{\mathbb{X}^n}\right| r_t},
\end{equation}
which admits $r_t' \in [0,1]$ and
\begin{align}
    r_t'\left(\frac{1}{\left|\mathcal{P}_{\mathbb{X}^n}\right|} + r_t\right) & = \frac{\left(\frac{1}{\left|\mathcal{P}_{\mathbb{X}^n}\right|} + r_t\right)}{1 + \left|\mathcal{P}_{\mathbb{X}^n}\right| r_t}\\
    &= \frac{\left(\frac{1}{\left|\mathcal{P}_{\mathbb{X}^n}\right|} + r_t\right)}{\left|\mathcal{P}_{\mathbb{X}^n}\right|\left(\frac{1}{\left|\mathcal{P}_{\mathbb{X}^n}\right|} + r_t\right)}\\
    &= \frac{1}{\left|\mathcal{P}_{\mathbb{X}^n}\right|}.
\end{align}

Projecting the state of the system $R^n$ of $\sigma^{\otimes n}_{t,XR}$ onto the subspace of a type $t$ also projects the states of the system $X^n$ onto that subspace, since $R$ contains a copy of the components of the state on $X$, that is
\begin{equation}
\left(I_{X^n} \otimes \Pi_{t, R^n}\right)\left(\sigma_{t,X^nR^n}\right)\left(I_{X^n} \otimes \Pi_{t, R^n}\right) = \left(\Pi_{t, X^n} \otimes \Pi_{t, R^n}\right)\left(\sigma_{t,X^nR^n}\right)\left(\Pi_{t, X^n} \otimes \Pi_{t, R^n}\right).
\end{equation}
We define the map $\mathcal{T}_1$ by Kraus operators
\begin{equation}
    \left\{\sqrt{r_t'}\Pi_{t, R^n} \otimes \ketbra{t}{t}_{J}\right\}_{t \in \mathcal{P}_{\mathbb{X}^n}^{p,\delta}}.
\end{equation}
We have that
\begin{align}
    \sum_{t \in \mathcal{P}_{\mathbb{X}^n}^{p,\delta}} K_t^\dagger K_t & = \sum_{t \in \mathcal{P}_{\mathbb{X}^n}^{p,\delta}} \left(\sqrt{r_t'}\Pi_{t, R^n} \otimes \ketbra{t}{t}_{J}\right)^\dagger \left(\sqrt{r_t'}\Pi_{t, R^n} \otimes \ketbra{t}{t}_{J}\right)\\
    & = \sum_{t \in \mathcal{P}_{\mathbb{X}^n}^{p,\delta}}\Pi_{t, R^n}^\dagger r_t' \Pi_{t, R^n} \otimes \ketbra{t}{t}_{J}\\
    & = \sum_{t \in \mathcal{P}_{\mathbb{X}^n}^{p,\delta}}\Pi_{t, R^n} \otimes r_t'\ketbra{t}{t}_{J}\\
    &\leq\sum_{t \in \mathcal{P}_{\mathbb{X}^n}^{p,\delta}}I_{R^n} \otimes r_t'\ketbra{t}{t}_{J}\\
    &= I_{R^n} \otimes \sum_{t \in \mathcal{P}_{\mathbb{X}^n}^{p,\delta}} r_t'\ketbra{t}{t}_{J}\\
    &\leq I_{R^n} \otimes \sum_{t \in \mathcal{P}_{\mathbb{X}^n}^{p,\delta}} \ketbra{t}{t}_{J}\\
    &= I_{R^n} \otimes I_{J}\\
    &= I_{R^n J}.
\end{align}
Therefore, the map $\mathcal{T}_1$ is CPTNI. We have that
\begin{align}
    &\left(\textbf{id}_{X^n} \otimes \mathcal{T}_1\right)\left(\tau^{p,\delta}_{X^nR^nJ}\right)\\
    &= \sum_{t \in \mathcal{P}_{\mathbb{X}^n}^{p,\delta}} \left(\sqrt{r_t'}\Pi_{t, R^n} \otimes \ketbra{t}{t}_{J}\right) \left(\frac{1}{\left|\mathcal{P}_{\mathbb{X}^n}^{p,\delta}\right|} \sum_{t' \in \mathcal{P}_{\mathbb{X}^n}^{p,\delta}} \sigma_{t', X R}^{\otimes n} \otimes \ketbra{t'}{t'}_{J}\right) \left(\sqrt{r_t'}\Pi_{t, R^n} \otimes \ketbra{t}{t}_{J}\right)\\
    &= \frac{1}{\left|\mathcal{P}_{\mathbb{X}^n}^{p,\delta}\right|} \sum_{t \in \mathcal{P}_{\mathbb{X}^n}^{p,\delta}} r_t' \sum_{t' \in \mathcal{P}_{\mathbb{X}^n}^{p,\delta}} \Pi_{t, R^n} \sigma_{t', X R}^{\otimes n} \Pi_{t, R^n} \otimes \ketbra{t}{t}\!\ketbra{t'}{t'}\!\ketbra{t}{t}_{J}\\
    &= \frac{1}{\left|\mathcal{P}_{\mathbb{X}^n}^{p,\delta}\right|} \sum_{t \in \mathcal{P}_{\mathbb{X}^n}^{p,\delta}} r_t' \Pi_{t, R^n} \sigma_{t, X R}^{\otimes n} \Pi_{t, R^n} \otimes \ketbra{t}{t}_{J}\\
    &= \frac{1}{\left|\mathcal{P}_{\mathbb{X}^n}^{p,\delta}\right|} \sum_{t \in \mathcal{P}_{\mathbb{X}^n}^{p,\delta}} r_t' \left(\left(\frac{1}{\left|\mathcal{P}_{\mathbb{X}^n}\right|} + r_t\right) p(t) \overline{w}_{t, X^n R^n}\right) \otimes \ketbra{t}{t}_{J}\\
    &= \frac{1}{\left|\mathcal{P}_{\mathbb{X}^n}^{p,\delta}\right|} \sum_{t \in \mathcal{P}_{\mathbb{X}^n}^{p,\delta}} \frac{1}{\left|\mathcal{P}_{\mathbb{X}^n}\right|} p(t) \overline{w}_{t, X^n R^n}\otimes \ketbra{t}{t}_{J}\\
    &= \frac{1}{\left|\mathcal{P}_{\mathbb{X}^n}^{p,\delta}\right|\left|\mathcal{P}_{\mathbb{X}^n}\right|} \sum_{t \in \mathcal{P}_{\mathbb{X}^n}^{p,\delta}} p(t) \overline{w}_{t, X^n R^n}\otimes \ketbra{t}{t}_{J}.
\end{align}
Thus, we have
\begin{equation}\tilde{\rho}_{X^nR^nJ} = \left|\mathcal{P}_{\mathbb{X}^n}^{p,\delta}\right|\left|\mathcal{P}_{\mathbb{X}^n}\right|\left(\textbf{id}_{X^n} \otimes \mathcal{T}_1\right)\left(\tau^{p,\delta}_{X^nR^nJ}\right).
\end{equation}
We finally trace out the system $J$ with a CPTP map $\mathcal{T}_2$ to obtain $\mathcal{T} = \mathcal{T}_2 \circ \mathcal{T}_1$, which is CPTNI. Thus, we get
\begin{equation}
\tilde{\rho}_{X^nR^n} = \left|\mathcal{P}_{\mathbb{X}^n}^{p,\delta}\right|\left|\mathcal{P}_{\mathbb{X}^n}\right| \left(\textbf{id}_{X^n} \otimes \mathcal{T}\right)\left(\tau^{p,\delta}_{X^nR^nJ}\right).
\end{equation}
\begin{flushright}
    $\blacksquare$
\end{flushright}

\subsection{Proof of Theorem \ref{thm:restrictedClassicalQuantumDeFinettiReduction}}
\label{appendix:ProofClassicalQuantumdeFinettiReduction}
Let $\rho_{X^nR^n}$ be any state in the maximization of the $\delta$-typical classical-quantum channel norm. By Lemma \ref{lemma:symmetricRestrictedMaximization}, there exists a classical state $\Tilde{\rho}_{X^n} \in \mathcal{T}^{p,\delta}_{X^n} \cap ~\text{Sym}^n_{\text{cl}}(X)$ with a copy $\Tilde{\rho}_{X^nR^n}$ on $R^n$ such that
\begin{equation}
\left\|\left(\Delta \otimes \textbf{id}_{R^n}\right)\left(\rho_{X^n R^n}\right)\right\|_1 = \left\|\left(\Delta \otimes \textbf{id}_{R^n}\right)\left(\Tilde{\rho}_{X^n R^n}\right)\right\|_1.
\end{equation}
Furthermore, by Lemma \ref{lemma:deFinettiReductionCPTNIMap}, there exists a CPTNI map $\mathcal{T}$ such that
\begin{equation}
\Tilde{\rho}_{X^n R^n} = \left|\mathcal{P}^{p,\delta}_{\mathbb{X}^n}\right|\left|\mathcal{P}_{\mathbb{X}^n}\right| \left(\textbf{id}_{X^n} \otimes \mathcal{T}\right)\left(\tau^{p,\delta}_{X^nR^nJ}\right).
\end{equation}
We thus have
\begin{align}
    \left\|\left(\Delta \otimes \textbf{id}_{R^n}\right)\left(\Tilde{\rho}_{X^n R^n}\right)\right\|_1 & = \left\|\left(\Delta \otimes \textbf{id}_{R^nJ}\right)\left(\left|\mathcal{P}^{p,\delta}_{\mathbb{X}^n}\right|\left|\mathcal{P}_{\mathbb{X}^n}\right| \left(\textbf{id}_{X^n} \otimes \mathcal{T}\right)\left(\tau^{p,\delta}_{X^nR^nJ}\right)\right)\right\|_1\\
    & = \left|\mathcal{P}^{p,\delta}_{\mathbb{X}^n}\right|\left|\mathcal{P}_{\mathbb{X}^n}\right| \left\|\left(\Delta \otimes \mathcal{T}\right)\left(\tau^{p,\delta}_{X^nR^nJ}\right)\right\|_1\\
    &\leq \left|\mathcal{P}^{p,\delta}_{\mathbb{X}^n}\right|\left|\mathcal{P}_{\mathbb{X}^n}\right| \left\|\left(\Delta \otimes \textbf{id}_{R^nJ}\right)\left(\tau^{p,\delta}_{X^nR^nJ}\right)\right\|_1.
\end{align}
By choosing the reference system $R' := R^nJ$, we thus have that
\begin{equation}
\left\|\left(\Delta \otimes \textbf{id}_{R}\right)\left(\rho_{X^n R^n }\right)\right\|_1 \leq \left|\mathcal{P}^{p,\delta}_{\mathbb{X}^n}\right|\left|\mathcal{P}_{\mathbb{X}^n}\right| \left\|\left(\Delta \otimes \textbf{id}_{R'}\right)\left(\tau^{p,\delta}_{X^nR' }\right)\right\|_1.
\end{equation}
Since this statement is true for all $\rho_{X^n R^n}$ in the maximization of the $\delta$-typical classical-quantum channel norm, we can conclude that
\begin{equation}
\left\|\Delta\right\|^{p,\delta}_{\diamond, cl} \leq \left|\mathcal{P}^{p,\delta}_{\mathbb{X}^n}\right|\left|\mathcal{P}_{\mathbb{X}^n}\right| \left\|\left(\Delta \otimes \textbf{id}_{R'}\right)\left(\tau^{p,\delta}_{X^nR' }\right)\right\|_1.
\end{equation}
\begin{flushright}
    $\blacksquare$
\end{flushright}

\section{Classical-Quantum Channel Simulation Proofs}
\subsection{Proof of Lemma \ref{lemma:deltaTypicalTraceDistanceBound}}
\label{appendix:typicalProjectionsBounds}
In this section we give upper bounds on the error made when approximating a de Finetti state by its $\delta$-typical projected version. Let us start by proving the following lemma.
\begin{lemma}
\label{prop:operatorTraceInequality}
    Let $\left\{\Pi_i\right\}_{i=1}^n$ be a collection of linear operators $\Pi_i \in \mathcal{L}(\mathcal{H})$ such that $0 \leq \Pi_i \leq I$. Let $\rho_{\mathcal{H}_1 \ldots \mathcal{H}_n}$ be a state on $\mathcal{H} = \bigotimes_{i=1}^n \mathcal{H}_i$, then the following inequality holds
    \begin{equation}
    \text{Tr}\left(\left(\bigotimes_{i=1}^n \Pi_i\right)\rho_{\mathcal{H}_1 \ldots \mathcal{H}_n}\right) \geq 1 - n + \sum_{i=1}^n \text{Tr}\left(\Pi_i\rho_{\mathcal{H}_i}\right)
    \end{equation}
\end{lemma}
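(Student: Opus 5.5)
The plan is a union-bound argument carried out at the operator level, by induction on $n$. I will prove the operator inequality
\begin{equation}
\bigotimes_{i=1}^n \Pi_i \;\geq\; \sum_{i=1}^n \left(\Pi_i \otimes I_{\overline{i}}\right) - (n-1)\, I,
\end{equation}
where $\Pi_i \otimes I_{\overline{i}}$ denotes $\Pi_i$ acting on $\mathcal{H}_i$ tensored with the identity on all the other factors. Taking the trace against $\rho_{\mathcal{H}_1\ldots\mathcal{H}_n}$ then gives the claim directly, because $\text{Tr}\left((\Pi_i\otimes I_{\overline{i}})\rho\right) = \text{Tr}(\Pi_i\rho_{\mathcal{H}_i})$ by the definition of the partial trace, and $\text{Tr}(\rho)=1$.

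The heart of the argument is the two-factor case. For $0\leq A\leq I$ on $\mathcal{K}_1$ and $0\leq B\leq I$ on $\mathcal{K}_2$, I want
\begin{equation}
A\otimes B \geq A\otimes I + I\otimes B - I\otimes I .
\end{equation}
The difference between the two sides is
\begin{equation}
A\otimes B - A\otimes I - I\otimes B + I\otimes I = (I-A)\otimes(I-B).
\end{equation}
This is a tensor product of two positive semidefinite operators, so it is positive semidefinite, which proves the two-factor inequality.

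For the inductive step, set $A = \bigotimes_{i=1}^{n-1}\Pi_i$. It satisfies $0\leq A\leq I$, since a tensor product of operators with spectra in $[0,1]$ again has spectrum in $[0,1]$. Set $B=\Pi_n$. Applying the two-factor bound gives
\begin{equation}
\bigotimes_{i=1}^n \Pi_i \geq A\otimes I + I\otimes \Pi_n - I .
\end{equation}
The induction hypothesis says $A \geq \sum_{i<n}\Pi_i\otimes I - (n-2)I$ on the first $n-1$ factors. This inequality survives tensoring with $I_{\mathcal{H}_n}$, since tensoring with the identity preserves the positive semidefinite order. Substituting it in yields the claimed $n$-factor inequality. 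The base case $n=1$ is an equality.

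I expect no real obstacle; the only subtle point is keeping the operator order under tensor products straight. Two facts are needed there, and both follow from positivity of tensor products of positive semidefinite operators: $X\geq Y$ implies $X\otimes I\geq Y\otimes I$, and $0\leq A\leq I$ is inherited by tensor products. Note that the $\Pi_i$ need not be projectors, only $0\leq\Pi_i\leq I$, and the argument uses nothing more.
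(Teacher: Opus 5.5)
Your proof is correct and is essentially the paper's argument. The paper also proceeds by induction, proving the two-factor case by expanding $\text{Tr}\left(\left((I-\Pi_1^{\perp})\otimes(I-\Pi_2^{\perp})\right)\rho\right)$ and discarding the nonnegative $\Pi_1^{\perp}\otimes\Pi_2^{\perp}$ term (exactly your $(I-A)\otimes(I-B)\geq 0$), then grouping the first $k$ factors and invoking the hypothesis on the reduced state. Your operator-level version is slightly cleaner: it makes explicit the fact $0\leq\bigotimes_{i<n}\Pi_i\leq I$ that the paper uses implicitly, and it never needs $I-\Pi_i$ to be a projector (the paper calls it an ``orthogonal projection'' although only $0\leq\Pi_i\leq I$ is assumed).
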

\begin{proof}
    We prove this by induction on $n$. For $n=2$, let $\Pi_i^{\perp}$ be the orthogonal projection to $\Pi_i$. We have that
    \begin{align}
        &\text{Tr}\left(\left(\Pi_1 \otimes \Pi_2\right)\rho_{\mathcal{H}_1\mathcal{H}_2}\right)\\
        &= \text{Tr}\left(\left(\left(\mathbb{I} - \Pi_1^{\perp}\right) \otimes \left(\mathbb{I} - \Pi_2^{\perp}\right)\right)\rho_{\mathcal{H}_1\mathcal{H}_2}\right)\\
        &= \text{Tr}\left(\rho_{\mathcal{H}_1\mathcal{H}_2}\right) - \text{Tr}\left(\left(\Pi_1^{\perp} \otimes \mathbb{I}\right)\rho_{\mathcal{H}_1\mathcal{H}_2}\right) - \text{Tr}\left(\left(\mathbb{I} \otimes \Pi_2^{\perp}\right)\rho_{\mathcal{H}_1\mathcal{H}_2}\right) + \text{Tr}\left(\left(\Pi_1^{\perp} \otimes \Pi_2^{\perp}\right)\rho_{\mathcal{H}_1\mathcal{H}_2}\right)\\
        &\geq 1 - \text{Tr}\left(\left(\Pi_1^{\perp} \otimes \mathbb{I}\right)\rho_{\mathcal{H}_1\mathcal{H}_2}\right) - \text{Tr}\left(\left(\mathbb{I} \otimes \Pi_2^{\perp}\right)\rho_{\mathcal{H}_1\mathcal{H}_2}\right)\\
        &= 1 - \text{Tr}\left(\Pi_1^{\perp}\rho_{\mathcal{H}_1}\right) - \text{Tr}\left(\Pi_2^{\perp}\rho_{\mathcal{H}_2}\right)\\
        &= 1 - \left(1 - \text{Tr}\left(\Pi_1\rho_{\mathcal{H}_1}\right)\right) - \left(1 - \text{Tr}\left(\Pi_2\rho_{\mathcal{H}_2}\right)\right)\\
        &= \text{Tr}\left(\Pi_1\rho_{\mathcal{H}_1}\right) + \text{Tr}\left(\Pi_2\rho_{\mathcal{H}_2}\right) - 1\\
        &= 1 - 2 + \text{Tr}\left(\Pi_1\rho_{\mathcal{H}_1}\right) + \text{Tr}\left(\Pi_2\rho_{\mathcal{H}_2}\right)\\
        &= 1 - n + \sum_{i=1}^n \text{Tr}\left(\Pi_i\rho_{\mathcal{H}_i}\right).
    \end{align}
    We assume that the property is true for $n=k$ for some $k \in \mathbb{N}_{\geq 2}$. We show that it is true for $n+1$. We have that
    \begin{align}
        &\text{Tr}\left(\left(\bigotimes_{i=1}^{k+1} \Pi_i\right)\rho_{\mathcal{H}_1 \ldots \mathcal{H}_{k+1}}\right)\\
        &\geq 1 - 2 + \text{Tr}\left(\bigotimes_{i=1}^{k} \Pi_i\rho_{\mathcal{H}_i}\right) + \text{Tr}\left(\Pi_{k+1}\rho_{\mathcal{H}_{k+1}}\right) && \text{by the case $n=2$}\\
        &\geq 1 - 2 + \left(1 - n + \sum_{i=1}^{k} \text{Tr}\left(\Pi_i\rho_{\mathcal{H}_i}\right)\right) + \text{Tr}\left(\Pi_{k+1}\rho_{\mathcal{H}_{k+1}}\right) && \text{by induction hypothesis}\\
        &= 1 - (k + 1) + \sum_{i=1}^{k+1} \text{Tr}\left(\Pi_i\rho_{\mathcal{H}_i}\right).
    \end{align}
\end{proof}

With this lemma, we can now bound the approximation made in trace distance. First, we lower bound the probability of measuring the $\delta$-typical projector on the de Finetti state to then use the gentle measurement lemma to upper bound the trace distance. We have that
\begin{align}
    &\text{Tr}\left(\left(\sum_{t \in \mathcal{P}_{\mathbb{X}^n}^{\tilde{t},\delta}} \left(\bigotimes_{i=1}^k \Pi_{A_i^n}^{t,\delta}\right) \otimes \ketbra{t}{t}_{R_T}\right) \tau_{R_T A_1^n \ldots A_k^n}\right)\\
    &= \text{Tr}\left(\left(\sum_{t \in \mathcal{P}_{\mathbb{X}^n}^{\tilde{t},\delta}} \left(\bigotimes_{i=1}^k \Pi_{A_i^n}^{t,\delta}\right) \otimes \ketbra{t}{t}_{R_T}\right)\left(\frac{1}{\left|\mathcal{P}_{\mathbb{X}^n}^{\tilde{t},\delta}\right|} \sum_{t',t'' \in \mathcal{P}_{\mathbb{X}^n}^{\tilde{t},\delta}} \ketbra{\sigma_{t'}}{\sigma_{t''}}^{\otimes n}_{} \otimes \ketbra{t'}{t''}_{R_T}\right)\right)\\
    &= \frac{1}{\left|\mathcal{P}_{\mathbb{X}^n}^{\tilde{t},\delta}\right|}\sum_{t,t',t'' \in \mathcal{P}_{\mathbb{X}^n}^{\tilde{t},\delta}} \text{Tr}\left(\left(\bigotimes_{i=1}^k \Pi_{A_i^n}^{t,\delta}\right) \otimes \ketbra{t}{t}_{R_T}\left(\ketbra{\sigma_{t'}}{\sigma_{t''}}^{\otimes n}_{} \otimes \ketbra{t'}{t''}_{R_T}\right)\right)\\
    &= \frac{1}{\left|\mathcal{P}_{\mathbb{X}^n}^{\tilde{t},\delta}\right|}\sum_{t,t',t'' \in \mathcal{P}_{\mathbb{X}^n}^{\tilde{t},\delta}} \text{Tr}\left(\left(\bigotimes_{i=1}^k \Pi_{A_i^n}^{t,\delta}\right)\ketbra{\sigma_{t'}}{\sigma_{t''}}^{\otimes n}_{}\right) \text{Tr}\left(\ketbra{t}{t}_{R_T}\ketbra{t'}{t''}_{R_T}\right)\\
    &= \frac{1}{\left|\mathcal{P}_{\mathbb{X}^n}^{\tilde{t},\delta}\right|}\sum_{t,t',t'' \in \mathcal{P}_{\mathbb{X}^n}^{\tilde{t},\delta}} \text{Tr}\left(\left(\bigotimes_{i=1}^k \Pi_{A_i^n}^{t,\delta}\right)\ketbra{\sigma_{t'}}{\sigma_{t''}}^{\otimes n}_{}\right) \delta_{t,t'}\delta_{t'',t}\\
    &= \frac{1}{\left|\mathcal{P}_{\mathbb{X}^n}^{\tilde{t},\delta}\right|}\sum_{t \in \mathcal{P}_{\mathbb{X}^n}^{\tilde{t},\delta}} \text{Tr}\left(\left(\bigotimes_{i=1}^k \Pi_{A_i^n}^{t,\delta}\right)\ketbra{\sigma_{t}}{\sigma_{t}}^{\otimes n}_{A_1 \ldots A_k}\right)\\
    &\geq \frac{1}{\left|\mathcal{P}_{\mathbb{X}^n}^{\tilde{t},\delta}\right|}\sum_{t \in \mathcal{P}_{\mathbb{X}^n}^{\tilde{t},\delta}} \left(1 - k + \sum_{i=1}^k \text{Tr}\left(\Pi_{A_i^n}^{t,\delta} \sigma_{t,A_i}^{\otimes n}\right)\right)\\
    &= \frac{1}{\left|\mathcal{P}_{\mathbb{X}^n}^{\tilde{t},\delta}\right|}\sum_{t \in \mathcal{P}_{\mathbb{X}^n}^{\tilde{t},\delta}} \left(1 - \sum_{i=1}^k \text{Tr}\left(\Pi_{A_i^n}^{t,\delta,\perp} \sigma_{t,A_i}^{\otimes n}\right)\right)\\
    &\geq \frac{1}{\left|\mathcal{P}_{\mathbb{X}^n}^{\tilde{t},\delta}\right|}\sum_{t \in \mathcal{P}_{\mathbb{X}^n}^{\tilde{t},\delta}} \left(1 - \sum_{i=1}^k |\mathcal{P}_{A_i^n}|2^{-n\frac{\delta^2}{2\ln(2)}}\right)\\
    &= 1 - 2^{-n\frac{\delta^2}{2\ln(2)}}\sum_{i=1}^k |\mathcal{P}_{A_i^n}|.
\end{align}
Now, by the gentle measurement lemma, we have that
\begin{equation}
    \left\|\ket{\Tilde{\tau}^{\delta,\tilde{t}}}_{R_TA_1^n \ldots A_k^n} - \ket{\tau^{\delta,\tilde{t}}}_{R_TA_1^n \ldots A_k^n}\right\|_1 \leq 2^{-n\frac{\delta^2}{4\ln(2)} + 1}\sqrt{\sum_{i=1}^k |\mathcal{P}_{A_i^n}|}.
\end{equation}
This concludes the proof of Lemma \ref{lemma:deltaTypicalTraceDistanceBound}.

\subsection{Proof of Lemma \ref{lemma:purityBoundDeltaTypicalDeFinetti}}
\label{appendix:purityBoundTypicalDeFinetti}
In this section, we prove an upper bound on the purity of the $\delta$-typical de Finetti state that is projected on the $\delta$-typical subspace. We first give a bound on the purity of a state $\sigma$ projected on its $\delta$-typical subspace.
\begin{proposition}
    \label{prop:purityBoundProjectedState}
    Let $\sigma_A$ be a density operator with eigenvalue distribution $p$ and $\Pi_{A^n}^{\sigma, \delta}$ be the $\delta$-typical projector of $\sigma_A$ on system $A^n$. Define the projected state $\tilde{\sigma}_{A^n} := \Pi_{A^n}^{\sigma, \delta} \sigma_A^{\otimes n} \Pi_{A^n}^{\sigma, \delta}$. Then, we have the following upper bound on the purity of the state $\tilde{\sigma}_{A^n}$:
    \begin{equation}
        \text{Tr}\left(\tilde{\sigma}_{A^n}^2\right) \leq \exp\left(-n\left(H(A)_\sigma - \tilde{\omega}_{1,1}(\delta,|A|)\right)\right),
    \end{equation}
    in which $H(A)_\sigma$ is the von Neumann entropy of the state $\sigma_A$ and $\tilde{\omega}_{1,1}(\delta,|A|)$ is defined in \eqref{eq:tilde-omega}.
\end{proposition}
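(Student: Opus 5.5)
The plan is to diagonalize everything in the eigenbasis of $\sigma_A$ and reduce the purity to a sum over typical sequences. Fix a spectral decomposition $\sigma_A = \sum_{a} p(a)\ketbra{a}{a}$, so that $\sigma_A^{\otimes n} = \sum_{a^n} p^{\otimes n}(a^n)\ketbra{a^n}{a^n}$. The projector $\Pi_{A^n}^{\sigma,\delta}$ is diagonal in the same product basis, by \eqref{eq:delta-typical-projector}. Hence
\begin{equation*}
\tilde{\sigma}_{A^n} = \sum_{a^n \in T^{\sigma,\delta}_{A^n}} p^{\otimes n}(a^n)\ketbra{a^n}{a^n},
\qquad
\text{Tr}\left(\tilde{\sigma}_{A^n}^2\right) = \sum_{a^n \in T^{\sigma,\delta}_{A^n}} \left(p^{\otimes n}(a^n)\right)^2 .
\end{equation*}
This reduces the problem to a purely classical statement about sequences $\delta$-typical to $p$.

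Next I bound each term by $\max_{a^n \in T^{\sigma,\delta}_{A^n}} p^{\otimes n}(a^n)$ times $p^{\otimes n}(a^n)$. Summing gives
\begin{equation*}
\text{Tr}(\tilde{\sigma}_{A^n}^2) \leq \max_{a^n \in T^{\sigma,\delta}_{A^n}} p^{\otimes n}(a^n)\cdot p^{\otimes n}(T^{\sigma,\delta}_{A^n}) \leq \max_{a^n \in T^{\sigma,\delta}_{A^n}} p^{\otimes n}(a^n).
\end{equation*}
For $a^n$ of type $t$, $p^{\otimes n}(a^n) = 2^{-n(H(t) + D(t\|p))} \leq 2^{-nH(t)}$; this is exactly \eqref{eq:type-probability-upper-bound}. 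Since $\|t-p\|_1\leq\delta$, the Fannes–Audenaert inequality \eqref{eq:fannes-audenaert-inequality} gives $H(t) \geq H(p) - \omega(\delta,|A|)$. This yields $\text{Tr}(\tilde{\sigma}_{A^n}^2) \leq \exp{-n(H(A)_\sigma - \omega(\delta,|A|))}$. That is already stronger than the claim, because $\tilde{\omega}_{1,1}(\delta,|A|) \geq \omega(\delta,|A|)$.

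An alternative route, closer to the paper's presentation, uses $\text{Tr}(\tilde{\sigma}^2) \leq \|\tilde\sigma\|_\infty \text{Tr}(\tilde\sigma)$. One then argues via the sizes of type classes \eqref{eq:type-class-size} and the typical-set size bound \eqref{eq:typical-sequences-size}, which naturally produces the extra $\frac{|A|\log(n+1)}{n}$ term in $\tilde{\omega}_{1,1}$. Either way the result follows.

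The only step needing care is the normalization convention for the total variation distance. The paper defines $\|t-p\|_1=\sum|t-p|$, whereas Fannes–Audenaert is stated with $\|\rho-\sigma\|_1\le\delta$ and the factor $\delta/2$ inside $\omega$. I must make sure the $\delta$-typicality condition \eqref{eq:delta-typical-sequence} matches that hypothesis; it does, since both use the unnormalized $\ell_1$ norm. The bound $p^{\otimes n}(a^n)\le 2^{-nH(t)}$ also deserves a one-line justification, namely the nonnegativity of the relative entropy.
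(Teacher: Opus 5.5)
Your proof is correct, and it takes a slightly different and tighter route than the paper. Both arguments start the same way: diagonalize, and write the purity as $\sum_{a^n\in T^{p,\delta}_{A^n}} p^{\otimes n}(a^n)^2$.

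\textbf{The paper's route.} It splits this sum by type class. On each $T^t_{A^n}$ it bounds every term by $2^{-2nH(t)}$ and multiplies by $|T^t_{A^n}|\le 2^{nH(t)}$ from \eqref{eq:type-class-size}. This gives a contribution of at most $2^{-nH(t)}$ per typical type. It then sums over at most $|\mathcal{P}_{A^n}|\le (n+1)^{|A|}$ types and applies continuity of entropy. That sum over types is exactly where the $\frac{|A|\log(n+1)}{n}$ term in $\tilde{\omega}_{1,1}$ comes from.

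\textbf{Your route.} You use $\sum_i q_i^2\le \max_i q_i\sum_i q_i\le \max_i q_i$, so the normalization of $p^{\otimes n}$ does the counting. You therefore need neither the type-class size bound nor the union over types, and you get the strictly stronger bound $2^{-n(H(A)_\sigma-\omega(\delta,|A|))}$. The paper's version has only the cosmetic advantage of matching the uniform $\tilde{\omega}_{k,m}$ bookkeeping used elsewhere. The Fannes--Audenaert monotonicity caveat (needing $\delta/2\le 1-1/|A|$) affects both arguments equally.

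\textbf{Drop the \emph{alternative route} remark.} As written it is muddled. The inequality $\mathrm{Tr}(\tilde\sigma^2)\le\|\tilde\sigma\|_\infty\mathrm{Tr}(\tilde\sigma)$ is just your main argument again and needs no type-class sizes. If instead one bounds the purity by $|T^{p,\delta}_{A^n}|\max_{a^n} p^{\otimes n}(a^n)^2$ using \eqref{eq:typical-sequences-size}, the continuity loss $\omega(\delta,|A|)$ enters three times: once from the set size and twice from the squared maximum. That yields only $\tilde{\omega}_{3,1}(\delta,|A|)$ in place of $\tilde{\omega}_{1,1}(\delta,|A|)$, which does not prove the statement as written. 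So it is not true that ``either way the result follows.''
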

\begin{proof}
    First, note that after projection, only the $\delta$-typical eigenvalues of $\sigma_A^{\otimes n}$ remain in the state $\tilde{\sigma}_{A^n}$, which are then squared. Therefore, the trace on $\tilde{\sigma}_{A^n}^2$ is the sum of the squared $\delta$-typical eigenvalues of $\sigma_A^{\otimes n}$, i.e., the sum of the squared probabilities of measuring a $\delta$-typical sequence according to the distribution $p^{\otimes n}$, that is
    \begin{align}
        \text{Tr}\left(\tilde{\sigma}_{A^n}^2\right) &= \sum_{a^n \in T_{A^n}^{p,\delta}} p^{\otimes n}(a^n)^2\\
        &= \sum_{t \in \mathcal{P}_{A^n}^{p,\delta}} \sum_{a^n \in T_{A^n}^{t}} p^{\otimes n}(a^n)^2.
    \end{align}
    Using the upper bound \eqref{eq:type-probability-upper-bound} on the probability of a sequence of type $t$ being generated by the distribution $p$, and the upper bound on the size of the type class $T_{A^n}^t$ \eqref{eq:type-class-size}, we get
    \begin{align}
        \sum_{a^n \in T_{A^n}^{t}} p^{\otimes n}(a^n)^2 &\leq \sum_{a^n \in T_{A^n}^{t}} \left(\frac{1}{2^{n(H(A)_t)}}\right)^2\\
        &\leq |T_{A^n}^{t}| \frac{1}{2^{2n(H(A)_t)}}\\
        &\leq 2^{nH(A)_t} \frac{1}{2^{2n(H(A)_t)}}\\
        &= 2^{-nH(A)_t}.
    \end{align}
    Therefore, we have
    \begin{align}
        \sum_{t \in \mathcal{P}_{A^n}^{p,\delta}} \sum_{a^n \in T_{A^n}^{t}} p^{\otimes n}(a^n)^2 &\leq \sum_{t \in \mathcal{P}_{A^n}^{p,\delta}} 2^{-nH(A)_t}\\
        &\leq \sum_{t \in \mathcal{P}_{A^n}^{p,\delta}} \exp{-n\min_{t \in \mathcal{P}_{A^n}^{p,\delta}} H(A)_t}\\
        &\leq |\mathcal{P}_{A^n}| \exp{-n(H(A)_p - \omega(\delta,|A|))}\\
        &\leq \exp{-n(H(A)_p - \tilde{\omega}_{1,1}(\delta,|A|))},
    \end{align}
    in which $\omega(\delta,|A|)$ and $\tilde{\omega}_{1,1}(\delta,|A|)$ are defined in \eqref{eq:omega-delta-d} and \eqref{eq:tilde-omega} respectively. Since the shannon entropy $H(A)_p$ of the distribution $p$ is equal to the von Neumann entropy $H(A)_\sigma$ of the state $\sigma_A$, this concludes the proof.
\end{proof}
Now, we prove a short lemma that we use in our computation of the purity.
\begin{lemma}
\label{lemma:purityInequality}
    Let $\sigma_t$ and $\sigma_{t'}$ be two density matrices, then
    \begin{equation}
        \label{eq:purity_inequality}
        \sigma_t\sigma_{t'} + \sigma_{t'}\sigma_t \leq \sigma_t^2 + \sigma_{t'}^2.
    \end{equation}
\end{lemma}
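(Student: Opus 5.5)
The inequality $\sigma_t\sigma_{t'} + \sigma_{t'}\sigma_t \leq \sigma_t^2 + \sigma_{t'}^2$ is a direct consequence of the positivity of the square of a Hermitian operator. The plan is to rewrite the difference of the two sides as such a square.

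First, we expand
\begin{equation*}
(\sigma_t - \sigma_{t'})^2 = \sigma_t^2 - \sigma_t\sigma_{t'} - \sigma_{t'}\sigma_t + \sigma_{t'}^2,
\end{equation*}
being careful not to commute $\sigma_t$ and $\sigma_{t'}$, since the two operators need not commute. This identity gives exactly
\begin{equation*}
\sigma_t^2 + \sigma_{t'}^2 - \left(\sigma_t\sigma_{t'} + \sigma_{t'}\sigma_t\right) = (\sigma_t - \sigma_{t'})^2 .
\end{equation*}

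Second, we note that $D := \sigma_t - \sigma_{t'}$ is Hermitian, because both density matrices are Hermitian. For any vector $\ket{v}$ we then have
\begin{equation*}
\bra{v}D^2\ket{v} = \bra{v}D^\dagger D\ket{v} = \|D\ket{v}\|^2 \geq 0.
\end{equation*}
Hence $D^2 \geq 0$ in the positive semidefinite order, and the claimed operator inequality \eqref{eq:purity_inequality} follows immediately.

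There is no real obstacle. The only point needing care is that the cross term is $\sigma_t\sigma_{t'}+\sigma_{t'}\sigma_t$ and not $2\sigma_t\sigma_{t'}$. This is also why the left-hand side is Hermitian, so that the operator inequality is meaningful. The argument uses nothing beyond Hermiticity; in particular it does not use unit trace.

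In the subsequent purity computation, this lemma will be applied after taking traces against positive operators, or directly to bound the off-diagonal type contributions $\mathrm{Tr}(\sigma_t\sigma_{t'})$ by diagonal ones.
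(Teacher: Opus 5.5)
Your proof is correct and follows the same route as the paper: rewrite $\sigma_t^2+\sigma_{t'}^2-(\sigma_t\sigma_{t'}+\sigma_{t'}\sigma_t)$ as $(\sigma_t-\sigma_{t'})^2$, then use that the square of a Hermitian operator is positive semidefinite. Your direct expansion is cleaner than the paper's chain, whose middle line as written, $\sigma_t(\sigma_t-\sigma_{t'})-\sigma_t(\sigma_t-\sigma_{t'})$, is identically zero and should read $\sigma_t(\sigma_t-\sigma_{t'})-\sigma_{t'}(\sigma_t-\sigma_{t'})$.
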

\begin{proof}
    \begin{align}
        \sigma_t^2 + \sigma_{t'}^2 - \left(\sigma_t\sigma_{t'} + \sigma_{t'}\sigma_t\right)
        &= \sigma_t\left(\sigma_t - \sigma_{t'}\right) + \sigma_{t'}\left(\sigma_{t'} - \sigma_t\right)\\
        &= \sigma_t\left(\sigma_t - \sigma_{t'}\right) - \sigma_t\left(\sigma_t - \sigma_{t'}\right)\\
        &= \left(\sigma_t - \sigma_{t'}\right)^2\\
        &\geq 0
    \end{align}
\end{proof}
We can now prove Lemma~\ref{lemma:purityBoundDeltaTypicalDeFinetti}. To make the notation compact, we denote $\tilde{\tau} := \tilde{\tau}^{\tilde{t},\delta}$ and $\tau := \tau^{\tilde{t},\delta}$.
We analyze the purity of the non-projected state on which we only project on the $\delta$-typical subspace of $A_{i'}$ and don't renormalize (we denote this state $\overline{\tau}_{\leq}$), and then use this to upper bound the purity of the reduced state~$\tilde{\tau}_{A_{i'}}$. We first argue a direct inequality from the definition of these two states, but when $\tilde{\tau}$ is not renormalized by the probability of measuring a $\delta$-typical state. We denote the non-renormalized version of $\tilde{\tau}$ by $\tilde{\tau}_{\leq}$. First, note that
\begin{align}
    \sum_{t \in \mathcal{P}_{\mathbb{X}^n}^{\tilde{t},\delta}}\!\left(\bigotimes_{i=1}^k \Pi_{A_i^n}^{t,\delta}\right)\!\otimes\ketbra{t}{t}_{R_T} &\leq\sum_{t \in \mathcal{P}_{\mathbb{X}^n}^{\tilde{t},\delta}}\!\left(\bigotimes_{i\neq i'}^k I_{A_i^n}\right)\!\otimes \Pi^{t,\delta}_{A_{i'}^n}\!\otimes \ketbra{t}{t}_{R_T}\\
    &=\!\left(\bigotimes_{i\neq i'}^k I_{A_i^n}\right) \otimes \left(\sum_{t \in \mathcal{P}_{\mathbb{X}^n}^{\tilde{t},\delta}}\!\Pi^{t,\delta}_{A_{i'}^n} \otimes \ketbra{t}{t}_{R_T}\right),
\end{align}
since the projectors $\Pi_{A_i^n}^{t,\delta}$ are smaller than the identity. Thus, we have that $\tilde{\tau}_{\leq, A_1^n \ldots A_k^n} \leq \overline{\tau}_{\leq, A_1^n \ldots A_k^n}$ (for non-normalized states after projection). This implies that
\begin{equation}
    \text{Tr}\left(\tilde{\tau}_{\leq,A_{i'}^n}^2\right) \leq \text{Tr}\left(\overline{\tau}_{\leq,A_{i'}^n}^2\right).
\end{equation}
But in particular, when $\tilde{\tau}$ is renormalized (note that this renormalization is $1 - \frac{\left(\varepsilon_{\delta}\right)^2}{2}$ as shown in Lemma~\ref{lemma:deltaTypicalTraceDistanceBound}, with $\varepsilon_{\delta}$ from that same lemma), we have that
\begin{equation}
    \label{eq:purityInequalityRenormalized}
    \text{Tr}\left(\tilde{\tau}_{A_{i'}^n}^2\right) \leq \frac{1 }{\left(1 - \frac{\left(\varepsilon_{\delta}\right)^2}{2}\right)^2} \text{Tr}\left(\overline{\tau}_{\leq,A_{i'}^n}^2\right),
\end{equation}
Now, we upper bound $\text{Tr}\left(\overline{\tau}_{\leq,A_{i'}^n}^2\right)$ to get an upper bound on $\text{Tr}\left(\tilde{\tau}_{A_{i'}^n}^2\right)$. Fix any order on the types, we denote~$t'~\geq~t$ if $t'$ comes after $t$ in this order. Let $\tilde{\sigma}_{t,A_{i'}^n} := \Pi_{A_{i'}}^{t,\delta} \sigma_{t,A_{i'}}^{\otimes n} \Pi_{A_{i'}}^{t,\delta}$, then we have that
\begin{align}
    \overline{\tau}^2_{\leq,A_{i'}} &= \frac{1}{\left|\mathcal{P}_{\mathbb{X}^n}^{\tilde{t},\delta}\right|^2} \sum_{t,t' \in \mathcal{P}_{\mathbb{X}^n}^{\tilde{t},\delta}} \tilde{\sigma}_{t,A_{i'}} \tilde{\sigma}_{t',A_{i'}}\\
    &\leq \frac{1}{\left|\mathcal{P}_{\mathbb{X}^n}^{\tilde{t},\delta}\right|^2} \sum_{t \in \mathcal{P}_{\mathbb{X}^n}^{\tilde{t},\delta}} \sum_{t' \geq t} \left(\tilde{\sigma}_{t,A_{i'}} \tilde{\sigma}_{t',A_{i'}} + \tilde{\sigma}_{t',A_{i'}} \tilde{\sigma}_{t,A_{i'}}\right)\\
    &\leq \frac{1}{\left|\mathcal{P}_{\mathbb{X}^n}^{\tilde{t},\delta}\right|^2}\sum_{t \in \mathcal{P}_{\mathbb{X}^n}^{\tilde{t},\delta}} \sum_{t' \geq t} \left(\tilde{\sigma}_{t,A_{i'}}^2 + \tilde{\sigma}_{t',A_{i'}}^2\right) && \text{by Lemma \ref{lemma:purityInequality}}\\
    &\leq \frac{1}{\left|\mathcal{P}_{\mathbb{X}^n}^{\tilde{t},\delta}\right|^2} \sum_{t \in \mathcal{P}_{\mathbb{X}^n}^{\tilde{t},\delta}} \sum_{t' \in \mathcal{P}_{\mathbb{X}^n}^{\tilde{t},\delta}} \left(\tilde{\sigma}_{t,A_{i'}}^2 + \tilde{\sigma}_{t',A_{i'}}^2\right)\\
    &= \frac{2}{\left|\mathcal{P}_{\mathbb{X}^n}^{\tilde{t},\delta}\right|} \sum_{t \in \mathcal{P}_{\mathbb{X}^n}^{\tilde{t},\delta}} \tilde{\sigma}_{t,A_{i'}}^2.
\end{align}
By the bound on the purity of the $\delta$-typical approximation (Proposition~\ref{prop:purityBoundProjectedState}), we have that
\begin{align}
    \text{Tr}\left(\overline{\tau}^2_{\leq,A_{i'}}\right) &\leq \frac{2}{\left|\mathcal{P}_{\mathbb{X}^n}^{\tilde{t},\delta}\right|} \sum_{t \in \mathcal{P}_{\mathbb{X}^n}^{\tilde{t},\delta}} \text{Tr}\left(\left(\Pi_{A_{i'}}^{t,\delta} \sigma_{t,A_{i'}}^{\otimes n} \Pi_{A_{i'}}^{t,\delta}\right)^2\right)\\
    &\leq \frac{2}{\left|\mathcal{P}_{\mathbb{X}^n}^{\tilde{t},\delta}\right|} \sum_{t \in \mathcal{P}_{\mathbb{X}^n}^{\tilde{t},\delta}} \exp{-n\left(H(A_{i'})_t - \tilde{\omega}_{1,1}(\delta,|A_{i'}|)\right)}\\
    &\leq \frac{2}{\left|\mathcal{P}_{\mathbb{X}^n}^{\tilde{t},\delta}\right|} \left|\mathcal{P}_{\mathbb{X}^n}^{\tilde{t},\delta}\right| \exp{-n\left(\min_{t \in \mathcal{P}_{\mathbb{X}^n}^{\tilde{t},\delta}}\left(H(A_{i'})_t - \tilde{\omega}_{1,1}(\delta,|A_{i'}|)\right)\right)}\\
    &= 2\exp{-n\left(\min_{t \in \mathcal{P}_{\mathbb{X}^n}^{\tilde{t},\delta}} H(A_{i'})_t - \tilde{\omega}_{1,1}(\delta,|A_{i'}|)\right)},
\end{align}
in which $H(A_{i'})_t$ is the von Neumann entropy of the reduced state $\sigma_{t,A_{i'}}$. Combining this with the inequality in~\eqref{eq:purityInequalityRenormalized}, we have that
\begin{equation}
    \text{Tr}\left(\tilde{\tau}_{A_{i'}}^2\right) \leq \frac{2}{\left(1 - \frac{\left(\varepsilon_{\delta}\right)^2}{2}\right)^2}\exp{-n\left(\min_{t \in \mathcal{P}_{\mathbb{X}^n}^{\tilde{t},\delta}} H(A_{i'})_t - \tilde{\omega}_{1,1}(\delta,|A_{i'}|)\right)}.
\end{equation}
If we define $f(\delta)$ (with implicit dependence on $n$), as
\begin{equation}
    f(\delta) := -2\log(1 - \frac{\left(\varepsilon_{\delta}\right)^2}{2}) + 1,
\end{equation}
We then get
\begin{equation}
    \text{Tr}\left(\tilde{\tau}_{A_{i'}}^2\right) \leq \exp{-n\left(\min_{t \in \mathcal{P}_{\mathbb{X}^n}^{\tilde{t},\delta}} H(A_{i'})_t - \tilde{\omega}_{1,1}(\delta,|A_{i'}|) - \frac{f(\delta)}{n}\right)}.
\end{equation}
which concludes the proof of Lemma~\ref{lemma:purityBoundDeltaTypicalDeFinetti}.
\begin{flushright}
    $\square$
\end{flushright}

\bibliographystyle{IEEEtran}
\bibliography{bibliofile}
\end{document}